\documentclass[ejs,preprint,noshowframe]{imsart}

\RequirePackage{amsthm,amsmath,amsfonts,amssymb,bm,bbm}
\RequirePackage[numbers]{natbib}
\RequirePackage[colorlinks,citecolor=blue,urlcolor=blue]{hyperref}
\RequirePackage{graphicx}
\usepackage{cleveref}
\usepackage{mathtools}
\usepackage{lscape}

\startlocaldefs
\theoremstyle{plain}

\newtheorem{theorem}{Theorem}[section]
\newtheorem{lemma}[theorem]{Lemma}
\newtheorem{corollary}[theorem]{Corollary}
\newtheorem{prop}[theorem]{Proposition}
\theoremstyle{definition}

\newtheorem{remark}{Remark} 
\newtheorem{assumption}{Assumption}
\theoremstyle{remark}

\allowdisplaybreaks

\DeclareMathOperator*{\argmin}{arg\,min}

\DeclareMathOperator*{\sign}{sign}

\DeclareMathOperator*{\var}{Var}
\DeclareMathOperator*{\cov}{Cov}
\DeclareMathOperator*{\card}{card}
\endlocaldefs

\begin{document}
\begin{frontmatter}
\title{On tail-robust autocovariance matrix estimation for high-dimensional  and potentially nonstationary time series}
\runtitle{Tail-robust autocovariance matrix estimation}

\begin{aug}
\author[A]{\fnms{Haotian}~\snm{Xu}\ead[label=e1]{aotian.xu@auburn.edu}},
\author[B]{\fnms{St\'ephane}~\snm{Guerrier}\ead[label=e2]{stephane.guerrier@unige.ch}},
\author[C]{\fnms{Runze}~\snm{Li}\ead[label=e3]{rzli@psu.edu}}
\and
\author[D]{\fnms{Yuan}~\snm{Ke}\ead[label=e4]{yuan.ke@uga.edu}}
\address[A]{Department of Mathematics and Statistics, Auburn University\printead[presep={,\ }]{e1}}

\address[B]{Faculty of Science \& Geneva School of Economics and management, University of Geneva\printead[presep={,\ }]{e2}}

\address[C]{Department of Statistics, Pennsylvania State University\printead[presep={,\ }]{e3}}

\address[D]{Department of Statistics, University of Georgia\printead[presep={,\ }]{e4}}
\runauthor{H. Xu et al.}
\end{aug}

\begin{abstract}
In this paper, we study the autocovariance matrix estimation and inference problems under heavy-tailedness, high-dimensionality, general nonlinear temporal dependence, and potentially nonstationarity of time series. We consider two types of tail-robust autocovariance matrix estimation methods: the element-wise Huber's $M$-estimator and a computationally more efficient element-wise truncated estimator. Both estimators are designed to achieve sharp error bounds in matrix max-norm. The nonasymptotic properties of these estimators are proved based on new variants of Bernstein-type inequalities under functional dependence for the potentially nonstationary processes which may be of independent interest. Moreover, we prove a high-dimensional Gaussian approximation result, as a limiting distribution, for our element-wise truncated autocovariance estimator. A Gaussian multiplier bootstrap result is also given to facilitate the practicality. Our theoretical results are nonasymptotic, which gives explicit error bounds in terms of the sample size, dimensionality, moments, and the strength of temporal dependence. Numerical evidence is provided to support our theoretical results. Finally, we illustrate the benefits of the proposed methodology for detecting change points in monthly macroeconomic data.
\end{abstract}

\begin{keyword}[class=MSC]
\kwd[Primary ]{62M10}
\kwd{00X00}
\kwd[; secondary ]{00X00}
\end{keyword}

\begin{keyword}
\kwd{Gaussian approximation}
\kwd{heavy-tail}
\kwd{$M$-estimation}
\kwd{time series}
\end{keyword}

\end{frontmatter}

\section{Introduction}\label{sec:introduction}

High-dimensional time series are frequently encountered in modern data science applications. Indeed, it is common for multivariate data to be serially generated at many locations, whose number $d$ may be large compared to the number of time points $n$. Autocovariance matrices of time series play fundamental roles in many statistical and machine learning methods \cite[see e.g.,][]{hyvarinen2000independent,han2015direct,jentsch2015covariance,mcmurry2015high,chang2017testing,li2019testing}.
High-dimensionality poses significant challenges to autocovariance matrix estimation. For example, the sample autocovariance matrix is always singular when $d > n$ \cite[see e.g.,][]{wu2012covariance}. From the random matrix theory perspective, the eigenvalues of a sample covariance matrix do not converge to those of the population covariance matrix, when $d$ diverges no slower than $n$ \cite[see e.g.,][]{liu2015marvcenko}. To address these issues of autocovariance matrix estimation, several sparsity assumptions on population autocovariance matrices and their corresponding regularized estimators have been proposed. Taking advantage of their sharp error bounds, these estimators are suitable to use in high-dimensional settings \cite[see e.g.,][among others]{bickel2008covariance,bickel2008regularized,el2008operator,cai2010optimal,cai2011adaptive,ravikumar2011high}. 
To obtain optimal statistical guarantees for these regularized estimators, sub-Gaussianity on tail behavior is crucial.  However, it is a rather strong assumption and is often violated in practice \cite[see e.g.,][]{kuchibhotla2018moving}, which explains the poor performance of most classical estimators, both theoretically and numerically. Constructing a robust estimator against heavy-tailedness is a way to address this challenge. Many tail-robust mean estimators have been proposed and studied, including the Huber's $M$-estimator \cite[see e.g.,][]{catoni2012challenging} and the median-of-means estimator \cite[see e.g.,][]{nemirovsky1983problem, lerasle2011robust}. A comprehensive review of tail-robust mean estimators can be found in \cite{lugosi2019mean}. Avella-Medina et al. \cite{avella2018robust} proposed and studied tail-robust covariance matrix estimators. %
The issue of heavy-tailedness can also be solved from a spectrum domain perspective. Following the general robustification technique developed in Minsker \cite{minsker2018sub}, Ke et al. \cite{ke2019user} explored the spectrum-wise truncated estimator of covariance matrix which was constructed by truncating observed random matrices in their spectrum domain. Ke et al.\cite{ke2019user} also provided a selected survey of recent developments in covariance estimation for heavy-tailed distributions. However, the theoretical results of the above methods require the data are independent and identically distributed (iid).

In addition to sub-Gaussianity, temporal dependence significantly influences the estimation of autocovariance matrices. These effects have been explored by Chen et al. \cite{chen2013covariance} in high-dimensional settings for covariance matrix estimation, where they focus on the thresholding estimation for the structured covariance matrix of stationary time series. A similar study has been given by Shu and Nan \cite{shu2019estimation}, but under broad scenarios allowing for high-dimensional observations with sub-Gaussianity or heavy-tailedness and with short-range or long-range dependence. However, the data generating mechanism they considered is limited to ``linear spatiotemporal models'', and the dependence assumptions are able to be imposed directly on crosscorrelations. Under finite polynomial moments, their results only allow $d$ to grow with $n$ in a polynomial rate. Recently, Zhang \cite{zhang2018} showed that the polynomial rate can be improved to an exponential rate by using a tail-robust covariance matrix estimator based on the Huber's $M$-estimator. 
The majority of the existing methods for analyzing dependent data rely on the stationarity assumption of the data-generating process, which, to some extent, balances the rigors of statistical analysis and the broadness of their applicability. However, assuming stationarity over a long time range can be unrealistic as the probability properties of the observed time series may be subject to change due to trend, seasonality, time-varying dynamics, and change points. %
Researchers who need to estimate autocovariance matrices may have no alternative but to ignore the nonstationarity in their time series data as they do not have adequate tools to handle such problems. Unfortunately, their findings can be statistically misleading when the stationary assumption is violated. To overcome the aforementioned challenges, this paper studies, from a nonasymptotic perspective, the autocovariance matrix estimation and inference problems under heavy-tailedness, high-dimensionality, general nonlinear temporal dependence, and potentially nonstationarity. Technical tools developed in this paper, i.e.~Bernstein's inequalities, the Gaussian approximation, and the Gaussian multiplier bootstrap results, are compatible with nonstationary and piece-wise stationarity time series.

The rest of the paper is organized as follows. Section \ref{sec:preliminary} formalizes the considered problem and introduces definitions regarding temporal dependence and tail-robustness. Several Bernstein's inequalities under temporal dependence and potentially nonstationarity are also presented in this section. Section \ref{sec:robust} presents our robust autocovariance matrix estimators and their nonasymptotic properties. In Section \ref{sec:Gaussian_autocov}, we provide a Gaussian approximation result for our autocorrlation matrix estimation, and a Gaussian multiplier bootstrap method is also given to facilitate the practicality.
Section \ref{sec:simulation} gives extensive numerical experiments justifying the finite sample performance of our methods. Finally, Section \ref{sec:conclusion} concludes. 
The proofs of main theoretical results and additional simulation results are relegated to the supplementary material.

\subsection{Notation}\label{sec:notation}
Let $\mathbb{Z}$, $\mathbb{Z}^+$ and $\mathbb{R}$ denote the set of integers, positive integers and real numbers, respectively. For a set $A$, $\card(A)$ denotes the cardinality of $A$.  For $n_1, n_2 \in \mathbb{Z}^+$ and $N_1 < n_2$, we denote $[n_1, n_2] = \{n_1, n_1+1, \dots, n_2\}$ and $[n_1] = \{1, \dots, n_1\}$. The superscript $^{\intercal}$ denotes the transpose for a matrix or a vector. Given a vector $\bm{x} = (x_1, \dots, x_d)^{\intercal} \in \mathbb{R}^d$, we write the vector $\ell_q$-norm as $|\bm{x}|_q = \big(\sum_{j=1}^d |x_j|^q\big)^{1/q}$ for $1 \leq q < \infty$ and the vector $\ell_{\infty}$-norm as $|\bm{x}|_{\infty} = \max_{j \in [d]}|x_j|$.
Given a matrix $\bm{A} = \big( A_{(jk)} \big)_{j \in [d_1]; k \in [d_2]} \in \mathbb{R}^{d_1 \times d_2}$.
The max-norm of $\bm{A}$ is denoted as $\Vert \bm{A} \Vert_{\max} = \max_{j,k}|A_{(jk)}|$.
For a sequence of matrices $\{\bm{A}_i\}_{i \in S}$ with $S \subseteq \mathbb{Z}$, we write $A_{i, (jk)}$ as the $(j,k)$-th entry of $\bm{A}_i$. We write $\bm{A}_1  \prec \bm{A}_2$ if $(\bm{A}_2-\bm{A}_1)$ is a positive definite matrix. $\bm{I}_d$ denotes the $d$-dimensional identity matrix. For an $\mathbb{R}$-valued random variable $X$, we write the $L_q$-norm of $X$ as $\Vert X \Vert_{q} = \big(\mathbb{E}|X|^q\big)^{1/q}$, for $q > 0$. For $a, b \in \mathbb{R}$, we denote by $\sign(a)$ the sign of $a$, and denote $a \wedge b \coloneqq \min(a, b)$ and $a \vee b \coloneqq \max(a, b)$. For two positive values $a$ and $b$, we write $a \asymp b$ (resp. $a \lesssim b$ or $a = \mathcal{O}(b)$) if there exist absolute constants $C_2 \geq C_1 > 0$ such that $C_1 \leq a/b \leq C_2$ (resp. $a/b \leq C_1$).
Absolute constants are denoted as $C, C_1, C_2, \dots > 0$, which may be different in each place.

\section{Preliminary}\label{sec:preliminary}

Let $\{\bm{X}_i\}_{i \in \mathbb{Z}} \subset \mathbb{R}^d$ be a sequence of random vectors such that
    \begin{equation}\label{eq:functional_wold}
        \bm{X}_i = (X_{i,1}, X_{i,2}, \dots, X_{i,d})^{\intercal} = G_i(\mathcal{F}_i),
    \end{equation}
where $G_i(\cdot) = \big(g_{i,1}(\cdot), g_{i,2}(\cdot), \dots, g_{i,d}(\cdot)\big)^{\intercal}$ is an $\mathbb{R}^d$-valued measurable function, and $\mathcal{F}_i = \sigma(\dots, \, \epsilon_{i-1}, \, \epsilon_i)$ is a natural filtration of $\{\epsilon_i\}_{i \in \mathbb{Z}}$, and $\{\epsilon_i\}_{i \in \mathbb{Z}}$ is a sequence of iid random variables. We allow $G_i(\cdot)$ to be time-dependent, and thus $\{\bm{X}_i\}_{i \in \mathbb{Z}}$ is not required to be stationary. %
In spite of the causal nature of \eqref{eq:functional_wold}, this representation includes a large class of linear and nonlinear time series models, such as linear processes, autoregressive moving average (ARMA) models, generalized autoregressive conditional heteroskedasticity (GARCH) models \cite[see e.g.,][]{wu2005nonlinear,wu2011asymptotic} and their nonstationary variants \cite[see e.g.,][]{zhou2009local}. %
Wu \cite{wu2005nonlinear} considered the representation \eqref{eq:functional_wold} and developed the functional dependence measure (detailed in \Cref{sec:dependence}). Throughout this paper, we use functional dependence measure to quantify temporal dependence of $\{\bm{X}_i\}_{i \in \mathbb{Z}}$ and its measurable transformations. In \Cref{sec:dist_rob}, we introduce the truncation argument to deal with the heavy-tailedness. In \Cref{sec:bern}, we review Bernstein's inequalities for dependent processes, which will be used to develop our nonasymptotic results.

\subsection{Funcational dependence measure}\label{sec:dependence}
The functional dependence measure \citep{wu2005nonlinear} is one of the commonly used tools for quantifying general nonlinear temporal dependence. For any time indices $i, s_1, s_2 \in \mathbb{Z}$ such that $s_2 \leq s_1$, define random vectors
\begin{align}\label{eq:couple.ver}
\bm{X}_{i, \{s_1, s_2\}} = G_i(\mathcal{F}_{i,\{s_1, s_2\}}),
\end{align}
where the filtration
\begin{align*}
    \mathcal{F}_{i,\{s_1, s_2\}} = 
    \begin{cases}
        \sigma(\dots, \epsilon_{s_2-1}, \epsilon_{s_2}^*, \dots, \epsilon_{s_1}^*, \epsilon_{s_1+1}, \dots, \epsilon_i),  \quad &s_1 < i,\\
        \sigma(\dots, \epsilon_{s_2-1}, \epsilon_{s_2}^*, \dots, \epsilon_i^*),  \quad &s_2 \leq i \leq s_1,\\
        \sigma(\dots, \epsilon_{i}),  \quad &i < s_2,\\
    \end{cases}
\end{align*}
and $\epsilon_i^*$ is an independent copy of $\epsilon_i$. For notational convenience, we write $ \bm{X}_{i,\{s\}} = \bm{X}_{i, \{s,s\}} $. Recalling the representation \eqref{eq:functional_wold}, $\bm{X}_{i, \{s_1, s_2\}}$ is a coupled version of $\bm{X}_i$, whose innovations at time interval $[s_1, s_2]$ are replaced by their independent copies.

The functional dependence measure quantifies dependence in terms of moments. If there exists some integer $q > 0$ such that 
\[
\max_{1 \leq j \leq d}\sup_{i \in \mathbb{Z}} \Vert X_{i,j} \Vert_q < \infty,
\] 
we define the functional dependence measure of order $q$ and its tail cumulative version respectively as
\begin{equation}\label{FDM}
        \delta_{s,q} = \max_{1 \leq j \leq d}\delta_{s,q,j} = \max_{1 \leq j \leq d}\sup_{i \in \mathbb{Z}}\Vert X_{i,j} - X_{i,j,\{i-s\}} \Vert_{q} \quad \text{and} \quad \Delta_{m, q} = \sum_{s = m}^{\infty} \delta_{s,q}.
    \end{equation}
We consider, throughout this paper, the time series with a short-range dependence, i.e.~$\Delta_{0,q} < \infty$. In addition to the short-range dependence, we require that $\delta_{s,q}$ or equivalently $\Delta_{s,q}$ decays exponentially to $0$ as the time lag $s$ increases. To be more precise, we assume that there exists some constant $c > 0$, such that
\begin{equation}\label{eq:GMC}
        \Vert \bm{X}_{\cdot} \Vert_{q} = \sup_{m \geq 0} \exp(cm) \Delta_{m,q} < \infty.
    \end{equation}
Condition \eqref{eq:GMC} is equivalent to saying that the functional dependence measure decays exponentially. 
The following lemma formalizes this claim.
\begin{lemma}\label{lemma:dep.condition.relation}
    Let $\{\bm{X}_i\}_{i \in \mathbb{Z}} \subset \mathbb{R}^d$ be a process satisfying\eqref{eq:functional_wold}.
    For some $c > 0$, we have $\Vert \bm{X}_{\cdot} \Vert_{q} < \infty$ is equivalent to $\delta_{s,q} = \mathcal{O}(\exp(-cs))$ for all $s \geq 0$.
\end{lemma}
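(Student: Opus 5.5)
The plan is to prove the two implications separately and directly from the definitions in \eqref{FDM} and \eqref{eq:GMC}, keeping the same constant $c>0$ in both directions. The one structural fact needed is that every $\delta_{s,q}$ is a nonnegative $L_q$-norm. Hence $\Delta_{m,q}=\sum_{s\ge m}\delta_{s,q}$ is a sum of nonnegative terms, is nonincreasing in $m$, and dominates each of its summands.

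For the forward direction, assume $\Vert \bm{X}_{\cdot}\Vert_q=K<\infty$. By definition of the supremum, $\Delta_{m,q}\le K e^{-cm}$ for every $m\ge 0$. Since $\delta_{s,q}$ is the first term of the nonnegative series defining $\Delta_{s,q}$, we get
\[
\delta_{s,q}\le \Delta_{s,q}\le K e^{-cs}
\]
for all $s\ge 0$. This is exactly $\delta_{s,q}=\mathcal{O}(\exp(-cs))$, with constant $K$.

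For the reverse direction, assume there is $C<\infty$ with $\delta_{s,q}\le C e^{-cs}$ for all $s\ge 0$. Summing the geometric tail gives
\[
\Delta_{m,q}=\sum_{s=m}^{\infty}\delta_{s,q}\le C\sum_{s=m}^\infty e^{-cs}=\frac{C e^{-cm}}{1-e^{-c}},
\]
and the series converges because $c>0$. Multiplying by $e^{cm}$ and taking the supremum over $m\ge 0$ yields $\Vert \bm{X}_{\cdot}\Vert_q\le C/(1-e^{-c})<\infty$.

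I expect no genuine obstacle; the argument is elementary. The only points to watch are interpretive. First, the $\mathcal{O}$-statement must be read as a uniform bound over all $s\ge0$, not merely an asymptotic one. If only an asymptotic bound for $s\ge s_0$ were given, the finitely many remaining $\delta_{s,q}$ are finite under the standing moment assumption $\max_j\sup_i\Vert X_{i,j}\Vert_q<\infty$, since $\delta_{s,q}\le 2\max_j\sup_i\Vert X_{i,j}\Vert_q$. Enlarging the constant then absorbs them. Second, finiteness of $\Delta_{0,q}$ (short-range dependence) comes for free in the reverse direction, as the case $m=0$ of the geometric bound.
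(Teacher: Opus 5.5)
Your proof is correct: the forward direction via $\delta_{s,q}\le\Delta_{s,q}\le Ke^{-cs}$ and the reverse direction via the geometric tail bound $\Delta_{m,q}\le Ce^{-cm}/(1-e^{-c})$ are exactly the elementary verification this lemma calls for, and you correctly keep the same constant $c$ in both directions. The paper states the lemma without a written proof and only invokes it, e.g.\ in the proof of Lemma~\ref{lemma:indicator_fct} in the uniform form $\delta_{s,q}\le C\exp(-cs)$, which matches your reading, so there is no different argument to compare against.
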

The decay rate of the functional dependence measures reflects the strength of temporal dependence. Indeed, similar measurements can be realized by various types of mixing coefficients \cite[see e.g.,][]{bradley2005basic, dedecker2007weak}. We highlight that all of our results based on the functional dependence measure can also be obtained without technical difficulties under the corresponding conditions of $\alpha$-mixing coefficients. Below, we briefly state some advantages of using the functional dependence measure. First, verifying the conditions based on the functional dependence measure is relatively easy, especially for some complex nonlinear processes. Second, given a process of the form \eqref{eq:functional_wold}, one can construct its martingale approximation \cite[see e.g.,][]{wu2007strong} or $m$-dependence approximation \cite[see e.g.,][]{berkes2009asymptotic, liu2013probability}, by the coupling technique \cite[see e.g.,][]{berkes2009asymptotic}. Consequently, approximation errors can be quantified by the functional dependence measure, and well-established theories for martingale difference sequences or iid sequences can be sought. Third, mixing coefficients of multivariate data depend on their dimensionality. Conditions associated with mixing coefficients must be verified case by case when the dimension $d$ tends to infinity. Although the functional dependence measure is also dimension-dependent, as a coupling-based quantity, it can be explicitly computed even in high-dimensional settings. See Han and Wu \cite{han2019probability} for detailed discussion on this issue.
Although both the functional dependence measure and the mixing coefficients can quantify the strength of temporal dependence, the conditions based on the functional dependence measure and the strong mixing coefficients do not imply each other. The former relies on the causal representation \eqref{eq:functional_wold}, while the latter is more in a nonparametric spirit.

\subsection{Tail-robustness and the effects of truncation}\label{sec:dist_rob}
We also allow for heavy-tailedness, meaning that the marginal distribution of time series has finite polynomial moments up to an order $q > 0$.
Our goal is to obtain a tail-robust estimator. Namely, the estimator possesses exponential-type error bounds even when the observations are heavy-tailed \cite[see e.g.,][]{catoni2012challenging,devroye2016sub}.
It is essential to mention that the tail-robustness is different from the classical infinitesimal robustness \cite[see e.g.,][]{huber2009robust, hampel2011robust}. The latter focuses on consistent and efficient estimation of parametric models when the data-generating process lies in a neighborhood of the assumed parametric model, defined as small and arbitrary model deviations, such as gross error contamination, but not structural ones, such as heavy-tails.

Since Huber's $M$-estimator has been used to achieve both these robustness goals, we illustrate the differences by the following setting of univariate mean estimation. Let $\{X_{i}\}_{i \in \mathbb{Z}} \subset \mathbb{R}$ be a stationary process with an unknown mean $\mu = \mathbb{E}[X_1]$. For any $u \in \mathbb{R}$, we define the truncation operator as
    \begin{equation}\label{eq:trun.operator}
        \psi_{\tau}(u) = \sign(u)(|u| \wedge \tau),
    \end{equation}
where $\tau > 0$ is the robustification parameter. Notice that $\psi_{\tau}(u)$ is also the first order derivative of the Huber loss function $L_{\tau}(u)$ \cite[See Huber][]{huber1984finite}, which is defined for any $u \in \mathbb{R}$ as
    \begin{equation}
    \label{eq:Huber.loss}
        L_{\tau}(u) = \begin{cases}
    u^2/2, &  \;\; \text{if} \;\; |u| \leq \tau,  \\
   \tau |u| - \tau^2 /2     &  \;\; \mbox{if} \;\; |u| > \tau.
    \end{cases}
    \end{equation}
The Huber's $M$-estimator $\widehat{\mu}$ is constructed by solving
\begin{equation*}
        \frac{1}{n}\sum_{i=1}^n\psi_{\tau}(X_{i} - \widehat{\mu}) = 0.
    \end{equation*}
From the infinitesimal robustness point of view, $\tau$ should be fixed so that the influence function \cite[see e.g.,][]{hampel2011robust}, which is proportional to $\psi_{\tau}(\cdot)$, is bounded. The classical Huber's $M$-estimation also assumes the marginal distribution of $X_{i}$ is symmetric. If this is not the case, the bias due to truncation $\mathbb{E}[\psi_{\tau}(X_{i}-\mu)] \neq 0$ and a bias correction term corresponding to $\tau$ is needed. 
In contrast, to achieve the tail-robustness, we require $\tau$ to diverge with the sample size $n$, and the symmetry of the distribution of $X_{i}$ is unnecessary. For any $\tau > 0$ and any fixed $u \in \mathbb{R}$, the process $\{\psi_{\tau}(X_{i} - u)\}_{i \in \mathbb{Z}}$ is a uniformly bounded approximation of the original process $\{X_i - u\}_{i \in \mathbb{Z}}$. The bias due to truncation vanishes as $\tau$ diverges, which is detailed in the next lemma.
\begin{lemma}\label{lemma:bias}
    Let $\{X_{i}\}_{i \in \mathbb{Z}} \subset \mathbb{R}$ be a process in the form of \eqref{eq:functional_wold} with $d=1$. Assume that $\sup_{i \in \mathbb{Z}}\mathbb{E}[X_i^2] < \infty$, then for any $\tau > 0$ and any fixed $u \in \mathbb{R}$, we have that
    \[
    \sup_{i \in \mathbb{Z}}\left|\mathbb{E}[\psi_{\tau}(X_{i}-u)] - \mathbb{E}[X_i - u]\right| \leq \frac{\sup_{i \in \mathbb{Z}}\mathbb{E}[(X_{i}-u)^2]}{\tau}.
    \] 
\end{lemma}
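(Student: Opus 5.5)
Fix $i \in \mathbb{Z}$ and write $Y = X_i - u$; the time-series structure plays no role, and taking the supremum over $i$ at the end finishes the proof. The plan is to control the truncation error $\psi_\tau(Y) - Y$ pointwise and then integrate.

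The first step is a pointwise bound. On $\{|Y| \le \tau\}$ we have $\psi_\tau(Y) = Y$, so the difference vanishes. On $\{|Y| > \tau\}$ we have $\psi_\tau(Y) - Y = \sign(Y)(\tau - |Y|)$. Hence
\[
|\psi_\tau(Y) - Y| = (|Y| - \tau)\,\mathbbm{1}\{|Y|>\tau\} \le |Y|\,\mathbbm{1}\{|Y|>\tau\}.
\]
On the event $\{|Y|>\tau\}$ we also have $1 < |Y|/\tau$, which gives
\[
|Y|\,\mathbbm{1}\{|Y|>\tau\} \le \frac{Y^2}{\tau}.
\]

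The second step is integration. Since $\sup_i \mathbb{E}[X_i^2] < \infty$ and $u$ is fixed, $\mathbb{E}[Y^2] < \infty$, so $Y$ is integrable and $\mathbb{E}[Y]$ is finite. Taking expectations and applying Jensen's inequality gives
\[
\bigl|\mathbb{E}[\psi_\tau(Y)] - \mathbb{E}[Y]\bigr| \le \mathbb{E}|\psi_\tau(Y) - Y| \le \frac{\mathbb{E}[Y^2]}{\tau}.
\]

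The final step is to take the supremum over $i$. The right-hand side $\mathbb{E}[(X_i - u)^2]/\tau$ is at most $\sup_{i}\mathbb{E}[(X_i-u)^2]/\tau$, and this supremum is finite because $\mathbb{E}[(X_i-u)^2] \le 2\,\mathbb{E}[X_i^2] + 2u^2$.

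There is no real obstacle here. The only points needing care are the integrability of $Y$, which the second-moment assumption supplies, and the elementary Markov-type inequality $|Y|\,\mathbbm{1}\{|Y|>\tau\} \le Y^2/\tau$.
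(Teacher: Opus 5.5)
Your proof is correct and follows essentially the same route as the paper: both reduce the bias to $\mathbb{E}\bigl[|X_i-u|\,\mathbbm{1}\{|X_i-u|>\tau\}\bigr]$ and then use $|Y|\,\mathbbm{1}\{|Y|>\tau\}\le Y^2/\tau$. The only cosmetic difference is the order of steps: the paper splits the error into upper- and lower-tail expectations and then applies the triangle inequality, whereas you bound $|\psi_\tau(Y)-Y|$ pointwise first and then integrate.
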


Apart from bounding the bias, it is needed to uniformly bound the deviation of $n^{-1}\sum_{i=1}^n\psi_{\tau}(X_{i} - u)$ for any $u \in \mathbb{R}$. The tool we will be using is Bernstein's inequality, which is detailed in the next subsection. We show that Huber’s $M$-estimator $\widehat{\mu}$ achieves an exponential-type error bound, i.e.~tail-robustness, by using a properly chosen $\tau$ balancing the bias and the deviation. Besides the tail-robustness, Lemma~\ref{lemma:higher.moment} below shows that the truncation operator reduces the temporal dependence in terms of the functional dependence measure.
\begin{lemma}\label{lemma:truc_FDM}
    \label{lemma:higher.moment}
    Let $\{X_{i}\}_{i \in \mathbb{Z}} \subset \mathbb{R}$ be a process in the form of \eqref{eq:functional_wold} with $d=1$. For some $q > 0$, denote $\delta_{s,q}^X$ and $\delta_{s,q}^{\mathrm{tru}}$ respectively the $q$-th order functional dependence measure of $\{X_{i}\}_{i \in \mathbb{Z}}$ and of the truncated process $\{\psi_{\tau}(X_{i} - u)\}_{i \in \mathbb{Z}} \subset \mathbb{R}$ for any fixed $u \in \mathbb{R}$. Then for any $\tau > 0$, we have that
    $$\delta_{s,q}^{\mathrm{tru}} = \sup_{u \in \mathbb{R}}\sup_{i \in \mathbb{Z}}\big\Vert \psi_{\tau}(X_{i}-u) - \psi_{\tau}(X_{i,\{i-s\}}-u) \big\Vert_{q} \leq \delta_{s,q}^X.$$   
\end{lemma}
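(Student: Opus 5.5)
The plan is to reduce the claim to the $1$-Lipschitz property of the truncation operator $\psi_{\tau}$ in \eqref{eq:trun.operator}, and then take $L_q$-norms and suprema.

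\textbf{Step 1 (Lipschitz property).} We show that $|\psi_{\tau}(a)-\psi_{\tau}(b)| \le |a-b|$ for all $a,b\in\mathbb{R}$. Write $\psi_{\tau}(u) = \max(-\tau,\min(u,\tau))$. This is the metric projection of $\mathbb{R}$ onto the closed convex set $[-\tau,\tau]$, hence nonexpansive. If a case check is preferred: when $a,b$ both lie in $[-\tau,\tau]$ the bound is an equality. When both lie on the same side outside the interval, the left side is $0$. In the mixed cases, $\psi_\tau$ moves the point lying outside the interval toward the other one, so the distance can only shrink.

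\textbf{Step 2 (pointwise coupling bound).} Fix $u\in\mathbb{R}$, $i\in\mathbb{Z}$ and $s\ge 0$. Apply Step 1 with $a = X_i - u$ and $b = X_{i,\{i-s\}}-u$. Since the shift $u$ cancels in the difference, this gives almost surely
\[
\big|\psi_{\tau}(X_i-u)-\psi_{\tau}(X_{i,\{i-s\}}-u)\big| \le |X_i - X_{i,\{i-s\}}|.
\]
We also need the coupled version of the truncated process to be the truncation of the coupled version $X_{i,\{i-s\}}$. This holds because $\psi_\tau(G_i(\cdot)-u)$ is a measurable function of $\mathcal{F}_i$, and replacing $\epsilon_{i-s}$ by $\epsilon^*_{i-s}$ in its argument yields exactly $\psi_\tau(X_{i,\{i-s\}}-u)$. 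So the left side above is precisely the quantity entering $\delta^{\mathrm{tru}}_{s,q}$.

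\textbf{Step 3 (norms and suprema).} By monotonicity of the $L_q$ norm (for any $q>0$, since $\mathbb{E}|\cdot|^q$ is monotone), the pointwise bound gives
\[
\|\psi_\tau(X_i-u)-\psi_\tau(X_{i,\{i-s\}}-u)\|_q \le \|X_i-X_{i,\{i-s\}}\|_q \le \delta^X_{s,q}.
\]
The right-hand side does not depend on $u$ or $i$, so taking $\sup_{u}$ and $\sup_{i}$ yields the claim.

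There is no real obstacle here. The only points needing care are the Lipschitz verification in Step 1 and the observation in Step 2 that coupling commutes with applying the measurable map $\psi_\tau(\cdot - u)$.
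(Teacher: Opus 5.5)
Your proposal is correct and follows the same argument as the paper, which proves the lemma in one line by noting that $\psi_{\tau}$ is $1$-Lipschitz. Your extra details are fine additions: the nonexpansive-projection justification, the cancellation of $u$, the observation that coupling commutes with applying $\psi_{\tau}(\cdot - u)$, and monotonicity of $\|\cdot\|_q$ for every $q>0$.
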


\subsection{Bernstein's inequality under the functional dependence}\label{sec:bern}
Bernstein's inequality provides an exponential-type tail probability bound for partial sums of random variables, and the tightness of the bound depends on both the boundness or the sub-exponential parameter \cite[see e.g.,][]{vershynin2018high} and the variance of summand. The most well-known version of Bernstein's inequality concerns partial sums of independent (not necessarily identically distributed) random variables, which are either bounded \cite[see e.g.,][]{bernstein1946prob} or sub-exponential \cite[see e.g., Section 2.1.3 in][]{wainwright2019high}. In this paper, we focus on the former type of Bernstein's inequality, since the truncation operator leads to a sequence of random variables bounded by $\tau \in (0, \infty)$. To be more specific, let $\{X_i\}_{i = 1}^n \subset \mathbb{R}$ be a sequence of random variables. By applying the truncation operator $\psi_{\tau}(\cdot)$, we have that $\{\psi_{\tau}(X_i)\}_{i = 1}^n$ is uniformly bounded, i.e.~$\sup_{1 \leq i \leq n}|\psi_{\tau}(X_i)| \leq \tau$. Note that without loss  of generality and for brevity, we consider the location shift parameter $u = 0$. Below, we review the existing Bernstein's inequalities under different dependence assumptions.
\begin{itemize}
    \item When $\{X_i\}_{i = 1}^n$ are independent, Bernstein's inequality for independent random variables states that for any $x > 0$
\begin{align}\label{eq:bernstein_iid}
&\mathbb{P}\bigg(\bigg|\sum_{i = 1}^n\psi_{\tau}(X_i) - \sum_{i = 1}^n\mathbb{E}[\psi_{\tau}(X_i)]\bigg| \geq x\bigg)\\
\leq& 2\exp\Big(-\frac{x^2}{2\sum_{i = 1}^n\mathbb{E}\big[\big(\psi_{\tau}(X_i)\big)^2\big] + 2\tau x/3}\Big) \nonumber\\
\leq& 2\exp\Big(-\frac{x^2}{2\sum_{i = 1}^n\mathbb{E}[X_i^2] + 2\tau x/3}\Big).
\end{align}
\item When $\{X_i\}_{i = 1}^n$ are $\alpha$-mixing with mixing coefficients decay exponentially, i.e.~$\alpha(\ell) \leq \exp(-c\ell)$ for some $c > 0$, Merlev{\`e}de et al. \cite{merlevede2009bernstein} proves that for $n \geq 2$ and any $x > 0$
\begin{align}\label{eq:bernstein_alpha}
\mathbb{P}\bigg(\bigg|\sum_{i = 1}^n\psi_{\tau}(X_i) - \sum_{i = 1}^n\mathbb{E}[\psi_{\tau}(X_i)]\bigg| \geq x\bigg)
\leq& 2\exp\Big(-\frac{Cx^2}{nC_{\mathrm{LRV}} + \tau^2 + \tau x(\log n)^2}\Big),
\end{align}
where $C > 0$ is an absolute constant. Besides, $C_{\mathrm{LRV}}$ represents the long-run variance defined in \eqref{eq:def_LRV_UB}, and $C_{\mathrm{LRV}} \in (0, \infty)$ is guaranteed under the exponential decay of $\alpha$-mixing coefficients. This result matches \eqref{eq:bernstein_iid} up to a $\log n$ factor in the sub-exponential tail part. Under temporal dependence, the main challenge of proving a Bernstein-type inequality is on how to bound the exponential moment of partial sums, which can no longer be factored out into products of marginal exponential moments. Merlev{\`e}de et al. \cite{merlevede2009bernstein} addressed this issue by proposing a recursive block technique, which divides time series into Cantor-like blocks. Inside each block, the exponential moment of partial sums is bounded using the boundedness of random variables. To combine the exponential moments of all blocks, a sequence of mutually independent blocks is created to approximate the original blocks by the decoupling lemma.
\item When $\{X_i\}_{i = 1}^n$ are stationary and satisfy the exponential decay of the functional dependence measure. Using the same block technique, Zhang \cite{zhang2018} in Theorem 2.1 therein proved the same Bernstein's inequality as \eqref{eq:bernstein_alpha} for stationary processes under functional dependence.
\end{itemize}

To make \eqref{eq:bernstein_alpha} useful, under the temporal dependent settings, we need to ensure the long-run variance is finite. The next lemma gives an upper bound on the long-run variance of a potentially nonstationary process with the functional dependence measure that decays exponentially.
\begin{lemma}\label{lemma:lrv_univar}
    Let $\{X_i\}_{i \in \mathbb{Z}} \subset \mathbb{R}$ be a centered process in the form of \eqref{eq:functional_wold} with $d = 1$. Assume $\sup_{i \in \mathbb{Z}}\Vert X_i \Vert_2 < \infty$ and there exist some absolute constants $c, \gamma_1 > 0$ such that
    \begin{equation*}
        \|X_{\cdot}\|_2 = \sup_{m \geq 0}\exp(cm^{\gamma_1})\Delta_{m,2} < \infty.
    \end{equation*}
    Then, the long-run variance of $\{X_i\}_{i \in \mathbb{Z}}$ satisfies that
    \begin{equation}\label{eq:def_LRV_UB}
        \lim_{n \to \infty}\var\Big(\frac{1}{\sqrt{n}}\sum_{i = 1}^n X_i\Big) \leq \sum_{\ell = -\infty}^{\infty}\sup_{i \in \mathbb{Z}}\big|\cov(X_{i}, X_{i+\ell})\big| = C_{\mathrm{LRV}} < \infty.
    \end{equation}
\end{lemma}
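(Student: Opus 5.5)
The plan is to bound each autocovariance $\cov(X_i,X_{i+\ell})$ by a tail sum of the functional dependence measure, and then sum over $\ell$. The tool is the projection (martingale difference) decomposition of Wu. For each $k\in\mathbb{Z}$ set $\mathcal{P}_k(\cdot)=\mathbb{E}[\cdot\mid\mathcal{F}_k]-\mathbb{E}[\cdot\mid\mathcal{F}_{k-1}]$. Since $X_i$ is centered, $\mathcal{F}_i$-measurable and square integrable, we have $X_i=\sum_{k\le i}\mathcal{P}_kX_i$ in $L^2$. This uses $\mathbb{E}[X_i\mid\mathcal{F}_{-\infty}]=0$, which holds because $\mathcal{F}_{-\infty}$ is trivial by Kolmogorov's zero–one law for the iid sequence $\{\epsilon_i\}$. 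The projections at different $k$ are orthogonal.

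\textbf{Step 1 (projection bound).} I will show $\|\mathcal{P}_kX_i\|_2\le \delta_{i-k,2}$ for every $i$, uniformly in the nonstationary $G_i$. The coupled variable $X_{i,\{k\}}$ only replaces $\epsilon_k$ by $\epsilon_k^*$. Hence $\mathbb{E}[X_{i,\{k\}}\mid\mathcal{F}_k]=\mathbb{E}[X_i\mid\mathcal{F}_{k-1}]$, which gives $\mathcal{P}_kX_i=\mathbb{E}[X_i-X_{i,\{k\}}\mid\mathcal{F}_k]$. Jensen's inequality then yields $\|\mathcal{P}_kX_i\|_2\le\|X_i-X_{i,\{i-(i-k)\}}\|_2\le\delta_{i-k,2}$, by definition \eqref{FDM} with its supremum over $i$. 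This is the only place where nonstationarity enters, and it is harmless because $\delta_{s,2}$ already takes the supremum over $i$.

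\textbf{Step 2 (covariance bound).} For $\ell\ge0$, orthogonality gives $\cov(X_i,X_{i+\ell})=\sum_{k\le i}\mathbb{E}[\mathcal{P}_kX_i\,\mathcal{P}_kX_{i+\ell}]$. Applying Cauchy–Schwarz and Step 1,
\[
\sup_i|\cov(X_i,X_{i+\ell})|\le\sum_{s\ge0}\delta_{s,2}\,\delta_{s+\ell,2}.
\]
Summing over $\ell\in\mathbb{Z}$, using symmetry in $\ell$, gives
\[
C_{\mathrm{LRV}}\le 2\sum_{s\ge0}\delta_{s,2}\sum_{\ell\ge0}\delta_{s+\ell,2}\le 2\Delta_{0,2}^2.
\]
The hypothesis gives $\Delta_{0,2}\le\|X_\cdot\|_2<\infty$, since the $m=0$ term of the supremum is $\Delta_{0,2}$ itself. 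Thus $C_{\mathrm{LRV}}<\infty$. The stretched exponential decay is not even needed beyond summability, though it does guarantee summability.

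\textbf{Step 3 (the limit).} We have
\[
\var\Big(n^{-1/2}\sum_{i=1}^n X_i\Big)=n^{-1}\sum_{i,j=1}^n\cov(X_i,X_j)\le n^{-1}\sum_{i=1}^n\sum_{\ell\in\mathbb{Z}}\sup_{i'}|\cov(X_{i'},X_{i'+\ell})|=C_{\mathrm{LRV}}.
\]
This holds for every $n$, so it holds for the limit, which I read as a $\limsup$ if the limit does not exist under nonstationarity. The main obstacle is Step 1: one must justify the conditional-expectation identity for the coupled variable in the time-varying setting, and check that the $L^2$ convergence of the projection series holds without stationarity. Both follow from the iid innovations and the uniform second-moment bound $\sup_i\|X_i\|_2<\infty$.
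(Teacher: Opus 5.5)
Your proof is correct. It rests on the same projection decomposition $X_i=\sum_{k\le i}\mathcal{P}_kX_i$ and the same key bound $\|\mathcal{P}_kX_i\|_2\le\delta_{i-k,2}$ as the paper; the only divergence is the final summation step, and there your version is sharper.

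The paper first bounds each covariance as $\sum_{k\ge0}\delta_{k,2}\delta_{k+\ell,2}\le\Delta_{0,2}\Delta_{\ell,2}$, using Cauchy--Schwarz in $k$. It then has to show $\sum_{\ell\ge0}\Delta_{\ell,2}<\infty$. Since $\sum_{\ell\ge0}\Delta_{\ell,2}=\sum_{s\ge0}(s+1)\delta_{s,2}$, this needs more than short-range dependence. That is exactly where the paper invokes the stretched-exponential hypothesis $\Delta_{m,2}\le\|X_\cdot\|_2\exp(-cm^{\gamma_1})$ together with a comparison against $\sum_m(m+1)^{-\nu}$.

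You instead interchange the sums before bounding. This gives $\sum_{\ell\ge0}\sum_{s\ge0}\delta_{s,2}\delta_{s+\ell,2}=\sum_{s\ge0}\delta_{s,2}\Delta_{s,2}\le\Delta_{0,2}^2$, so $C_{\mathrm{LRV}}\le 2\Delta_{0,2}^2$. In fact $C_{\mathrm{LRV}}\le\Delta_{0,2}^2$ if you count the $\ell=0$ term only once, since the two-sided sum reassembles into $\bigl(\sum_{s\ge0}\delta_{s,2}\bigr)^2$. This route needs only $\Delta_{0,2}<\infty$, so the decay exponent $\gamma_1$ plays no role in the finiteness of $C_{\mathrm{LRV}}$.

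You also state explicitly the elementary bound $\var\bigl(n^{-1/2}\sum_{i=1}^nX_i\bigr)\le C_{\mathrm{LRV}}$ for every $n$, and hence for the $\limsup$. The paper leaves this inequality implicit and tacitly assumes the limit exists, which under nonstationarity need not hold.
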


The next theorem further generalizes Theorem 2.1 of Zhang \cite{zhang2018} by allowing nonstationarity. Its proof is given in Section~I in the supplementary material, which follows and extends the proofs of Merlev{\`e}de et al. \cite{merlevede2009bernstein} and Zhang \cite{zhang2018}, with some necessary modifications.
\begin{theorem}\label{thm:bernstein_bounded}
    Let $\{X_i\}_{i \in \mathbb{Z}} \subset \mathbb{R}$ be a centered process in the form of \eqref{eq:functional_wold} with $d = 1$. Assume there exist absolute constants $c > 0$ such that
    \begin{equation}\label{eq:fdm_exp_decay}
        \|X_{\cdot}\|_2 = \sup_{m \geq 0}\exp(cm)\Delta_{m,2} < \infty.
    \end{equation}
    For any $\tau > 0$, $n \geq 2$ and $x > 0$, we have that
    \begin{equation*}
    \begin{aligned}
        \mathbb{P}\bigg(\bigg|\sum_{i = 1}^n \psi_{\tau}(X_i) - \sum_{i=1}^n\mathbb{E}[\psi_{\tau}(X_i)]\bigg| \geq x\bigg) \leq 2\exp\Big(-\frac{Cx^2}{nC_{\mathrm{LRV}} + \tau^2 + \tau x(\log n)^2}\Big),
    \end{aligned}
    \end{equation*}
    where $C > 0$ is an absolute constants depending only on $c$ and  $\|X_{\cdot}\|_2$ defined in  Lemma \ref{lemma:lrv_univar}.
\end{theorem}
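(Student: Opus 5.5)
We follow the Merlev\`ede--Peligrad--Rio scheme as adapted by Zhang \cite{zhang2018}, replacing mixing-based decoupling with coupling through the functional dependence measure. Stationarity enters their arguments only through uniform-in-$i$ bounds, so we write every bound with $\sup_{i\in\mathbb{Z}}$, as in the definition \eqref{FDM}.

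Set $Y_i=\psi_\tau(X_i)-\mathbb{E}[\psi_\tau(X_i)]$, so that $|Y_i|\le 2\tau$. By Lemma~\ref{lemma:truc_FDM} (with $u=0$), the functional dependence measure of $\{\psi_\tau(X_i)\}$ is bounded by that of $\{X_i\}$, so $\{Y_i\}$ inherits \eqref{eq:fdm_exp_decay} with the same $c$ and $\|X_\cdot\|_2$. Lemma~\ref{lemma:lrv_univar} then bounds its long-run variance. More usefully, the same proof (the covariance-via-projection argument $|\cov(Y_i,Y_{i+\ell})|\le\sum_k \delta_{k,2}\delta_{k+\ell,2}$) gives, for every block $B$ of consecutive indices,
\[
\var\Big(\sum_{i\in B}Y_i\Big)\le \card(B)\,C_{\mathrm{LRV}},
\]
uniformly over the block's position. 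This nonstationary block-variance bound replaces the stationary computation in Zhang.

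Next we build an $m$-dependent approximation. For a lag $m$, let $\tilde Y_i=\mathbb{E}[Y_i\mid \epsilon_{i-m},\dots,\epsilon_i]$. Then $|\tilde Y_i|\le 2\tau$, and
\[
\sup_i\|Y_i-\tilde Y_i\|_2\le \Delta_{m+1,2}\lesssim e^{-cm}.
\]
For the partial sums, the martingale-difference decomposition $\sum_i(Y_i-\tilde Y_i)=\sum_{k}\sum_i \mathcal{P}_{i-k}(\cdot)$ gives $\|\sum_{i\in B}(Y_i-\tilde Y_i)\|_2\lesssim \card(B)^{1/2}\Delta_{m,2}$, and this holds with arbitrary time-varying $G_i$. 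The $\tilde Y_i$ are $m$-dependent, so the Cantor-like blocks of \cite{merlevede2009bernstein} separated by gaps of length $\ge m$ become exactly independent. This is where our coupling-based argument is simpler than the mixing-based one, since no Berbee coupling is needed.

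The core step is the exponential-moment recursion. We first show, for $n$ small relative to $\tau^{-1}$ (the regime $t\tau\log n\lesssim1$),
\[
\log\mathbb{E}\exp\Big(t\sum_{i=1}^nY_i\Big)\lesssim \frac{t^2 (nC_{\mathrm{LRV}}+\tau^2)}{1-Ct\tau(\log n)^2}.
\]
We proceed by Merlev\`ede et al.'s induction on dyadic scales. At each level the index set is split into two halves with a gap removed. Gap contributions are controlled by $|Y_i|\le2\tau$ together with the exponential decay of $\Delta_{m,2}$, with $m\asymp \log n/c$. The halves are replaced by their $m$-dependent versions, which are independent, so the Laplace transform factorizes. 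The cost of replacing $Y$ by $\tilde Y$ inside the exponent is bounded using $|e^a-e^b|\le|a-b|(e^a+e^b)$, Cauchy--Schwarz, and the $L_2$ approximation bound above.

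We then sum the $\log n$ levels of recursion and combine them with the H\"older-type lemma of \cite{merlevede2009bernstein} for sums of blocks. This produces the $(\log n)^2$ factor and the additive $\tau^2$. Chernoff's bound, optimized over $t$, and the same argument applied to $-Y_i$ then give the two-sided inequality.

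The main obstacle is the bookkeeping in the recursion. The approximation errors at each level must be kept summable, which requires choosing the gap length $m$ so that $e^{-cm}\tau n$ is negligible relative to $t^{-1}$. The constants must also depend only on $c$ and $\|X_\cdot\|_2$, not on the particular time-varying $G_i$. For this we would carefully re-derive Zhang's Lemmas 2.2--2.4 with $\sup_i$ in place of the stationary moments.
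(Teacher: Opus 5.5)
Your overall plan, Merlev\`ede--Peligrad--Rio with coupling in place of mixing and $\sup_i$ bounds throughout, is the paper's. The nonstationary ingredients you list are correct and are what the paper uses: the uniform block-variance bound, inheritance of the dependence measure under truncation, and $\sup_i$ everywhere. However, the decoupling step fails as you specify it. Decoupling two pieces of total length $L$ costs, via $|e^a-e^b|\le|a-b|(e^a+e^b)$ and Cauchy--Schwarz, a term of order $t\,\|S_2-\tilde S_2\|_2\,\|e^{tS_1}(e^{tS_2}+e^{t\tilde S_2})\|_2$.

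The last factor can only be bounded by about $e^{2t\tau L}$, or by a Laplace transform at $2t$, which is itself exponentially large in $t^2L$. Your criterion (``$e^{-cm}\tau n$ negligible relative to $t^{-1}$'') drops this factor. In the regime that produces the sub-exponential term, $t\tau\asymp(\log n)^{-2}$ and $L\asymp n$ at the top level. The factor is then of size $e^{n/(\log n)^2}$, and a window $m\asymp\log n/c$, for which $e^{-cm}=n^{-O(1)}$, cannot absorb it.

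The gap, and hence your dependence window, must therefore be proportional to the current block length. The paper's joint-versus-product Laplace lemma gives an error of order $t\exp(-cg+t\tau L)$. Its Cantor construction uses gaps of relative size $\delta=\log 2/(2\log A)$, which is what forces the restriction $t\tau\le c_0/\log A$ and produces one $\log n$. The paper's coupling replaces all innovations before a block by independent copies, so the decoupled block has exactly the original law. No separate comparison of $\mathbb{E}e^{t\tilde S}$ with $\mathbb{E}e^{tS}$ is then needed, whereas your conditional expectations require one, with the same exponential-factor problem.

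Once gaps are a $1/\log A$ fraction of every block, they make up a constant fraction of $(0,A]$. They cannot simply be ``controlled by $|Y_i|\le2\tau$'':
\begin{itemize}
\item Bounding them trivially adds a term of order $t\tau n$ to the log-Laplace transform. This is linear in $t$ and makes the Chernoff bound useless for $x\lesssim\tau n$.
\item Shrinking them to $\delta\propto t\tau/c$, as in part (ii) of the paper's Proposition, instead adds $t^2\tau^2 n\log n$. This destroys the variance proxy $nC_{\mathrm{LRV}}+\tau^2$, since $\tau$ is meant to grow with $n$.
\end{itemize}

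The missing ingredient is Merlev\`ede et al.'s outer recursion:
\begin{enumerate}
\item Discard the gaps and concatenate them in order into a new process on an interval of length $A_1<A/2$. The coupling bounds survive because original time distances are at least the new ones.
\item Rerun the Cantor construction on this process.
\item Iterate about $\log_2 n$ times, until the leftover piece has bounded length.
\end{enumerate}
The H\"older aggregation lemma then combines the stages. Stage $j$ contributes $c_j\asymp\tau\log(n/2^j)$, and these sum to $\tau(\log n)^2$; the bounded leftover supplies the additive $\tau^2$. Your ``sum the $\log n$ levels of recursion'' conflates the inner dyadic splitting with this outer recursion; the inner splitting yields only one $\log n$. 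With level-proportional gaps and the outer recursion, your argument becomes the paper's proof.
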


 \Cref{thm:bernstein_bounded} provides the same concentration bound as in \eqref{eq:bernstein_alpha} up to an absolute constant, but based on the functional dependence measure. Similarly, \Cref{thm:bernstein_bounded} contains an extra $\log n$ factor in the sub-exponential tail part compared to \eqref{eq:bernstein_iid}. We note that the $\log n$ factor arises due to the use of the block technique proposed by Merlev{\`e}de et al. \cite{merlevede2009bernstein} for handling the Laplace transform of partial sums of bounded dependent random variables. In a special temporal dependence setting, when $\{X_i\}_{i \in \mathbb{Z}}$ is a linear process, the $\log n$ factor can be removed by using a different prove technique based on martingale difference sequences. Unfortunately, this proof heavily relies on linearity, and extending it to more general nonlinear processes appears difficult. A Bernstein-type inequality for potentially nonstationary linear processes is given in the next theorem. Its proof is also provided in Section~I in the supplementary material. By the definition of the functional dependence measure \eqref{FDM}, the exponential decay of coefficients of a linear process, i.e.~\eqref{eq:linear_coef_exp_decay}, implies the exponential decay of the functional dependence measure, i.e.~\eqref{eq:fdm_exp_decay}.

\begin{theorem}\label{thm:bernstein_bound_expdec_new}
    Let $\{X_i\}_{i\in\mathbb{Z}} \subset \mathbb{R}$ be a linear process given by
    \begin{equation}\label{eq:linear_process}
        X_i = \sum_{\ell = 0}^{\infty}a_{i}(\ell)\epsilon_{i-j},
    \end{equation}
    where $\{\epsilon_i\}_{i \in \mathbb{Z}}$ is a sequence of iid random variables with mean zero and $\mathbb{E}[\epsilon_i^2] = \sigma_{\epsilon}^2 < \infty$, and $\{a_i(\ell)\}_{i \in \mathbb{Z}, \ell \geq 0}$ is a time-dependent deterministic sequence satisfying that
    \begin{equation}\label{eq:linear_coef_exp_decay}
        \sup_{i \in \mathbb{Z}}a_i(\ell) \leq C_{Lin}\exp(-c\ell), \;\; \text{for any} \;\; \ell \geq 0,
    \end{equation}
    where $c, C_{Lin} > 0$ are some absolute constants. For any $\tau > 0$ and $x > 0$, we have that
\begin{equation*}
\begin{aligned}
    &\mathbb{P}\bigg( \sum_{i = 1}^n\big\{\psi_{\tau}(X_i) - \mathbb{E}[\psi_{\tau}(X_i)]\big\} \geq x \Big) \leq \exp\bigg\{-\frac{Cx^2}{nC_{\mathrm{LRV}} + \tau x}\bigg\},
\end{aligned}
\end{equation*}
where $C > 0$ are some absolute constants depending only on $c$ and  $C_{\mathrm{Lin}}$, and $C_{\mathrm{LRV}}$ is defined in \eqref{eq:def_LRV_UB}.
\end{theorem}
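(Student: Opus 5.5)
Write $Y_i=\psi_\tau(X_i)-\mathbb{E}[\psi_\tau(X_i)]$ and $\mathcal{F}_k=\sigma(\dots,\epsilon_{k-1},\epsilon_k)$. The plan is a martingale decomposition of $S_n=\sum_{i=1}^n Y_i$ along the innovation filtration, followed by Freedman's inequality. Set $D_k=\mathbb{E}[S_n\mid\mathcal{F}_k]-\mathbb{E}[S_n\mid\mathcal{F}_{k-1}]$ for $k\le n$, so that $S_n=\sum_{k=-\infty}^{n}D_k$ almost surely, since $S_n$ is $\mathcal{F}_n$-measurable and $\mathbb{E}[S_n\mid\mathcal{F}_{-\infty}]=\mathbb{E}[S_n]=0$ by independence of the innovations. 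Write $D_k=\sum_{i=\max(k,1)}^n P_kY_i$ with the projection $P_k=\mathbb{E}[\cdot\mid\mathcal{F}_k]-\mathbb{E}[\cdot\mid\mathcal{F}_{k-1}]$.

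\textbf{Step 1 (uniform bound on the increments).} This is where linearity is used. We have $P_k\psi_\tau(X_i)=\mathbb{E}[\psi_\tau(X_i)-\psi_\tau(X_{i,\{k\}})\mid\mathcal{F}_k]$. Because $X_i-X_{i,\{k\}}=a_i(i-k)(\epsilon_k-\epsilon_k^*)$ and $\psi_\tau$ is $1$-Lipschitz and bounded by $\tau$, we get
\[
|P_k\psi_\tau(X_i)|\le \mathbb{E}\big[\min\{2\tau,\,|a_i(i-k)|(|\epsilon_k|+|\epsilon_k^*|)\}\,\big|\,\mathcal{F}_k\big].
\]
The useful feature is that the only random input is the single innovation $\epsilon_k$. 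So $|D_k|\le \sum_{i\ge k}\min\{2\tau, C_{Lin}e^{-c(i-k)}(|\epsilon_k|+\sigma_\epsilon)\}$, up to conditional-expectation details. Split this sum at $L=\lceil c^{-1}\log_+\{C_{Lin}(|\epsilon_k|+\sigma_\epsilon)/\tau\}\rceil$. Terms with $i-k<L$ contribute at most $2\tau L$, and the geometric tail contributes $O(\tau)$. This gives $|D_k|\lesssim \tau(1+\log_+(|\epsilon_k|/\tau))$, which is not bounded.

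So this is the main obstacle: $D_k$ is not almost surely $O(\tau)$, only "$\tau$ times a logarithm of $\epsilon_k$". I would first try a Bernstein moment condition $\mathbb{E}[|D_k|^p\mid\mathcal{F}_{k-1}]\le \tfrac{p!}{2}\,v_k (C\tau)^{p-2}$. Since $D_k$ is a function of $\epsilon_k$ given $\mathcal{F}_{k-1}$, the logarithmic growth in $|\epsilon_k|$ costs at most a factorial in $p$, with only the second moment of $\epsilon_k$ needed through Markov's inequality on $\log_+$. If this fails, I would truncate $|\epsilon_k|$ at level $\tau$. The discarded part is $\mathcal{F}_k$-measurable with small contribution, controlled by $\sigma_\epsilon^2/\tau$ per index, and can be absorbed into the variance term after choosing constants.

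\textbf{Step 2 (conditional variance).} Since $\|P_kY_i\|_2\le \|X_i-X_{i,\{k\}}\|_2\le 2\sigma_\epsilon C_{Lin}e^{-c(i-k)}$, we have $\sum_k\mathbb{E}[D_k^2]=\mathbb{E}[S_n^2]\le nC_{\mathrm{LRV}}$ by \Cref{lemma:lrv_univar} and \Cref{lemma:truc_FDM}. For Freedman's inequality we need an almost sure bound on $\sum_k\mathbb{E}[D_k^2\mid\mathcal{F}_{k-1}]$. Here $\mathbb{E}[D_k^2\mid\mathcal{F}_{k-1}]\le \mathbb{E}_{\epsilon_k}\big[(\sum_{i\ge k}|a_i(i-k)|(|\epsilon_k|+\sigma_\epsilon))^2\big]\lesssim C_{Lin}^2\sigma_\epsilon^2$ deterministically, again because only $\epsilon_k$ is integrated. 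Summing over $k\le n$ with the weights $\sum_{i=1}^n e^{-c(i-k)}$ gives $V_n\lesssim n$ with constants depending on $c, C_{Lin}, \sigma_\epsilon$. This is compatible with $nC_{\mathrm{LRV}}$ after adjusting $C$.

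\textbf{Step 3 (conclusion).} Apply the Bernstein--Freedman inequality for martingale differences satisfying the conditional moment condition, with $V_n\lesssim nC_{\mathrm{LRV}}$ and scale $C\tau$. This yields $\mathbb{P}(S_n\ge x)\le\exp\{-Cx^2/(nC_{\mathrm{LRV}}+\tau x)\}$. Passing from the infinite sum $\sum_{k\le n}$ to finitely many terms is handled by truncating at $k\ge -M$ and letting $M\to\infty$: the remainder has $L_2$ norm $\lesssim e^{-cM}$.
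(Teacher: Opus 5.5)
Your proposal follows the paper's proof. The paper also writes the centered sum as $\sum_{j\le n}L_j$ with $L_j=\sum_{i=1}^n\mathcal{P}_j\psi_\tau(X_i)$. It then uses $X_i-X_{i,\{j\}}=a_i(i-j)(\epsilon_j-\epsilon_j^*)$ to derive a deterministic conditional Bernstein moment bound $\mathbb{E}[|L_j|^k\mid\mathcal{F}_{j-1}]\lesssim k!\,(C\tau)^{k-2}C_{Lin}^2\sigma_\epsilon^2e^{-2c\max\{1-j,0\}}$, and multiplies the conditional Laplace transforms via the tower property, which is your Step 3. The paper gets the factorial from Minkowski over $i$, a per-term bound of the form $(2\tau)^{1-2/k}(2a^2\sigma_\epsilon^2)^{1/k}$, and $\sum_{m\ge0}e^{-2cm/k}=O(k)$, rather than by integrating the tail of $\log_+$ as you propose.

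Two cautions apply when you write it out. First, the crude bound $|D_k|\lesssim\tau(1+\log_+(|\epsilon_k|/\tau))$ only yields $\mathbb{E}|D_k|^p\lesssim p!\,\tau^p$, i.e.\ a variance proxy of order $\tau^2$ rather than $\sigma_\epsilon^2$. You need the sharper consequence $|D_k|\lesssim\min\{W_k,\tau\}+\tau\log_+(W_k/\tau)$ of the sum bound you displayed, with $W_k=C_{Lin}(|\epsilon_k|+\sigma_\epsilon)$ and an extra factor $e^{-c(1-k)}$ in $W_k$ when $k\le0$. Second, exactly as in the paper, the variance term you actually obtain is of order $nC_{Lin}^2\sigma_\epsilon^2$, and this does not become $nC_{\mathrm{LRV}}$ merely by adjusting $C$.
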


\section{Robust autocovariance matrix estimation}\label{sec:robust}

Let $\{\bm{X}_i \}_{i \in \mathbb{Z}} \subset \mathbb{R}^d$ be a constant mean potentially nonstationary process in the form of \eqref{eq:functional_wold}. For any integer $\ell \in [-n+1, n-1]$, denote the lag-$\ell$ autocovariance matrix of $\{\bm{X}_i\}_{i \in \mathbb{Z}}$ as
\begin{align*}
\bm{\Sigma}_{\ell} =& \frac{1}{n-\ell}\sum_{i=1+\ell}^{n}\mathbb{E}\big[(\bm{X}_{i-\ell} - \bm{\mu})(\bm{X}_{i} - \bm{\mu})^{\intercal}\big]\\
=& \frac{1}{n-\ell}\sum_{i=1+\ell}^{n}\mathbb{E}[\bm{X}_{i-\ell}\bm{X}_{i}^{\intercal}] - \bm{\mu}\bm{\mu}^{\intercal} = \big(\gamma_{\ell,(jk)}\big)_{1 \leq j,k \leq d},
\end{align*}
where $\bm{\mu} = \mathbb{E}[\bm{X}_i] = (\mu_1, \mu_2, \dots, \mu_d)^{\intercal}$. For any integers $j, k \in [d]$, we can write
\begin{equation}\label{eq:autocov}
    \gamma_{\ell,(jk)} = \frac{1}{n-\ell}\sum_{i=1+\ell}^{n}\mathbb{E}[X_{i-\ell,j}X_{i,k}] - \mu_j\mu_k.
\end{equation}
Due to the fact that $\bm{\Sigma}_{\ell} = \bm{\Sigma}_{-\ell}^{\intercal}$, we only consider $\bm{\Sigma}_{\ell}$ with $\ell \in [0, n-1]$ throughout Section~\ref{sec:robust}.
According to \eqref{eq:autocov}, estimating $\frac{1}{n-\ell}\sum_{i=1+\ell}^{n}\mathbb{E}[X_{i-\ell,j}X_{i,k}]$ and $\mu_j$ can be treated separately with the same type of tail-robust estimators. A discussion on tail-robust mean estimation methods and their nonasymptotic properties is presented in Section~C in the supplementary material. For notational convenience, we denote the lag-$\ell$ outer products as
\begin{equation}\label{eq:outer_prod}
    \bm{H}_{i,\ell} = \bm{X}_{i-\ell}\bm{X}_{i}^{\intercal}, \; \text{ for } i \in \mathbb{Z},
\end{equation}
and the lag-$\ell$ cross product for $(j,k)$-th coordinate is denoted as $H_{i,\ell,(jk)} = X_{i-\ell,j}X_{i,k}$.

As described in Section \ref{sec:introduction}, truncation can be used to address heavy-tailedness, especially in estimating high-dimensional mean vectors and covariance matrices. However, their nonasymptotic properties under temporal dependence and potentially nonstationarity are still lacking, so our results attempt to fill this gap. To be more specific, we consider a constant mean and  potentially nonstationary process $\{\bm{X}_i\}_{i \in \mathbb{Z}}$ satisfying the following assumptions.
\begin{assumption}\label{assumption:coor_4-moment}
    Let $\omega_4 = \sup_{i \in \mathbb Z}\max_{j \in [d]} \Vert X_{i,j} \Vert_4 < \infty$.
\end{assumption}
\begin{assumption}\label{assumption:coor_dep}
    There exists some constant $c > 0$ such that $$\Vert \bm{X}_{\cdot} \Vert_4  = \sup_{m \geq 0}\exp(cm)\sum_{s = m}^{\infty}\delta_{s,4} < \infty.$$
\end{assumption}
\noindent Assumption \ref{assumption:coor_4-moment} requires finite coordinate-wise moments up to the fourth order for all coordinates. This is a necessary condition for obtaining a Sub-Gaussian type estimator for covariances. See e.g.~Theorem 3.1 of Devroye et al. \cite{devroye2016sub} for the minimax lower bound of the mean estimator under finite ($1+\epsilon$) moment constraint with $\epsilon \in (0,1)$. Assumption \ref{assumption:coor_dep} relies on \Cref{assumption:coor_4-moment} and further imposes exponential decay of dependence measure for all coordinates, which is required by Bernstein's inequalities: \Cref{thm:bernstein_bounded} for general nonlinear processes and \Cref{thm:bernstein_bound_expdec_new} for linear processes. The exponential decay of the functional dependence measure also implies the exponential decay of autocovariances, thus it suggests us to only estimate $\bm{\Sigma}_{\ell}$ up to certain lags such that $|\ell| \leq \lfloor (2c)^{-1}\log n \rfloor$, and set the rest to be $\bm{0}_{d \times d}$. This claim is formalized in the next lemma.
\begin{lemma}\label{lemma:autocov_decay}
    Under Assumptions \ref{assumption:coor_4-moment} and \ref{assumption:coor_dep}, we have that
    \begin{align*}
        \|\bm{\Sigma}_{\ell}\|_{\max} \leq \|\bm{X}_{\cdot}\|_2^2\exp(-c\ell),
    \end{align*}
    where $\|\bm{X}_{\cdot}\|_2 = \sup_{m \geq 0}\exp(cm)\sum_{s = m}^{\infty}\delta_{s,2} \leq \|\bm{X}_{\cdot}\|_4$.\\ Moreover, for $|\ell| > \lfloor (2c)^{-1}\log n \rfloor$, we have that
    \begin{align*}
        \|\bm{\Sigma}_{\ell}\|_{\max} \leq \frac{\|\bm{X}_{\cdot}\|_2^2}{\sqrt{n}}.
    \end{align*}
\end{lemma}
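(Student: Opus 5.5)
The plan is to bound each entry $\gamma_{\ell,(jk)}$ through Wu's projection (martingale-difference) decomposition. This is the standard way to turn functional dependence measures into covariance bounds, and it needs no stationarity because $\delta_{s,2}$ already takes a supremum over $i$.

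\textbf{Step 1 (projection decomposition).} For $t \in \mathbb{Z}$, define the projection operator $\mathcal{P}_t(\cdot) = \mathbb{E}[\cdot \mid \mathcal{F}_t] - \mathbb{E}[\cdot \mid \mathcal{F}_{t-1}]$. Fix a coordinate $j$ and a time $i$.

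First, $\mathbb{E}[X_{i,j} \mid \mathcal{F}_{t}] \to \mathbb{E}[X_{i,j}] = \mu_j$ in $L_2$ as $t \to -\infty$. This follows from reverse martingale convergence together with Kolmogorov's 0-1 law for the tail $\sigma$-field of the iid $\epsilon$'s.

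Second, since $\mathcal{P}_t X_{i,j} = 0$ for $t > i$, we get the $L_2$ expansion
$$X_{i,j} - \mu_j = \sum_{t \le i} \mathcal{P}_t X_{i,j}.$$

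Third, by the usual coupling argument, $\mathcal{P}_t X_{i,j} = \mathbb{E}[X_{i,j} - X_{i,j,\{t\}} \mid \mathcal{F}_t]$. Jensen's inequality then gives
$$\|\mathcal{P}_t X_{i,j}\|_2 \le \|X_{i,j} - X_{i,j,\{t\}}\|_2 \le \delta_{i-t,2}.$$
This bound is uniform in $i$ by the definition \eqref{FDM}. The projections $\mathcal{P}_t X_{i-\ell,j}$ and $\mathcal{P}_{t'}X_{i,k}$ are orthogonal in $L_2$ for $t \neq t'$.

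\textbf{Step 2 (covariance bound).} For $\ell \ge 0$, orthogonality gives
$$\cov(X_{i-\ell,j},X_{i,k}) = \sum_{t \le i-\ell}\mathbb{E}\big[\mathcal{P}_tX_{i-\ell,j}\,\mathcal{P}_tX_{i,k}\big].$$
Applying Cauchy--Schwarz termwise,
$$|\cov(X_{i-\ell,j},X_{i,k})| \le \sum_{s\ge 0}\delta_{s,2}\,\delta_{s+\ell,2} \le \Big(\sup_{s \ge \ell}\delta_{s,2}\Big)\Delta_{0,2} \le \Delta_{\ell,2}\,\Delta_{0,2}.$$
By the definition of $\|\bm{X}_\cdot\|_2$, we have $\Delta_{0,2} \le \|\bm{X}_\cdot\|_2$ and $\Delta_{\ell,2} \le \|\bm{X}_\cdot\|_2 e^{-c\ell}$. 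Hence each covariance is at most $\|\bm{X}_\cdot\|_2^2 e^{-c\ell}$.

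Since the mean is constant, $\gamma_{\ell,(jk)}$ is the average over $i$ of these covariances, so the same bound holds for it. Taking the maximum over $j,k$ gives the first claim. Negative $\ell$ follows from $\bm{\Sigma}_\ell = \bm{\Sigma}_{-\ell}^{\intercal}$.

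The comparison $\|\bm{X}_\cdot\|_2 \le \|\bm{X}_\cdot\|_4$ follows from Lyapunov's inequality, $\delta_{s,2} \le \delta_{s,4}$. Assumption \ref{assumption:coor_4-moment} guarantees all the quantities above are finite.

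\textbf{Step 3 (large lags).} If $|\ell| > \lfloor (2c)^{-1}\log n\rfloor$, then $|\ell| \ge (2c)^{-1}\log n$, so $e^{-c|\ell|} \le n^{-1/2}$, which gives the second claim.

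The main obstacle is Step 1 in the nonstationary setting. There I need to justify the $L_2$ convergence of the projection series and the identity $\mathbb{E}[X_{i,j} \mid \mathcal{F}_{-\infty}] = \mu_j$. I would handle this by the Kolmogorov 0-1 law plus $\Delta_{0,2} < \infty$, which makes the series absolutely summable in $L_2$.
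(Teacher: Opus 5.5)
Your proposal is correct and follows essentially the same route as the paper's proof: the projection decomposition $X_{i,j}-\mu_j=\sum_{t\le i}\mathcal{P}_tX_{i,j}$, orthogonality of the projections, and the bound $\|\mathcal{P}_tX_{i,j}\|_2\le\delta_{i-t,2}$ together give $\sum_{s\ge 0}\delta_{s,2}\delta_{s+\ell,2}\le\|\bm{X}_\cdot\|_2^2e^{-c\ell}$. You also spell out, correctly, several details the paper leaves implicit: the $L_2$ validity of the expansion via reverse martingale convergence and the Kolmogorov 0-1 law, the explicit step $\sum_{s}\delta_{s,2}\delta_{s+\ell,2}\le\Delta_{0,2}\Delta_{\ell,2}$, the averaging over $i$ for nonstationary data (the paper writes only $X_{0,j}$ and $X_{\ell,k}$), and the large-lag consequence.
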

The constant $c$ in \Cref{assumption:coor_dep} is generally unknown. In practice, we consider the lags such that $|\ell| \leq \lfloor C_{\mathrm{lag}}\log n \rfloor$ for some sufficiently large absolute constant $C_{\mathrm{lag}} > 0$.
Next, we study two types of element-wise tail-robust estimation methods for high-dimensional autocovariance matrices, which are (a) the element-wise Huber's $M$-estimator $\widetilde{\bm{\Sigma}}_{\ell}$ and (b) the element-wise truncated estimator $\widehat{\bm{\Sigma}}_{\ell}$. Under Assumptions \ref{assumption:coor_4-moment} and \ref{assumption:coor_dep}, we show that the error bounds of these two estimators are optimal (up to an $\log n$ factor) in matrix max-norm. Due to the consideration of computation efficiency and weaker assumptions, we recommend using the element-wise truncated estimator $\widehat{\bm{\Sigma}}_{\ell}$, and in \Cref{sec:Gaussian_autocov} we study the statistical inference of autocovariance matrices based on $\widehat{\bm{\Sigma}}_{\ell}$. In the remainder of this section, we describe each estimation method and provide the associated nonasymptotic results.

\subsection{Element-wise Huber's $M$-estimator}\label{sec:Huber}
We first introduce $\widetilde{\bm{\Sigma}}_{\ell}$ based on Huber's $M$-estimation, which we call the element-wise Huber's $M$-estimator. For each element, the $M$-estimator $\widetilde{\gamma}_{\ell,(jk)}$ of $\gamma_{\ell,(jk)}$ is defined as
\begin{equation*}
    \widetilde{\gamma}_{\ell,(jk)} = \widetilde{H}_{\ell,(jk)} - \widetilde{\mu}_j\widetilde{\mu}_k, \; \text{ for } j,k \in [d] \text{ and } |\ell| \leq \lfloor C_{\mathrm{lag}}\log n \rfloor,
\end{equation*}
with
\begin{align*}
    \widetilde{H}_{\ell,(jk)} = \argmin_{u \in \mathbb{R}} (n-\ell)^{-1}\sum_{i=\ell+1}^{n}L_{\tau_{\ell}}(H_{i,\ell,(jk)} - u)
\end{align*}
and
\begin{align*}
    \widetilde{\mu}_j = \argmin_{u \in \mathbb{R}} n^{-1}\sum_{i=1}^n\psi_{\tau}(X_{i,j} - u).
\end{align*}
estimating respectively $\frac{1}{n-\ell}\sum_{i=1}^{n-\ell}\mathbb{E}[H_{i,\ell,(jk)}]$ and $\mu_j$, and the Huber loss $L_{\tau}(u)$ is given in \eqref{eq:Huber.loss}. Note that such an estimator involves robustification parameters $\tau_{\ell}$ and $\tau$. As discussed in Section \ref{sec:dist_rob}, both $\tau_{\ell}$ and $\tau$ need to be chosen properly to balance the tail-robustness, i.e.~exponential-type tail probability deviation bound, and the bias due to truncation. Theoretical guidance for choosing these tuning parameters is given in \Cref{Thm:m-est.autocov}, which shows that we can set $\tau$ and $\tau_{\ell}$ for any $|\ell| \leq \lfloor C_{\mathrm{lag}}\log n \rfloor$ to be the same value. Moreover, in \Cref{sec:gap-block_CV}, we describe a block-wise cross-validation method that practically selects them. The corresponding lag-$\ell$ autocovariance matrix is denoted as
    \begin{equation}\label{eq:ele.M-est.autocov}
        \widetilde{\bm{\Sigma}}_{\ell} = \big( \widetilde{\gamma}_{\ell,(jk)} \big)_{j,k \in [d]} \; \text{ and } \; \widetilde{\bm{\mu}} = (\widetilde{\mu}_1, \dots, \widetilde{\mu}_{d})^{\intercal}.
    \end{equation}
The nonasymptotic result for $\widetilde{\bm{\Sigma}}_{\ell}$ is provided in Theorem \ref{Thm:m-est.autocov} below. 
Before stating our theorem, we introduce the following smoothness assumption on the distribution of $X_{i,j}$.
\begin{assumption}\label{assumption:bounded_pdf}
    For all $j \in [d]$, the marginal distribution of $X_{i,j}$ is absolutely continuous and has a bounded density function, i.e.~$\sup_{x \in \mathbb{R}}f_{X}(x) = \sup_{i \in \mathbb Z}\max_{j \in [d]}\sup_{x \in \mathbb{R}}f_{X_{i,j}}(x) < \infty$.
\end{assumption}
\begin{theorem}\label{Thm:m-est.autocov}
Consider only $|\ell| \leq \lfloor C_{\mathrm{lag}}\log n \rfloor$ with $C_{\mathrm{lag}} > 0$ being an absolute constant.
    For any $t > 0$, provided $n$ is large enough such that
\begin{equation*}
    n \geq C(\log n)^2(t + 2\log d),
\end{equation*}
with $C > 0$ being a sufficiently large absolute constant. Choose the robustification parameters 
    \begin{equation*}
        \tau_{\ell} = \tau \asymp (\log n)^{-1} \sqrt{\frac{n}{t+\log d}}, \;\;\text{for}\;\; |\ell| \leq \lfloor C_{\mathrm{lag}}\log n \rfloor.
    \end{equation*}
    Then, under Assumptions \ref{assumption:coor_4-moment}, \ref{assumption:coor_dep} and \ref{assumption:bounded_pdf}, we have with probability at least $1 - 6e^{-t}$
    \begin{equation}\label{eq:cov.m-est}
        \begin{aligned}
        \big\Vert\widetilde{\bm{\Sigma}}_{\ell} - \bm{\Sigma}_{\ell}\big\Vert_{\max} \lesssim \Vert X_{\cdot}\Vert_{4}\omega_4(\log n)\sqrt{\frac{t+2\log d}{n}}.
        \end{aligned}
    \end{equation}
\end{theorem}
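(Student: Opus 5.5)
The argument splits the error entrywise:
\[
\widetilde{\gamma}_{\ell,(jk)}-\gamma_{\ell,(jk)} = \bigl(\widetilde H_{\ell,(jk)}-\bar H_{\ell,(jk)}\bigr) - \bigl(\widetilde\mu_j\widetilde\mu_k-\mu_j\mu_k\bigr),
\qquad \bar H_{\ell,(jk)} = \frac{1}{n-\ell}\sum_{i=\ell+1}^n \mathbb{E}[H_{i,\ell,(jk)}].
\]
For the product term we use $|\widetilde\mu_j\widetilde\mu_k-\mu_j\mu_k|\le |\widetilde\mu_j-\mu_j||\mu_k|+|\mu_j||\widetilde\mu_k-\mu_k|+|\widetilde\mu_j-\mu_j||\widetilde\mu_k-\mu_k|$ together with $|\mu_j|\le\omega_4$. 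It therefore suffices to prove two one-dimensional results, each with deviation of order $(\log n)\sqrt{(t+\log d)/n}$: one for the Huber location estimator of a single coordinate process, and one for the cross-product process. We then take a union bound over $d$ (for the means) and $d^2$ (for the pairs $(j,k)$) at level $t+2\log d$. The lag $\ell$ is fixed, so no union over lags is needed.

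\textbf{Step 1 (dependence of the cross products).} By H\"older and the coupling identity $ab-a'b'=(a-a')b+a'(b-b')$, the process $\{H_{i,\ell,(jk)}\}_i$ has second-order functional dependence measure at most $2\omega_4(\delta_{s-\ell,4}+\delta_{s,4})$, where $\delta_{\cdot,4}=0$ at negative indices. Since $\ell\le C_{\mathrm{lag}}\log n$, the shift costs a factor $e^{c\ell}\le n^{cC_{\mathrm{lag}}}$ only if handled naively. To avoid this, we keep the decay rate $c$ and absorb the shift into the constant by treating $\{H_{i,\ell,(jk)}\}$ as a function of $\mathcal F_i$ and noting that $\Delta_{m,2}^H\lesssim \omega_4\Vert X_\cdot\Vert_4 e^{-c(m-\ell)_+}$. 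Alternatively, we can reindex so the Bernstein block length exceeds $\ell$. Lemma~\ref{lemma:truc_FDM} then transfers these bounds to $\psi_\tau(H_{i,\ell,(jk)}-u)$ uniformly in $u$. Lemma~\ref{lemma:lrv_univar} gives $C_{\mathrm{LRV}}\lesssim \omega_4^2\Vert X_\cdot\Vert_4^2$, since the variance proxy is bounded by $\mathbb E H^2\le\omega_4^4$.

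\textbf{Step 2 (Huber estimator via monotonicity).} Let $\Psi_n(u)=(n-\ell)^{-1}\sum_i\psi_\tau(H_{i,\ell,(jk)}-u)$, which is nonincreasing in $u$ with $\Psi_n(\widetilde H)=0$. Set $u_\pm=\bar H\pm r$ with $r\asymp \omega_4\Vert X_\cdot\Vert_4(\log n)\sqrt{(t+\log d)/n}$. It suffices to show $\Psi_n(u_+)<0<\Psi_n(u_-)$ with high probability. Write $\Psi_n(u_+)=\{\Psi_n(u_+)-\mathbb E\Psi_n(u_+)\}+\mathbb E\Psi_n(u_+)$.
\begin{itemize}
\item The deterministic part is $-r$ plus a truncation bias. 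By Lemma~\ref{lemma:bias}, averaged over $i$ (the $u$ is common although $\mathbb E H_i$ varies in $i$), this bias is at most $\sup_i\mathbb E(H_i-u_+)^2/\tau\lesssim \omega_4^4/\tau$. A second-order lower bound on the slope, $\mathbb E\psi_\tau(H-u)$ decreasing at rate at least $1/2$ near $\bar H$ whenever $\tau\gg$ the scale, ensures that $-r$ dominates.
\item The stochastic part is controlled by Theorem~\ref{thm:bernstein_bounded} with $x=(n-\ell)r/3$, giving probability $2\exp(-C x^2/(nC_{\mathrm{LRV}}+\tau^2+\tau x(\log n)^2))$. The chosen $\tau\asymp(\log n)^{-1}\sqrt{n/(t+\log d)}$ makes $\tau x(\log n)^2\asymp n r (\log n)\sqrt{n/(t+\log d)}$, which is of order $x^2/(t+\log d)$. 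The condition $n\ge C(\log n)^2(t+2\log d)$ ensures $\tau^2\lesssim n$, and the bias $\omega_4^4/\tau\asymp(\log n)\sqrt{(t+\log d)/n}$ matches $r$.
\end{itemize}
The same argument, with $X_{i,j}$ in place of $H$ and $\omega_2\le\omega_4$, handles $\widetilde\mu_j$.

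\textbf{Main obstacle.} The hardest step is the bias/nonstationarity interplay: the means $\mathbb E H_{i}$ differ across $i$, so $\mathbb E\psi_\tau(H_i-u)$ is not centered at a common point. Uniqueness and a lower bound on the slope of $u\mapsto \mathbb E\Psi_n(u)$ are also needed. This is where I would use Assumption~\ref{assumption:bounded_pdf}. For the means, the derivative is $-(n)^{-1}\sum_i\mathbb P(|X_{i,j}-u|\le\tau)\ge 1-\omega_4^2/(\tau-r)^2\ge 1/2$ by Chebyshev, and the bounded density guarantees differentiability and a strictly decreasing, uniquely solvable $\Psi_n$ almost surely. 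For the products, Chebyshev with $\mathbb E H^2\le\omega_4^4$ gives the same slope. Collecting the failure probabilities of the four one-sided events for $\widetilde H$ and the two-sided events for $\widetilde\mu_j,\widetilde\mu_k$ yields the stated level $1-6e^{-t}$.
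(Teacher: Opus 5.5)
Your argument is correct, but it takes a different route from the paper. The paper reduces to Theorem~\ref{Thm:m-est.mean}, applied to the coordinates and to the product process. That theorem is proved by localization:
an intermediate estimator $\widetilde{\mu}^{\eta}_j$ on the segment toward the target;
the convexity inequality of Lemma~\ref{lemma:local};
a mean-value step giving $|\widetilde{\mu}^{\eta}_j-\mu_j|\le D^{-1}|\mathcal{L}'(\mu_j)|$;
and a high-probability lower bound $D$ on the empirical curvature $\mathcal{L}''(u)$ near $\mu_j$.
Theorem~\ref{thm:bernstein_bounded} and Lemma~\ref{lemma:bias} are used only for the score at the target. Assumption~\ref{assumption:bounded_pdf} enters only in the curvature step, through Lemma~\ref{lemma:indicator_fct}.

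Your two-point argument with the monotone score needs no curvature at all. Concentration at the deterministic points $u_\pm$ plus the termwise bias bound suffice, because every minimizer is a zero of $\Psi_n$ and hence lies in $(u_-,u_+)$. This buys three things:
\begin{itemize}
\item A shorter proof.
\item It shows Assumption~\ref{assumption:bounded_pdf} is superfluous, so $\widetilde{\bm{\Sigma}}_{\ell}$ gets the guarantee of Theorem~\ref{Thm:cov.element.huber.exp} under Assumptions~\ref{assumption:coor_4-moment} and~\ref{assumption:coor_dep} alone.
\item It sidesteps the paper's curvature step. There Lemma~\ref{lemma:lb_nonnegative} is fed $\mathbb{E}U^2\le 2\|Y_{\cdot}(x)\|_2^2/n$ for an uncentered average $U$ of indicators, whereas $\mathbb{E}U^2\ge(\mathbb{E}U)^2\approx 1$. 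That step would need redoing, e.g.\ via Theorem~\ref{thm:bernstein_bounded} on the centered indicator process.
\end{itemize}
The localization argument's merit is that it does not need a monotone scalar score, so it extends to multivariate or regression Huber problems.

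For the same reason, your ``main obstacle'' paragraph is beside the point. Lemma~\ref{lemma:bias} applied for each $i$ already handles the heterogeneous means, since $(n-\ell)^{-1}\sum_i(\mathbb{E}H_{i,\ell,(jk)}-u_+)=-r$ exactly. Neither uniqueness nor a slope bound is needed. In any case, bounded density would not make $\Psi_n$ strictly decreasing: $\Psi_n$ is flat wherever $|H_{i,\ell,(jk)}-u|\ge\tau$ for all $i$.

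One caveat on Step 1. You are right that $\delta^{H}_{s,2}\lesssim\omega_4(\delta_{s-\ell,4}+\delta_{s,4})$. The paper's use of Lemma~\ref{lemma:fdm_XY} with $Y_i=X_{i-\ell,j}$ silently replaces $\delta_{s-\ell,4}$ by $\delta_{s,4}$. However, your first fix does not remove the factor. The bound $\Delta^{H}_{m,2}\lesssim\omega_4\|X_{\cdot}\|_4e^{-c(m-\ell)_+}$ gives $\sup_m e^{cm}\Delta^{H}_{m,2}\lesssim\omega_4\|X_{\cdot}\|_4e^{c\ell}$, which is exactly the naive constant fed into Theorem~\ref{thm:bernstein_bounded}.

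Your second suggestion is the one that works. In the blocking proof of Theorem~\ref{thm:bernstein_bounded}, additionally restrict the Laplace parameter so that its product with the bound $2\tau$ on the summands is $\lesssim 1/\ell$. Then:
\begin{itemize}
\item every decoupling gap exceeds $2\ell$, which only halves the effective decay rate;
\item the aggregated constant in Lemma~\ref{lemma:aggregate_log-laplace} grows by $\mathcal{O}(\tau\ell\log n)=\mathcal{O}(\tau(\log n)^2)$.
\end{itemize}
So only constants change, and they now depend on $C_{\mathrm{lag}}$.
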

\begin{remark}\label{remark:plugin_mean_est}
    Recall that \eqref{eq:autocov} suggests estimating $\mathbb{E}[X_{i-\ell,j}X_{i,k}]$ and $\mu_j$ can be treated separately with the same type of element-wise tail-robust estimators. We show in the proof of \Cref{Thm:m-est.autocov} that the estimation error of the former dominates the latter.
    Therefore, the deviation error given in \eqref{eq:cov.m-est} is essentially the deviation error of $\|\widetilde{\bm{H}}_{\ell} - \frac{1}{n-\ell}\sum_{i=1}^{n-\ell}\mathbb{E}[\bm{H}_{i,\ell}]\|_{\max}$. A heuristic explanation for this domination is that $\frac{1}{n-\ell}\sum_{i=1}^{n-\ell}\mathbb{E}[\bm{H}_{i,\ell}]$ contains $d^2$ number of unknown parameters and the error accumulates across entries, while $\bm{\mu}$ only contains $d$ number of unknown parameters. This observation is also applicable to the element-wise truncated estimator to be introduced in Section \ref{sec:elementwise}. Hence, we will focus on analyzing the estimation error of $\frac{1}{n-\ell}\sum_{i=1}^{n-\ell}\mathbb{E}[\bm{H}_{i,\ell}]$ therein.
\end{remark}
\begin{remark}\label{remark:m-est2}
The error $\Vert \widetilde{\bm{\Sigma}}_{\ell} - \bm{\Sigma}_{\ell}\Vert_{\max}$ in Theorem \ref{Thm:m-est.autocov} is of order $(\log n)\sqrt{\log d/n}$, which is optimal up to an $\log n$ factor in the minimax sense. In terms of consistency, the dimension $d$ is allowed to grow exponentially with $n$ as long as $(\log d) (\log n)^2/n \to 0$. The $\log n$ term is a result of applying Bernstein's inequality (\Cref{thm:bernstein_bounded}) for general nonlinear processes. Thus, for linear processes, using instead \Cref{thm:bernstein_bound_expdec_new}, we can show that $\widetilde{\bm{\Sigma}}_{\ell}$ achieves exactly the minimax optimal rate.
\end{remark}

\subsection{Element-wise truncated estimator}\label{sec:elementwise}
Recall the truncation operator $\psi_{\tau}(\cdot)$ defined in \eqref{eq:trun.operator}.
Following \eqref{eq:autocov}, we define the truncated estimator of $\gamma_{\ell,(jk)}$ as
\begin{equation*}
    \widehat{\gamma}_{\ell,(jk)} = \widehat{H}_{\ell,(jk)} - \widehat{\mu}_j\widehat{\mu}_k, \; \text{ for } j,k \in [d] \text{ and } |\ell| \leq \lfloor C_{\mathrm{lag}}\log n \rfloor,
\end{equation*}
with 
\begin{align*}
    \widehat{H}_{\ell,(jk)} = \frac{1}{n-\ell}\sum_{i=l+1}^{n}\psi_{\tau}(H_{i,\ell,(jk)}) \;\; \text{and} \;\; \widehat{\mu}_j = \frac{1}{n}\sum_{i=1}^n\psi_{\tau}(X_{i,j}),
\end{align*}
estimating respectively $\mathbb{E}[H_{i,\ell,(jk)}]$ and $\mu_j$. Here, we set the same robustification parameter $\tau > 0$ for both $\widehat{H}_{\ell,(jk)}$ and $\widehat{\mu}_j$, which can be selected in practice by the block-wise cross-validation method given in \Cref{sec:gap-block_CV}.  The corresponding lag-$\ell$ autocovariance matrix estimator is denoted as
    \begin{equation}\label{eq:ele.trunc.est.autocov}
        \widehat{\bm{\Sigma}}_{\ell} = \big( \widehat{\gamma}_{\ell,(jk)} \big)_{j,k \in [d]} \; \text{ and } \; \widehat{\bm{\mu}} = (\widehat{\mu}_1, \dots, \widehat{\mu}_{d})^{\intercal}.
    \end{equation}
Compared to \eqref{eq:ele.M-est.autocov}, the truncated autocovariance matrix estimator \eqref{eq:ele.trunc.est.autocov} has a closed form and hence it can be computed easily. Moreover, the nonasymptotic property of $\widehat{\bm{\Sigma}}_{\ell}$ provided in the following theorem shows that the same optimal (up to an $\log n$ factor) error rate is attainable without assuming the bounded marginal density, i.e.~\Cref{assumption:bounded_pdf}.
\begin{theorem}
\label{Thm:cov.element.huber.exp}
For any $t > 0$, choose the robustification parameter
\begin{equation}\label{eq:tau_tail}
    \tau \asymp (\log n)^{-1}\sqrt{\frac{n}{t + 2\log d}}.
\end{equation}
Under Assumptions \ref{assumption:coor_4-moment} and \ref{assumption:coor_dep}, for a sufficiently large $n$ such that $n-\ell \geq 4 \vee c/2$, we have with probability at least $1- 4e^{-t}$,
\begin{equation}\label{eq:cov.element.huber.exp}
    \big\Vert \widehat{\bm{\Sigma}}_{\ell} - \bm{\Sigma}_{\ell} \big\Vert_{\max} \lesssim \Vert X_{.}\Vert_{4}\omega_4(\log n)\sqrt{\frac{t+2\log d}{n}}.
\end{equation}
\end{theorem}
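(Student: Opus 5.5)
The plan is to split the error entrywise as
\[
\widehat{\gamma}_{\ell,(jk)} - \gamma_{\ell,(jk)} = \Big(\widehat{H}_{\ell,(jk)} - \tfrac{1}{n-\ell}\textstyle\sum_{i}\mathbb{E}[H_{i,\ell,(jk)}]\Big) - \big(\widehat{\mu}_j\widehat{\mu}_k - \mu_j\mu_k\big).
\]
I will control the two parts separately and then take a union bound over the $d^2$ pairs $(j,k)$ and the $d$ means. For the mean term I will write $\widehat{\mu}_j\widehat{\mu}_k - \mu_j\mu_k = (\widehat{\mu}_j-\mu_j)\widehat{\mu}_k + \mu_j(\widehat{\mu}_k-\mu_k)$. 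Since $|\mu_j|\le\omega_4$, it suffices to bound $\max_j|\widehat{\mu}_j-\mu_j|$ at the rate $\omega_4\Vert X_\cdot\Vert_4(\log n)\sqrt{(t+\log d)/n}$. This is the same argument as for $H$ but with only $d$ parameters, so by Remark \ref{remark:plugin_mean_est} it is dominated by the $H$ term.

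For a fixed pair $(j,k)$, I will decompose $\widehat{H}_{\ell,(jk)} - (n-\ell)^{-1}\sum_i\mathbb{E}[H_{i,\ell,(jk)}]$ into a bias part and a deviation part. The bias part is $(n-\ell)^{-1}\sum_i\{\mathbb{E}\psi_\tau(H_{i,\ell,(jk)}) - \mathbb{E}H_{i,\ell,(jk)}\}$. By Lemma \ref{lemma:bias} with $u=0$ it is at most $\sup_i\mathbb{E}[H_{i,\ell,(jk)}^2]/\tau$, and Cauchy--Schwarz gives $\mathbb{E}[H^2]\le\omega_4^4$. The deviation part is $(n-\ell)^{-1}\sum_i\{\psi_\tau(H_{i,\ell,(jk)}) - \mathbb{E}\psi_\tau(H_{i,\ell,(jk)})\}$, which I will handle with Theorem \ref{thm:bernstein_bounded}. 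To apply that theorem to the process $i\mapsto H_{i,\ell,(jk)} = X_{i-\ell,j}X_{i,k}$, which is of the form \eqref{eq:functional_wold} with $G$ time-varying, I need two facts. The first is exponential decay of its functional dependence measure. By H\"older,
\[
\Vert H_{i,\ell,(jk)} - H_{i,\ell,(jk),\{i-s\}}\Vert_2 \le \omega_4(\delta_{s-\ell,4}+\delta_{s,4}),
\]
where $\delta_{s-\ell,4}$ is interpreted as a bound valid for $s\ge\ell$. For $s<\ell$ I use the trivial bound $2\omega_4^2$ for the first factor. Since $\ell\le C_{\mathrm{lag}}\log n$, the tail sum satisfies $\Delta^H_{m,2}\lesssim\omega_4\Vert X_\cdot\Vert_4 e^{-c(m-\ell)_+}$. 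The second fact is that the long-run variance of $H$ is bounded by a constant multiple of $\omega_4^2\Vert X_\cdot\Vert_4^2$ (possibly up to $\ell$-dependent factors). This follows from the argument of Lemma \ref{lemma:lrv_univar}, using the covariance bound $|\cov(H_i,H_{i+h})|\le\sum_s\delta^H_{s,2}\delta^H_{s+h,2}$.

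The main obstacle is the lag $\ell$: the $H$ process only has decay $e^{-c(m-\ell)}$, which inflates the constant in Theorem \ref{thm:bernstein_bounded} by a factor $e^{c\ell}$ and could be polynomial in $n$. To avoid this, I would first try splitting the sum over $i$ into the $\ell+1$ residue classes modulo $\ell+1$, or better, treating $H_{i,\ell}$ as a function of the shifted filtration. Alternatively, I would bound the cumulative measure $\sum_{s\ge m}\delta^H_{s,2}\le \omega_4(\Delta_{m,4}+\Delta_{(m-\ell)_+,4})$ directly. The $\Delta_{0,4}$ contribution from small $s$ is then bounded uniformly in $\ell$, so the long-run variance is $\lesssim\omega_4^2\Vert X_\cdot\Vert_4^2$ without $\ell$-dependence. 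For the Bernstein constant, I would rerun the blocking proof with blocks of length at least $\log n\gtrsim\ell$, so that the extra lag is absorbed into the $(\log n)^2$ factor.

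Granting that step, Theorem \ref{thm:bernstein_bounded} with $x=(n-\ell)\epsilon$ gives, with probability at least $1-2e^{-t-2\log d}$,
\[
|\text{dev}|\lesssim \omega_4\Vert X_\cdot\Vert_4\sqrt{(t+2\log d)/n}+\tau(\log n)^2(t+2\log d)/n+\tau\sqrt{t+2\log d}/n .
\]
Plugging in the choice \eqref{eq:tau_tail} makes the linear term equal to $(\log n)\sqrt{(t+2\log d)/n}$. The bias is $\omega_4^4/\tau\asymp\omega_4^4(\log n)\sqrt{(t+2\log d)/n}$, and the $\tau/n$ term is of smaller order. The resulting constants are $\omega_4^2$-type, and I expect $\omega_4\Vert X_\cdot\Vert_4$ can be reached after scaling $\tau$ by $\omega_4^2$. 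The condition $n-\ell\ge 4\vee c/2$ ensures $n-\ell\asymp n$ and that the Bernstein bound is valid. A union bound over the $d^2$ entries, each with failure probability $2e^{-t}/d^2$, together with the analogous bound for the $d$ means, gives overall probability at least $1-4e^{-t}$.
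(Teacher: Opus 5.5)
Your argument is the paper's. It uses the same split into the truncation bias (at most $\omega_4^4/\tau$ by Lemma~\ref{lemma:bias}), the deviation of the truncated cross products (Theorem~\ref{thm:bernstein_bounded}, after a H\"older bound on the dependence measure of $H_{i,\ell,(jk)}$), and the lower-order plug-in mean part. The union bounds give $2e^{-t}+2e^{-t}$, as in the paper.

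The step you flag and then grant is one the paper never addresses. Its proof invokes Lemma~\ref{lemma:fdm_XY} with $Y_i=X_{i-\ell,j}$ and writes $\delta^{XY}_{s,2}\le 2\omega_4\delta_{s,4}$. That treats the lagged factor as if its dependence measure at lag $s$ were $\delta_{s,4}$ rather than $\delta_{s-\ell,4}$.

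Your bound $\omega_4(\delta_{s,4}+\delta_{s-\ell,4}\mathbbm{1}\{s\ge\ell\})$ is the correct one, and with it $\sup_m e^{cm}\Delta^H_{m,2}$ can be of order $\omega_4\Vert X_\cdot\Vert_4e^{c\ell}$. For a fixed lag this is a constant. Under that reading, which is the only one the paper's proof supports, your proof is complete without granting anything.

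If you want constants uniform in $\ell\le C_{\mathrm{lag}}\log n$, your blocking remedy is the one to pursue; it looks workable, though you would still need to carry it out.
\begin{itemize}
\item In the proof of Theorem~\ref{thm:bernstein_bounded}, every decoupling gap $g$ is at least of order $1/(tM)$ (in the Cantor steps and in the odd/even splitting of Lemma~\ref{lemma:aggregate_log-laplace}'s building blocks alike).
\item Restrict every piece to $tM\le\kappa/\ell$ for a small constant $\kappa$. Then all gaps exceed $2\ell$, so $e^{-c(g-\ell)}\le e^{-cg/2}$ and only $c$ is halved.
\item In the aggregation Lemma~\ref{lemma:aggregate_log-laplace}, this enlarges each $c_j$ by $O(\ell M)$. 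With $O(\log n)$ pieces, $\sum_j c_j$ remains $O(M(\log n)^2)$, so the lag is absorbed exactly as you hoped.
\end{itemize}
The residue-class split would not reach the stated rate. Adding $\ell+1$ class-wise Bernstein bounds and re-optimising $\tau$ loses a factor $\sqrt{\ell+1}$.

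There are also two minor points.
\begin{itemize}
\item For $s<\ell$, the factor $X_{i-\ell,j}$ is untouched by the coupling at time $i-s$ (causality), so its contribution is zero. The ``trivial bound $2\omega_4^2$'' would add $O(\ell\omega_4^2)$ to $\Delta^H_{0,2}$ and put $\ell$ back into the long-run variance. Your later cumulative bound is the right one: it yields $C^H_{\mathrm{LRV}}\le 2(\Delta^H_{0,2})^2\le 8\omega_4^2\Vert X_\cdot\Vert_4^2$.
\item Scaling $\tau$ by $\omega_4^2$ leaves bias and sub-exponential terms of size $\omega_4^2(\log n)\sqrt{(t+2\log d)/n}$. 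This is within the claimed $\omega_4\Vert X_\cdot\Vert_4$ bound only if $\omega_4\lesssim\Vert X_\cdot\Vert_4$, which holds for centred coordinates since $\Vert X_{i,j}-\mu_j\Vert_4\le\Delta_{0,4}$. The paper's constant has the same looseness.
\end{itemize}
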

\begin{remark}
Similar remark as \Cref{remark:m-est2} can also be drawn for $\widehat{\bm{\Sigma}}_{\ell}$. We highlight that \Cref{thm:bernstein_bounded} is the key for proving \Cref{Thm:cov.element.huber.exp}, thus our results can be extended to piece-wise stationary processes and potentially nonstationary processes with the population quantity $\bm{\Sigma}_{\ell}$ being necessarily modified.
\end{remark}

\subsection{Gap-block cross-validation}\label{sec:gap-block_CV}
In this subsection, we introduce a gap-block cross-validation method adapted from Shu and Nan \cite{shu2019estimation} to select the robustification parameter $\tau$ for our tail-robust estimators. The steps of this method is detailed below. Its good performance is justified by our numerical studies, as presented in \Cref{sec:simulation}.
\begin{enumerate}
    \item Given observations $\{\bm{X}_i\}_{i \in [n]}$, partition the index set $[n]$ into $H_1 \geq 4$ consecutive blocks $\{B_i\}_{i \in [H_1]}$ with approximately equal-sizes $\lfloor n/H_1 \rfloor$, such that $[n] = \cup_{i \in [H_1]} B_i$. For each $i \in [H_1]$, consider $B_i$ as the set of indices of the validation data, and use the remaining data after removing the neighboring blocks at both sides as the training data.
    \item Randomly sample $H_2$ starting indices without replacement form $[n-\lfloor n/H_1 \rfloor+1]$. Based on the starting indices, extract $H_2$ blocks of size $\lfloor n/H_1 \rfloor$ from $[n]$, denoted as $\{B_{H_1+j}\}_{j \in [H_2]}$. For each $j$, consider $B_{H_1+j}$ as the set of indices of the validation data, and use the remaining data after removing $\lfloor n/H_1 \rfloor$ elements at both sides as the training data.
    \item For each $i \in [H_1+H_2]$, we compute a reference lag-$l$ autocovariance matrix based on the corresponding validation data. This reference matrix is computed by averaging the smallest $95\%$ lag-$l$ outer products with respect to max-norm. Then, we compute the tail-robust autocovariance matrix estimator based on corresponding training data with each candidate $\tau$, and compute the difference in max-norm between the tail-robust autocovariance estimator and the reference matrix for each $\tau$.
    \item Select the robustification parameter $\tau$ by minimizing the averaged error obtained in the previous step. 
\end{enumerate}
The first step is similar to the classical cross-validation for temporal independent data. However, with dependent data, the neighboring blocks of the validation set are removed in order to reduce the dependence between the validation data and the training data. Once $H_1$ is given, the block splits in the first step are determined. In the second step, additional splits are provided. After these two steps, $H_1 + H_2$ sets of validation and training data are obtained. In the third step, we compute the empirical errors in max-norm for each candidate $\tau$. Since the lag-$\ell$ population autocovariance matrix is unknown, we compute a reference matrix based on the smallest $95\%$ lag-$\ell$ outer products to reduce the impact of heavy-tailedness. The ratio $95\%$ is an arbitrary choice. Then, the fourth step produces the selected $\tau$ by minimizing the averaged empirical error.

\section{Gaussian approximation}\label{sec:Gaussian_autocov}
In this section, we study the Gaussian approximation for our element-wise truncated autocovariance estimator. To be specific, we aim to show that, for any fixed $\ell \in \mathbb{Z}$, the limiting distribution of $(n-\ell)^{1/2}\big\Vert\widehat{\bm{\Sigma}}_{\ell}\big\Vert_{\max}$ can be approximated well, in terms of the Kolmogorov-Smirnov distance, by the $\ell_{\infty}$ norm of an $\mathbb{R}^{d^2}$-valued Gaussian vector $\bm{Z} \sim N(\bm{0}, \bm{\Gamma})$, where $\bm{\Gamma}$ is defined as a long-run covariance matrix, such that for any $\bm{s} \in \mathbb{R}^{d^2}$
\begin{align}\label{eq:long-run}
\bm{s}^{\top}\bm{\Gamma}\bm{s} = \lim_{n \to \infty} \var\big(\bm{s}^{\top}\bm{U}\big),
\end{align}
and $\bm{U}$ is the vectorization (i.e.~staking the columns into a vector) of $(n - \ell)^{1/2}\widehat{\bm{\Sigma}}_{\ell}$. We give the Gaussian approximation result in \Cref{thm:GA_autocov}. The proof of \Cref{thm:GA_autocov} is a direct application of Theorem~D.1 in the supplementary material, i.e.~a Gaussian approximation result for element-wise truncated mean estimator under heavy-tailedness and temporal dependence, which may of independent interest. Next, we introduce and discuss several assumptions before stating the theorem. Note that, instead of pursuing the minimum moment condition ($\theta \in (0,1]$) required by Theorem~D.1, the following assumptions are based on $\theta = 1$ for the simplicity of presentation. Also, since we are focusing on the inference of high-dimensional autocovariances, we assume that $\{\bm{X}_i\}_{i \in \mathbb{Z}}$ is a zero mean potentially nonstationary process. In practice, we can always centralized our time series before proceeding the proposed inference methodology.
\begin{assumption}\label{assumption:coor_6-moment}
    Let $\omega_6 = \sup_{i \in \mathbb Z}\max_{j \in [d]}\Vert X_{i,j} \Vert_6 < \infty$.
\end{assumption}
\begin{assumption}\label{assumption:coor_dep_6}
    There exists some $c > 0$, such that
    \begin{align*}
        \Vert X_{\cdot} \Vert_6 = \max_{j \in [d]}\sup_{m \geq 0}\exp(-cm)\sum_{k = m}^{\infty}\delta_{k,6,j} < \infty.
    \end{align*}
\end{assumption}
\begin{assumption}\label{assumption:var_nondegen}
    There exists an absolute constant $b > 0$, such that 
    \begin{align*}
        \min_{j,k \in [d]}\inf_{\mathcal{S} \subseteq [1,n]}\frac{1}{|\mathcal{S}|}\var\Big(\sum_{i \in \mathcal{S}}X_{i-\ell,j}X_{i,k} \Big) > b.
    \end{align*}
\end{assumption}

\begin{remark}
Assumption \ref{assumption:coor_6-moment}  requires finite coordinate-wise moments up to sixth order for all dimensions of $\{\bm{X}_i\}_{i = 1}^n$. Assumption \ref{assumption:coor_dep_6} requires an exponential decay of dependence measure for all dimensions of $\{\bm{X}_i\}_{i = 1}^n$. Assumptions \ref{assumption:coor_6-moment} and \ref{assumption:coor_dep_6} are imposed on higher order moments of $\{\bm{X}_i\}_{i = 1}^n$ and hence can imply Assumptions \ref{assumption:coor_4-moment} and \ref{assumption:coor_dep}. Assumption \ref{assumption:var_nondegen} ensures the nondegeneracy of the partial sums of lag-$\ell$ cross products, which is a very mild condition.
\end{remark}

\begin{corollary}[Gaussian approximation of truncated autocovariance estimator]\label{thm:GA_autocov}
Suppose Assumptions \ref{assumption:coor_6-moment}, \ref{assumption:coor_dep_6} and \ref{assumption:var_nondegen} hold. Let $C, C_{\tau} > 0$ be some absolute constants. Assume that $\log d = Cn^{\beta}$ for some $\beta < 1/19$. Choose the robustafication parameter 
\begin{equation}\label{eq:tau_GA}
\tau = C_{\tau}\left(\frac{n^{16/19}}{\log d}\right)^{1/3}.
\end{equation}
    Then, as $n \to \infty$, we have that for any fixed $\ell \in \mathbb{Z}$
    \begin{align*}
    \sup_{t \in \mathbb{R}}\left|\mathbb{P}\Big((n-\ell)^{1/2}\big\Vert\widehat{\bm{\Sigma}}_{\ell} - \bm{\Sigma}_{\ell}\big\Vert_{\max} \leq t\Big) - \mathbb{P}\Big(\big|\bm{Z}\big|_{\infty} \leq t\Big)\right|
    \lesssim n^{-2(1/19-\beta)/3} \to 0,
\end{align*}
where $\bm{Z} \sim N(\bm{0}, \bm{\Gamma})$ and $\bm{\Gamma}$ is defined as  in \eqref{eq:long-run}.
\end{corollary}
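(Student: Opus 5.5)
Since the process is assumed centered, $\bm{\mu}=\bm{0}$ and $\bm{\Sigma}_\ell=(n-\ell)^{-1}\sum_{i}\mathbb{E}[\bm{H}_{i,\ell}]$. The plan is to reduce the statement to Theorem~D.1 of the supplement, the Gaussian approximation for element-wise truncated means, applied to the $d^2$-dimensional process $\bm{Y}_i=\mathrm{vec}(\bm{H}_{i,\ell})$ for $i\in[\ell+1,n]$. The mean-estimation terms are then handled separately.

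\textbf{Step 1 (decomposition).} Write
\[
\widehat{\bm{\Sigma}}_\ell-\bm{\Sigma}_\ell=\big(\widehat{\bm H}_\ell-\mathbb{E}\widehat{\bm H}_\ell\big)+\big(\mathbb{E}\widehat{\bm H}_\ell-\bm{\Sigma}_\ell\big)-\widehat{\bm\mu}\widehat{\bm\mu}^{\intercal}.
\]
The first term is the centred truncated mean of $\bm Y_i$, to which Theorem~D.1 applies. The second term is a truncation bias. By Lemma~\ref{lemma:bias} applied entrywise to $H_{i,\ell,(jk)}$, together with $\Vert H_{i,\ell,(jk)}\Vert_3\le\omega_6^2$ (Cauchy--Schwarz), a higher-moment version gives a bias of order $\omega_6^6/\tau^2$. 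With the choice \eqref{eq:tau_GA}, $\sqrt n\,\tau^{-2}\sqrt{\log d}$ is polynomially small. The third term is $\mathcal O_P(\log n\,(\log d)/n)$ in max-norm, by the mean version of Theorem~\ref{Thm:cov.element.huber.exp} (Section~C of the supplement). After multiplying by $\sqrt n$ it is $o(1/\sqrt{\log d})$ with probability $1-o(1)$.

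\textbf{Step 2 (verifying the hypotheses of Theorem~D.1 for $\bm Y_i$).} The moments are immediate: $\Vert Y_{i,(jk)}\Vert_3\le\omega_6^2$, which is the $\theta=1$ case of ``$2+\theta$'' moments. For the dependence, since $\bm Y_i$ is a measurable function of $\mathcal F_i$, I use the product bound
\[
\Vert X_{i-\ell,j}X_{i,k}-X_{i-\ell,j,\{i-s\}}X_{i,k,\{i-s\}}\Vert_3\le \omega_6\big(\delta_{s-\ell,6}+\delta_{s,6}\big).
\]
This yields exponentially decaying functional dependence of $\bm Y$ of order $3$ for fixed $\ell$, with a constant inflated by $e^{c\ell}$. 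Lemma~\ref{lemma:truc_FDM} ensures that truncation does not worsen this. Nondegeneracy is exactly Assumption~\ref{assumption:var_nondegen}. The limit in \eqref{eq:long-run} exists and has bounded entries by the multivariate analogue of Lemma~\ref{lemma:lrv_univar}.

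\textbf{Step 3 (combining).} Theorem~D.1 gives a Kolmogorov bound for $\sqrt{n-\ell}\,|\widehat{\bm H}_\ell-\mathbb{E}\widehat{\bm H}_\ell|_\infty$ versus $|\bm Z|_\infty$. This bound is a sum of a term increasing in $\tau$ (blocking or Lindeberg error, roughly $(\tau\log^{a} d/\sqrt n)^{\cdot}$) and a term decreasing in $\tau$. The perturbations of Step~1, of size $\varepsilon$, are absorbed by Nazarov's anti-concentration inequality, which costs $\varepsilon\sqrt{\log d}$ and requires the diagonal of $\bm\Gamma$ to be bounded below by $b$. The choice $\tau\asymp(n^{16/19}/\log d)^{1/3}$ is designed to balance these terms. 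With $\log d=Cn^\beta$, I expect the total to collapse to $n^{-2(1/19-\beta)/3}$.

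The main obstacle is this final bookkeeping: tracking the explicit exponents in Theorem~D.1 so that the bias term, the mean plug-in term and the approximation error all fall below $n^{-2(1/19-\beta)/3}$ simultaneously. A second difficulty is making sure that the long-run covariance of the truncated variables can be replaced by the untruncated $\bm\Gamma$. That replacement costs $O(\tau^{-1})$ per entry, and I would handle it with a Gaussian comparison inequality (Chernozhukov et al.), whose error is of order $(\tau^{-1}\log^2 d)^{1/3}$, again small under \eqref{eq:tau_GA}.
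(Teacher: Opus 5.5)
Your plan is the paper's own: Corollary~\ref{thm:GA_autocov} is Theorem~D.1 with $\theta=1$ applied to $\mathrm{vec}(\bm H_{i,\ell})$, and your hypothesis checks are, if anything, more careful than the paper's. These are third moments by Cauchy--Schwarz, order-$3$ dependence through the lag-shifted product bound, nondegeneracy from Assumption~\ref{assumption:var_nondegen}, and separate control of $\widehat{\bm\mu}\widehat{\bm\mu}^{\intercal}$ (which you should bound via Theorem~\ref{thm:bernstein_bounded} at the $\tau$ of \eqref{eq:tau_GA}, not quote from Section~C, whose $\tau$ differs). The bookkeeping you leave open is immediate: \eqref{eq:tau_GA} is Theorem~D.1's $\tau\asymp(n/(M^{3/2}\log d))^{1/3}$ with block length $M\asymp n^{2/19}$, so condition (a) is exactly $\beta<1/19$, (b) then holds automatically, and the rate $n^{(2\beta-\alpha)/3}\vee n^{(6\alpha+7\beta-1)/6}$ becomes $n^{-2(1/19-\beta)/3}\vee n^{-7(1/19-\beta)/6}=n^{-2(1/19-\beta)/3}$; your separate truncation-bias and covariance-comparison steps are already contained in Theorem~D.1.
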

\begin{remark}\label{remark:different_tau}
    \Cref{thm:GA_autocov} follows directly from Theorem~D.1 in the supplementary material. Note that the choice of the robust parameter $\tau$ in \eqref{eq:tau_GA} may be different from the choice in \eqref{eq:tau_tail}. 
    Both choices try to balance the bias and robustness trade-off but under different measurements. The bias and robustness are measured by the Kolmogorov-Smirnov distance in \Cref{thm:GA_autocov}, while they are measured by the matrix max norm in \Cref{Thm:cov.element.huber.exp}. These two choices of $\tau$ can be of the same order if we set $t$ in \eqref{eq:tau_tail} as
    \begin{equation*}
        t = C_1\frac{n^{25/57}(\log d)^{2/3}}{(\log n)^2}.
    \end{equation*}
    The above choice of $t$ leads to  $t \to \infty$ as $n \to \infty$. Therefore, \Cref{Thm:cov.element.huber.exp} together with $\beta < 1/19$ guarantees that with probability at least $1 - 4e^{-t}$
    \begin{equation*}
        \big\Vert \widehat{\bm{\Sigma}}_{\ell} - \bm{\Sigma}_{\ell} \big\Vert_{\max} \lesssim \left(\frac{\log d}{n^{16/19}}\right)^{2/3} \bigvee (\log n)\sqrt{\frac{\log d}{n}} \lesssim (\log n)\sqrt{\frac{\log d}{n}}.
    \end{equation*}
\end{remark}

The Gaussian approximation result in \Cref{thm:GA_autocov} addresses various inference problems, such as the test of serial correlations, i.e.~$\bm{\Sigma}_{\ell} = 0$ for some or all $\ell \in \mathbb{Z}$, and the change point detection in autocovariance structures,  i.e.~$\bm{\Sigma}^{(1)}_{\ell} = \bm{\Sigma}^{(2)}_{\ell}$ and $\bm{\Sigma}^{(1)}_{\ell}$ and $\bm{\Sigma}^{(2)}_{\ell}$ are true lag-$\ell$ autocovariances before and after the change point. However, the asymptotic covariance matrix $\bm{\Gamma}$, having dimensions $d^2 \times d^2$, is typically unknown and challenging to estimate directly due to its size and complexity. In response to this challenge, we propose a block-wise Gaussian multiplier bootstrap method.

Given $\{\bm{X}_i\}_{i = 1}^n$, we construct $\{\bm{H}_{i,\ell}\}_{i = \ell+1}^n$, the sequence of lag-$\ell$ outer products defined as in \eqref{eq:outer_prod}. For $R \in \mathbb{N}_+$, we divide the whole time interval $[\ell+1, n]$ into $2R$ number of sub-intervals (blocks). For simplicity, we assume it is divisible and let the block size $S = (n-\ell)/(2R) \in \mathbb{N}_+$. The $2R$ number of sub-intervals, denoted by $\mathcal{S}_1, \dots, \mathcal{S}_{2R}$, can be expressed as
            \begin{align*}
                \mathcal{S}_r=
                [\ell + 1 + (r-1)S, \ell + 1 + rS], \;\; \text{for}\;\; r = 1, \dots, 2R.
            \end{align*}
For $r^* \in \{1, \dots, R\}$, the local tail-robust autocovariance estimators in the $r^*$-th paired odd and even blocks can be written as follows
\begin{align*}
    \widehat{\bm{\Sigma}}_{\mathcal{S}_{2r^*-1}} = \frac{1}{S}\sum_{t \in \mathcal{S}_{2r^*-1}} \psi_{\tau}(\bm{H}_{t, \ell}) 
    \quad \text{and} \quad 
    \widehat{\bm{\Sigma}}_{\mathcal{S}_{2r^*}} = \frac{1}{S}\sum_{t \in \mathcal{S}_{2r^*}} \psi_{\tau}(\bm{H}_{t, \ell}).
\end{align*}

We are ready to describe the block-wise Gaussian multiplier bootstrap.  Let $M$ be the number of bootstrap samples.
Let $\{e_{r^*}\}_{r^* = 1}^{R}$ be a sequence of iid standard normal random variables, and  $\{e_{r^*}^{(m)}\}_{r^* = 1}^{R}$ be an iid copy of $\{e_{r^*}\}_{r^* = 1}^{R}$ used in the $m$-th bootstrap sample, for $m=1, \ldots, M$. The $m$-th Gaussian multiplier lag-$\ell$ moving sum difference matrix can be constructed by
\begin{align*}
    \bm{S}_{\ell}^{(m)} = (n-\ell)^{-1/2}S\sum_{r^* = 1}^{R}e_{r^*}^{(m)} (\widehat{\bm{\Sigma}}_{\mathcal{S}_{2r^*-1}} - \widehat{\bm{\Sigma}}_{\mathcal{S}_{2r^*}}), \;\; \text{for} \;\; m=1, \ldots, M.
\end{align*}
The $m$-th Gaussian multiplier bootstrapped sample of $(n-\ell)^{1/2}\big\Vert\widehat{\bm{\Sigma}}_{\ell} - \bm{\Sigma}_{\ell}\big\Vert_{\max}$ is defined as
\begin{align}\label{eq:T_stat_multiplier_boot}
    T_{\ell}^{(m)} = \big\Vert \bm{S}_{\ell}^{(m)}\big\Vert_{\max}, \;\; \text{for} \;\; m=1, \ldots, M.
\end{align}
The block-wise differences involved in $\bm{S}_{\ell}^{(m)}$ intends to remove $\bm{\Sigma}_{\ell}$ in the Gaussian multiplier bootstrap, which is different from the classical version proposed in Chernozhukov et al. \cite{chernozhukov2013gaussian} and Zhang and Cheng \cite{zhang2018gaussian}, where the global estimator $\widehat{\bm{\Sigma}}_{\ell}$ is deducted for this purpose. We use blocks of size $S$ to preserve the underlying temporal dependence. Thus, the bootstrapped sample $\{\big\Vert \bm{S}_{\ell}^{(m)}\big\Vert_{\max}\}_{m=1}^M$ well approximates the empirical distribution of $T_{\ell} = (n-\ell)^{1/2}\big\Vert\widehat{\bm{\Sigma}}_{\ell} - \bm{\Sigma}_{\ell}\big\Vert_{\max}$ with suitable $S$ and with large $M$.

Denote a generic Gaussian multiplier bootstrapped statistic defined in \eqref{eq:T_stat_multiplier_boot} by $T_{\ell}^{(boot)}$.
For a significance level $\alpha \in (0,1)$, define the conditional $(1-\alpha)$-th quantile of $T_{\ell}^{(boot)}$ given $\mathcal{X} = \{\bm{X}_i\}_{i = 1}^n$ as
\begin{equation}\label{eq:empirical_quantile}
    q^{(boot)}(1-\alpha) = \inf\left\{u \in \mathbb{R}: \; \mathbb{P}\left(T_{\ell}^{(boot)} \leq u \ \big| \  \mathcal{X}\right) \geq 1-\alpha\right\}.
\end{equation} 
The next theorem shows the consistency of the proposed Gaussian multiplier bootstrap.

\begin{theorem}\label{thm:test_size}
    Suppose the conditions in Corollary \ref{thm:GA_autocov} hold. We choose 
    \[\tau = C_{\tau}\bigg(\frac{n^{16/19}}{\log d}\bigg)^{1/3}\]
    and  $S = C_1n^{(1-\beta)/5}$ (or equivalently $R = C_2n^{(4+\beta)/5}$), with absolute constants $C_{\tau}, C_1, C_2 > 0$ and $\beta < 1/19$ is given in Corollary \ref{thm:GA_autocov}. Then it holds, as $n \to \infty$, that for any fixed $\ell \in \mathbb{Z}$
    \begin{align*}
        \sup_{\alpha \in (0,1)}\Big|\mathbb{P}\Big(T_{\ell} \leq q^{(boot)}(\alpha)\Big) - \alpha\Big| \leq C_3n^{-2(1/19 - \beta)/3} \to 0, 
    \end{align*}
where $T_{\ell} = (n-\ell)^{1/2}\big\Vert\widehat{\bm{\Sigma}}_{\ell} - \bm{\Sigma}_{\ell}\big\Vert_{\max}$.
\end{theorem}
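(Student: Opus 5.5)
The plan is the standard three-step argument for multiplier bootstrap consistency: Gaussian approximation for the statistic, Gaussian comparison between the bootstrap's conditional covariance and $\bm{\Gamma}$, and anti-concentration to pass from distributions to quantiles. Set $\bm{Y}_t = \mathrm{vec}\big(\psi_{\tau}(\bm{H}_{t,\ell}) - \mathbb{E}[\psi_{\tau}(\bm{H}_{t,\ell})]\big) \in \mathbb{R}^{d^2}$. Conditionally on $\mathcal{X}$, $\mathrm{vec}(\bm{S}_{\ell}^{(m)})$ is exactly centered Gaussian with covariance
\[
\widehat{\bm{\Gamma}} = \frac{1}{n-\ell}\sum_{r^*=1}^{R} \bm{D}_{r^*}\bm{D}_{r^*}^{\intercal}, \qquad \bm{D}_{r^*} = \sum_{t\in\mathcal{S}_{2r^*-1}}\mathrm{vec}\big(\psi_\tau(\bm{H}_{t,\ell})\big) - \sum_{t\in\mathcal{S}_{2r^*}}\mathrm{vec}\big(\psi_\tau(\bm{H}_{t,\ell})\big).
\]
Under (piecewise) constant means of the truncated products, $\bm{D}_{r^*}$ equals the same difference with $\bm{Y}_t$ in place of $\psi_\tau(\bm{H}_{t,\ell})$, up to a deterministic drift. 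That drift is controlled by Lemma \ref{lemma:bias} together with the nonstationarity being absorbed into $\bm{\Sigma}_\ell$. Corollary \ref{thm:GA_autocov} already gives $\sup_t|\mathbb{P}(T_\ell\le t)-\mathbb{P}(|\bm{Z}|_\infty\le t)| \lesssim n^{-2(1/19-\beta)/3}$. It therefore suffices to show that, with high probability, $\sup_t|\mathbb{P}(|\widehat{\bm{Z}}|_\infty\le t\mid\mathcal{X}) - \mathbb{P}(|\bm{Z}|_\infty\le t)|$ is of the same order, where $\widehat{\bm{Z}}\mid\mathcal{X}\sim N(\bm 0,\widehat{\bm{\Gamma}})$.

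\textbf{Step 1 (Gaussian comparison).} I will use the Gaussian-to-Gaussian comparison inequality of Chernozhukov et al. \cite{chernozhukov2013gaussian}. Its bound is of order $\Delta^{1/3}(\log d^2)^{2/3}$, where $\Delta = \max_{a,b}|\widehat{\Gamma}_{ab}-\Gamma_{ab}|$, and it requires the diagonal of $\bm{\Gamma}$ to be bounded below. That lower bound is supplied by Assumption \ref{assumption:var_nondegen}. The problem then reduces to bounding $\Delta$ with high probability.

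\textbf{Step 2 (bounding $\Delta$).} Split $\Delta \le \max|\widehat{\bm{\Gamma}}-\mathbb{E}\widehat{\bm{\Gamma}}| + \max|\mathbb{E}\widehat{\bm{\Gamma}}-\bm{\Gamma}|$.
\begin{itemize}
\item \emph{Bias part.} Each $\mathbb{E}[\bm{D}_{r^*}\bm{D}_{r^*}^\intercal]/(2S)$ approximates the block long-run covariance. The cross term between adjacent odd and even blocks is $O(1/S)$ by the exponential decay of autocovariances, which follows from Assumption \ref{assumption:coor_dep_6} and Lemma \ref{lemma:higher.moment} applied to products, as in Lemma \ref{lemma:lrv_univar}. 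Boundary effects within blocks are also $O(1/S)$. Truncation shifts covariances by $O(\omega_6^6/\tau^{2})$-type terms, which is negligible given \eqref{eq:tau_GA}. Hence $\max|\mathbb{E}\widehat{\bm{\Gamma}}-\bm{\Gamma}|\lesssim S^{-1}$.
\item \emph{Fluctuation part.} The entries $(\bm{D}_{r^*})_a(\bm{D}_{r^*})_b$ are products of bounded dependent partial sums, each bounded by $(2S\tau)^2$. Their functional dependence measure is controlled via Lemma \ref{lemma:higher.moment} and Assumption \ref{assumption:coor_dep_6}. I will truncate these block products once more and apply Theorem \ref{thm:bernstein_bounded} to the sequence indexed by $r^*$ (blocks are nearly independent), followed by a union bound over $d^4$ pairs. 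This gives, with probability $1-n^{-1}$,
\[
\max_{a,b}|\widehat{\Gamma}_{ab}-\mathbb{E}\widehat{\Gamma}_{ab}| \lesssim \frac{S}{n}\Big(\sqrt{R\log(dn)}\,C + S\tau^2(\log n)^2\log(dn)\Big),
\]
using fourth-moment bounds of block sums of order $S^2$.
\end{itemize}

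\textbf{Step 3 (balancing).} With $S=C_1n^{(1-\beta)/5}$, $R\asymp n/S$ and $\log d\asymp n^\beta$, $\tau$ as in \eqref{eq:tau_GA}, check that $\Delta^{1/3}(\log d)^{2/3}\lesssim n^{-2(1/19-\beta)/3}$. The $S^{-1}$ bias term gives $(n^{-(1-\beta)/5})^{1/3}n^{2\beta/3}$, which is dominated for $\beta<1/19$. I expect the $S\tau^2$ term is where the exponents $16/19$ must be verified; if it fails I would instead bound via a second-moment (Chebyshev plus union over $d^4$) argument.

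\textbf{Step 4 (quantiles).} The standard argument of Chernozhukov et al., via their Lemma 3.2, converts the uniform distributional closeness plus the probability-$1-n^{-1}$ event into the stated bound on $\sup_\alpha|\mathbb{P}(T_\ell\le q^{(boot)}(\alpha))-\alpha|$. This uses Gaussian anti-concentration for $|\bm{Z}|_\infty$ (Nazarov's inequality) to absorb the quantile perturbations.

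\textbf{Main obstacle.} The main obstacle is Step 2's concentration of $\widehat{\bm{\Gamma}}$: the summands are quadratic in dependent block sums of large bounded variables. The difficulty is getting the bound sharp enough, relative to $\tau$ and $S$, that the final rate is not worse than the approximation rate of Corollary \ref{thm:GA_autocov}.
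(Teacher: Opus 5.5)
Your architecture is the paper's. It also applies Theorem 3.1 of \cite{chernozhukov2013gaussian} in the form $3\rho_{T_\ell}+2\pi(\vartheta)+2\mathbb{P}(e_D>\vartheta)$ with $e_D=\max|\widehat{\bm{\Gamma}}-\bm{\Gamma}|$, takes $\rho_{T_\ell}$ from Corollary~\ref{thm:GA_autocov}, and gets an $O(S^{-1})$ bias. You differ only in controlling the fluctuation of $\widehat{\bm{\Gamma}}$. The paper passes to $S$-dependent block sums $\bm{D}_r^{(S)}$, applies Lemma A.1 of \cite{chernozhukov2013gaussian} in expectation, and finishes with Markov. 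You instead apply \Cref{thm:bernstein_bounded} along the block index with a union bound. That is legitimate (it gives exponential rather than Markov tails), provided you rescale the block products, e.g.\ by $S\tau^{1/2}$. Otherwise the constant of \Cref{thm:bernstein_bounded}, which depends on the dependence norm of the sequence, is not absolute.

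The gap is the step you flagged, and as written it fails.
\begin{itemize}
\item \emph{The displayed bound is too weak.} If $D_{r,a}D_{r,b}$ is only bounded by $(2S\tau)^2$, your sub-exponential term is $S^2\tau^2(\log n)^2\log(dn)/n$. With $S\asymp n^{(1-\beta)/5}$, $\tau\asymp n^{16/57}(\log d)^{-1/3}$ and $\log d\asymp n^\beta$, this is $n^{-11/285-\beta/15}$ up to logarithms. Then $\Delta^{1/3}(\log d)^{2/3}\approx n^{-11/855+29\beta/45}$. This exceeds $n^{-2(1/19-\beta)/3}$ by the factor $n^{(1-\beta)/45}$, and it does not even vanish for $\beta>11/551$.
\item \emph{The fallback cannot help.} Chebyshev tails are polynomial, and a union bound over $d^4=\exp(O(n^\beta))$ pairs destroys them.
\item \emph{The missing idea is a sharp uniform bound on the block sums themselves.} Each $D_{r,a}$ is a sum of $2S$ centred terms $\pm\psi_\tau(H_{t,\ell,(jk)})$, bounded by $2\tau$ with $O(1)$ long-run variance. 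So \Cref{thm:bernstein_bounded} and a union bound over $Rd^2$ indices give, with probability $1-n^{-K}$,
\[
\max_{r,a}|D_{r,a}|\le\sqrt{\kappa},\qquad \sqrt{\kappa}\asymp\sqrt{S\log(dn)}+\tau(\log n)^2\log(dn)\ll S\tau .
\]
\item \emph{Truncating the products at $\kappa$ closes the argument.} This is what ``truncate once more'' must mean. Nothing changes on that event, the bias is at most $(2S\tau)^2n^{-K}$, and the sub-exponential term becomes $\kappa(\log n)^2\log(dn)/n\approx n^{-25/57+7\beta/3}$, which is negligible.
\item \emph{Moment correction.} With six moments of $\bm X_i$ (three of $H$) you only have $\mathbb{E}D_{r,a}^4\lesssim S^2\tau$, not $S^2$. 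The Gaussian part is therefore $\sqrt{\tau\log(dn)/R}\approx n^{-74/285+7\beta/30}$, still $o(S^{-1})$.
\end{itemize}
Hence $\Delta\lesssim S^{-1}$ and $\pi(\Delta)\lesssim n^{-(1-\beta)/15+2\beta/3}=o(n^{-2(1/19-\beta)/3})$. The rate then comes from Corollary~\ref{thm:GA_autocov}, as claimed.

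The paper's proof needs the same repair. It records the maximal term of Lemma A.1 as $\tau\log d/R$, but the only bound it uses there, $|D^{(S)}_{r,a}|\lesssim S\tau$, makes that term $S\tau^2\log d/R$, which is your problematic term.

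Finally, \Cref{lemma:bias} controls truncation bias, not differences of $\mathbb{E}\psi_\tau(\bm{H}_{t,\ell})$ between adjacent blocks, so your drift remark is unsupported. The paper simply assumes $\{\bm{H}_{t,\ell}\}$ stationary here, so that $\mathbb{E}\bm{D}_r=\bm{0}$.
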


\section{Numeric results}\label{sec:simulation}
In this section, we conduct simulation studies in various scenarios as well as a real data example. In particular, we assess the finite sample performance of our proposed autocovariance estimators in terms of estimation and inference, as detailed in Sections~\ref{sec:simu-est} and \ref{sec:simu-inference}, respectively. In Section~\ref{sec:real:data}, we apply our procedure to detect change points in real-world economic data.
\subsection{Estimation}\label{sec:simu-est}
We conduct simulated experiments to validate the nonasymptotic results for the two tail-robust autocovariance matrix estimators studied in Section \ref{sec:robust}. It is known that the median-of-means is another tail-robust estimator \cite[see e.g.,][]{lerasle2011robust}. We adapt the median-of-means to estimate autocovariance matrices and use it as a competitor. We compare the performance of these estimators to that of the sample autocovariance matrix.
For the element-wise truncated estimator and the element-wise Huber's $M$-estimator, we select the robustification parameter $\tau$ by the gap-block cross-validation, detailed in Subsection \ref{sec:gap-block_CV}, with $H_1 = H_2 = 10$. We consider the following two scenarios.
\\
{\bf Scenario 1: Stationary process.} Data are simulated from the $d$-dimensional VAR($1$) model
\begin{equation*}
    \bm{X}_i = \rho\bm{X}_{i-1} + \bm{Z}_i,
\end{equation*}
where the parameter $\rho = 0.5$ is a scalar, $\{\bm{Z}_i\}_{i \in \mathbb{Z}}$ are iid error process, with $\mathbb{E}[\bm{Z}_i] = \bm{0}$ and $\var(\bm{Z}_i) = \bm{\Gamma} \in \mathbb{R}^{d \times d}$ is a deterministic matrix. Equivalently, we write $\bm{Z}_i = \bm{\Gamma}^{1/2}\bm{\epsilon}_i$ and $\bm{\epsilon}_i \in \mathbb{R}^d$ are iid with $\mathbb{E}[\bm{\epsilon}_i] = \bm{0}$ and $\var(\bm{\epsilon}_i) = \bm{I}_d$.
We consider the following four distributions of $\epsilon_{i,j}$, $i \in [n], j \in [d]$.
\begin{enumerate}
    \item[(1)] (Normal). $\epsilon_{i,j}$ follows a standard Normal distribution.
    \item[(2)] (Pareto). $\epsilon_{i,j}$ follows a standardized Pareto distribution, i.e.~$\epsilon_{i,j} = (3/4)^{-1/2}(Y_{i,j} - 3/2)$ where $Y_{i,j}$'s are iid from a Pareto distribution with a shape parameter $3$ and a scale parameter $1$.
    \item[(3)] (Log-Normal). $\epsilon_{i,j}$ follows a standardized Log-normal distribution, i.e.~$\epsilon_{i,j} = (e^2 - e)^{-1/2}[\exp(Y_{i,j}) - \exp(1/2)]$ where $Y_{i,j}$'s are iid from a standard Normal distribution.
    \item[(4)] (Student's $t$). $\epsilon_{i,j}$ follows a standardized Student's $t_4$ distribution, i.e.~$\epsilon_{i,j} = 2^{-1/2}Y_{i,j}$ where $Y_{i,j}$'s are iid from a $t_4$ distribution.
\end{enumerate}
Moreover, we consider the following three different structures for $\bm{\Gamma}$.
\begin{enumerate}
    \item[(a)] (Diagonal structure). $\bm{\Gamma} = \bm{I}_d$.
    \item[(b)] (Equal correlation structure). $\Gamma_{(ij)} = 1$ if $i = j$ and $\Gamma_{(ij)} = 0.5$ if $i \neq j$.
    \item[(c)] (Power decay structure). $\Gamma_{(ij)} = 0.5^{|i-j|}$.
\end{enumerate}
Since $\bm{\Gamma}$ is symmetric, the population lag-$\ell$ autocovariance matrix is
\begin{equation*}
    \bm{\Sigma}_\ell = (1-\rho^2)^{-1}\rho^{|\ell|}\bm{\Gamma}.
\end{equation*}
For each of the above scenarios, we vary $n \in \{50, 100\}$ and $d \in \{50,100,150\}$, and simulate $200$ replicates. Following Ke et al. \cite{ke2019user}, we assess the comparison by the Relative Mean Error (RME) under spectral, max and Frobenius norms:
\begin{equation*}
    \text{RME}_{\ell} = \frac{\sum_{r = 1}^{200}\Vert \widehat{\bm{\Sigma}}_{\ell,(r)} - \bm{\Sigma}_{\ell} \Vert_{,\max,\text{F}}}{\sum_{r = 1}^{200}\Vert \widetilde{\bm{\Sigma}}_{\ell,(r)} - \bm{\Sigma}_{\ell} \Vert_{,\max,\text{F}}},
\end{equation*}
where $\widehat{\bm{\Sigma}}_{\ell,(r)}$ is one of the tail-robust estimators of $\bm{\Sigma}_{\ell}$ in the $r$-th simulation, and $\widetilde{\bm{\Sigma}}_{\ell,(r)}$ is the sample estimator. %

\begin{figure}
\begin{center}
\includegraphics[width=5in]{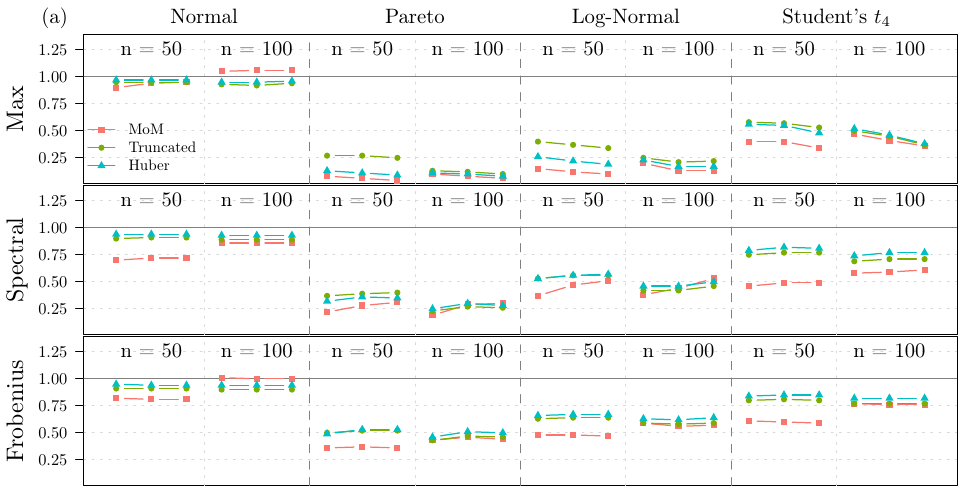}
\includegraphics[width=5in]{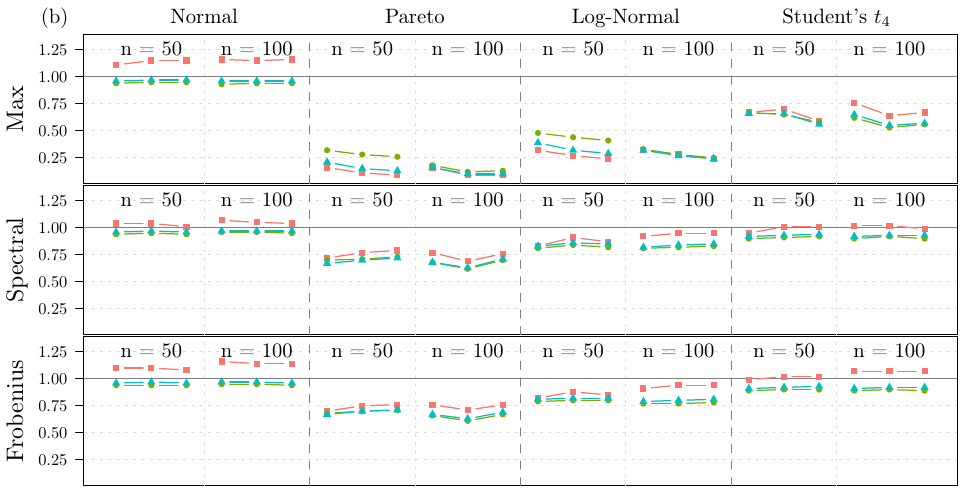}
\includegraphics[width=5in]{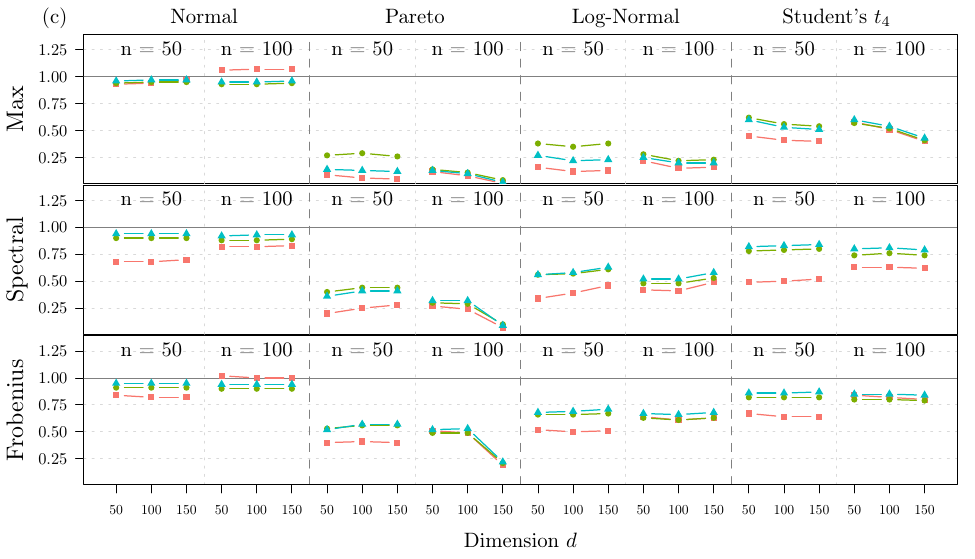}
\caption{RMEs of tail-robust lag-$1$ autocovariance matrix estimators with respective to max, spectral and Frobenius norms. Data are simulated, with $200$ replicates, from VAR($1$) model with $\rho = 0.5$, and innovations follow Normal, Pareto, Log-Normal and Students's $t_4$ distributions.%
Three structures of $\bm{\Gamma}$ (diagonal, equal correlation and power decay) are used respectively in panel (a), (b) and (c).}
\label{fig:var_lag1}
\end{center}
\end{figure}

We summarize the RMEs of these four tail-robust autocovariance estimators in Figure \ref{fig:var_lag1}, for $\ell = 1$ and under the three covariance structures. In the figure legend, we abbreviate the names of three tail-robust estimators as ``Truncated'', ``Huber'' and ``MoM''. For all three matrix norms, when the value of RME is less than $1$, the corresponding tail-robust estimator outperforms the sample autocovariance, and vice versa. It is shown that the performance of the median-of-means estimator has large dispersion compared to our estimators. This suggests that its performance is highly sensitive to the selection of the number of blocks, which becomes more evident when $n=50$. However, in general, our estimators perform much better than the sample autocovariance matrix in all heavy-tailed settings, and only slightly outperform the sample autocovariance matrix in the Normal setting. These results correspond to our nonasymptotic results given in Section \ref{sec:robust}. Moreover, with the same cross-validation criteria for selecting the robustification parameters, the element-wise Huber's $M$-estimator always performs similarly to the element-wise truncated estimator, which suggests the use of element-wise truncated estimator in practical, due to its computational efficiency. Similar conclusions can also be drawn when estimating $\bm{\Sigma}_{\ell}$ with $\ell = 0$ and $\ell = 2$. These results are presented in Section A of the supplementary material.
\\
\\
{\bf Scenario 2: Nonstationary process.}  Data are simulated from the following $\mathbb{R}^d$-valued process
\begin{equation*}
    \bm{X}_i = 
    \begin{cases}
    \sum_{k=0}^{1000}\bm{A}^{(1)}_k\bm{\epsilon}_{i-k}, \;\; \text{for} \;\; i \in \{1, \dots, \lfloor n/2\rfloor\},\\
    \sum_{k=0}^{1000}\bm{A}^{(2)}_k\bm{\epsilon}_{i-k},  \;\; \text{for} \;\; i \in \{\lfloor n/2 \rfloor+1, \dots, n\},
    \end{cases}
\end{equation*}
where $\{\bm{A}^{(1)}_k\}_{k=0}^{1000}$ and $\{\bm{A}^{(2)}_k\}_{k=0}^{1000}$ are sequences of $d \times d$ deterministic matrices, whose entries are independently generated from standard normal distribution. Once generated, these matrices are fixed throughout the simulation. The innovations $\{\bm{\epsilon}_i \in \mathbb{R}^d\}_{i \in \mathbb Z}$ are iid with $\mathbb{E}[\bm{\epsilon}_i] = \bm{0}$ and $\var(\bm{\epsilon}_i) = \bm{I}_d$.
We consider the same distributions of $\epsilon_{i,j}$ as in Section \ref{sec:simulation}.
The population lag-$\ell$ autocovariance matrix is
\begin{equation*}
    \bm{\Sigma}_\ell(i) = \begin{cases}
    \sum_{k=0}^{1000-\ell}\bm{A}^{(1)}_{k}(\bm{A}^{(1)})_{k+\ell}^{\intercal}, &\;\; \text{for} \;\; i \in \{1, \dots, \lfloor n/2\rfloor - \ell\},\\
    \sum_{k=0}^{1000-\ell}\bm{A}^{(1)}_{k}(\bm{A}^{(2)})_{k+\ell}^{\intercal}, &\;\; \text{for} \;\; i \in \{\lfloor n/2\rfloor - \ell, \dots, \lfloor n/2\rfloor\},\\
    \sum_{k=0}^{1000-\ell}\bm{A}^{(2)}_{k}(\bm{A}^{(2)})_{k+\ell}^{\intercal}, &\;\; \text{for} \;\; i \in \{\lfloor n/2\rfloor+1, \dots, n\}.
    \end{cases}
\end{equation*}

We present the simulation results for lag $\ell = 0, 1, 2$ in panels (a)-(c) of Figure \ref{fig:LP} respectively. These graphs are of the same format as the ones under VAR($1$) model, and they show the similar robust performance of these tail-robust autocovariance estimators.

\begin{figure}
\begin{center}
\includegraphics[width=5in]{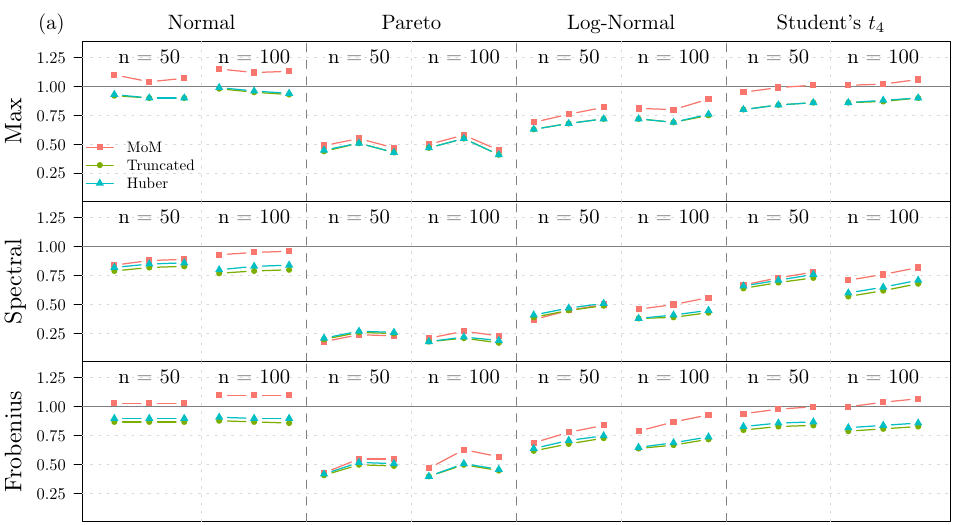}
\vspace{-0.2cm}
\includegraphics[width=5in]{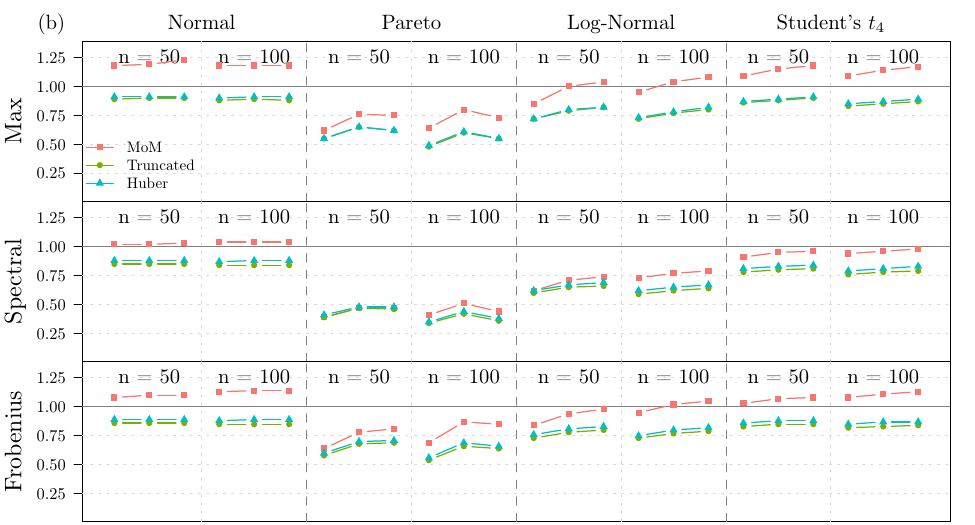}
\vspace{-0.2cm}
\includegraphics[width=5in]{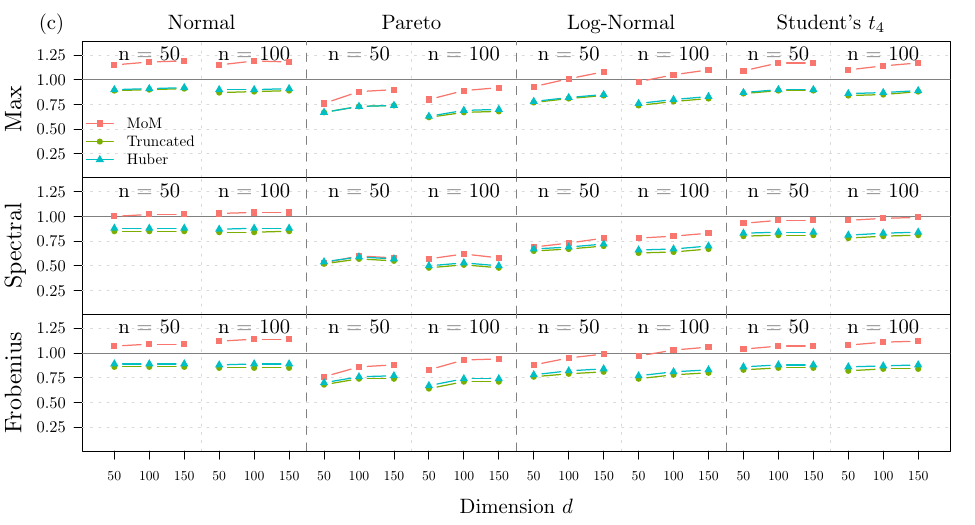}
\caption{RMEs of tail-robust autocovariance matrix estimators with respective to max, spectral and Frobenius norms. Data are simulated, with $200$ replicates, from a nonstationary linear process with innovations respectively following Normal, Pareto, Log-Normal and Students's $t_4$ distributions. 
Results for $\bm{\Sigma}_\ell$ with $\ell = 0, 1, 2$ are given respectively in panels (a), (b) and (c).}
\label{fig:LP}
\end{center}
\end{figure}

\subsection{Inference}\label{sec:simu-inference}
In this subsection, we focus on the element-wise truncated covariance matrix estimator and show the numeric performance of our proposed inference procedures in \Cref{sec:Gaussian_autocov} on testing covariance matrix of high-dimensional time series $\{\bm{X}_i\}_{i = 1}^n$. The null and alternative hypotheses are formalized as
\[
H_0: \cov(\bm{X}_i) = \bm{\Sigma}_0, \quad \text{and} \quad H_A: \cov(\bm{X}_i) \prec \bm{\Sigma}_0.
\]
We consider the same VAR(1) model as in \Cref{sec:simu-est} with $\rho = 0.3$, $\epsilon_{i,j}$ following standardized $t_6$ distribution. Under the null hypothesis, we let the covariance matrix of the innovation $\bm{\Gamma}_0 = \bm{I}_d$, which leads to the population covariance matrix be $\bm{\Sigma}_0 = (1-\rho^2)^{-1}\bm{I}_d$. Under the alternatives, we let $\bm{\Gamma}_A = \omega \cdot \bm{I}_d$. Thus, $\bm{\Sigma}_A = (1-\rho^2)^{-1}\omega \bm{I}_d$. 

Let $T = n^{1/2}\|\widehat{\bm{\Sigma}} - \bm{\Sigma}_0 \|_{\max}$ be the test statistic and $\widehat{\bm{\Sigma}}$ be the element-wise truncated covariance matrix estimator. For a significant level $\alpha \in (0,1)$, by \Cref{thm:test_size}, the (Gaussian multiplier) bootstrapped critical value is denoted as $q^{(boot)}(1-\alpha)$ defined in \eqref{eq:empirical_quantile}. In the following, we vary $n \in \{300, 500\}$ and $d \in \{50, 75, 100\}$. Note that $q^{(boot)}(1-\alpha)$ does not involve any unknown model parameters, and it is adaptive to each setting. The inference problem considered is more challenging than the estimation problem in \Cref{sec:simu-est}, since the former inherently involves estimating the $d^2\times d^2$ asymptotic covariance matrix. However, this issue is avoided by using the Gaussian multiplier bootstrap.

There are two tuning parameters involved in the procedure: (a) the robustification parameter $\tau$ and (b) the block size $S$ for the Gaussian multiplier bootstrap. In this subsection, we set $\tau = 3.2 \cdot (n^{16/19}/\log d)^{1/3}$ guided by \Cref{thm:GA_autocov}, and fix $S = 19$.
\\
\\
\textbf{Under $H_0$.} We simulate $\{\bm{X}_i\}_{i = 1}^n$ under the null with different $n$ and $d$. Based on the simulated data, $500$ Gaussian multiplier bootstrap are performed. Define
\begin{align*}
    \text{accept}_{n,d}(1-\alpha)  = \mathbbm{1}\left\{ T_{n,d} \leq q_{n,d}^{(boot)}(1-\alpha)\right\}.
\end{align*}
For each case, $500$ repetitions are conducted, and we report the proposition of acceptance in \Cref{tab:size}. The table shows that our procedure produces critical values match well with the nominal confidence level $1-\alpha$. As expected, the performance of the procedure increases as $n$ increases and/or $d$ decreases. 
\begin{table}[htbp]
\centering
\caption{Proposition of acceptance under $H_0$ based on the Gaussian multiplier bootstrapped critical value.}
\begin{tabular}{ccccccc}
\hline
 & \multicolumn{3}{c}{$n = 300$} & \multicolumn{3}{c}{$n = 500$} \\
$1-\alpha$ & $d = 50$  & $d = 75$ & $d = 100$ & $d = 50$ & $d = 75$ & $d = 100$ \\ \hline
90\% & 0.894            & 0.898  & 0.918            & 0.900 & 0.894            & 0.894 \\
95\% & 0.954            & 0.966  & 0.974   & 0.948 & 0.954            & 0.960\\
99\% & 0.998            & 1.000  & 0.998    & 0.994 & 0.996            & 0.996\\\hline
\end{tabular}
\label{tab:size}
\end{table}
\\
\\
\textbf{Under $H_A$.}
We simulate $\{\bm{X}_i\}_{i = 1}^n$ under the alternative models with $\bm{\Gamma} = \omega \cdot \bm{I}_d$, where $\omega \in \{0.95, 0.90, 0.85, 0.80\}$. In theses settings, $\bm{\Sigma} \prec \bm{\Sigma}_0$. As $\omega$ drifting away from $1$, for fixed $n$ and $d$, $\bm{\Sigma}$ deviates from $\bm{\Sigma}_0$. We set $\alpha = 0.05$ and the number of repetitions for each case be $500$. For each repetition, we record if it is rejected, i.e.~$1-\text{accept}_{n,d}(0.95)$. \Cref{fig:power} summarize the proposition of rejection under $H_A$, i.e.~power, based on the Gaussian multiplier bootstrapped critical value $q_{n,d}^{(boot)}(0.95)$. We can see that all the power curves increase as the alternatives move away from the null. The power increases, as the sample size increases or as the dimensionality decreases. But the trend in dimensionality is less noticeable when the sample size is $500$. 

\begin{figure}[htbp]
\begin{center}
\includegraphics[width=4in]{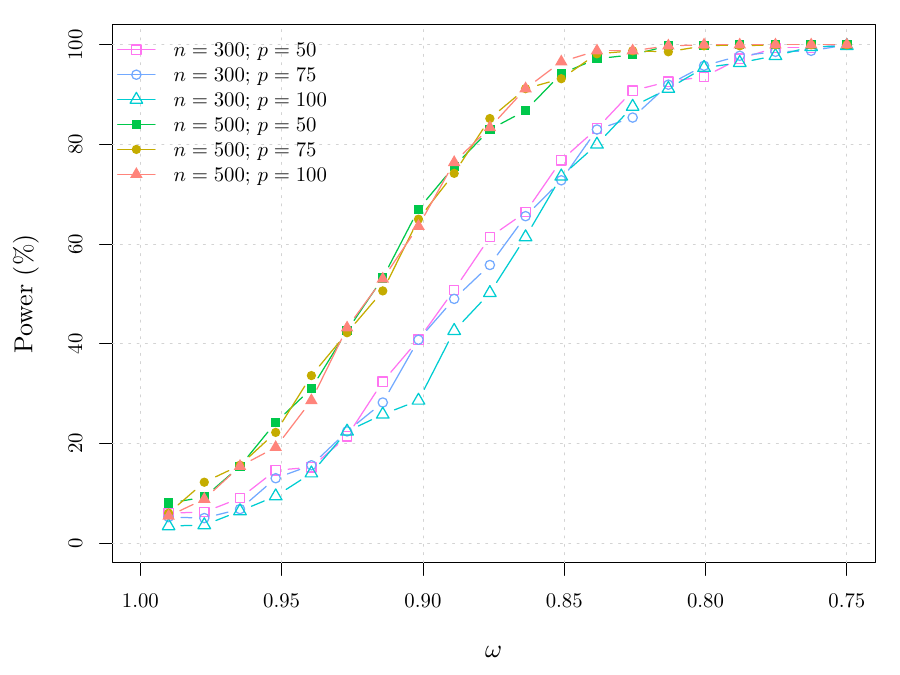}
\caption{Proposition of rejection under $H_A$ based on the Gaussian multiplier bootstrapped critical value.}
\label{fig:power}
\end{center}
\end{figure}

\subsection{Real data application}\label{sec:real:data}
We consider the Federal Reserve Economic Database\footnote{The dataset is publicly available at \url{https://research.stlouisfed.org/econ/mccracken/fred-databases}.} - a monthly data containing over $100$ macroeconomic variables, and aim to apply the proposed tail-robust autocovariance matrix estimator to detect underlying second-order change points. The change point analysis has been performed on the same data by Wang and Zhao \cite{wang2022optimal} and Xu et al. \cite{xu2022change}. More specifically, the hypothesis testing conducted by Wang and Zhao \cite{wang2022optimal} suggests that change points exist in the relationship between the monthly growth rate of the US industrial production index – an important indicator of macroeconomic activity, and other macroeconomic variables. Xu et al. \cite{xu2022change} further estimated and performed statistical inference on those change point locations. Although we consider a different type of change, i.e.~changes in autocovariance among macroeconomic variables, their results provide preliminary evidence for the existence of nonstationarity and structure changes. 

In this study, we consider the time period from January 1998 to December 2022. The data has been pre-processed by the R package \texttt{fbi} \citep{Rfbi} following the suggestions in the FRED-MD website.
The processed data is with dimension $d = 68$ and sample size $n = 300$. 
To detect a lag-$\ell$ autocovariance change point with $\ell\geq 0$, we consider a cumulative sum type statistic that is defined as
\[
T_{\ell}(t) = \sqrt{\frac{(n-t)t}{n}}\left\|\widehat{\bm{\Sigma}}^{[1,t]}_{\ell} - \widehat{\bm{\Sigma}}^{[t+1,n]}_{\ell}\right\|_{\max},
\]
where $\widehat{\bm{\Sigma}}^{[1,t]}_{\ell}$ and $\widehat{\bm{\Sigma}}^{[t+1,n]}_{\ell}$ are lag-$\ell$ autocovariance estimators before and after a time point $t \in [1+\delta, n-\delta]$ and $\delta \in \mathbb{Z}^+$ is a boundary removal parameter to avoid $t$ and $n-t$ being too small. We search for the location $\widehat{t}$ that minimizes $T_{\ell}(t)$ and consider $\widehat{t}$ as the change point estimator. When there is no change point, $T_{\ell}(t)$ can be seen as a variant of the test statistic $T_{\ell}$  in \Cref{thm:test_size} with unequal weights on each data point. Since \Cref{thm:GA_autocov} allows nonstationarity, we still use our Gaussian multiplier bootstrap to obtain the critical value of $T_{\ell}(t)$ for each $t$.

We compute the cumulative sum statistic $T_{\ell}(t)$ based on the element-wise truncated autocovariance and the sample autocovariance estimators. For our element-wise truncated autocovariance estimator and the associated Gaussian multiplier bootstrap, we choose the robustification parameter $\tau = 3.2\, (n^{16/19}/\log d)^{1/3}$ and the block size $S = 10$. The boundary removal parameter $\delta$ is set to be $20$. The left panel of \Cref{fig:realdata} shows the cumulative sum statistics based on our element-wise truncated autocovariance estimator with the $95\%$ critical value obtained by the Gaussian multiplier bootstrap. The detected change point ``June 2020'' is shortly after the outbreak of Covid-19 in the United States, which provides additional evidence that Covid-19 may profoundly affect the U.S. economy in a negative way. The right panel of \Cref{fig:realdata} plots the cumulative sum statistics based on the sample covariance estimator. The cumulative sums based on the robust and sample covariance estimators show similar trends. However, the latter seems to be heavily influenced by the heavy-tailedness, and thus not reliable for change point detection. We also detected the change point using lag $\ell=3$ to reflect the 3-month autocovariance structure change. The results presented in \Cref{fig:realdata_lag3} show that ``June 2020'' is also a significant change point for a 3-month autocovariance matrix change. This reveals the fact that the second-order structure change in the U.S. economy caused by the spread of Covid-19 is not only monthly but also quarterly. To visually illustrate our findings, we provide heat maps depicting the element-wise truncated estimators ($\widehat{\bm{\Sigma}}^{\mathrm{Before}}_{\ell}$ and $\widehat{\bm{\Sigma}}^{\mathrm{After}}_{\ell}$) computed using data collected before and after the estimated change point, as shown in \Cref{fig:realdata_heat0,fig:realdata_heat3}. These figures correspond to the heat maps with $\ell = 0$ and $\ell = 3$, respectively. They visually illustrate the changes in patterns that occurred before and after June 2020.

\begin{figure}[htbp]
\centering
\includegraphics[width=2.3in]{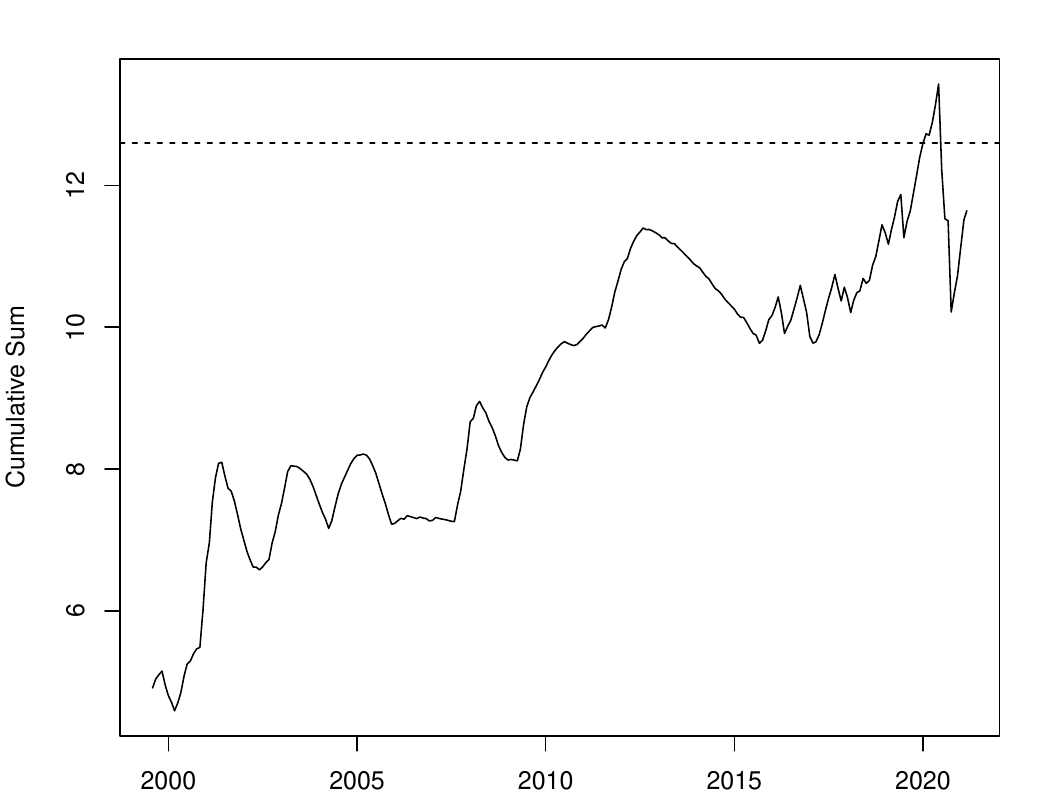}
\includegraphics[width=2.3in]{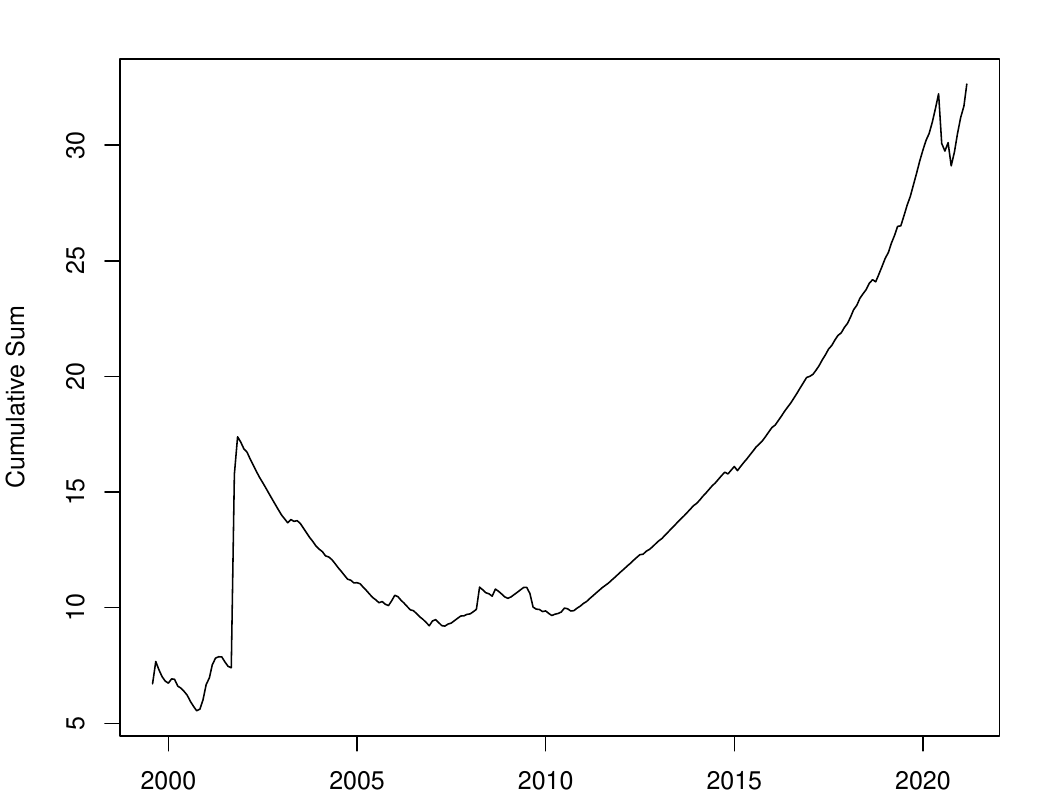}
\caption{Left: the cumulative sum statistics based on the element-wise truncated covariance estimator. The horizontal dashed line represents the $95\%$ critical value obtained from the Gaussian multiplier bootstrap. Right: the cumulative sum statistics based on the sample covariance estimator.}
\label{fig:realdata}
\end{figure}

\begin{landscape}
\begin{figure}[htbp]
\centering
\includegraphics[width=7in]{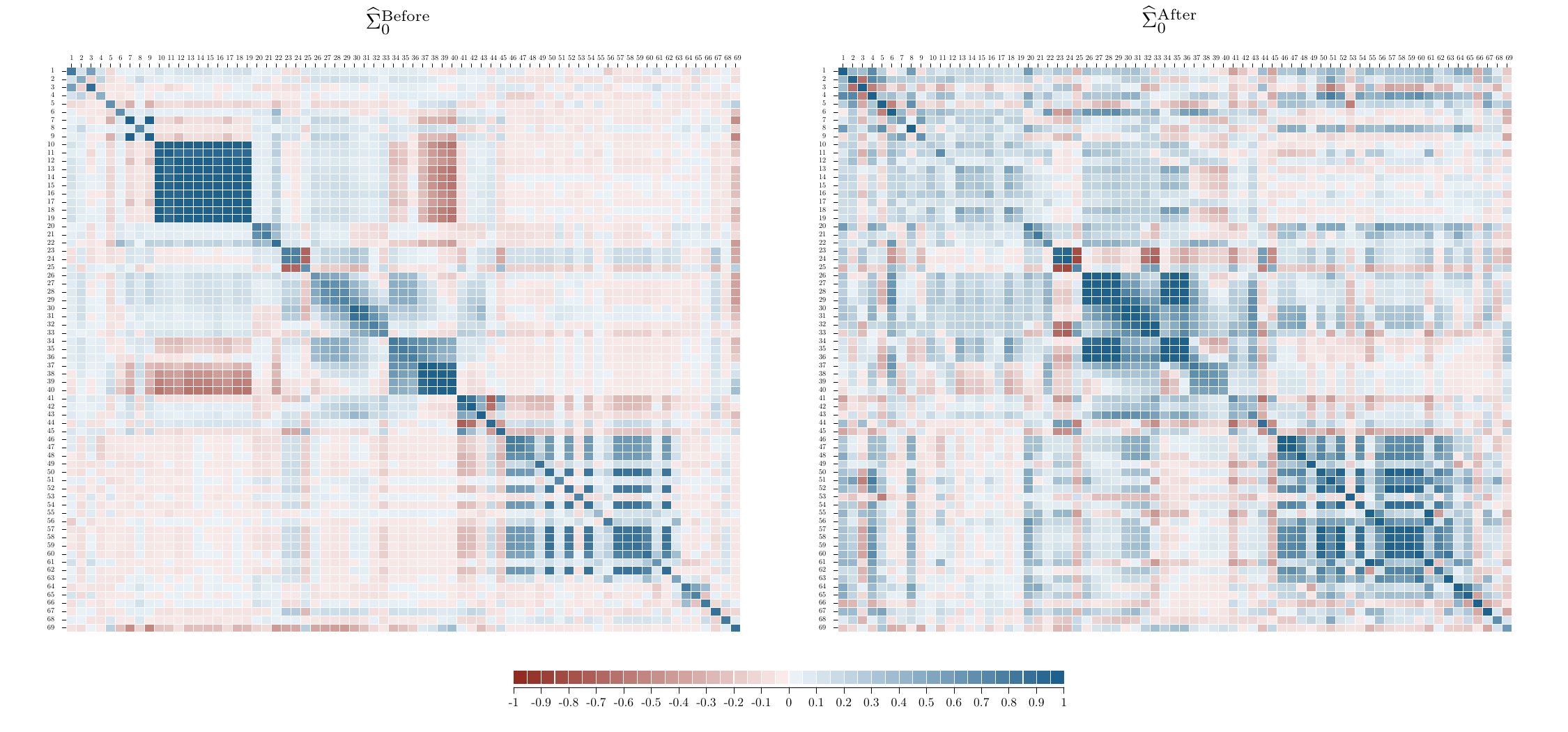}
\caption{Left: the element-wise truncated covariance estimator based on the data before June 2020. Right: the element-wise truncated covariance estimator based on the data after June 2020.}
\label{fig:realdata_heat0}
\end{figure}
\end{landscape}

\begin{figure}[htbp]
\centering
\includegraphics[width=2.3in]{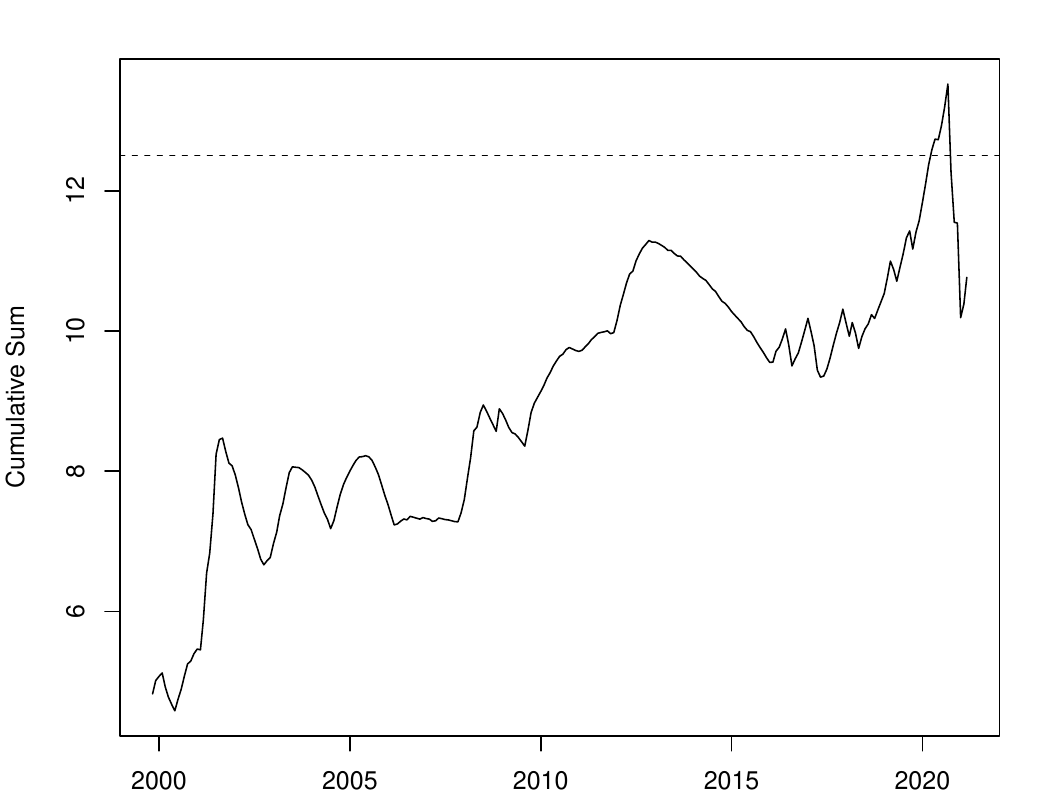}
\includegraphics[width=2.3in]{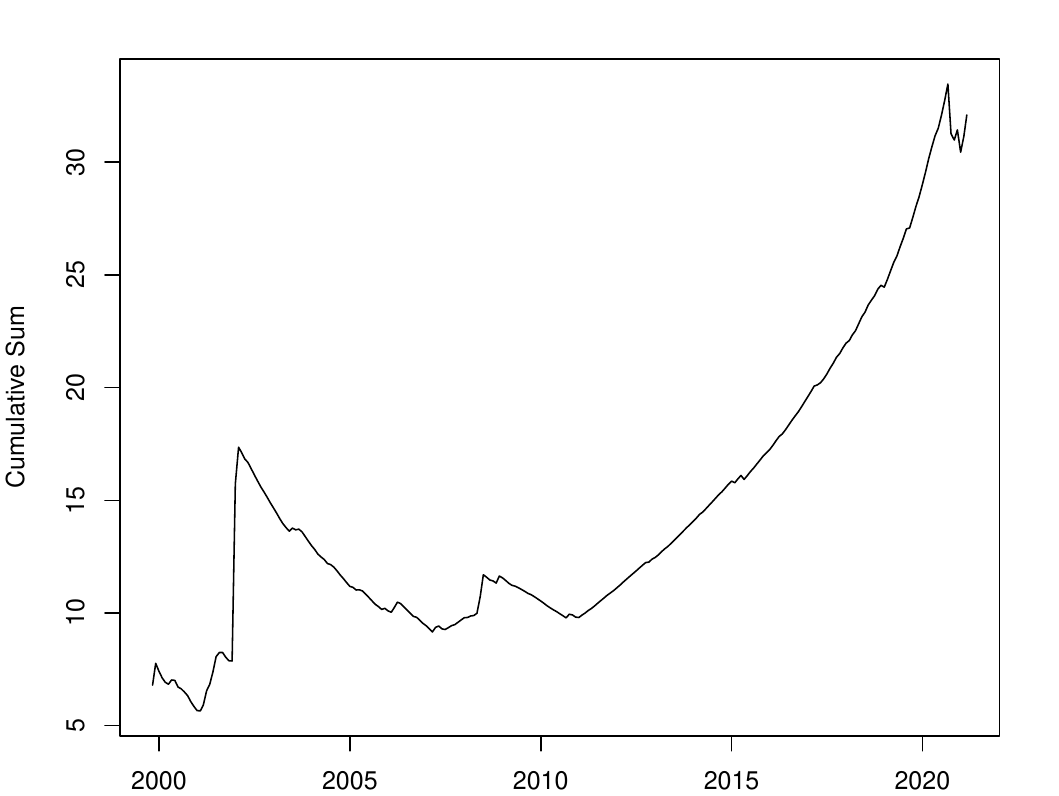}
\caption{Left: the cumulative sum statistics based on the element-wise truncated autocovariance estimator with lag-$3$. The horizontal dashed line represents the $95\%$ critical value obtained from the Gaussian multiplier bootstrap. Right: the cumulative sum statistics based on the sample autocovariance estimator with lag-$3$.}
\label{fig:realdata_lag3}
\end{figure}

\begin{landscape}
\begin{figure}[htbp]
\centering
\includegraphics[width=7in]{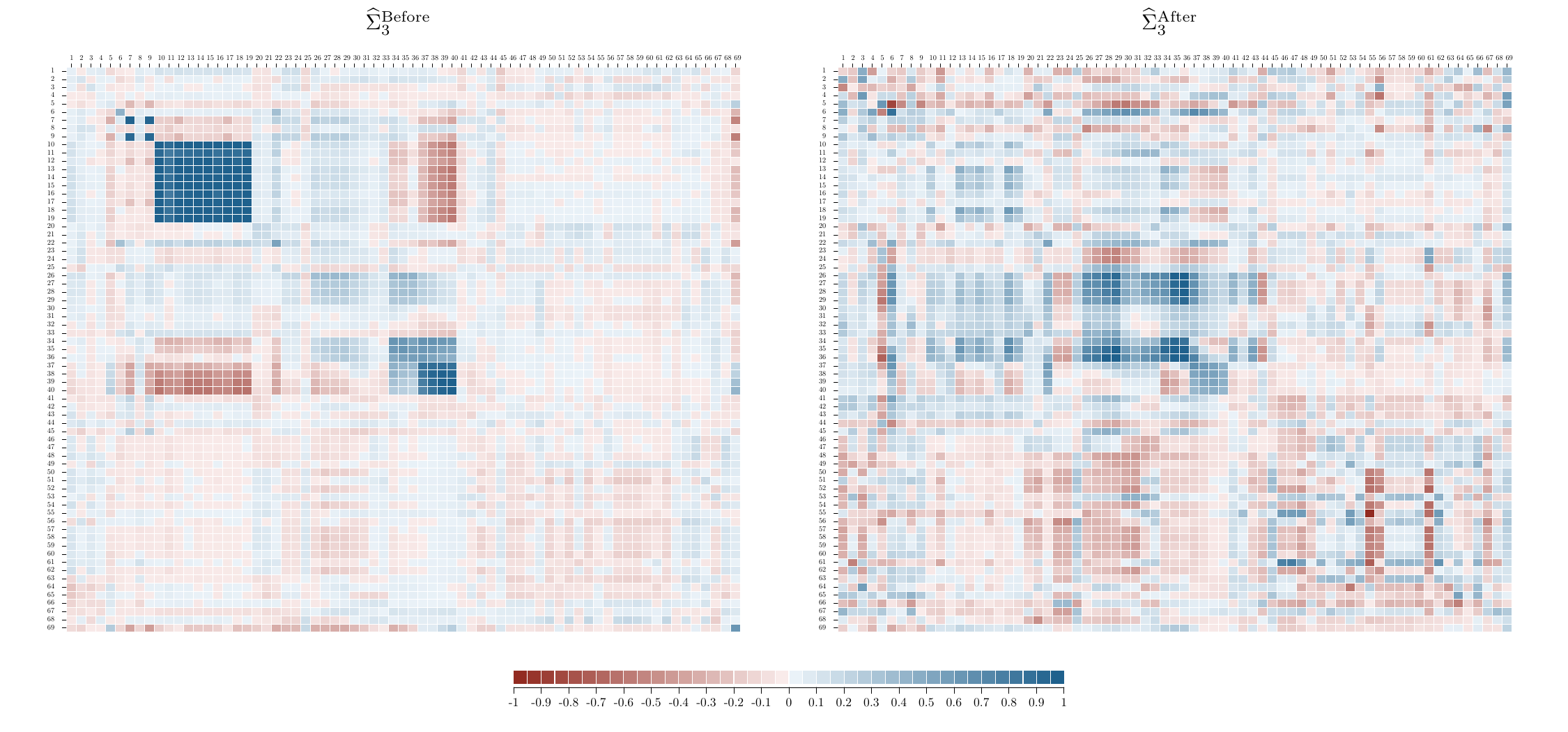}
\caption{Left: the element-wise truncated autocovariance estimator with lag-$3$ based on data before June 2020. Right: the element-wise truncated autocovariance estimator with lag-$3$ based on data after June 2020.}
\label{fig:realdata_heat3}
\end{figure}
\end{landscape}

\section{Conclusion}\label{sec:conclusion}
In this paper, we tackle problems of estimation and inference on autocovariance matrices under heavy-tailedness, high-dimensionality, general nonlinear temporal dependence, and potential nonstationarity. For estimation, we consider two types of tail-robust autocovariance matrix estimation methods: the element-wise Huber's $M$-estimator and a computationally more efficient element-wise truncated estimator. Both estimators are designed to achieve sharp error bounds with respect to the matrix max-norm. The nonasymptotic properties of these estimators are proved based on Bernstein-type inequalities under functional dependence for the potentially nonstationary processes which may be of independent interest. For inference, we focus on the element-wise truncated autocovariance estimator, which is simpler and computational more efficient. We prove a Gaussian approximation result, as a limiting distribution, for our element-wise truncated autocovariance estimator. A Gaussian multiplier bootstrap result is also given to facilitate the practicality. Our theoretical results are nonasymptotic, which give explicit error bounds in terms of sample size, dimensionality, moment, and the strength of temporal dependence. Numerical evidence is provided to support our theoretical results.

\clearpage
\setcounter{section}{0}
\setcounter{equation}{0}
\setcounter{figure}{0}
\setcounter{table}{0}
\setcounter{theorem}{0}
\setcounter{remark}{0}
\setcounter{assumption}{0}
\renewcommand{\thesection}{\Alph{section}}
\renewcommand{\theremark}{\thesection.\arabic{remark}}
\renewcommand{\theassumption}{\thesection.\arabic{assumption}}
\makeatletter
\@addtoreset{remark}{section}
\@addtoreset{assumption}{section}
\makeatother
% Unique hyperlink destinations for the supplement's restarted counters.
\renewcommand{\theHsection}{supp.\Alph{section}}
\renewcommand{\theHequation}{supp.\arabic{equation}}
\renewcommand{\theHfigure}{supp.\arabic{figure}}
\renewcommand{\theHtable}{supp.\arabic{table}}
\renewcommand{\theHtheorem}{supp.\thesection.\arabic{theorem}}
\renewcommand{\theHremark}{supp.\thesection.\arabic{remark}}
\renewcommand{\theHassumption}{supp.\thesection.\arabic{assumption}}
\section*{Supplementary material}

This supplementary material provides additional simulation results and collects all the technical proofs.

\section{Additional simulations}\label{app:additional_simu}
This section gives some additional simulation results of {\bf Scenario 1}. In Figures \ref{fig:var_lag0}-\ref{fig:var_lag2}, we summarize the RMEs of these four tail-robust autocovariance estimators for $\ell = 0$ and $\ell = 2$ and under the three covariance structures respectively. These figures show similar results as in Section 5, which provide additional evidence for the tail-robust properties of these tail-robust estimators. 

\begin{figure}
\begin{center}
\includegraphics[width=5in]{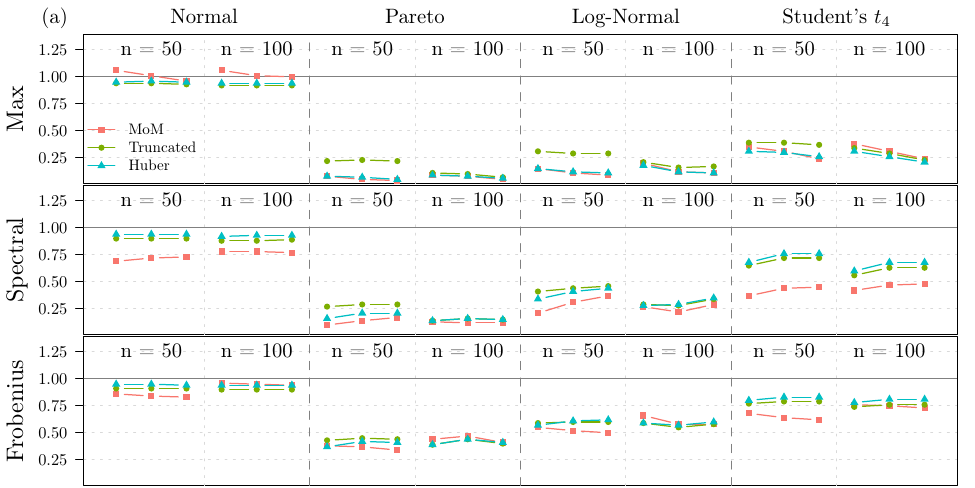}
\includegraphics[width=5in]{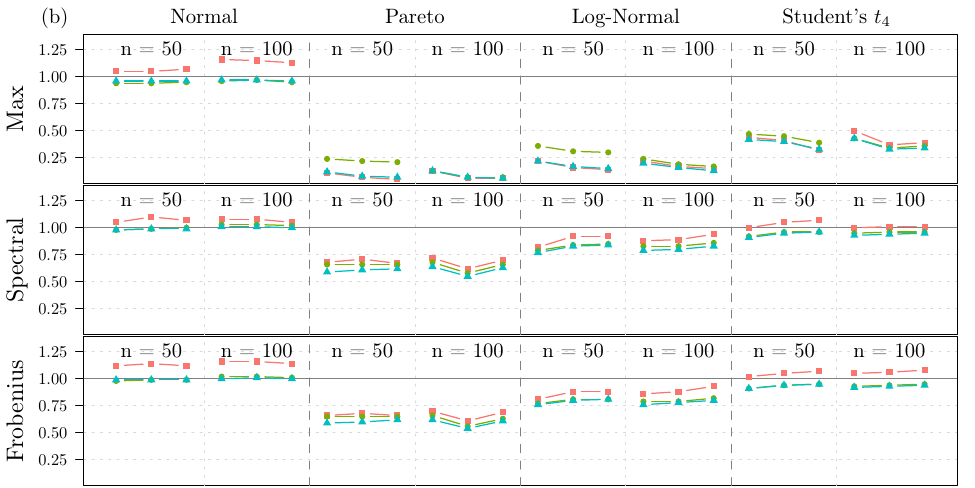}
\includegraphics[width=5in]{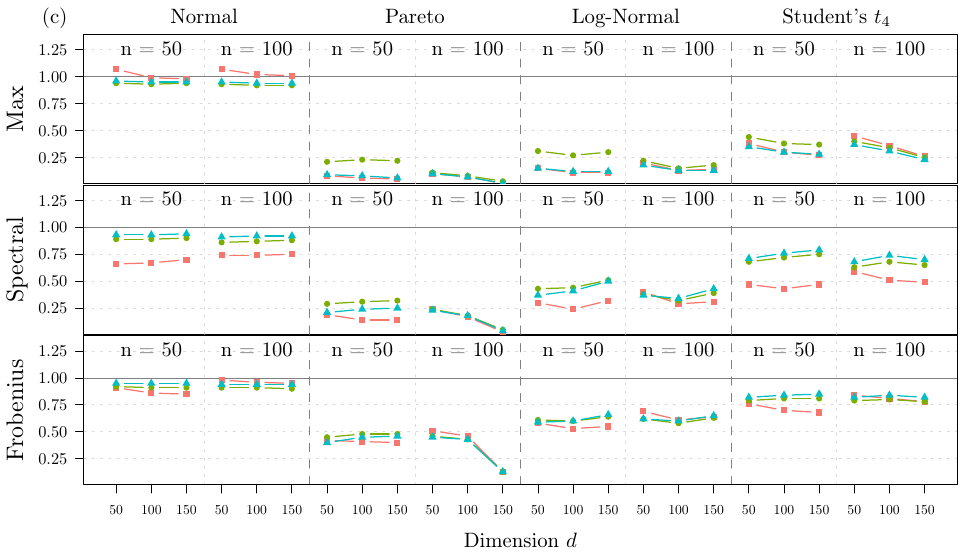}
\caption{RMEs of tail-robust lag-$0$ autocovariance matrix estimators with respective to max, spectral and Frobenius norms. Data are simulated, with $200$ replicates, from VAR($1$) model with $\rho = 0.5$, and innovations follow Normal, Pareto, Log-Normal and Students's $t_4$ distributions.%
Three structures of $\bm{\Gamma}$ (diagonal, equal correlation and power decay) are used respectively in panel (a), (b) and (c).}
\label{fig:var_lag0}
\end{center}
\end{figure}

\begin{figure}
\begin{center}
\includegraphics[width=5in]{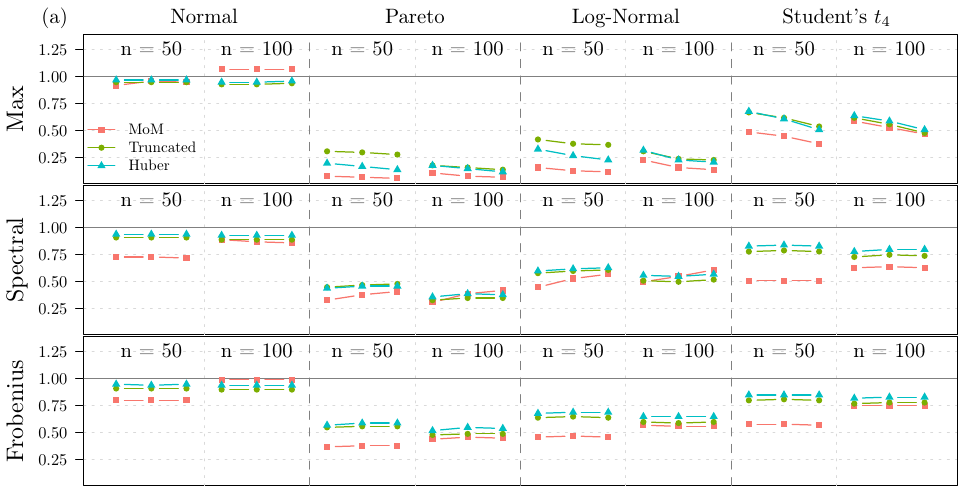}
\includegraphics[width=5in]{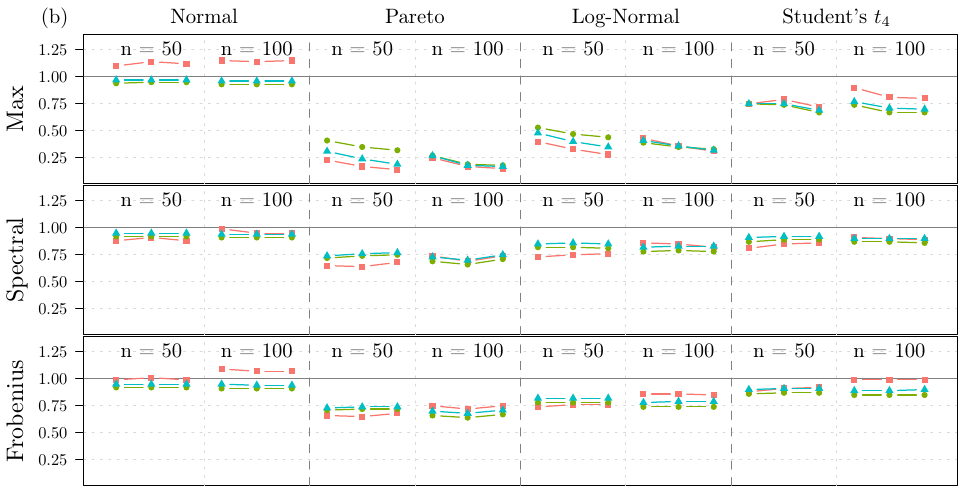}
\includegraphics[width=5in]{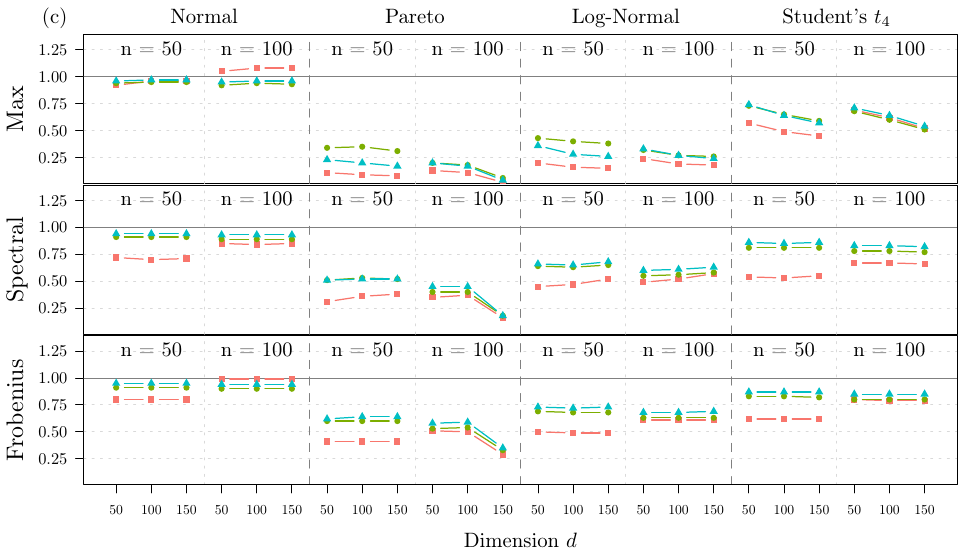}
\caption{RMEs of tail-robust lag-$2$ autocovariance matrix estimators with respective to max, spectral and Frobenius norms. Data are simulated, with $200$ replicates, from VAR($1$) model with $\rho = 0.5$, and innovations follow Normal, Pareto, Log-Normal and Students's $t_4$ distributions.%
Three structures of $\bm{\Gamma}$ (diagonal, equal correlation and power decay) are used respectively in panel (a), (b) and (c).}
\label{fig:var_lag2}
\end{center}
\end{figure}

\section{Auxiliary lemmas}\label{sec:auxiliary_lemma}
In this section, we present several auxiliary lemmas, which are useful for the proofs of our results. Throughout this section, we consider an $\mathbb{R}$-valued process $\{X_i\}_{i \in \mathbb{Z}}$ of the form (1).

The following lemma in \cite{sun2020adaptive} provides a deterministic inequality regarding a convex loss function $\mathcal{L}: \mathbb{R} \mapsto \mathbb{R}$. This inequality allows us to simultaneously control the symmetrized error of a local linear approximation for $\mathcal{L}$ in a neighbour of $\theta_0$. It will be used later to study the Huber's $M$-estimator.
\begin{lemma}[Lemma C.1 in \cite{sun2020adaptive}]\label{lemma:local}
    Let $D_{\mathcal{L}}(\theta_1, \theta_2) = \mathcal{L}(\theta_1) - \mathcal{L}(\theta_2) - \mathcal{L}^{\prime}(\theta_2)(\theta_1 - \theta_2)$ and its symmetrized version $D_{\mathcal{L}}^s(\theta_1, \theta_2) = D_{\mathcal{L}}(\theta_1, \theta_2) + D_{\mathcal{L}}(\theta_2, \theta_1)$. For $\theta_{\eta} = \theta_0 + \eta(\theta - \theta_0)$ with $\eta \in (0,1]$ and any convex function $\mathcal{L}$, we have
    \begin{align*}
        D_{\mathcal{L}}^s(\theta_{\eta}, \theta_0) \leq \eta D_{\mathcal{L}}^s(\theta, \theta_0).
    \end{align*}
\end{lemma}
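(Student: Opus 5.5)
The plan is to reduce both sides to expressions involving only the derivative $\mathcal{L}^{\prime}$ and then use the monotonicity of $\mathcal{L}^{\prime}$ that comes from convexity. I assume, as the statement implicitly does, that $\mathcal{L}$ is differentiable. This holds for the Huber loss \eqref{eq:Huber.loss}, whose derivative is $\psi_{\tau}$. For a general convex $\mathcal{L}$ one would instead fix a subgradient selection $\mathcal{L}^{\prime}$; the argument below uses only monotonicity, so it carries over unchanged.

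\textbf{Step 1 (simplify the symmetrized divergence).} Adding the two one-sided Bregman terms, the $\mathcal{L}$ values cancel:
\begin{align*}
D_{\mathcal{L}}^s(\theta_1,\theta_2) &= -\mathcal{L}^{\prime}(\theta_2)(\theta_1-\theta_2) - \mathcal{L}^{\prime}(\theta_1)(\theta_2-\theta_1) \\
&= \big(\mathcal{L}^{\prime}(\theta_1)-\mathcal{L}^{\prime}(\theta_2)\big)(\theta_1-\theta_2).
\end{align*}
Since $\theta_{\eta}-\theta_0=\eta(\theta-\theta_0)$, this gives
\[
D_{\mathcal{L}}^s(\theta_\eta,\theta_0)=\eta\big(\mathcal{L}^{\prime}(\theta_\eta)-\mathcal{L}^{\prime}(\theta_0)\big)(\theta-\theta_0), \qquad D_{\mathcal{L}}^s(\theta,\theta_0)=\big(\mathcal{L}^{\prime}(\theta)-\mathcal{L}^{\prime}(\theta_0)\big)(\theta-\theta_0).
\]

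\textbf{Step 2 (reduce the claim).} Since $\eta>0$, we may divide by $\eta$. The desired inequality is then equivalent to
\[
\big(\mathcal{L}^{\prime}(\theta)-\mathcal{L}^{\prime}(\theta_\eta)\big)(\theta-\theta_0)\ge 0 .
\]

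\textbf{Step 3 (use convexity).} Convexity of $\mathcal{L}$ means $\mathcal{L}^{\prime}$ is nondecreasing, so $(\mathcal{L}^{\prime}(a)-\mathcal{L}^{\prime}(b))(a-b)\ge 0$ for all $a,b$. Take $a=\theta$ and $b=\theta_\eta$, and note that $\theta-\theta_\eta=(1-\eta)(\theta-\theta_0)$.
\begin{itemize}
\item If $\eta<1$, dividing by $1-\eta>0$ yields exactly the inequality of Step 2.
\item If $\eta=1$, then $\theta_\eta=\theta$, the left side of Step 2 is zero, and the claim holds with equality.
\end{itemize}

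I do not expect a real obstacle here. The only point needing care is the regularity of $\mathcal{L}$: if it is not differentiable, the same computation goes through with any fixed monotone subgradient selection, since Steps 1–3 use only that selection's monotonicity.
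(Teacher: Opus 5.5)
Your argument is correct: the identity $D^s_{\mathcal{L}}(\theta_1,\theta_2)=\big(\mathcal{L}'(\theta_1)-\mathcal{L}'(\theta_2)\big)(\theta_1-\theta_2)$, combined with monotonicity of $\mathcal{L}'$ on the pair $(\theta,\theta_\eta)$ and the relation $\theta-\theta_\eta=(1-\eta)(\theta-\theta_0)$, is all that is needed, and your remark about a fixed monotone subgradient selection is also valid. The paper gives no proof of its own and only cites \cite{sun2020adaptive}, whose argument is the same in substance: it takes the convex function $Q(\eta)=D_{\mathcal{L}}(\theta_\eta,\theta_0)$ and uses monotonicity of $Q'$ to get $D^s_{\mathcal{L}}(\theta_\eta,\theta_0)=\eta Q'(\eta)\le \eta Q'(1)=\eta D^s_{\mathcal{L}}(\theta,\theta_0)$.
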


For $x \in \mathbb{R}$, define the binary random variable $Y_{i}(x) = \mathbbm{1}\{X_{i} \leq x\}$, where $\mathbbm{1}\{\cdot\}$ is an indicator function. For some integer $q \geq 2$, we assume $\Vert X_{i} \Vert_q < \infty$ and the $q$-th order functional dependence measure of $X_{i}$ satisfies certain decay rates. The following lemma shows that the decay rates can be preserved by $\{Y_{i}(x)\}_{i\in \mathbb{Z}}$ uniformly for all $x \in \mathbb{R}$, if the density function of the marginal distribution of $X_i$ is bounded.

\begin{lemma}\label{lemma:indicator_fct}
    Assume that for some integer $q \geq 2$, $\Vert X_i\Vert_q < \infty$, and the marginal distribution of $X_{i}$ is absolutely continuous and has a bounded density function, i.e.~$\sup_{x \in \mathbb{R}}f_{X}(x) < \infty$. Suppose there exists some $c > 0$ such that $\|X_{\cdot}\|_{q} < \infty$. Then, we have for any $x \in \mathbb{R}$ and any $q^{\prime} \geq 2$ that
    \[
    \|Y_{\cdot}(x)\|_{q^{\prime}} = \sup_{m \geq 0} \exp(c^{\prime}m)\sum_{s = m}^{\infty}\delta_{s, q^{\prime}}^{Y(x)} < \infty,
    \]
    where $c^{\prime} = cq\{q^{\prime}(1+q)\}^{-1} > 0$ and $\delta_{s, q^{\prime}}^{Y(x)} = \sup_{s \in \mathbb{Z}}\Vert Y_{s}(x) - Y_{s, \{0\}}(x)\Vert_{q^{\prime}}$.
\end{lemma}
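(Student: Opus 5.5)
The plan is to bound the functional dependence measure of the indicator process $Y_i(x)=\mathbbm{1}\{X_i\le x\}$ directly, and then sum the bounds over lags. Since $Y_i(x)-Y_{i,\{i-s\}}(x)$ takes values in $\{-1,0,1\}$, we have for every $q'\ge 2$
\[
\big\Vert Y_i(x)-Y_{i,\{i-s\}}(x)\big\Vert_{q'}=\Big(\mathbb{P}\big(Y_i(x)\neq Y_{i,\{i-s\}}(x)\big)\Big)^{1/q'} .
\]
So it suffices to control the probability that $X_i$ and its coupled version $X_{i,\{i-s\}}$ lie on opposite sides of $x$, uniformly in $i$ and $x$. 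I would do this with the usual smoothing argument. Fix a threshold $h>0$. The event $\{Y_i(x)\ne Y_{i,\{i-s\}}(x)\}$ is contained in
\[
\{|X_i-x|\le h\}\cup\{|X_i-X_{i,\{i-s\}}|>h\}.
\]
By the bounded-density assumption, the first event has probability at most $2h\sup f_X$. By Markov's inequality at order $q$, the second has probability at most $h^{-q}\delta_{s,q}^q$.

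Next I balance the two terms by choosing $h=\delta_{s,q}^{q/(q+1)}$. Both terms then become of order $\delta_{s,q}^{q/(q+1)}$, which gives
\[
\mathbb{P}\big(Y_i(x)\ne Y_{i,\{i-s\}}(x)\big)\le (2\sup f_X+1)\,\delta_{s,q}^{q/(q+1)},
\]
uniformly in $i$ and $x$. Taking the $1/q'$ power yields
\[
\delta^{Y(x)}_{s,q'}\lesssim \delta_{s,q}^{q/\{q'(q+1)\}} .
\]
By Lemma \ref{lemma:dep.condition.relation}, the assumption $\|X_\cdot\|_q<\infty$ gives $\delta_{s,q}\le C e^{-cs}$. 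Hence $\delta^{Y(x)}_{s,q'}\le C' e^{-c's}$ with $c'=cq\{q'(1+q)\}^{-1}$, which is exactly the stated exponent.

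Finally, I sum the tail: $\sum_{s\ge m}e^{-c's}\le e^{-c'm}/(1-e^{-c'})$. Multiplying by $e^{c'm}$ and taking the supremum over $m$ gives $\|Y_\cdot(x)\|_{q'}<\infty$, with a constant that does not depend on $x$. One could equivalently apply the converse direction of Lemma \ref{lemma:dep.condition.relation} to the indicator process.

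The main obstacle is the first step, specifically making the density bound usable when $s$ is small. When $\delta_{s,q}$ is not small, the chosen $h$ may be large and the bound $2h\sup f_X$ exceeds $1$. This is harmless, since probabilities are trivially bounded by $1$ and only a finite number of lags are affected, but the constants need to be adjusted for those lags. A second point to handle is that the density bound is assumed for the marginal law of $X_i$ and not for the coupled version. This is enough, because the smoothing argument only uses the density of $X_i$ itself, in the event $\{|X_i-x|\le h\}$.
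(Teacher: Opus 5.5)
Your proposal is correct and follows essentially the same route as the paper: split $\{Y_i(x)\neq Y_{i,\{i-s\}}(x)\}$ according to whether $|X_i-X_{i,\{i-s\}}|$ exceeds a threshold, bound the near-$x$ event by the bounded density and the other event by Markov at order $q$, balance to get the exponent $q/(q+1)$, take the $1/q'$-th root, and sum the exponential tail. The only differences are cosmetic: the paper fixes the threshold as $\epsilon^s \asymp e^{-cqs/(1+q)}$ after invoking $\delta_{s,q}\le Ce^{-cs}$, whereas you choose $h=\delta_{s,q}^{q/(q+1)}$ adaptively, and the small-$s$ ``obstacle'' you flag needs no constant adjustment, since a bound exceeding $1$ is still a valid bound.
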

\begin{proof}
    By Lemma 2.1, we have that $\|X_{\cdot}\|_{q} < \infty$ is equivalent to $\delta_{s,q}^X = \sup_{s \in \mathbb{Z}}\Vert X_{s} - X_{s,\{0\}} \Vert_q \leq C\exp(-cs)$. Then, for any $x \in \mathbb{R}$ and $\epsilon > 0$, we have
\begin{align*}
    &\Vert Y_{s}(x) - Y_{s,\{0\}}^{\prime}(x)\Vert_{q^{\prime}}^{q^{\prime}} = \Vert \mathbbm{1}_{\{X_{s} \leq x\}} - \mathbbm{1}_{\{X_{s,\{0\}} \leq x\}}\Vert_{q^{\prime}}^{q^{\prime}} = \Vert \mathbbm{1}_{\{X_{s} \leq x, X_{s, \{0\}} > x\}} - \mathbbm{1}_{\{X_{s} > x, X_{s,\{0\}} \leq x\}}\Vert_{q^{\prime}}^{q^{\prime}}\\
    =& \mathbb{P}(X_{s} \leq x, X_{s, \{0\}} > x, |X_{s} - X_{s,\{0\}}| \leq \epsilon^s) + \mathbb{P}(X_{s} > x, X_{s, \{0\}} \leq x, |X_{s} - X_{s,\{0\}}| \leq \epsilon^s)\\
    &+ \mathbb{P}(X_{s} \leq x, X_{s,\{0\}} > x, |X_{s} - X_{s, \{0\}}| > \epsilon^s) + \mathbb{P}(X_{s} > x, X_{s, \{0\}} \leq x, |X_{s} - X_{s, \{0\}}| > \epsilon^s)\\
    \leq& 2\mathbb{P}(x \leq X_{s} \leq x+\epsilon^s) + \mathbb{P}(|X_{s} - X_{s,\{0\}}| > \epsilon^s) \leq 2\epsilon^s\sup_{x}f_X(x) + C\exp(-cqs)/\epsilon^{sq}\\
    \leq& C_1\exp\left(-\frac{qcs}{1+q}\right),
\end{align*}
where $C_1 > 0$ is an absolute constant.
Therefore, the $q^{\prime}$-th order functional dependence measure of $\{Y_{i}(x)\}_{i \in \mathbb{Z}}$ is 
\begin{equation*}
    \begin{aligned}
    \delta_{s,q^{\prime}}^{Y(x)} = \sup_{s \in \mathbb{Z}}\Vert Y_{s}(x) - Y_{s, \{0\}}(x)\Vert_{q^{\prime}} \leq C_2\exp\left(-\frac{qcs}{q^{\prime}(1+q)}\right).
    \end{aligned}
\end{equation*}
Let $c^{\prime} = cq\{q^{\prime}(1+q)\}^{-1}$, using Theorem~2.1 again concludes the proof.
\end{proof}

In the paper, we frequently use the $L_q$ norm of partial sum $S_n = \sum_{i = 1}^{n}X_i$. We introduce Burkholder's inequality, which can be combined with the martingale decomposition technique to deliver an upper bound of $\Vert S_n \Vert_q$.
\begin{lemma}[Burkholder's inequality \cite{rio2009moment}]\label{lemma:Burkholder}
    Let $q > 1$, $q^{\prime} = \min(q, 2)$. Let $M_n = \sum_{i = 1}^n D_i$, where $\{D_i\}_{i = 1}^n$ are martingale differences, such that $\Vert D_i \Vert_q < \infty$. Then, we have
    \begin{equation*}
        \Vert M_n \Vert_q^{q^{\prime}} \leq K_q^{q^{\prime}}\sum_{i = 1}^n\Vert D_i \Vert_q^{q^{\prime}}, \; \text{ where } K_q = \max( (q-1)^{-1}, \sqrt{q-1}).
    \end{equation*}
\end{lemma}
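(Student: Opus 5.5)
We split into the two regimes $q\ge 2$ (so $q'=2$ and $K_q=\sqrt{q-1}$) and $1<q<2$ (so $q'=q$ and $K_q=(q-1)^{-1}$), and handle them by different arguments.

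\textbf{Case $q \ge 2$.} We argue by induction on $n$, and the one-step claim is everything. For a martingale $M_{k-1}$ and a martingale difference $D_k$ with $\mathbb{E}[D_k\mid\mathcal{G}_{k-1}]=0$, we want
\[
\Vert M_{k-1}+D_k\Vert_q^2 \le \Vert M_{k-1}\Vert_q^2+(q-1)\Vert D_k\Vert_q^2 .
\]
To prove it, we set $f(t)=\Vert M_{k-1}+tD_k\Vert_q^2=\big(\mathbb{E}|M_{k-1}+tD_k|^q\big)^{2/q}$ for $t\in[0,1]$ and proceed in three steps.
\begin{enumerate}
\item Differentiate under the expectation, which is justified by dominated convergence since $\Vert D_k\Vert_q,\Vert M_{k-1}\Vert_q<\infty$. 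Writing $Y_t=M_{k-1}+tD_k$ and $g(t)=\mathbb{E}|Y_t|^q$, this gives $f'(t)=2g(t)^{2/q-1}\,\mathbb{E}\big[|Y_t|^{q-2}Y_tD_k\big]$.
\item At $t=0$, condition on $\mathcal{G}_{k-1}$: the variable $|M_{k-1}|^{q-2}M_{k-1}$ is $\mathcal{G}_{k-1}$-measurable, so $f'(0)=0$.
\item Differentiate once more:
\[
f''(t)=2(q-1)g^{2/q-1}\,\mathbb{E}\big[|Y_t|^{q-2}D_k^2\big]+2(2/q-1)\,q\,g^{2/q-2}\big(\mathbb{E}[|Y_t|^{q-2}Y_tD_k]\big)^2 .
\]
Since $2/q-1\le 0$, the second term is nonpositive. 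Applying H\"older with exponents $q/(q-2)$ and $q/2$ to the first term gives $\mathbb{E}[|Y_t|^{q-2}D_k^2]\le g^{(q-2)/q}\Vert D_k\Vert_q^2$. Hence $f''(t)\le 2(q-1)\Vert D_k\Vert_q^2$.
\end{enumerate}
Taylor's theorem then yields $f(1)\le f(0)+(q-1)\Vert D_k\Vert_q^2$. Summing over $k$ gives $\Vert M_n\Vert_q^2\le (q-1)\sum_i\Vert D_i\Vert_q^2$, which is the claim with $K_q^{2}=q-1$.

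The main technical point is the second-derivative computation. The function $|x|^q$ is only $C^2$ when $q\ge2$, and $g$ must be strictly positive for the chain rule. If $g(t)=0$ at some point, we would first treat $M_{k-1}+\varepsilon Z$ for an independent smoothing variable $Z$ and let $\varepsilon\to0$.

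\textbf{Case $1<q<2$.} Here we invoke Burkholder's square-function inequality with its sharp constant $p^*-1$, where $p^*=\max(q,q/(q-1))$. For $q<2$ this gives
\[
\Vert M_n\Vert_q\le (q-1)^{-1}\Big\Vert \Big(\sum_i D_i^2\Big)^{1/2}\Big\Vert_q .
\]
Because $q/2\le1$, the map $x\mapsto x^{q/2}$ is subadditive on $[0,\infty)$. Therefore $\mathbb{E}\big(\sum_i D_i^2\big)^{q/2}\le\sum_i\mathbb{E}|D_i|^q$, and we obtain
\[
\Vert M_n\Vert_q^q\le (q-1)^{-q}\sum_i\Vert D_i\Vert_q^q ,
\]
which is the claim with $q'=q$.

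The obstacle in this regime is establishing the sharp square-function constant itself. We would take it from Burkholder's special-function (zigzag concavity) argument rather than reprove it, as in \cite{rio2009moment}.
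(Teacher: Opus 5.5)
Your proof is correct. The paper gives no proof of its own and simply cites Rio (2009), and your argument is essentially the one behind that citation.

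For $q\ge 2$, combining $\frac{d}{dt}\Vert M_{k-1}+tD_k\Vert_q^2\big|_{t=0}=0$ (from the martingale property) with $\frac{d^2}{dt^2}\Vert M_{k-1}+tD_k\Vert_q^2\le 2(q-1)\Vert D_k\Vert_q^2$ (from H\"older) is exactly Rio's proof. Your smoothing step handles the degenerate case where $\mathbb{E}|M_{k-1}+tD_k|^q$ vanishes, and the base case follows from $q-1\ge 1$.

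For $1<q<2$, Burkholder's square-function inequality with constant $p^*-1=(q-1)^{-1}$, followed by subadditivity of $x\mapsto x^{q/2}$, is the standard source of that constant. That constant is due to Burkholder, not to Rio's paper, which treats only exponents of at least two.
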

\noindent Burkholder's inequality considers the sum of martingale difference sequence which is not the case for $S_n$. However, we can construct a martingale difference sequence by rewriting each summand as:
\begin{equation}\label{eq:projection}
    X_i = \sum_{k = 0}^{\infty}\mathcal{P}_{i-k}X_i,
\end{equation}
where $\mathcal{P}_{k}\cdot = \mathbb{E}(\cdot|\mathcal{F}_k) - \mathbb{E}(\cdot|\mathcal{F}_{k-1})$ and $\mathcal{F}_k = (\dots, \epsilon_{k-1}, \epsilon_k)$. By construction, $\{\mathcal{P}_{i-k}X_i\}_{i \in \mathbb{Z}}$ is a martingale difference sequence. For $q \geq 2$, by Burkholder's inequality, we have
\begin{equation*}
    \bigg\Vert \sum_{i = 1}^n\mathcal{P}_{i-k}X_i \bigg\Vert_q \leq (q-1)^{1/2}n^{1/2}\Vert \mathcal{P}_{0}X_k \Vert_q \leq (q-1)^{1/2}n^{1/2}\delta_{k,q},
\end{equation*}
and
\begin{equation}\label{eq:moment_ineq1}
    \Vert S_n \Vert_q = \bigg\Vert \sum_{i = 1}^n\sum_{k = 0}^{\infty}\mathcal{P}_{i-k}X_i \bigg\Vert_q \leq \sum_{k = 0}^{\infty}\bigg\Vert \sum_{i = 1}^n\mathcal{P}_{i-k}X_i \bigg\Vert_q \leq (q-1)^{1/2}n^{1/2}\sum_{k = 0}^{\infty}\delta_{k,q} .
\end{equation}
For example, by the moment inequality \eqref{eq:moment_ineq1} with $q=2$, we can bound the long-run covariance 
\begin{equation}\label{eq:LRV}
    \sigma_{\infty} = \lim_{n \to \infty}\var(n^{-1/2}S_n) \leq \sum_{k = 0}^{\infty} \delta_{k,2}.
\end{equation}

If the quantity of interest is $S_n^* = \max_{1 \leq i \leq n}| S_i |$, and with $\mathbb{E}X_i = 0$, Theorem 1 in \cite{wu2007strong} provides the following maximal inequality. This result is based on Doob's inequality in addition to the same martingale decomposition \eqref{eq:projection} and Burkholder's inequality.
\begin{lemma}[Theorem 1 in \citealp{wu2007strong}]\label{lemma:moment_maximal}
Assume $\mathbb{E}X_i = 0$, $\Vert X_i \Vert_q < \infty$ and $\sum_{k = 0}^{\infty} \delta_{k,q} < \infty$ for some $q \geq 2$, then, we have
    \begin{equation}\label{eq:max_moment_ineq1}
    \Vert S_n^* \Vert_q \leq \frac{qB_q}{q-1}n^{1/2}\sum_{k = 0}^{\infty} \delta_{k,q}, 
\end{equation}
where $B_q = 18q^{3/2}(q-1)^{-1/2}$ if $q > 2$ and $B_q = 1$ if $q = 2$.
\end{lemma}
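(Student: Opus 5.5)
The plan is to reduce the maximal partial sum to a countable family of martingales via the projection decomposition \eqref{eq:projection}. Each martingale is then controlled by Doob's $L^q$ maximal inequality combined with Burkholder's inequality (Lemma~\ref{lemma:Burkholder}), and the pieces are reassembled by Minkowski's inequality. Nonstationarity causes no difficulty, because $\delta_{k,q}$ is defined with a $\sup_{i}$ in \eqref{FDM}, so every bound below holds uniformly in $i$.

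\emph{Step 1 (validity of the decomposition).} I would first check that $X_i=\sum_{k\ge0}\mathcal{P}_{i-k}X_i$ converges in $L^q$. Telescoping gives
\[
X_i-\sum_{k=0}^{m-1}\mathcal{P}_{i-k}X_i=\mathbb{E}(X_i\mid\mathcal{F}_{i-m}).
\]
Replacing the innovations $\epsilon_{i-m+1},\dots,\epsilon_i$ one at a time by independent copies, and using Jensen's inequality together with $\mathbb{E}X_i=0$, yields
\[
\Vert\mathbb{E}(X_i\mid\mathcal{F}_{i-m})\Vert_q\le\sum_{k\ge m}\delta_{k,q}.
\]
This tends to $0$ because $\sum_k\delta_{k,q}<\infty$. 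I expect this step to be the main technical point: the coupling argument must be arranged so that the time-dependent maps $G_i$ do not break it. The fact that $\epsilon_i^*$ is an independent copy, and that $\mathbb{E}(\cdot\mid\mathcal{F}_{i-m})$ integrates out the later innovations, is exactly what is needed here.

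\emph{Step 2 (bounding a single projection).} Since $X_{i,\{i-k\}}$ differs from $X_i$ only through $\epsilon_{i-k}$, we have $\mathbb{E}(X_i\mid\mathcal{F}_{i-k-1})=\mathbb{E}(X_{i,\{i-k\}}\mid\mathcal{F}_{i-k})$. Hence
\[
\mathcal{P}_{i-k}X_i=\mathbb{E}\big(X_i-X_{i,\{i-k\}}\mid\mathcal{F}_{i-k}\big).
\]
Jensen's inequality then gives $\Vert\mathcal{P}_{i-k}X_i\Vert_q\le\delta_{k,q}$ for every $i$.

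\emph{Step 3 (martingale maximal bound).} Fix $k$ and set $M_{j,k}=\sum_{i=1}^{j}\mathcal{P}_{i-k}X_i$. This is a martingale in $j$ with respect to the filtration $\{\mathcal{F}_{j-k}\}_j$. Doob's inequality gives
\[
\Big\Vert\max_{j\le n}|M_{j,k}|\Big\Vert_q\le\frac{q}{q-1}\Vert M_{n,k}\Vert_q .
\]
Burkholder's inequality, in its square-function form with constant $B_q$, together with Step~2 gives $\Vert M_{n,k}\Vert_q\le B_q n^{1/2}\delta_{k,q}$. For $q=2$, orthogonality of martingale differences gives this bound with $B_2=1$.

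\emph{Step 4 (assembly).} By Step~1, $S_j=\sum_{k\ge0}M_{j,k}$ almost surely and in $L^q$. Therefore $S_n^*\le\sum_{k\ge0}\max_{j\le n}|M_{j,k}|$. Minkowski's inequality together with Step~3 then yields
\[
\Vert S_n^*\Vert_q\le\frac{qB_q}{q-1}\,n^{1/2}\sum_{k\ge0}\delta_{k,q},
\]
which is \eqref{eq:max_moment_ineq1}.
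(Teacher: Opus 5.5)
Your route (projection decomposition, Doob's inequality for each martingale $\{M_{j,k}\}_{j}$, Burkholder's inequality, Minkowski over $k$) is the argument behind Wu's Theorem~1, which the paper cites rather than proves. Steps~2--4 are correct. For $q\ge 2$ you could even quote the paper's Lemma~\ref{lemma:Burkholder}, whose constant $\sqrt{q-1}$ is smaller than $B_q$.

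The step that fails as written is Step~1. Replacing $\epsilon_{i-m+1},\dots,\epsilon_i$ by independent copies and applying Jensen gives $\Vert X_i-\mathbb{E}(X_i\mid\mathcal{F}_{i-m})\Vert_q\le\sum_{k=0}^{m-1}\delta_{k,q}$. That controls the head of the decomposition, not the remainder $\mathbb{E}(X_i\mid\mathcal{F}_{i-m})$. Decoupling $\epsilon_{i-m},\dots,\epsilon_{i-m-N}$ instead only gives $\Vert\mathbb{E}(X_i\mid\mathcal{F}_{i-m})-\mathbb{E}(X_i\mid\mathcal{F}_{i-m-N-1})\Vert_q\le\sum_{k=m}^{m+N}\delta_{k,q}$. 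So no finite coupling lets you use $\mathbb{E}X_i=0$: you still need $\mathbb{E}(X_i\mid\mathcal{F}_{i-m-N})\to 0$ as $N\to\infty$.

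The missing ingredient is that the remote past $\mathcal{F}_{-\infty}=\bigcap_{m}\mathcal{F}_{m}$ is trivial, by Kolmogorov's 0--1 law for the iid innovations. With it, Step~1 is short:
\begin{itemize}
\item Consecutive differences of $\mathbb{E}(X_i\mid\mathcal{F}_{i-m})$ are single projections $\mathcal{P}_{i-m}X_i$. By Step~2 and $\sum_k\delta_{k,q}<\infty$, the sequence is therefore Cauchy in $L^q$. Alternatively, use L\'evy's downward theorem directly.
\item The limit is $\mathcal{F}_{-\infty}$-measurable, hence a.s.\ constant, hence equal to its mean $\mathbb{E}X_i=0$.
\item This gives \eqref{eq:projection} in $L^q$. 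Telescoping with Step~2 then gives your bound $\Vert\mathbb{E}(X_i\mid\mathcal{F}_{i-m})\Vert_q\le\sum_{k\ge m}\delta_{k,q}$.
\end{itemize}
The time dependence of $G_i$ plays no role here. This is a routine point, not the main technical difficulty you anticipated.
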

\noindent Note that the moment inequality \eqref{eq:moment_ineq1} and the maximal inequality \eqref{eq:max_moment_ineq1} have the upper bounds being equivalent up to a constant.

The following lemma provides an exponential tail probability bound for a nonnegative random variable. This result is a special case of Lemma 7.3 in \cite{chen2016self} when $n = 1$.
\begin{lemma}\label{lemma:lb_nonnegative}
    Let $U$ be a nonnegative random variable with $\mathbb{E}U^2 < \infty$. Then, for any $0 < u < \mathbb{E}U$, we have
    \begin{equation*}
        \mathbb{P}(U - \mathbb{E}U \leq -u) \leq \exp\Big(-\frac{u^2}{4\mathbb{E}U^2} \Big).
    \end{equation*}
\end{lemma}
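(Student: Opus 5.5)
The plan is a one-sided Chernoff argument in the lower tail. For nonnegative $U$ the Laplace transform $\mathbb{E}[e^{-\lambda U}]$ with $\lambda>0$ is always finite. Nonnegativity lets us bound it using only the second moment, which is all that is assumed.

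\textbf{Step 1 (elementary inequality).} For $x \geq 0$ we have $e^{-x} \leq 1 - x + x^2/2$. To see this, set $h(x)=1-x+x^2/2-e^{-x}$. Then $h(0)=0$, $h'(0)=0$, and $h''(x)=1-e^{-x}\geq 0$ on $[0,\infty)$, so $h\ge 0$ there. Nonnegativity of $U$ is essential at this point: the inequality fails for negative $x$, which is exactly why the lemma is stated for nonnegative variables.

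\textbf{Step 2 (Laplace transform bound).} Apply Step 1 with $x=\lambda U\ge 0$ for $\lambda>0$ and take expectations:
\begin{equation*}
\mathbb{E}\big[e^{-\lambda U}\big] \leq 1 - \lambda \mathbb{E}U + \frac{\lambda^2}{2}\mathbb{E}U^2 \leq \exp\Big(-\lambda\mathbb{E}U + \frac{\lambda^2}{2}\mathbb{E}U^2\Big),
\end{equation*}
where the last inequality uses $1+y\le e^{y}$. Finiteness of $\mathbb{E}U^2$ makes every term finite.

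\textbf{Step 3 (Markov and optimisation).} By Markov's inequality applied to $e^{-\lambda U}$,
\begin{equation*}
\mathbb{P}(U-\mathbb{E}U\le -u)=\mathbb{P}\big(e^{-\lambda U}\ge e^{-\lambda(\mathbb{E}U-u)}\big)\le e^{\lambda(\mathbb{E}U-u)}\,\mathbb{E}\big[e^{-\lambda U}\big]\le \exp\Big(-\lambda u+\frac{\lambda^2}{2}\mathbb{E}U^2\Big).
\end{equation*}
Choose $\lambda=u/\mathbb{E}U^2$. This is positive, since $u>0$ forces $\mathbb{E}U>0$ and hence $\mathbb{E}U^2>0$. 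The bound becomes $\exp(-u^2/(2\mathbb{E}U^2))$, which is at most $\exp(-u^2/(4\mathbb{E}U^2))$, as claimed.

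The restriction $0<u<\mathbb{E}U$ only ensures the event is nontrivial; the argument never uses it. There is no real obstacle. The only point needing care is Step 1, which is where nonnegativity enters, and it also explains why the lemma yields a sub-Gaussian lower tail without any boundedness or upper-tail assumption.
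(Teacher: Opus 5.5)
Your proof is correct and follows the paper's own argument: a Chernoff bound on $e^{-\lambda U}$, a quadratic upper bound on $e^{-x}$ for $x\ge 0$, then $1+y\le e^{y}$, then an explicit choice of $\lambda$. The only difference is that you use the sharper inequality $e^{-x}\le 1-x+x^2/2$, whereas the paper uses $e^{-x}\le 1-x+x^2$; this gives you the constant $2$ instead of $4$ in the exponent, so the stated bound follows a fortiori.
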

\begin{proof}
    For $t > 0$, we have by Markov's inequality
    \begin{equation*}
    \begin{aligned}
        \mathbb{P}(U - \mathbb{E}U \leq -u) &\leq e^{-tu+t\mathbb{E}U}\mathbb{E}e^{-tU} \leq e^{-tu+t\mathbb{E}U}\big(1 - t\mathbb{E}U + t^2\mathbb{E}U^2\big)\\
        &\leq e^{-tu+t^2\mathbb{E}U^2} \leq \exp\Big(-\frac{u^2}{4\mathbb{E}U^2} \Big),
    \end{aligned}
    \end{equation*}
    where the second inequality is due to inequality $e^{-x} \leq 1 - x + x^2$ for $x \geq 0$, the third inequality is due to the inequality $1 + x \leq e^x$, and the last inequality is obtained by letting $t = u/(2\mathbb{E}U^2)$.
\end{proof}

Given two dependence processes and their functional dependence measures, the following lemma provides the functional dependence measure of the product process. 
\begin{lemma}\label{lemma:fdm_XY}
    Let $\{X_i\}_{i \in \mathbb{Z}} \subset \mathbb{R}$ and $\{Y_i\}_{i \in \mathbb{Z}} \subset \mathbb{R}$ be two processes in the form of \textup{(1)}. Assume that $\sup_{i \in \mathbb{Z}}\|X_i\|_4 < \infty$ and $\sup_{i \in \mathbb{Z}}\|Y_i\|_4 < \infty$, denote $\{\delta_{s,2}^{XY}\}_{s \geq 0}$ the functional dependence measures of $\{X_iY_i\}_{i \in \mathbb{Z}}$. We have that for any $s \geq 0$
    \begin{equation*}
        \delta_{s,2}^{XY} \leq 2\max\{\sup_{i \in \mathbb{Z}}\|X_i\|_4, \sup_{i \in \mathbb{Z}}\|Y_i\|_4\}\max\{\delta_{s,4}^X, \delta_{s,4}^Y\},
    \end{equation*}
    where $\{\delta_{s,4}^X\}_{s \geq 0}$ and $\{\delta_{s,4}^Y\}_{s \geq 0}$ are respectively the functional dependence measures of  $\{X_i\}_{i \in \mathbb{Z}}$ and $\{Y_i\}_{i \in \mathbb{Z}}$.
\end{lemma}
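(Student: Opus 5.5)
The plan is a direct coupling argument combined with Hölder's inequality. Fix $s \geq 0$ and $i \in \mathbb{Z}$. Let $X_{i,\{i-s\}}$ and $Y_{i,\{i-s\}}$ denote the coupled versions built from the \emph{same} replacement innovation $\epsilon_{i-s}^*$. Then $X_{i,\{i-s\}}Y_{i,\{i-s\}}$ is exactly the coupled version of the product process $\{X_iY_i\}$, because $X_iY_i = g_i(\mathcal{F}_i)h_i(\mathcal{F}_i)$ is itself a measurable function of $\mathcal{F}_i$. Consequently
\[
\delta_{s,2}^{XY} = \sup_{i\in\mathbb{Z}}\big\Vert X_iY_i - X_{i,\{i-s\}}Y_{i,\{i-s\}}\big\Vert_2 .
\]

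I will add and subtract the mixed term $X_{i,\{i-s\}}Y_i$:
\[
X_iY_i - X_{i,\{i-s\}}Y_{i,\{i-s\}} = (X_i - X_{i,\{i-s\}})Y_i + X_{i,\{i-s\}}(Y_i - Y_{i,\{i-s\}}).
\]
Minkowski's inequality splits the $L_2$ norm into the two pieces. Cauchy--Schwarz, in the form $\Vert UV\Vert_2 \leq \Vert U\Vert_4\Vert V\Vert_4$, then bounds them by
\[
\Vert X_i - X_{i,\{i-s\}}\Vert_4\,\Vert Y_i\Vert_4 \quad\text{and}\quad \Vert X_{i,\{i-s\}}\Vert_4\,\Vert Y_i - Y_{i,\{i-s\}}\Vert_4 .
\]

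The one point needing care is $\Vert X_{i,\{i-s\}}\Vert_4$. Replacing $\epsilon_{i-s}$ by an iid copy leaves the joint law of the innovation sequence unchanged, so $X_{i,\{i-s\}}$ has the same distribution as $X_i$. Hence $\Vert X_{i,\{i-s\}}\Vert_4 = \Vert X_i\Vert_4 \leq \sup_i \Vert X_i\Vert_4$.

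Each factor $\Vert X_i - X_{i,\{i-s\}}\Vert_4$ is at most $\delta_{s,4}^X$ by the definition \eqref{FDM}, and similarly for $Y$. Bounding each moment by the larger of the two suprema and each dependence measure by the larger of $\delta^X_{s,4}$ and $\delta^Y_{s,4}$ gives the sum of two terms, each at most the product of the maxima. This yields the factor $2$. Taking $\sup_i$ completes the bound. I expect no real obstacle beyond verifying the equality in law of the coupled variable.
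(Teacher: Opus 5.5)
Your argument is correct and complete: the add-and-subtract decomposition, Minkowski, Cauchy--Schwarz in the form $\Vert UV\Vert_2\le\Vert U\Vert_4\Vert V\Vert_4$, and the fact that $X_{i,\{i-s\}}$ has the same distribution as $X_i$ actually give the slightly sharper bound $\delta^X_{s,4}\sup_i\Vert Y_i\Vert_4+\sup_i\Vert X_i\Vert_4\,\delta^Y_{s,4}$, from which the stated bound follows. The paper states this lemma without any proof, and your argument is the standard one it implicitly relies on; this includes the tacit assumption, which you rightly make explicit, that $X$ and $Y$ are driven by the same innovation sequence, so that the coupled product is the product of the coupled versions.
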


\section{Nonasymptotic theory for tail-robust mean estimators}\label{sec:supp_robust_mean}
For simplicity, we consider only the stationary processes. We note that our results can be extended to potentially nonstationary processes with the target quantity being modified as an averaged mean among time points.
Let $\{\bm{X}_i\}_{i \in \mathbb{Z}} \subset \mathbb{R}^d$ be a stationary process with mean $\bm{\mu} = (\mu_1, \mu_2, \dots, \mu_d)^{\intercal}$ in the form of \textup{(1)}. In this section, we consider two element-wise tail-robust estimators for $\bm{\mu}$: (a) Huber's $M$-estimatior and (b) truncated estimator. Their nonasymptotic results are provided and will be used as building blocks to prove the nonasymptotic results for corresponding tail-robust estimators of autocovariance matrices. The nonasymptotic results only require finite second moments and allow for temporal dependence. Formally, we list the following assumptions.
\begin{assumption}\label{assumption:APP_finite.variance}
    $\omega_2^2 = \max_{j \in [d]}\mathbb{E}[X_{i,j}^2] < \infty$.
\end{assumption}
\begin{assumption}\label{assumption:APP_decay.2order}
    There exists some $c > 0$, such that $\Vert X_{\cdot} \Vert_2 = \sup_{m \geq 0}\exp(cm)\sum_{s = m}^{\infty}\delta_{s,2} < \infty$.
\end{assumption}
Note that Assumptions \ref{assumption:APP_finite.variance} and \ref{assumption:APP_decay.2order} are stated in the second moments and are implied by the corresponding assumptions in the fourth moments, i.e.~Assumptions 1 and 2. The following lemma shows this property of the functional depnendence measure.
\begin{lemma}
    \label{lemma:DAN_prop}
    Let $q_1, q_2 > 0$ such that $q_2 \in (0, q_1)$.
    Suppose there exists some $c > 0$ such that $\|\bm{X}_{\cdot}\|_{q_1} < \infty$, then we have $\|\bm{X}_{\cdot}\|_{q_2} < \infty$, for the same $c$.
\end{lemma}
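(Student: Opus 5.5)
The plan is a direct comparison of tail sums. Fix $q_2\in(0,q_1)$ and, for each coordinate $j$, compare $\delta_{s,q_2,j}$ with $\delta_{s,q_1,j}$. For every $i$ and $s$, Lyapunov's (or Jensen's) inequality gives
\[
\Vert X_{i,j}-X_{i,j,\{i-s\}}\Vert_{q_2}\le \Vert X_{i,j}-X_{i,j,\{i-s\}}\Vert_{q_1}.
\]
This holds because $q_2<q_1$ and the norm is taken under a probability measure; when $q_2<1$ the map $\Vert\cdot\Vert_{q_2}$ is not a norm, but the monotonicity $\Vert Y\Vert_{q_2}\le \Vert Y\Vert_{q_1}$ still holds by Jensen applied to the convex function $x\mapsto x^{q_1/q_2}$. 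The finiteness of the moments needed to define $\delta_{s,q_2}$ also follows from the same monotonicity: $\max_j\sup_i\Vert X_{i,j}\Vert_{q_2}\le\max_j\sup_i\Vert X_{i,j}\Vert_{q_1}<\infty$.

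Taking the supremum over $i$ and the maximum over $j$ then yields $\delta_{s,q_2}\le\delta_{s,q_1}$ for every $s\ge0$. Summing over $s\ge m$ gives $\Delta_{m,q_2}\le\Delta_{m,q_1}$ for every $m$. Multiplying by $\exp(cm)$ with the same $c$ and taking $\sup_{m\ge0}$ gives
\[
\Vert\bm X_\cdot\Vert_{q_2}\le\Vert\bm X_\cdot\Vert_{q_1}<\infty,
\]
which is the claim.

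No step is a real obstacle. The only point requiring care is the case $q_2<1$, where one should cite the monotonicity of $L_q$ quasi-norms rather than anything norm-specific. One should also note that nothing beyond pointwise comparison of the summands is needed, since $\Delta_{m,q}$ is a sum of nonnegative terms. As a consistency check, this argument also gives the inequality $\Vert\bm X_\cdot\Vert_2\le\Vert\bm X_\cdot\Vert_4$ already used in \Cref{lemma:autocov_decay}.
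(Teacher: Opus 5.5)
Your proposal is correct and matches the paper's proof: the paper likewise applies H\"older's inequality (your Lyapunov/Jensen step) to get $\delta_{s,q_2,j}\le\delta_{s,q_1,j}$ for every $s$ and $j$, then sums the tails and uses the same weight $\exp(cm)$. Your extra remarks on the case $q_2<1$ and on the finiteness of the $q_2$-th moments are accurate details that the paper leaves implicit.
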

\begin{proof}[Proof of Lemma \ref{lemma:DAN_prop}]
    Since $q_2 \in (0, q_1)$, by H\"older's inequality for any $j \in [d]$
    \begin{equation*}
    \delta_{i,q_2,j}^{q_2} = \mathbb{E}|X_{i,j} - X_{i,\{0\},j}|^{q_2} \leq \big(\mathbb{E}|X_{i,j} - X_{i,\{0\},j}|^{q_1}\big)^{q_2/q_1} = \delta_{i,q_1,j}^{q_2}, \; \text{ for all } \; i \geq 0.
    \end{equation*}
    Therefore, we have $\sup_{m \geq 0} \exp(-cm)\Delta_{m,q_2,j} \leq \sup_{m \geq 0} \exp(-cm)\Delta_{m,q_1,j} < \infty$.
\end{proof}

\subsection{Huber's $M$-estimator for high-dimensional mean}
Denote the element-wise Huber's $M$-estimator of $\bm{\mu}$ by
    \begin{equation*}
        \widetilde{\bm{\mu}} = (\widetilde{\mu}_1, \dots, \widetilde{\mu}_d)^{\intercal},
    \end{equation*}
    where $\widetilde{\mu}_j = \argmin_{u \in \mathbb{R}}n^{-1}\sum_{i=1}^n\ell_{\tau}(X_{i,j} - u)$ for $j \in [d]$.
In this subsection, we show that $\widetilde{\bm{\mu}}$ is a tail-robust estimator and is optimal (up to a $\log n$ factor) in the minimax sense, under some mild assumptions. These assumptions include Assumptions \ref{assumption:APP_finite.variance} and \ref{assumption:APP_decay.2order}, and, in addition, the bounded density assumption, i.e.~Assumption~3.

\begin{theorem}\label{Thm:m-est.mean}
    For $t > 0$, provided $n$ is large enough such that
    \begin{equation*}
    n \geq C(\log n)^2(t + \log d),
    \end{equation*}
    with $C > 0$ being a sufficiently large constant.
    Choose 
    \begin{equation*}
        \tau \asymp \omega_2(\log n)^{-1} \sqrt{\frac{n}{t+\log d}}.
    \end{equation*}
    Then, under Assumptions 3, \ref{assumption:APP_finite.variance} and \ref{assumption:APP_decay.2order}, we have with probability at least $1 - 3e^{-t}$,
    \begin{equation*}
        |\widetilde{\bm{\mu}} - \bm{\mu}|_{\infty} \lesssim \Vert X_{\cdot} \Vert_{2}(\log n)\sqrt{\frac{t+\log d}{n}}.
    \end{equation*}
\end{theorem}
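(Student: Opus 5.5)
We follow the standard route for Huber $M$-estimators: a local strong-convexity argument combined with a uniform concentration bound for the truncated score, applied coordinatewise and then unioned over $j\in[d]$. Fix $j$ and write $\mathcal{L}_j(u)=n^{-1}\sum_{i=1}^n L_\tau(X_{i,j}-u)$. Its derivative is $\mathcal{L}_j'(u)=-n^{-1}\sum_i\psi_\tau(X_{i,j}-u)$, and $\mathcal{L}_j'(\widetilde{\mu}_j)=0$.

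\textbf{Step 1: score at the truth.} The first step bounds $|\mathcal{L}_j'(\mu_j)|$.
\begin{itemize}
\item By Lemma~\ref{lemma:bias}, the bias satisfies $|\mathbb{E}\mathcal{L}_j'(\mu_j)|\le\omega_2^2/\tau$.
\item By Lemma~\ref{lemma:truc_FDM}, the truncated process $\{\psi_\tau(X_{i,j}-\mu_j)\}_i$ inherits the exponential decay of Assumption~\ref{assumption:APP_decay.2order}.
\item By Lemma~\ref{lemma:lrv_univar} together with \eqref{eq:LRV}, its long-run variance is $\lesssim\|X_\cdot\|_2^2$.
\item Theorem~\ref{thm:bernstein_bounded} with $x\asymp \|X_\cdot\|_2\sqrt{n(t+\log d)}+\tau(\log n)^2(t+\log d)$ gives the deviation bound, with probability $1-2e^{-t}/d$.
\end{itemize}
With the stated $\tau$, both the bias and the deviation are $\lesssim \|X_\cdot\|_2(\log n)\sqrt{(t+\log d)/n}=:r$, up to constants. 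Here $\omega_2\le$ const$\cdot\|X_\cdot\|_2$ after absorbing constants.

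\textbf{Step 2: localization.} Let $\delta=\widetilde{\mu}_j-\mu_j$ and suppose $|\delta|>r_0$, where $r_0\asymp r$. Set $\eta=r_0/|\delta|$ and $\mu_\eta=\mu_j+\eta\delta$. Lemma~\ref{lemma:local} gives $D^s(\mu_\eta,\mu_j)\le\eta D^s(\widetilde{\mu}_j,\mu_j)=-\eta\delta\,\mathcal{L}_j'(\mu_j)$. Hence $D^s(\mu_\eta,\mu_j)\le r_0|\mathcal{L}_j'(\mu_j)|$.

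For the lower bound, $L_\tau''=\mathbbm{1}\{|u|\le\tau\}$, so
\[
D^s(\mu_\eta,\mu_j)\ge (\eta\delta)^2\inf_{|v|\le r_0}n^{-1}\sum_i\mathbbm{1}\{|X_{i,j}-\mu_j-v|\le\tau/2\}\quad(\text{using } r_0\le\tau/2).
\]
It therefore suffices to show that the averaged indicator is at least $1/2$ uniformly over $|v|\le r_0$. If so, $r_0^2/2\le r_0|\mathcal{L}_j'(\mu_j)|$, which contradicts the choice $r_0>2|\mathcal{L}_j'(\mu_j)|$. This yields $|\delta|\le r_0$ on the good event.

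\textbf{Step 3: the uniform indicator bound (main obstacle).} The expectation of each indicator is at least $1-4\omega_2^2/\tau^2\ge 3/4$ by Chebyshev, since $\tau\to\infty$ under $n\ge C(\log n)^2(t+\log d)$. The difficulty is the uniformity in $v$ under dependence. The plan has three parts.
\begin{itemize}
\item Note the indicator is $Y_i(\mu_j+v+\tau/2)-Y_i(\mu_j+v-\tau/2)$ up to null sets, with $Y_i(x)=\mathbbm{1}\{X_{i,j}\le x\}$.
\item Use Lemma~\ref{lemma:indicator_fct}; this is where Assumption~3 enters. It gives exponentially decaying functional dependence for each $Y_\cdot(x)$, uniformly in $x$, so Theorem~\ref{thm:bernstein_bounded} (with $\tau=1$ trivially, and a bounded long-run variance) applies to each fixed $x$.
\item Take a grid of $N\asymp n$ points covering $[-r_0,r_0]$. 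The bounded density gives control of expectations between grid points, and monotonicity of $Y_i(\cdot)$ gives sandwiching of the empirical counts.
\end{itemize}
A union bound over the grid and over $j$ costs $\log(nd)$ in the exponent. A Bernstein deviation of $\lesssim\sqrt{(t+\log(nd))/n}+(\log n)^2(t+\log (nd))/n\le1/4$ then follows from the sample-size condition, with failure probability $e^{-t}$ overall (possibly after absorbing $\log n$ into the constant $C$). Combining the three events gives probability at least $1-3e^{-t}$ and the claimed $\ell_\infty$ rate.
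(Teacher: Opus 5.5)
Your proof is correct. Steps 1--2 are essentially the paper's argument: the paper bounds $|\widetilde\mu_j^{\eta}-\mu_j|\le D^{-1}|\mathcal{L}_j'(\mu_j)|$ directly via \Cref{lemma:local} and then argues $\widetilde\mu_j^{\eta}=\widetilde\mu_j$, rather than arguing by contradiction.

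The real difference is Step 3, where the paper never has to deal with uniformity in $v$. Keeping the full curvature window, for every $|u-\mu_j|\le\tau/2$,
\[
\mathcal{L}_j''(u)=n^{-1}\sum_i\mathbbm{1}\{|X_{i,j}-u|\le\tau\}\ge n^{-1}\sum_i\mathbbm{1}\{|X_{i,j}-\mu_j|\le\tau/2\}.
\]
So a single $u$-free average controls the curvature on the whole neighbourhood. Your monotonicity sandwich yields the same reduction using only the two endpoints $\pm r_0$. The size-$n$ grid, the density-based interpolation between grid points, and the extra $\log n$ in the union bound are therefore unnecessary. Your grid version is nonetheless valid: the sandwich is a pointwise lower bound, and the decay in \Cref{lemma:indicator_fct} is uniform in $x$. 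The extra $\log n$ is also harmless, because $n\ge C(\log n)^2$ with $C$ large forces $n/(\log n)^3$ to be large.

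Where your route is the sounder one is the concentration of that average. The paper applies \Cref{lemma:lb_nonnegative} with the bound $\mathbb{E}U^2\le 2\Vert Y_\cdot(x)\Vert_2^2/n$. That lemma involves the uncentered second moment, which is close to $1$ here, whereas the $O(n^{-1})$ bound from \Cref{lemma:moment_maximal} holds only for the centered average. So that step does not deliver an $n^{-1/2}$ deviation as written. Applying \Cref{thm:bernstein_bounded} to the centered, bounded indicator process does, using the exponential decay from \Cref{lemma:indicator_fct}; this is where Assumption~3 enters in both versions. The cleanest proof combines the paper's $v$-free reduction with your Bernstein step.

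One correction to Step 1: the claim $\omega_2\lesssim\Vert X_\cdot\Vert_2$ is false in general. $\omega_2$ is an uncentered moment, while $\Vert X_\cdot\Vert_2$ is invariant under shifting the mean. What does hold is $\sigma_j=\Vert X_{i,j}-\mu_j\Vert_2\le\Delta_{0,2}\le\Vert X_\cdot\Vert_2$, by the projection decomposition. The paper's proof in fact calibrates $\tau$ with $\sigma_j$ (see \eqref{eq:tau_value}). If you take $\tau\propto\omega_2$ literally, the term $\tau(\log n)^2(t+\log d)/n$ contributes $\omega_2(\log n)\sqrt{(t+\log d)/n}$. So either scale $\tau$ by $\sigma_j$, which is natural since the estimator is location equivariant, or keep $\omega_2$ in the final bound.
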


\begin{proof}
It suffices to consider only $\{X_{i,j}\}_{i \in \mathbb{Z}}$ for any $j \in [d]$. Let $r > 0$ be a constant. Consider with $\eta \in (0,1]$ an intermediate estimator 
\[
\widetilde{\mu}_{j}^{\eta} = \mu_j + \eta(\widetilde{\mu}_j - \mu_j),
\]
such that $|\widetilde{\mu}^{\eta}_j - \mu_j| \leq r$. Note that if $|\widetilde{\mu}_j - \mu_j| \leq r$, we let $\eta = 1$ and $\widetilde{\mu}^{\eta}_j = \widetilde{\mu}_j$. The proof consists of the following four steps.
\\
\\
\textbf{Step 1: Bounding the intermediate estimator by local linear approximation}.\\
Let $\mathcal{L}(u) = n^{-1}\sum_{i =1}^n\ell_{\tau}(X_{i,j} - u)$. By definition, $\mathcal{L}^{\prime}(\widetilde{\mu}_j) = -n^{-1}\sum_{i =1}^n\psi_{\tau}(X_{i,j} - \widetilde{\mu}_j) = 0$. Applying Lemma \ref{lemma:local}, we have that
    \begin{equation*}
        \left\{\mathcal{L}^{\prime}(\widetilde{\mu}^{\eta}_j) - \mathcal{L}^{\prime}(\mu_j)\right\}(\widetilde{\mu}^{\eta}_j - \mu_j) \leq \eta\left\{\mathcal{L}^{\prime}(\widetilde{\mu}_j) - \mathcal{L}^{\prime}(\mu_j)\right\}(\widetilde{\mu}_j - \mu_j) = -\eta\mathcal{L}^{\prime}(\mu_j)(\widetilde{\mu}_j - \mu_j).
    \end{equation*}
Applying the mean-value theorem to the left-hand side of the above equation, we have that
    \begin{equation*}
        \mathcal{L}^{\prime\prime}(\mu^{\eta\,*}_j)(\widetilde{\mu}^{\eta}_j - \mu_j)^2 \leq -\eta\mathcal{L}^{\prime}(\mu_j)(\widetilde{\mu}_j - \mu_j) = -\mathcal{L}^{\prime}(\mu_j)(\widetilde{\mu}^{\eta}_j - \mu_j),
    \end{equation*}
where $\mu^{\eta \, *}_j$ is on the line segment between $\mu_j$ and $\widetilde{\mu}^{\eta}_j$, thus $|\mu^{\eta\,*}_j - \mu_j| \leq r$. If there exists a constant $D > 0$, such that $\min_{|u - \mu_j| \leq r}\mathcal{L}^{\prime\prime}(u) \geq D$, then, we have that
    \begin{equation}\label{eq:m-est_intermediate}
        |\widetilde{\mu}^{\eta}_j - \mu_j| \leq D^{-1}|\mathcal{L}^{\prime}(\mu_j)| = D^{-1}\left| \frac{1}{n}\sum_{i =1}^n\psi_{\tau}(X_{i,j} - \mu_j) \right|.
    \end{equation}
\\
\\
\textbf{Step 2: Deriving the lower bound of $\mathcal{L}^{\prime\prime}(u)$} for $u$ in the neighbour of $\mu_j$.\\
Let $\tau = 2r$, then we have $|u - \mu_j| \leq r = \tau/2$. For any $u$ such that $|u - \mu_j| \leq \tau/2$, we have that
\begin{equation*}
    \begin{aligned}
    \mathcal{L}^{\prime\prime}(u) =& \frac{1}{n}\sum_{i =1}^n\mathbbm{1}_{\{|X_{i,j} - u| \leq \tau/2\}} \geq 1 - \frac{1}{n}\sum_{i =1}^n\mathbbm{1}_{\{|X_{i,j} - \mu_j| > \tau/2\}} - \mathbbm{1}_{\{|\mu_j - u| > \tau/2\}}\\
    =& 1 - \left\{\frac{1}{n}\sum_{i =1}^n\mathbbm{1}_{\{|X_{i,j} - \mu_j| > \tau/2\}} - \mathbb{P}(|X_{i,j} - \mu_j| > \tau/2)\right\} - \mathbb{P}(|X_{i,j} - \mu_j| > \tau/2)\\
    \geq& 1 - 4\sigma_j^2/\tau^2- \left\{\frac{1}{n}\sum_{i =1}^n\mathbbm{1}_{\{|X_{i,j} - \mu_j| > \tau/2\}} - \mathbb{P}(|X_{i,j} - \mu_j| > \tau/2)\right\}\\
        =& 1 - 4\sigma_j^2/\tau^2 + \left\{\frac{1}{n}\sum_{i =1}^n\mathbbm{1}_{\{|X_{i,j} - \mu_j| \leq \tau/2\}} - \mathbb{P}(|X_{i,j} - \mu_j| \leq \tau/2)\right\}.
    \end{aligned}
\end{equation*}
To bound the third term on the right-hand side of the above equation, we apply Lemma \ref{lemma:lb_nonnegative} with $U = n^{-1}\sum_{i=1}^n\mathbbm{1}_{\{|X_{i,j} - \mu_j| \leq \tau/2\}}$. To upper $\mathbb{E}U^2$, we consider the dependence measure of the process $\{Y_{i,j}(x)\}_{i\in \mathbb{Z}}$ for $x \in \mathbb{R}$, where $Y_{i,j}(x) = \mathbbm{1}_{\{X_{i,j} - \mu_j \leq x\}}$. We have the functional dependence measure of $|X_{i,j} - \mu_j|$ as 
\begin{equation*}
    \big\Vert |X_{i,j} - \mu_j| - |X_{i,\{0\},j} - \mu_j| \big\Vert_2 \leq \Vert X_{i,j} - X_{i,\{0\},j} \Vert_2 = \delta_{i,2,j} \leq \delta_{i,2},
\end{equation*}
By applying Lemma \ref{lemma:indicator_fct}, we have that under Assumptions 3 and \ref{assumption:APP_decay.2order}, we have that
\begin{equation*}
    \Vert Y_{\cdot}(x) \Vert_2 = \sup_{m \geq 0}\exp(c^{\prime}m)\delta_{i,2,j}^{Y(x)} < \infty,
\end{equation*}
where $c^{\prime} = c/3 > 0$. Then, by Lemma \ref{lemma:moment_maximal}, we have that
\begin{equation*}
    \mathbb{E}U^2 = \frac{1}{n^2}\mathbb{E}\left[\left(\sum_{i=1}^n\mathbbm{1}_{\{|X_{i,j} - \mu_j| \leq \tau/2\}}\right)^2\right] \leq \frac{2}{n}\Vert Y_{\cdot}(x) \Vert_2^2.
\end{equation*}
By Lemma \ref{lemma:lb_nonnegative}, we have that with probability at least $1-e^{-t}$
\begin{equation*}
\begin{aligned}
    \frac{1}{n}\sum_{i=1}^n \left\{\mathbbm{1}_{\{|X_{i,j} - \mu_j| \leq \tau/2\}} - \mathbb{P}(|X_{i,j} - \mu_j| \leq \tau/2)\right\} \geq -2\sqrt{\mathbb{E}U^2t}\geq -2\sqrt{2}\Vert Y_{\cdot}(x) \Vert_2 n^{-1/2}t^{1/2},
\end{aligned}
\end{equation*}
where due to the condition of Lemma \ref{lemma:lb_nonnegative}, we require $t > 0$ and $n$ is large enough such that 
\begin{equation}\label{eq:m-est_nonneg_cond}
    2\sqrt{2}\Vert Y_{\cdot}(x) \Vert_2 n^{-1/2}t^{1/2} < \mathbb{P}(|X_{i,j} -\mu_j| \leq \tau/2).
\end{equation}
Therefore, we have, for $|u - \mu_j| \leq \tau/2$, with probability at least $1-e^{-t}$
\begin{equation}\label{eq:m-est_loss_2ed_lb}
\begin{aligned}
    \mathcal{L}^{\prime\prime}(u) \geq 1 - 4\sigma_j^2/\tau^2 -2\sqrt{2}\Vert Y_{\cdot}(x) \Vert_2 n^{-1/2}t^{1/2}.
\end{aligned}
\end{equation}
\\
\\
\textbf{Step 3: Bounding the deviation of $n^{-1}\sum_{i =1}^n\psi_{\tau}(X_{i,j} - \mu_j)$}.\\
Under Assumption \ref{assumption:APP_decay.2order}, we apply Theorem 2.5.
For any $t > 0$, we have that with probability at least $1 - 2e^{-t}$
    \begin{equation*}
        \left|\frac{1}{n}\sum_{i =1}^n\psi_{\tau}(X_{i,j} - \mu_j) - \mathbb{E}\psi_{\tau}(X_{i,j} - \mu_j)\right| \leq \sqrt{C_2}\|X_{\cdot}\|_2\sqrt{\frac{t}{n}} + \sqrt{C_2}\frac{\tau\sqrt{t}}{n} + C_1\tau \frac{(\log n)^2t}{n},
    \end{equation*}
where $C_1, C_2 > 0$ are absolute constants.
Due to Theorem~2.2, the bias term can be bounded by
    \begin{equation*}
        |\mathbb{E}\psi_{\tau}(X_{i,j} - \mu_j)| \leq \sigma_j^2/\tau,
    \end{equation*}
Then, we have that
\begin{equation}\label{eq:m-est_linear.app_ub}
    \begin{aligned}
        \left|\frac{1}{n}\sum_{i =1}^n\psi_{\tau}(X_{i,j} - \mu_j)\right| \leq& \left|\frac{1}{n}\sum_{i =1}^n\psi_{\tau}(X_{i,j} - \mu_j) - \mathbb{E}\psi_{\tau}(X_{i,j} - \mu_j)\right| + |\mathbb{E}\psi_{\tau}(X_{i,j} - \mu_j)|\\
        \leq& \sqrt{C_2}\|X_{\cdot}\|_2\sqrt{\frac{t}{n}} + 2\sigma_j\sqrt{\frac{\sqrt{C_2t} + C_1(\log n)^2t}{n}},
    \end{aligned}
\end{equation}
where the last inequality follows by setting
\begin{equation}\label{eq:tau_value}
    \tau = \sigma_{j}\sqrt{\frac{n}{C_1(\log n)^2t + C_2^{1/2}t^{1/2}}}.
\end{equation}
\\
\\
\textbf{Step 4: combine the previous steps}.\\
Plugging $\tau$ into (\ref{eq:m-est_loss_2ed_lb}), we have with probability at least $1-e^{-t}$
\begin{equation}\label{eq:m-est_loss_2ed_lb2}
    \begin{aligned}
    \mathcal{L}^{\prime\prime}(u) \geq 1 - 16n^{-1}\big[C_2(\log n)^2t + C_1^{1/2}t^{1/2}\big] -2\sqrt{2}\Vert Y_{\cdot}(x) \Vert_2 n^{-1/2}t^{1/2}.
    \end{aligned}
\end{equation}
Therefore, with the same $\tau$, the requirement (\ref{eq:m-est_nonneg_cond}) for $n$ reduces to
\begin{equation}\label{eq:m-est_nonneg_cond2}
    \mathbb{P}\big(|X_{i,j} - \mu_j|/\sigma_j \leq 4^{-1}n^{1/2}[C_2(\log n)^2t + C_1^{1/2}t^{1/2}]^{-1/2}\big) > 2\sqrt{2}\Vert Y_{\cdot}(x) \Vert_2 n^{-1/2}t^{1/2}.
\end{equation}
Let $C_3 > 0$ be a sufficient large constant. Provided $n$ is large enough such that
\begin{equation}\label{eq:AH_mean_n1}
    n \geq C_3\max\big\{C_2(\log n)^2t + C_1^{1/2}t^{1/2}, 2\Vert Y_{\cdot}(x) \Vert_2^2t\big\},
\end{equation}
then $\mathcal{L}^{\prime\prime}(u) \geq 1 - 16C_3^{-1} - 2C_3^{-1/2}$. Then, we have that (\ref{eq:m-est_nonneg_cond2}) can be simplified as
$$\mathbb{P}(|X_{i,j} - \mu_j|/\sigma_j \leq 4^{-1}C_3^{1/2}) > 2C_3^{-1/2},$$
which holds trivially since $C_3$ is sufficiently large.
Combining (\ref{eq:m-est_intermediate}) and (\ref{eq:m-est_linear.app_ub}), we have with probability at least $1 - 3e^{-t}$
    \begin{align}\label{eq:huber_upper_bd}
        |\widetilde{\mu}^{\eta}_j - \mu_j| \leq& D^{-1}\Big| \frac{1}{n}\sum_{i =1}^n\psi_{\tau}(X_{i,j} - \mu_j) \Big| \nonumber\\ 
        \leq& (1 - 16C_3^{-1} - 2C_3^{-1/2})^{-1}\left\{\sqrt{C_2}\|X_{\cdot}\|_2\sqrt{\frac{t}{n}} + 2\sigma_j\sqrt{\frac{\sqrt{C_2t} + C_1(\log n)^2t}{n}}\right\}\nonumber\\
        \leq& C_4\|X_{\cdot}\|_2(\log n)\sqrt{\frac{t}{n}}.
    \end{align}
In addition, provided $C_3$ is sufficiently large, \eqref{eq:AH_mean_n1} and \eqref{eq:huber_upper_bd} lead to $|\widetilde{\mu}^{\eta}_j - \mu_j|_2 \leq \tau/2 = r$ for all $\eta \in (0,1]$. By our construction in the beginning of the proof, this enforces $\widetilde{\mu}_j^{\eta} = \widetilde{\mu}_j$.
Finally, applying the union bound concludes the proof.
\end{proof}
\begin{remark}
    \Cref{Thm:m-est.mean} shows that, for a process whose second marginal moments are finite and whose functional dependence measure decays exponentially, the deviation error of Huber's $M$-estimator is of the rate $(\log n)\sqrt{(t+\log d)/n}$. We note that the extra $\log n$ factor is led by the $\log n$ factor appearing in Theorem~2.5. If we further restrict ourselves to linear processes, i.e.~\textup{(12)}, then using Theorem~2.6 instead results in the rate $\sqrt{(t+\log d)/n}$, which matches exactly the minimax lower bound of mean estimation.
\end{remark}

\subsection{Truncated estimator for high-dimensional mean}\label{sec:app:nonasy_trunc_mean}
In this subsection, we consider a simpler tail-robust mean estimator, the element-wise truncated mean estimator. This estimator achieves the same deviation error rate as Huber's $M$-estimator but without requiring bounded marginal density (Assumption~3). Moreover, this estimator has a closed form and can be computed directly. Recall the element-wise truncated mean estimator defined in \textup{(18)} as
    \begin{equation*}
        \widehat{\bm{\mu}} = (\widehat{\mu}_1, \dots, \widehat{\mu}_d)^{\intercal},
    \end{equation*}
    where $\widehat{\mu}_j = n^{-1}\sum_{i=1}^n\psi_{\tau}(X_{i,j})$. Recall that $\omega_{2}^2 = \max_{1 \leq j\leq d} \mathbb{E}[X_{1,j}^2]$ and $\Vert X_{\cdot} \Vert_{2} = \sup_{m \geq 0}\exp(cm)\sum_{s = m}^{\infty}\delta_{s,2}$.

    \begin{theorem}
    \label{thm:supp_element.mean2}
    For $t > 0$, choose the robustification parameter
    \begin{equation*}
        \tau \asymp \omega_{2}(\log n)^{-1}\sqrt{\frac{n}{t + \log d}},
    \end{equation*}
    Then, under Assumptions \ref{assumption:APP_finite.variance} and \ref{assumption:APP_decay.2order}, for a sufficient large $n$ such that $n \geq 4 \vee c/2$,
    we have with probability at least $1-2e^{-t}$
    \begin{equation*}
        |\widehat{\bm{\mu}} - \bm{\mu}|_{\infty} \lesssim \Vert X_{\cdot} \Vert_{2}(\log n)\sqrt{\frac{t+\log d}{n}}.
    \end{equation*}
    \end{theorem}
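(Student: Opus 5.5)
The plan is to fix a coordinate $j \in [d]$, bound $|\widehat{\mu}_j - \mu_j|$ with probability at least $1 - 2e^{-t}/d$, and finish with a union bound over $j$. This union bound is where $t$ is replaced by $t + \log d$.

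For fixed $j$, split the error into a stochastic term and a bias term:
\begin{equation*}
|\widehat{\mu}_j - \mu_j| \leq \Big|\frac{1}{n}\sum_{i=1}^n\big\{\psi_{\tau}(X_{i,j}) - \mathbb{E}[\psi_{\tau}(X_{i,j})]\big\}\Big| + \big|\mathbb{E}[\psi_{\tau}(X_{i,j})] - \mu_j\big|.
\end{equation*}
\textbf{Bias term.} Apply Lemma~\ref{lemma:bias} with $u = 0$. This gives a bound of $\mathbb{E}[X_{i,j}^2]/\tau \leq \omega_2^2/\tau$.

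\textbf{Stochastic term.} Apply Theorem~\ref{thm:bernstein_bounded} to the process $\{X_{i,j}\}_{i \in \mathbb{Z}}$. That theorem is stated for a centered process, so first apply it to $X_{i,j} - \mu_j$ with the shifted truncation $\psi_\tau(\cdot + \mu_j)$, or simply note that its proof only uses boundedness of the truncated variables and the dependence measure. By Lemma~\ref{lemma:truc_FDM}, truncation does not increase the functional dependence measure, so Assumption~\ref{assumption:APP_decay.2order} transfers to the truncated process. Lemma~\ref{lemma:lrv_univar} then gives $C_{\mathrm{LRV}} \lesssim \|X_\cdot\|_2^2$. The requirement $n \geq 4 \vee c/2$ is exactly what Theorem~\ref{thm:bernstein_bounded} needs for its block construction.

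Inverting the Bernstein tail at level $2e^{-(t+\log d)}$ gives, with probability at least $1 - 2e^{-t}/d$,
\begin{equation*}
\Big|\frac{1}{n}\sum_{i=1}^n\{\psi_{\tau}(X_{i,j}) - \mathbb{E}\psi_{\tau}(X_{i,j})\}\Big| \lesssim \|X_\cdot\|_2\sqrt{\frac{t+\log d}{n}} + \frac{\tau\sqrt{t+\log d}}{n} + \frac{\tau(\log n)^2(t+\log d)}{n}.
\end{equation*}

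\textbf{Balancing.} Plug in $\tau \asymp \omega_2(\log n)^{-1}\sqrt{n/(t+\log d)}$:
\begin{itemize}
\item the bias $\omega_2^2/\tau$ becomes $\omega_2(\log n)\sqrt{(t+\log d)/n}$;
\item the last stochastic term becomes $\omega_2(\log n)\sqrt{(t+\log d)/n}$;
\item the middle stochastic term becomes $\omega_2(\log n)^{-1}/\sqrt{n}$, which is of lower order.
\end{itemize}
To express everything through $\|X_\cdot\|_2$, note that $\omega_2 \lesssim \|X_\cdot\|_2$. This holds because $\|X_{i,j}-\mu_j\|_2 \leq \sum_{s\geq 0}\delta_{s,2} = \Delta_{0,2} \leq \|X_\cdot\|_2$ via the projection decomposition \eqref{eq:projection}. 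A mean term could in principle enter $\omega_2$; I would handle it by working with the centered process, or absorb it into constants.

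\textbf{Main obstacle.} The delicate step is the legitimacy of applying Theorem~\ref{thm:bernstein_bounded}. It is stated for centered processes with untruncated dependence constants, so one must check that the constant $C$ there depends only on $c$ and $\|X_\cdot\|_2$, uniformly in $\tau$ and in the shift. This uniformity is what Lemma~\ref{lemma:truc_FDM} provides, since its bound holds uniformly over $u$. Once that is settled, the union bound over $d$ coordinates gives total failure probability at most $2e^{-t}$.
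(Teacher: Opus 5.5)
Your proposal is correct and follows essentially the same route as the paper: the bias is bounded by $\omega_2^2/\tau$ via Lemma~\ref{lemma:bias}, the centered truncated sum is controlled by Theorem~\ref{thm:bernstein_bounded} (with Lemmas~\ref{lemma:truc_FDM} and~\ref{lemma:lrv_univar} supplying the dependence and long-run variance constants), the terms are balanced with the prescribed $\tau$, and a union bound finishes, which you make explicit by working at level $t+\log d$ per coordinate. One small correction: $\widehat{\mu}_j$ truncates at $0$ rather than at $\mu_j$, so the bias genuinely involves the uncentered $\omega_2$ and ``working with the centered process'' does not remove it; the bound you actually obtain is of order $(\omega_2 \vee \|X_\cdot\|_2)(\log n)\sqrt{(t+\log d)/n}$, an imprecision the paper's own proof shares, since it also writes $\|X_\cdot\|_2$ after a bias term of size $\omega_2^2/\tau$.
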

\begin{proof}
It suffices to consider only $\{X_{i,j}\}_{i \in \mathbb{Z}}$ for any $j \in [d]$. We have that
    \begin{equation*}
        \begin{aligned}
        n(\widehat{\mu}_j - \mu_j) =&  n\big[\mathbb{E}\psi_{\tau}(X_{1,j}) - \mu_j\big] + \sum_{i=1}^n\big[\psi_{\tau}(X_{i,j}) - \mathbb{E}\psi_{\tau}(X_{i,j})\big]\\ =& I + II.
        \end{aligned}
    \end{equation*}
Due to Theorem~2.2, the bias term can be bounded by
    \begin{equation*}
        |\mathbb{E}\psi_{\tau}(X_{i,j}) - \mu_j| \leq \omega^2/\tau.
    \end{equation*}
Thus, we have $-n\omega_{2}^2/\tau \leq I \leq n\omega_{2}^2/\tau$.
\\
    Next, we consider the term $II$. By Theorem~2.5, under Assumptions \ref{assumption:APP_finite.variance} and \ref{assumption:APP_decay.2order}, for $t > 0$, we have with probability at least $1 - 2e^{-t}$ that
    \begin{equation*}
        |II| \leq \sqrt{C_2}\|X_{\cdot}\|_2\sqrt{tn} + \sqrt{C_2}\tau\sqrt{t} + C_1\tau(\log n)^2t.
    \end{equation*}
    Combining with the upper bound of the term $I$, we have with probability at least $1-2e^{-t}$
    \begin{align*}
        n|\widehat{\mu}_j - \mu_j| \leq& \sqrt{C_2}\|X_{\cdot}\|_2\sqrt{tn} + \sqrt{C_2}\tau\sqrt{t} + C_1\tau(\log n)^2t + n\omega_{2}^2/\tau\\
        \lesssim& \sqrt{nt}(\log n)\Vert X_{\cdot} \Vert_{2},
    \end{align*}
    where the second inequality is obtained by letting
    \begin{equation*}
    \tau = \omega\sqrt{\frac{n}{C_1(\log n)^2t + C_2^{1/2}t^{1/2}}}.
\end{equation*}
    Dividing by $n$ on both sides and applying the union bound concludes the proof.
    \end{proof}
\begin{remark}
    Theorem \ref{thm:supp_element.mean2} indicates the truncated mean estimator achieves the same optimal (up to an $\log n$ factor) deviation error as Huber's $M$-estimator, which remove the bounded marginal density assumption (Assumption~3). Moreover, the truncated mean estimator can be computed without using optimization, thus it is more computational friendly than Huber's $M$-estimator.
\end{remark}

\section{Gaussian approximation based on the truncated estimator}\label{sec:Gaussian}
In this section, we study the Gaussian approximation for element-wise truncated mean estimator under temporal dependence. Let $\{\bm{X}_i\}_{i \in \mathbb{Z}} \subset \mathbb{R}^d$ be a potentially nonstationary time series in the form of \textup{(1)}. Since $\{\bm{X}_i\}_{i \in \mathbb{Z}}$ is allowed to be nonstationary, we denote 
\begin{equation*}
    \bm{\mu} = \frac{1}{n}\sum_{i = 1}^n\mathbb{E}[X_i].
\end{equation*}
Recall the element-wise truncated mean estimator $\widehat{\bm{\mu}}$ defined in \textup{(18)}, whose nonasymptotic properties is given in \Cref{sec:app:nonasy_trunc_mean}.
Let $\bm{Z} \in \mathbb{R}^d$ be a Gaussian vector such that $\bm{Z} \sim N(\bm{0}, \bm{\Sigma})$, where $\bm{\Sigma}$ is the long-run covariance matrix of $\{\bm{X}_i\}_{i \in \mathbb{Z}}$. Our goal is to obtain the error of the Gaussian approximation for $\widehat{\bm{\mu}}$ in Kolmogorov–Smirnov distance, i.e.
\begin{equation*}
    \rho_{n,\tau} = \sup_{t \in \mathbb{R}}\big|\mathbb{P}(\sqrt{n}|\widehat{\bm{\mu}} - \bm{\mu}|_{\infty} \leq t) - \mathbb{P}(|\bm{Z}|_{\infty} \leq t) \big|.
\end{equation*}
To achieve this goal, we decompose $\rho_{n,\tau}$ as the following three terms $\rho_{n,\tau}^*$, $\rho_{n,\tau}^{\diamond}$ and $\rho_{n,\tau}^{\circ}$. The first term is defined as
\begin{equation*}
    \rho_{n,\tau}^* = \sup_{t \in \mathbb{R}}\big|\mathbb{P}(\sqrt{n}|\widehat{\bm{\mu}} - \mathbb{E}[\widehat{\bm{\mu}}]|_{\infty} \leq t) - \mathbb{P}(|\bm{Z}^{\tau}|_{\infty} \leq t) \big|,
\end{equation*}
where $\bm{Z}^{\tau} \sim N(\bm{0}, \bm{\Sigma^{\tau}})$ with $\bm{\Sigma}^{\tau} = \lim_{n \to \infty}\cov(\sqrt{n}(\widehat{\bm{\mu}} - \mathbb{E}[\widehat{\bm{\mu}}]))$, i.e.~the long-run covariance matrix of $\{\psi_{\tau}(\bm{X}_i)\}_{i \in \mathbb{Z}}$.
The second and the third terms are defined as
\begin{equation*}
    \rho_{n,\tau}^{\diamond} = \sup_{t \in \mathbb{R}}\big|\mathbb{P}(\sqrt{n}|\widehat{\bm{\mu}} - \bm{\mu}|_{\infty} \leq t) - \mathbb{P}(\sqrt{n}|\widehat{\bm{\mu}} - \mathbb{E}[\widehat{\bm{\mu}}]|_{\infty} \leq t) \big|,
\end{equation*}
and
\begin{equation*}
    \rho_{n,\tau}^{\circ} = \sup_{t \in \mathbb{R}}\big|\mathbb{P}(|\bm{Z}^{\tau}|_{\infty} \leq t) - \mathbb{P}(|\bm{Z}|_{\infty} \leq t) \big|.
\end{equation*}
By the triangle inequality, we have the following decomposition
\begin{equation*}
    \rho_{n,\tau} \leq \rho_{n,\tau}^* + \rho_{n,\tau}^{\diamond} + \rho_{n,\tau}^{\circ}.
\end{equation*}
    The term $\rho_{n,\tau}^*$ represents the error of the Gaussian approximation for truncated process $\{\psi_{\tau}(\bm{X}_i)\}_{i \in \mathbb{Z}}$. The term $\rho_{n,\tau}^{\diamond}$ measures the Kolmogorov-Smirnov distance arised from the mean bias due to truncation, and the term $\rho_{n,\tau}^{\circ}$ represents the difference of two centered Gaussian random vectors with different covariance matrices. The Gaussian approximation result is provided in Theorem \ref{thm:GA_element-wise-trunc-mean}.
In additional to the exponential decay of second order functional dependence measure, i.e.~\Cref{assumption:APP_decay.2order}, our Gaussian approximation result requires the following assumptions.
\begin{assumption}\label{assumption:L2+theta}
    For some $0 < \theta \leq 1$, it satisfies that $M_{\theta} = \max_{1 \leq j \leq d}\Vert X_{i,j} \Vert_{2+\theta}^{2+\theta} < \infty$.
\end{assumption}
\begin{assumption}\label{assumption:blockvar_nondegen} 
    There exists a constant $b > 0$ such that 
    \begin{equation*}
        \min_{1 \leq j \leq d}\inf_{S \subseteq [1,n]}\frac{1}{|S|}\mathbb{E}\Big[\sum_{i \in S}(X_{i,j} - \mu_j) \Big]^2 > b.
    \end{equation*}
\end{assumption}
\noindent Assumption \ref{assumption:L2+theta} assumes finite $(2+\theta)$-th moment of the marginal distribution, which is slightly stronger than \Cref{assumption:APP_finite.variance}. Assumption \ref{assumption:blockvar_nondegen} ensures the nondegeneracy of the partial sums, which is needed to verify the condition of Theorem 2.1 in \cite{chernozhukov2017central}.

\begin{theorem}\label{thm:GA_element-wise-trunc-mean}
    Let 
    \[
    \tau \asymp \Big(\frac{n}{M^{3/2}\log d}\Big)^{\frac{1}{2+\theta}}, \;\;M \asymp n^{\alpha} \;\; \text{and} \;\; \log d \asymp n^{\beta}
    \]
     such that (a) $\alpha > 2\beta$ and (b) $16\alpha + 6\theta > (59+36\theta)\beta$. Let $\bm{Z} \sim N(\bm{0}, \bm{\Sigma})$ with $\bm{\Sigma}$ being the long-run covariance matrix of $\{\bm{X}_i\}_{i \in \mathbb{Z}}$. Then, under Assumptions \ref{assumption:APP_decay.2order}, \ref{assumption:L2+theta} and \ref{assumption:blockvar_nondegen}, we have as $n \to \infty$
\begin{equation*}
    \begin{aligned}
    \sup_{t \in \mathbb{R}}\big|\mathbb{P}(\sqrt{n}|\widehat{\bm{\mu}} - \bm{\mu}|_{\infty} \leq t) - \mathbb{P}(|\bm{Z}|_{\infty} \leq t) \big| \lesssim n^{(2\beta - \alpha)/3} \vee n^{ [(3+3\theta)\alpha + (4+3\theta)\beta - \theta]/(4+2\theta)} \to 0.
    \end{aligned}
\end{equation*}
\end{theorem}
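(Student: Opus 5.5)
The proof follows the decomposition $\rho_{n,\tau}\le \rho_{n,\tau}^*+\rho_{n,\tau}^{\diamond}+\rho_{n,\tau}^{\circ}$ set up just before the statement. Each term is bounded as an explicit power of $n$ in terms of $\tau$, $M$ and $\log d$. We then substitute $\tau\asymp (n/(M^{3/2}\log d))^{1/(2+\theta)}$, $M\asymp n^\alpha$, $\log d\asymp n^\beta$ and read off the two displayed rates.

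\textbf{The term $\rho_{n,\tau}^*$.} This is a high-dimensional CLT for the bounded, dependent, possibly nonstationary vectors $\psi_\tau(\bm X_i)-\mathbb E\psi_\tau(\bm X_i)$. By Lemma~\ref{lemma:truc_FDM}, the truncated process inherits the exponentially decaying functional dependence measure of Assumption~\ref{assumption:APP_decay.2order}. I would then use an $M$-dependence approximation: replace $\psi_\tau(X_{i,j})$ by $\mathbb E[\psi_\tau(X_{i,j})\mid \epsilon_{i-M},\dots,\epsilon_i]$. The $L_2$ error of the normalized sum is of order $\Delta_{M,2}\lesssim e^{-cM}$, which is negligible since $M\asymp n^\alpha$. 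The resulting error in Kolmogorov distance is handled by Nazarov's anti-concentration inequality, at a cost of $\sqrt{\log d}$. Next comes a big-block/small-block scheme: blocks of length of order $M$ separated by gaps of length $M$ give independent block sums. To these independent sums I apply Theorem~2.1 of \cite{chernozhukov2017central}. Its nondegeneracy requirement is supplied by Assumption~\ref{assumption:blockvar_nondegen}; the truncation preserves this up to an $O(M_\theta/\tau^{\theta})$ perturbation. Its envelope is bounded by $M\tau$, and its third moments are controlled through $\mathbb E|\psi_\tau(X)|^3\le \tau^{1-\theta}M_\theta$. The bound has the form $\big((M\tau)^2\log^7(dn)/n\big)^{1/6}$ plus block and gap terms. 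The gaps drop a fraction $\asymp$ gap/block of the sum, and I would control this by Gaussian comparison/anti-concentration, giving a term like $(\log d)^{2/3}/M^{1/3}$. This is the source of the $n^{(2\beta-\alpha)/3}$ rate and of condition (a) $\alpha>2\beta$.

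\textbf{The terms $\rho_{n,\tau}^{\diamond}$ and $\rho_{n,\tau}^{\circ}$.} For $\rho_{n,\tau}^{\diamond}$, note that $\sqrt n|\mathbb E\widehat{\bm\mu}-\bm\mu|_\infty\le \sqrt n M_\theta/\tau^{1+\theta}$. This is the analogue of Lemma~\ref{lemma:bias} with $(2+\theta)$ moments. A deterministic shift of size $\varepsilon$ changes the Kolmogorov distance by at most $\rho^*_{n,\tau}$ plus the Gaussian anti-concentration of $|\bm Z^\tau|_\infty$ at scale $\varepsilon$, i.e. $\varepsilon\sqrt{\log d}$ (Nazarov, using the variance lower bound). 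For $\rho_{n,\tau}^{\circ}$, I would use the Gaussian comparison inequality of Chernozhukov et al.: the bound is $\lesssim (\|\bm\Sigma^\tau-\bm\Sigma\|_{\max}\log^2 d)^{1/3}$. Entrywise, $\|\bm\Sigma^\tau-\bm\Sigma\|_{\max}$ is bounded by summing lagged covariance differences. Using the coupling/projection bound \eqref{eq:moment_ineq1}, together with $\|X-\psi_\tau(X)\|_2^2\le M_\theta\tau^{-\theta}$ and the exponential decay, this gives $\lesssim M_\theta^{1/2}\tau^{-\theta/2}$ up to constants.

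\textbf{Balancing and main obstacle.} Plugging in $\tau$, the shift term $\sqrt n\tau^{-1-\theta}\sqrt{\log d}$ balances against the CLT term $(M\tau)^{1/3}(\log d)^{7/6}n^{-1/6}$; this choice of $\tau$ is exactly what equalizes them. Collecting exponents should give the second rate $n^{[(3+3\theta)\alpha+(4+3\theta)\beta-\theta]/(4+2\theta)}$. I would check that $\rho^\circ$ is dominated under condition (b), which is the condition making all exponents negative. The main obstacle is $\rho^*_{n,\tau}$ under nonstationarity. The block sums are independent but not identically distributed, and their variances can vary, so the nondegeneracy and moment conditions of \cite{chernozhukov2017central} must hold uniformly over blocks. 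It is also necessary to keep explicit track of the powers of $\tau$ and $M$ in the envelope to get the precise exponent. I would first try to make these conditions hold uniformly using Assumption~\ref{assumption:blockvar_nondegen} directly on every block.
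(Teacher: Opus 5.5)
Your decomposition and overall architecture (truncation bias, $m$-dependence plus blocking, Theorem~2.1 of \cite{chernozhukov2017central}, Gaussian comparison) match the paper's. There is a genuine gap, though, in the CLT step for $\rho^*_{n,\tau}$ and in how you collect exponents.

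\textbf{The CLT step uses the wrong rate.} You cite Theorem~2.1 of \cite{chernozhukov2017central}, but the bound you actually use, $((M\tau)^2\log^7(dn)/n)^{1/6}$, is the envelope (bounded-summand) rate with $B_n\asymp\sqrt{M}\tau$ for the normalized block sums. The prescribed $\tau$ does not balance this against the shift term. Equating $\sqrt{n}\tau^{-1-\theta}\sqrt{\log d}$ with $(M\tau)^{1/3}(\log d)^{7/6}n^{-1/6}$ gives $\tau\asymp(n^2/(M\log^2 d))^{1/(4+3\theta)}$, not $(n/(M^{3/2}\log d))^{1/(2+\theta)}$.

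At the prescribed $\tau$, your CLT term is $n^{[(1+2\theta)\alpha+(12+7\theta)\beta-\theta]/(12+6\theta)}$, and this is not dominated by the stated rate. For $\theta=1$ and $\alpha=2/19$ (the choice in Remark~\ref{remark:GA_theta=1}) it equals $n^{(19\beta-13/19)/18}$. This diverges for $13/361<\beta<1/19$, even though (a) and (b) hold there and both stated rates vanish. So your argument does not prove the theorem.

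In the paper, the second stated rate is exactly the bias term:
$\sqrt{n}\tau^{-(1+\theta)}M_\theta\sqrt{\log d}\asymp[M^{3+3\theta}(\log d)^{4+3\theta}/n^{\theta}]^{1/(4+2\theta)}$.
The CLT term is smaller because it is driven by third moments, not the envelope. Burkholder's inequality together with $\mathbb{E}|\psi_\tau(X)-\psi_\tau(X^*)|^3\le(2\tau)^{1-\theta}\mathbb{E}|X-X^*|^{2+\theta}$ gives $\|Y^{\tau}_{bj,m}\|_3\lesssim\sqrt{M}\tau^{(1-\theta)/3}$. Hence one can take $\bar L_n\asymp\tau^{1-\theta}$, and the main term is $(\tau^{2(1-\theta)}\log^7 d/\omega)^{1/6}=[M^{(4\theta-1)/3}(\log d)^{4+3\theta}/n^{\theta}]^{1/(4+2\theta)}$, which is free of $\tau$ when $\theta=1$.

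The envelope enters only through $M_n(\phi_n)$. The prescribed $\tau$ is what \eqref{eq:2Mtau} yields when one requires the envelope $2\sqrt{M}\tau$ to lie below the CCK truncation level, so that $M_{n,Y}(\phi_n)=0$. One caveat: if you redo that computation with the actual number $\omega\asymp n/M$ of independent block summands rather than $n$, the envelope exceeds the level by a factor $\asymp M^{1/3}$. You should then bound $M_{n,Y}(\phi_n)$ with a Bernstein tail bound (Theorem~\ref{thm:bernstein_bounded}) inside each block rather than claim it vanishes; it remains negligible.

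\textbf{The blocking scheme is inconsistent.} A lag-$M$ approximation forces gaps of length at least $M$. With blocks of order $M$, those gaps carry a constant fraction of the sum, so they are not negligible and cannot produce $(\log d)^{2/3}M^{-1/3}$.

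The paper instead takes the lag and the small blocks of size $m\asymp\log n$, with big blocks of size $M$. Then $e^{-cm}$ is only polynomially small while $\log d\asymp n^\beta$, so an $L_2$ bound plus a union bound over coordinates no longer suffices. The $m$-dependence error is controlled by Theorem~\ref{thm:bernstein_bounded}, and the small-block sum by the independent Bernstein inequality, both at level $\sqrt{n}\kappa$. Anti-concentration then costs $\kappa\sqrt{\log d}$. The term $(\log d)^{2/3}M^{-1/3}$ comes instead from comparing $N(\bm{0},\frac{M\omega}{n}\widetilde{\bm{B}}^{\tau})$ with $N(\bm{0},\bm{\Sigma}^{\tau})$. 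Their covariance mismatch is the Bartlett-type bias $\sum_{|\ell|\le M}\frac{|\ell|}{M}|\gamma^{\tau}_{\ell,(jk)}|=O(M^{-1})$ plus the gap fraction $m/M$.

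\textbf{The rest is fine.}
\begin{itemize}
\item Your $\rho^{\diamond}_{n,\tau}$ argument is the paper's.
\item Your $\rho^{\circ}_{n,\tau}$ bound gives order $\tau^{-\theta/2}$ for $\|\bm{\Sigma}^{\tau}-\bm{\Sigma}\|_{\max}$, versus the paper's $\tau^{-\theta}$ from coupling plus H\"older on $\{|X|>\tau\}$. It is weaker but still dominated by the stated rate.
\item Non-identically distributed blocks are harmless, because Theorem~2.1 of \cite{chernozhukov2017central} only involves averages over the independent summands, so no uniformity over blocks is needed.
\end{itemize}
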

\begin{remark}\label{remark:GA_theta=1}
Theorem \ref{thm:GA_element-wise-trunc-mean} allows us to consider the high-dimensional regime where $d$ diverges with $n$ exponentially. For example, when $\theta = 1$, let $\alpha = 2/19$ by optimizing the above error rate. Then, choose $M \asymp n^{2/19}$ and $\tau \asymp \big[n^{16/19}(\log d)^{-1}\big]^{1/3}$, we allow $\beta < 1/19$, i.e. $\log d = o\big( n^{1/19} \big)$, and we have
\begin{equation*}
    \sup_{t \in \mathbb{R}}\big|\mathbb{P}(\sqrt{n}|\widehat{\bm{\mu}} - \bm{\mu}|_{\infty} \leq t) - \mathbb{P}(|\bm{Z}|_{\infty} \leq t) \big| \lesssim n^{-2(1/19-\beta)/3}.
\end{equation*}
\end{remark}
\cite{lou2017simultaneous} consider the Gaussian approximation of the element-wise truncated mean estimator under independence case. In that setting, the condition on $\log d$ is less restrictive, i.e.~$\log d$ can be as larger as $o(n^{\theta/(4+3\theta)})$. By inspecting the proof of Theorem \ref{thm:GA_element-wise-trunc-mean}, the restriction is due to the use of the block technique. We divide sample into blocks of consecutive data with size $M \asymp n^\alpha$ and work on the re-scaled block means. The value $\alpha$ need to be large enough relative to $d$ (see the conditions (a) and (b) of Theorem \ref{thm:GA_element-wise-trunc-mean}) to preserve the underlying dependence structure. However, when the data are independence, we could set $M = 1$, i.e.~$\alpha = 0$, the conditions (a) and (b) are removed, and the same rate as in the iid setting is obtained.

Although Theorem \ref{thm:GA_element-wise-trunc-mean} considers simultaneous inference for high-dimensional mean vectors, it can also be applied to perform simultaneous inference (such as the test of serial correlations) for high-dimensional autocovariance matrices based on the considered element-wise truncated autocovariance matrix estimator. It is important to mention that we may need to choose different robustification parameters for different purposes as suggested by Theorems 3.3 and \ref{thm:GA_element-wise-trunc-mean}. However, if suggested $\tau$ of \Cref{thm:GA_element-wise-trunc-mean} is chosen, then under the stronger restriction on the dimension $d$ of \Cref{thm:GA_element-wise-trunc-mean}, i.e.~$\log d \asymp n^{\beta}$ with $\beta < 1/19$, the deviation error obtained in Theorem~3.3 still holds. See Remark~5 for a discussion. 

\section{Proofs for Section 2}\label{Appendix:proof_for_sec2}

\begin{proof}[Proof of Theorem~2.2]
For any $i \in \mathbb{Z}$, we have that
    \begin{align*}
        \left|\mathbb{E}[\psi_{\tau}(X_{i} - u)] - \mathbb{E}[X_i - u]\right| =& \left| \mathbb{E}[ (X_{i} - u -\tau) \mathbbm{1}\{X_{i} - u > \tau\}] + \mathbb{E}[ (X_{i} - u+\tau) \mathbbm{1}\{X_{i} - u < -\tau\}] \right|\\
        \leq& \mathbb{E}[(X_{i} - u)\mathbbm{1}\{X_{i} - u > \tau\}] + \mathbb{E}[-(X_{i}-u) \mathbbm{1}\{X_{i}-u < -\tau\}]\\
        =& \mathbb{E}[|X_{i} - u| \mathbbm{1}\{|X_{i} - u| > \tau\}]\\
        \leq& \mathbb{E}[(X_{i}-u)^2]/\tau,
    \end{align*}
    which completes the proof.
\end{proof}

\begin{proof}[Proof of Lemma 2.3]
    The statement is true since the truncation operator given in (5) is a Lipschitz function with a Lipschitz constant being $1$.
\end{proof}

\begin{proof}[Proof of Theorem~2.4]
    Define the projection operator $\mathcal{P}_j\cdot = \mathbb{E}[\cdot|\mathcal{F}_j] - \mathbb{E}[\cdot| \mathcal{F}_{j-1}]$ with $j \in \mathbb{Z}$.  A random variable $X_i$ is decomposed as
    \begin{equation*}
        X_i - \mathbb{E}[X_i] = \sum_{k = 0}^{\infty}\big(\mathbb{E}[X_i|\mathcal{F}_{i-k}] - \mathbb{E}[X_i| \mathcal{F}_{i-k-1}]\big) = \sum_{k = 0}^{\infty}\mathcal{P}_{i-k}X_i.
    \end{equation*}
    It holds for any $l \geq 0$ that 
    \begin{equation*}
    \begin{aligned}
        &\sup_{t \in \mathbb{Z}}\big|\cov(X_{t}, X_{t+l})\big| = \sup_{t \in \mathbb{Z}}\bigg|\mathbb{E}\Big[\Big(\sum_{k = 0}^{\infty}\mathcal{P}_{-k}X_t\Big)\Big(\sum_{k = 0}^{\infty}\mathcal{P}_{l-k}X_{t+l}\Big)\Big]\bigg|\\
        =& \sup_{t \in \mathbb{Z}}\bigg|\sum_{k =     0}^{\infty}\mathbb{E}\big[(\mathcal{P}_{-k}X_t)(\mathcal{P}_{-k}X_{t+l})\big]\bigg| \leq \sum_{k =     0}^{\infty}\sup_{t \in \mathbb{Z}}\Big|\mathbb{E}\big[(\mathcal{P}_{-k}X_t)(\mathcal{P}_{-k}X_{t+l})\big]\Big|\\
        \leq& \sum_{k = 0}^{\infty}\sup_{t \in \mathbb{Z}}\Vert \mathcal{P}_{-k}X_{t} \Vert_2 \Vert \mathcal{P}_{-k}X_{t+l}\Vert_2 \leq \sum_{k = 0}^{\infty} \delta_{k,2}\delta_{k+l,2} \leq \sqrt{\sum_{k = 0}^{\infty} \delta_{k,2}^2}\sqrt{\sum_{k = 0}^{\infty} \delta_{k+l,2}^2}\\
        \leq& \Big(\sum_{k = 0}^{\infty} \delta_{k,2}\Big) \Big(\sum_{k = l}^{\infty} \delta_{k,2}\Big) = \Delta_{0,2}\Delta_{l,2},
    \end{aligned}
    \end{equation*}
    where the first inequality follows the triangle inequality and the second and fourth inequalities follow H\"older's inequality. The second equality also follows the orthogonality of $\mathcal{P}_j\cdot$, i.e.~for $i < j$
    \begin{equation*}
        \mathbb{E}[(\mathcal{P}_iX_r)(\mathcal{P}_jX_s)] = \mathbb{E}[\mathbb{E}[(\mathcal{P}_iX_r)(\mathcal{P}_jX_s)|\mathcal{F}_{i}]] = \mathbb{E}[(\mathcal{P}_iX_r)\mathbb{E}[X_s-X_s|\mathcal{F}_i]] = 0,
    \end{equation*}
    and the orthogonality also holds for $i > j$ by symmetry. The third inequality is due to the fact that
    \begin{equation*}
    \begin{aligned}
       \Vert \mathcal{P}_{j}X_i \Vert_2 =& \Vert \mathbb{E}[X_i| \mathcal{F}_{j}] - \mathbb{E}[X_i| \mathcal{F}_{j-1}]\Vert_2 = \Vert\mathbb{E}[X_i| \mathcal{F}_{j}] - \mathbb{E}[X_{i,\{j\}}| \mathcal{F}_{j-1}]\Vert_2\\
       =& \Vert\mathbb{E}[X_i-X_{i,\{j\}}| \mathcal{F}_{j}]\Vert_2 \leq \Vert X_{i}-X_{i,\{j\}}\Vert_2 \leq \delta_{i-j,2},
    \end{aligned}
    \end{equation*}
    where the second and the third equality follows the definition of the coupled random variables $X_{i,\{j\}}$, the first inequality follows Jensen's inequality, and the second inequality follows the definition of the functional dependence measure. By the same arguments, we have for any $l < 0$ that
    \begin{equation*}
    \begin{aligned}
        &\sup_{t \in \mathbb{Z}}\big|\cov(X_t, X_{t+l})\big| \leq \Delta_{0,2}\Delta_{-l,2}.
    \end{aligned}
    \end{equation*}
Therefore, we have that
\begin{equation}\label{eq:LRV_univariate_ub}
\begin{aligned}
    &C_{\mathrm{LRV}} = \sum_{l = -\infty}^{\infty}\sup_{t \in \mathbb{Z}}\big|\cov(X_t, X_{t+l})\big| \leq 2\Delta_{0,2}\sum_{l = 0}^{\infty}\Delta_{l,2} \leq 2C_{\mathrm{FDM}}^2\sum_{l = 0}^{\infty}\exp(-cl^{\gamma_1}).
\end{aligned}
\end{equation}
To bound \eqref{eq:LRV_univariate_ub}, we compare the series $\{a_m = \exp(-cm^{\gamma_1})\}_{m = 1}^{\infty}$ and $\{b_m = (m+1)^{-\nu}\}_{m=1}^{\infty}$ with $\nu > 1$. Since by the properties of the Riemann zeta function, it holds that $\sum_{m = 1}^{\infty}b_m < \infty$.

We embed the series $\{a_m\}_{m = 1}^{\infty}$ and $\{b_m\}_{m=1}^{\infty}$ into continuous time processes by defining $a_x = a_{\lceil x \rceil}$ and $b_x = b_{\lceil x \rceil}$ for $x \in [1,\infty)$. Since for any $x \in [1,\infty)$, $a_x, b_x > 0$ and we define a function $g(x)$ on $x \in [1, \infty)$ as
\begin{equation*}
    g(x) = \log\bigg(\frac{a_x}{b_x}\bigg) = -cx^{\gamma_1} + \nu\log(x+1).
\end{equation*}
We have the derivative as
\begin{equation*}
    g^{\prime}(x) = -c\gamma_1x^{\gamma_1-1} + \frac{\nu}{x+1}.
\end{equation*}
Since for any absolute constants $c, \gamma_1 > 0$ and $\nu > 1$, there exists a finite $x^* \geq 1$ such that for any $x \geq x^*$, it holds that
\begin{equation*}
    \exp(-cx^{\gamma_1}) \leq (x+1)^{-\nu} \;\; \text{and} \;\; g^{\prime}(x) < 0.
\end{equation*}
Letting $m^* = \lceil x^* \rceil$, we have that
\begin{equation*}
\begin{aligned}
    C_{\mathrm{LRV}} \leq& 2C_{\mathrm{FDM}}^2\sum_{m = 0}^{\infty}\exp(-cm^{\gamma_1})\\
    =& 2C_{\mathrm{FDM}}^2\sum_{m = 0}^{m^*}\exp(-cm^{\gamma_1}) + 2C_{\mathrm{FDM}}^2\sum_{m = m^*+1}^{\infty}\exp(-cm^{\gamma_1})\\
    \leq& 2C_{\mathrm{FDM}}^2\sum_{m = 0}^{m^*}\exp(-cm^{\gamma_1}) + 2C_{\mathrm{FDM}}^2\sum_{m = m^*+1}^{\infty}(m + 1)^{-\nu} < \infty.
\end{aligned}
\end{equation*}
\end{proof}

\section{Proofs for Section 3}
\begin{proof}[Proof of Theorem~3.1]
    It suffices to consider the decay rate of $|\gamma_{\ell,(jk)}|$ for any $j,k \in [d]$. Recall the projection operator $\mathcal{P}_j\cdot = \mathbb{E}[\cdot|\mathcal{F}_j] - \mathbb{E}[\cdot|\mathcal{F}_{j-1}]$. We represent $X_{i,j} - \mu_j = \sum_{h=0}^{\infty}\mathcal{P}_{-h}X_{i,j}$, so that
\begin{align*}
    |\gamma_{\ell,(jk)}| =& \bigg|\sum_{h=0}^{\infty}\mathbb{E}[(\mathcal{P}_{-h}X_{0,j})(\mathcal{P}_{-h}X_{\ell,k})]\bigg| \leq \sum_{h=0}^{\infty}\big|\mathbb{E}[(\mathcal{P}_{-h}X_{0,j})(\mathcal{P}_{-h}X_{\ell,k})]\big| \leq \sum_{h=0}^{\infty}\delta_{h,2,j}\delta_{h+\ell,2,k}\\
    \leq& \|X_{\cdot}\|_2^2\exp(-c\ell),
\end{align*}
where the first equality is due to the orthogonality of the projection operator, the first inequality follows from the triangle inequality, and the second inequality follows from H\"older's inequality and Jensen's inequality, and the third inequality follows from H\"older's inequality.
\end{proof}

\begin{proof}[Proof of Theorem 3.2]
The proof is similar to that of Theorem \ref{Thm:m-est.mean}, thus we only mention the differences. Recall the $(j,k)$-th entry of $\widetilde{\bm{\Sigma}}_{\ell}$,
\begin{align*}
    \widetilde{\gamma}_{\ell,(jk)} = \widetilde{H}_{\ell,(jk)} - \widetilde{\mu}_j\widetilde{\mu}_k,
\end{align*}
with
\begin{align*}
    \widetilde{H}_{\ell,(jk)} = \argmin_{u \in \mathbb{R}} \frac{1}{n -\ell}\sum_{i=\ell+1}^{n}L_{\tau_{\ell}}(X_{i-\ell,j}X_{i,k} - u) \;\; \text{and} \;\; \widetilde{\mu}_j = \argmin_{u \in \mathbb{R}} \frac{1}{n}\sum_{i=1}^nL_{\tau}(X_{i,j} - u).
\end{align*}
The mean estimator $\widetilde{\mu}_j$ has already been studied in Theorem \ref{Thm:m-est.mean}, thus we only consider $\widetilde{H}_{\ell,(jk)}$, which can be treated as the $M$-estimator for the mean of the process $\{X_{i-\ell,j}X_{i,k}\}_{i \in \mathbb{Z}}$.
\\
Let $Y_i = X_{i-\ell,j}$. By \Cref{lemma:fdm_XY}, we have that for any $s \geq 0$ the functional dependence measure of $\{X_{i-\ell,j}X_{i,k}\}_{i \in \mathbb{Z}}$
\begin{align*}
    \delta_{s,2}^{XY} \leq 2\omega_4\delta_{s,4}^X.
\end{align*}
Thus, by Assumptions 1 and 2, we have that
    \begin{align*}
        \sup_{m \geq 0}\exp(-cm)\sum_{s = m}^{\infty}\delta_{s,2}^{XY} \leq 2\omega_4\Vert X_{\cdot}\Vert_{4} < \infty.
    \end{align*}
Moreover, under Assumption~1, we have that
\begin{align*}
    \max_{j,k \in [d]}\|X_{i-\ell,j}X_{i,k}\|_2 \leq \omega_4^2 < \infty. 
\end{align*}
For any $t > 0$, assuming $n$ is large enough such that
\begin{equation*}
    n \geq C(\log n)^2(t + 2\log d),
\end{equation*}
with $C > 0$ being a sufficiently large absolute constant.
Then, by Theorem \ref{Thm:m-est.mean} and choosing
    \begin{equation*}
        \tau_{\ell} \asymp \omega_4^2 (\log (n-\ell))^{-1}\sqrt{\frac{n-\ell}{t + 2\log d}} \asymp \omega_4^2 (\log n)^{-1}\sqrt{\frac{n}{t + 2\log d}},
    \end{equation*}
we have with probability at least $1 - 3e^{-t}$
    \begin{equation}\label{eq:proof.thm.m-est.exp.cross}
        \max_{j,k\in [d]}\left|\widetilde{H}_{\ell,(jk)} - \mathbb{E}[H_{i,\ell,(jk)}]\right|
        \lesssim \omega_4\|X_{\cdot}\|_4\log (n-\ell)\sqrt{\frac{t + 2\log d}{n-\ell}}.
    \end{equation}
    According to H\"older's inequality and Lemma \ref{lemma:DAN_prop}, the conditions in Theorem 3.2 imply the conditions in Theorem \ref{Thm:m-est.mean}. Thus, for any $t > 0$, choose the robustification parameter 
    \begin{equation*}
        \tau \asymp (\log n)^{-1} \sqrt{\frac{n}{t+2\log d}}.
    \end{equation*}
    Then, under the same assumptions as in Theorem 3.2, we have with probability at least $1-2e^{-t}$
    \begin{equation}\label{eq:proof.thm.m-est.exp.mean}
        \max_{j\in [d]}\left|\widetilde{\mu}_j - \mu_j\right| \lesssim \Vert X_{\cdot} \Vert_{2}(\log n )\sqrt{\frac{t+\log d}{n}}.
    \end{equation} 
Finally, combining \eqref{eq:proof.thm.m-est.exp.cross} and \eqref{eq:proof.thm.m-est.exp.mean} concludes the proof.
\end{proof}

\begin{proof}[Proof of Theorem 3.3]
    We consider the deviation error of $\widehat{\gamma}_{\ell,(jk)}$, which is the $(j,k)$-th entry of $\widehat{\bm{\Sigma}}_\ell$. By the triangle inequality, we have
    \begin{align}\label{eq:element_autocov_decompose}
        \left|\widehat{\gamma}_{\ell,(jk)} - \gamma_{\ell,(jk)}\right| \leq& \frac{1}{n-\ell}\bigg|\sum_{i=\ell+1}^{n}\Big\{\psi_{\tau}(X_{i-\ell,j}X_{i,k}) - \mathbb{E}\big[\psi_{\tau}(X_{i-\ell,j}X_{i,k})\big]\Big\}\bigg|\nonumber\\
        &+ \Big|\mathbb{E}\big[\psi_{\tau}(X_{i-\ell,j}X_{i,k})\big] - \mathbb{E}(X_{i-\ell,j}X_{i,k}) \Big| + \big|\mu_j(\widehat{\mu}_k - \mu_k)\big|\nonumber\\
        &+\big|(\mu_j - \widehat{\mu}_j)\mu_k\big| + \big|(\mu_j - \widehat{\mu}_j)(\mu_k - \widehat{\mu}_k)\big|\nonumber\\
        =& I_1 + I_2 + I_3 + I_4 + I_5.
    \end{align}
The terms $I_3$, $I_4$ and $I_5$ are due to the estimation error of $\widehat{\mu}_j$ and $\widehat{\mu}_k$, thus we have
    \begin{equation}
    \label{eq:element_autocov_meanpart}
        I_3+I_4+I_5 \leq 2|\bm{\mu}|_{\infty}\max_{j \in [d]}|\widehat{\mu}_j - \mu_j| + \big(\max_{j \in [d]}|\widehat{\mu}_j - \mu_j|\big)^2.
    \end{equation}
The bias term $I_2$ can be bounded by Theorem~2.2 as
    \begin{equation}
    \label{eq:element_autocov_bias}
        I_2 = \mathbb{E}\big[|X_{i-\ell,j}X_{i,k}| \mathbbm{1}_{|X_{i-\ell,j}X_{i,k}| > \tau} \big] \leq \omega_{4}^4/\tau.
    \end{equation}
For the term $I_1$, we apply \Cref{thm:supp_element.mean2} on $\big\{\psi_{\tau}(X_{i-\ell,j}X_{i,k})\big\}_{i \in \mathbb{Z}}$, under Assumptions 1 and 2. The proof is the same as that of Theorem \ref{thm:supp_element.mean2}, except that the process of interesting is $\big\{\psi_{\tau}(X_{i-\ell,j}X_{i,k})\big\}_{i \in \mathbb{Z}}$. \\
Let $Y_i = X_{i-\ell,j}$. By Theorem~2.3 and \Cref{lemma:fdm_XY}, we have that for any $s \geq 0$ the functional dependence measure of $\{\psi_{\tau}(X_{i-\ell,j}X_{i,k})\}_{i \in \mathbb{Z}}$
\begin{align*}
    \delta_{s,2}^{\psi_{\tau}(XY)} \leq \delta_{s,2}^{XY} \leq 2\omega_4\delta_{s,4}^X.
\end{align*}
Thus, by Assumptions 1 and 2, we have that
    \begin{align*}
        \sup_{m \geq 0}\exp(-cm)\sum_{s = m}^{\infty}\delta_{s,2}^{\psi_{\tau}(XY)} \leq 2\omega_4\Vert X_{\cdot}\Vert_{4} < \infty.
    \end{align*}
Moreover, under Assumption~1, we have that
\begin{align*}
    \max_{j,k \in [d]}\|\psi_{\tau}(X_{i-\ell,j}X_{i,k})\|_2 \leq \omega_4^2 < \infty. 
\end{align*}
For any $t > 0$, assuming $n$ is large enough such that
\begin{equation*}
    n \geq C(\log n)^2(t + 2\log d),
\end{equation*}
with $C > 0$ being a sufficiently large absolute constant.
Then, by Theorem \ref{thm:supp_element.mean2} and choosing
    \begin{equation*}
        \tau \asymp \omega_4^2 (\log (n-\ell))^{-1}\sqrt{\frac{n-\ell}{t + 2\log d}} \asymp \omega_4^2 (\log n)^{-1}\sqrt{\frac{n}{t + 2\log d}},
    \end{equation*}
we have with probability at least $1 - 2e^{-t}$
    \begin{equation}\label{eq:proof.thm.trunc.exp.cross}
        \max_{j,k\in [d]}\left|\widehat{H}_{\ell,(jk)} - \mathbb{E}[H_{i,\ell,(jk)}]\right|
        \lesssim \omega_4\|X_{\cdot}\|_4\log (n-\ell)\sqrt{\frac{t + 2\log d}{n-\ell}}.
    \end{equation}
    According to H\"older's inequality and Lemma \ref{lemma:DAN_prop}, the conditions in Theorem 3.3 imply the conditions in Theorem \ref{thm:supp_element.mean2}. Thus, for any $t > 0$, choose the robustification parameter 
    \begin{equation*}
        \tau \asymp (\log n)^{-1} \sqrt{\frac{n}{t+2\log d}}.
    \end{equation*}
    Then, under the same assumptions as in Theorem 3.3, we have with probability at least $1-2e^{-t}$
    \begin{equation}\label{eq:proof.thm.trunc.exp.mean}
        \max_{j\in [d]}\left|\widehat{\mu}_j - \mu_j\right| \lesssim \Vert X_{\cdot} \Vert_{2}(\log n )\sqrt{\frac{t+\log d}{n}}.
    \end{equation} 
Finally, combining \eqref{eq:proof.thm.trunc.exp.cross} and \eqref{eq:proof.thm.trunc.exp.mean} concludes the proof.
\end{proof}

\section{Proof for Section~4}
\begin{proof}[Proof of Theorem 4.2]
We follow the essential idea of the proof follows that of Theorem 4.1 in \cite{zhang2014bootstrapping}, which is as follows. We first construct the block sums in order to preserve the underline (but unknown) temporal dependence. Then, we approximate these dependent block sums by their corresponding $m$-dependence approximations, and thus we are allowed to use the results for independent data from \cite{chernozhukov2013gaussian}.
\\
\\
As is described in Section~4, we divide interval $[\ell+1,n]$ into $2R$ number of blocks with block size $S$. There are $R$ number of pairs of consecutive odd and even blocks. For $r \in \{1, \dots, R\}$, denote 
\[
\bm{O}_r = \sum_{t \in \mathcal{S}_{2r-1}}\psi_{\tau}(\bm{H}_{t,\ell}) \;\; \text{and} \;\;
    \bm{E}_r^{(S)} = \sum_{t \in \mathcal{S}_{2r}} \psi_{\tau_{\ell}}(\bm{H}_{t,\ell}),
\]
respectively the blocks sums of odd and even blocks associated with the $r$-th pair. Denote $\bm{D}_r = (D_{r,j})_{j = 1}^{d^2}$ as the vectorization of $(\bm{O}_r - \bm{E}_r)$.
Recall the lag-$\ell$ outer product $\bm{H}_{t,\ell}$ defined in \textup{(15)}. Define the $S$-dependent version of $\bm{O}_r$ and $\bm{E}_r$ respectively as
\begin{align*}
    \bm{O}_r^{(S)} = \sum_{t \in \mathcal{S}_{2r-1}} \mathbb{E}\big[\psi_{\tau}(\bm{H}_{t,\ell})| \mathcal{F}_{[t-S, t]}\big] \;\; \text{and} \;\;
    \bm{E}_r^{(S)} = \sum_{t \in \mathcal{S}_{2r}} \mathbb{E}\big[\psi_{\tau_{\ell}}(\bm{H}_{t,\ell})| \mathcal{F}_{[t-S, t]}\big],
\end{align*}
where the filtration $\mathcal{F}_{[t-S, t]} = \sigma(\epsilon_{t-S}, \dots, \epsilon_t)$.
Denote also $\bm{D}_r^{(S)}(c) = (D_{r,j}^{(S)})_{j = 1}^{d^2}$ as the vectorization of $(\bm{O}_r^{(s)} - \bm{E}_r^{(s)})$.
\\
Let $a_t \in \{1,-1\}$, for $t \in \mathbb{Z}$, be a deterministic sequence. Let $Y_t = X_{t-\ell,j}$. By Theorem~2.3 and \Cref{lemma:fdm_XY}, we have that for any $s \geq 0$ the functional dependence measure of $\{a_t\psi_{\tau}(H_{t,\ell,(jk)}) = a_t\psi_{\tau}(X_{t-\ell,j}X_{t,k})\}_{t \in \mathbb{Z}}$ satisfies that for any $0 < q \leq 3$
\begin{align}\label{eq:fdm_psi_H}
    \delta_{s,q}^{a\psi_{\tau}(H)} \leq \delta_{s,q}^{H} \leq 2\omega_{2q}\delta_{s,2q}^X,
\end{align}
under Assumption 4 and for any $m \geq 0$, $1 \leq j,k \leq d$ and $\tau > 0$.
So, for any $1 \leq i \leq d^2$,
\begin{align}\label{eq:fdm_D_block}
    \max_{1 \leq r \leq R}\big\Vert D_{r,i} - D^{(S)}_{r,i} \big\Vert_q \leq& 2S\sup_{t \in \mathbb{Z}}\Big\Vert a_t\{ \psi_{\tau}(H_{t,\ell,(jk)}) - \mathbb{E}\big[\psi_{\tau}(H_{t,\ell,(jk)})| \mathcal{F}_{[t-S, t]}\big]\big\} \Big\Vert_q \nonumber\\
    \leq& 2S\omega_{2q}\Vert X_{\cdot} \Vert_{2q}\exp(-cS),
\end{align}
where the first inequality follows from the triangle inequality, and the second inequality follows from Assumption~5.
Moreover, the second moment of the partial sum satisfies that
\begin{align}\label{eq:D_block_sum_q=2}
    \max_{1 \leq r \leq R}\big\Vert D_{r,i} \big\Vert_2 \leq& \bigg\{2S\sum_{\ell = -\infty}^{\infty}\big|\cov\big(a_t\psi_{\tau}(H_{t,\ell,(jk)}), a_{t+\ell}\psi_{\tau}(H_{t+l,\ell,(jk)}) \big) \big|\bigg\}^{1/2} \leq C_2S^{1/2},
\end{align}
where the second inequality follows from Assumption 5 and the same arguments stated in the proof of Lemma 2.4. Define the projection operator as $\mathcal{P}_{t}\cdot = \mathbb{E}[\cdot|\mathcal{F}_t] - \mathbb{E}[\cdot|\mathcal{F}_{t-1}]$. For any $t$, we decompose the random variable $a_t\psi_{\tau}(H_{t,\ell,(jk)})$ as
    \begin{align*}
        a_t\big\{\psi_{\tau}(H_{t,\ell,(jk)}) - \mathbb{E}[\psi_{\tau}(H_{t,\ell,(jk)})]\big\} = \sum_{m = 0}^{\infty}\mathcal{P}_{t-m}\big(a_t\psi_{\tau}(H_{t,\ell,(jk)})\big),
    \end{align*}
    where $\big\{\mathcal{P}_{t-m}\big(a_t\psi_{\tau}(H_{t,\ell,(jk)})\big)\big\}_{t \in \mathbb{Z}}$ is a martingale difference sequence relative to the filtration $\{\mathcal{F}_t\}_{t \in \mathbb{Z}}$.
For the third moment of the partial sum, it holds that
\begin{align}\label{eq:D_block_sum_q=3}
    \max_{1 \leq r \leq R}\big\Vert D_{r,i} \big\Vert_3 \leq& \sum_{m = 0}^{\infty}\bigg\Vert\sum_{t \in \mathcal{S}_{2r-1} \cup \mathcal{S}_{2r}} \mathcal{P}_{t-m} \big(a_t\psi_{\tau}(H_{t,\ell,(jk)})\big) \bigg\Vert_3 \nonumber\\
    \leq& \sqrt{2}\sum_{m = 0}^{\infty}\bigg( \sum_{t \in \mathcal{S}_{2r-1} \cup \mathcal{S}_{2r}} \big\Vert\mathcal{P}_{t-m}\big( a_t\psi_{\tau}(H_{t,\ell,(jk)})\big)\big\Vert_3^2 \bigg)^{1/2} \nonumber\\
    \leq& 2\sqrt{S} \sum_{m = 0}^{\infty}\sup_{t \in \mathbb{Z}}\big\Vert a_t\psi_{\tau}(H_{t,\ell,(jk)}) - a_t\psi_{\tau}(H_{t,\{t-m\},\ell,(jk)}) \big\Vert_{3} \nonumber\\
    \leq& 4\sqrt{S} \omega_{6}\Vert Y_{\cdot} \Vert_{6},
\end{align}
where the first inequality follows from the triangle inequality, the second inequality follows from Burkholder's inequality, the third inequality follows from Jensen's inequality and the fourth inequality follows from \eqref{eq:fdm_psi_H}.
\\
\\
We have $\{H_{t,\ell}\}_{t = \ell+1}^n$ is stationary and $\mathbb{E}[D_{r,i}] = \mathbb{E}[D^{(S)}_{r,i}] = 0$ for any $1 \leq r \leq R$ and $1 \leq i \leq d^2$. Denote $\Gamma_{(jk)}$ the $(j,k)$-th entry of $\bm{\Gamma}$, the long-run covariance matrix defined in \textup{(21)}.
Define the maximal estimation error of the long-run covariances based on the block differences among all entries as
\begin{align}\label{eq:error_all}
    e_{D} = \max_{1 \leq i,j \leq d^2}\bigg|\frac{1}{2RS}\sum_{r = 1}^{R} D_{r,j}D_{r,k} - \Gamma_{(jk)}\bigg|.
\end{align}
By Assumption 6 and Theorem 3.1 in \cite{chernozhukov2013gaussian}, we have for any $\vartheta > 0$
\begin{align}\label{eq:size_UB}
    \sup_{\alpha \in (0,1)}\Big|\mathbb{P}\Big(T_{\ell} \leq q^{(boot)}(\alpha)\Big) - \alpha\Big| \leq 3\rho_{T_{\ell}} + 2\pi(\vartheta) + 2\mathbb{P}(e_{D} > \vartheta),
\end{align}
where $\rho_{T_{\ell}} = \sup_{t \in \mathbb{R}}\Big|\mathbb{P}\big(T_{\ell} \leq t\big) - \mathbb{P}\big(|\bm{Z}|_{\infty} \leq t\big)\Big|$ and $\pi(\vartheta) = C\vartheta^{1/3}(1 \vee \log(d^2/\vartheta))^{2/3}$ with $C > 0$ being an absolute constant. By Theorem~4.1, we have that
\begin{align}\label{eq:size_gaussian_approx}
    \rho_{T_{\ell}} = n^{-2(1/19 - \beta)/3}.
\end{align}
\\
Next, we focus on the upper bound of the term $\mathbb{P}(e_{D} >\vartheta)$.
\\
We have that
\begin{align}\label{eq:ub_Eo_H0}
        e_{D} \leq& \max_{1 \leq j,k \leq d^2}\bigg|\frac{1}{2RS}\sum_{r = 1}^R \big\{D_{r,j}D_{r,k} - \mathbb{E}[D_{r,j}D_{r,k}]\big\} \bigg| +\max_{1 \leq j,k \leq d^2}\bigg|\frac{1}{2S} \mathbb{E}[D_{r,j}D_{r,k}] - \Gamma_{(jk)}\bigg| \nonumber\\
        \leq& \max_{1 \leq j,k \leq d^2}\bigg|\frac{1}{2RS}\sum_{r = 1}^R (D_{r,j}D_{r,k} - D_{r,j}^{(S)}D_{r,k}^{(S)}) \bigg| + \max_{1 \leq j,k \leq d^2}\bigg|\frac{1}{2RS}\sum_{r = 1}^n \big\{D_{r,j}^{(S)}D_{r,k}^{(S)} - \mathbb{E}[D_{r,j}^{(S)}D_{r,k}^{(S)}]\big\} \bigg| \nonumber\\
        & + \frac{1}{2S}\max_{1 \leq j,k \leq d^2}\Big| \mathbb{E}[D_{r,j}^{(S)}D_{r,k}^{(S)} - D_{r,j}D_{r,k}] \Big| +\max_{1 \leq j,k \leq d^2}\bigg|\frac{1}{2S} \mathbb{E}[D_{r,j}D_{r,k}] - \Gamma_{(jk)}\bigg| \nonumber\\
        =& I_1 + I_2 + I_3 + I_4.
    \end{align}
For the term $I_1$ in \eqref{eq:ub_Eo_H0}, we have for any $r \in [R]$ and $1 \leq j,k \leq d^2$ that
\begin{align*}
        &\mathbb{E}\big[|D_{r,j}D_{r,k} - D_{r,j}^{(S)}D_{r,k}^{(S)}|\big] \leq \mathbb{E}\big[|D_{r,j} - D_{r,j}^{(S)}||D_{r,k}|\big] + \mathbb{E}\big[|D_{r,j}^{(S)}| |D_{r,k} - D_{r,k}^{(S)}|\big]\\
        \leq& \Vert D_{r,j} - D_{r,j}^{(S)}\Vert_2 \Vert D_{r,k}\Vert_2 + \Vert D_{r,j}^{(S)}\Vert_2 \Vert D_{r,k} - D_{r,k}^{(S)}\Vert_2\\
        \leq& 4S\omega_{2q}\Vert X_{\cdot} \Vert_{2q}\exp(-cS)C_2S^{1/2} \leq C_3S^{3/2}\exp(-cS),
    \end{align*}
and, we have
\begin{align}\label{eq:ub_Eo_H0_I1}
    \mathbb{E}[I_1] \leq C_3S^{1/2}\exp(-cS).
\end{align}
For the term $I_2$, note that $\{D_{r,j}^{(S)}D_{r,k}^{(S)}\big\}_{r = 1}^R$ are mutually independent by construction. By the maximal inequality for partial sum of bounded random variables (see Lemma A.1 in \cite{chernozhukov2013gaussian}), we have that
\begin{align*}
    \mathbb{E}[I_2] \leq C_4\bigg(\sigma_{D}\sqrt{\frac{\log d}{R}} + \frac{\tau\log d}{R}\bigg),
\end{align*}
where $\sigma_{D}^2 = S^{-2}\max_{1 \leq j,k \leq d^2}\mathbb{E}\Big[\big(D_{r,j}^{(S)}D_{r,k}^{(S)}\big)^2\Big]$. It follows that
\begin{align*}
    \sigma_{D}^2 \leq \frac{1}{S^2}\max_{j \in [d^2]} \big\Vert D_{r,j}^{(s)}\big\Vert_4^4 \leq \frac{\tau}{S}\max_{j \in [d^2]} \big\Vert D_{r,j}^{(s)}\big\Vert_3^3 \leq C_5\tau S^{1/2},
\end{align*}
where the first inequality follows from H\"older's inequality, the second inequality follows from the fact that $|D_{r,j}^{(s)}| \leq S\tau$, and the third inequality follows from \eqref{eq:D_block_sum_q=3}. Thus,
\begin{align}\label{eq:ub_Eo_H0_I2}
    \mathbb{E}[I_2] \leq C_6\bigg(\sqrt{\frac{\tau S^{1/2}\log d}{R}} + \frac{\tau\log d}{R}\bigg).
\end{align}
\\
For the term $I_3$, by the same argument as for the term $I_1$, We have that
\begin{align}\label{eq:ub_Eo_H0_I3}
    \mathbb{E}[I_3] \leq  C_4S^{1/2} \exp(-cS).
\end{align}
\\
For the term $I_4$, recall the definition \textup{(21)} and note that there is an one to one correspondence between an entry index $i_1$ of $\bm{U}$ and an entry index $(j_1k_1)$ of $|n-\ell|^{1/2}\widehat{\Sigma}_{\ell}$. We rewrite the $(i_1,i_2)$-th entry of $\bm{\Gamma}$, for $1 \leq i_1,i_2 \leq d^2$ as
\begin{align*}
        \Gamma_{(i_1i_2)} = \lim_{n \to \infty}\cov\big(U_{\mathcal{\ell},i_1}, U_{\mathcal{I}_{\ell},i_2}\big)= \sum_{m = -\infty}^{\infty}\cov\big(\psi_{\tau}(H_{t,\ell,(j_1k_1)}), 
        \psi_{\tau}(H_{t+m,\ell,(j_2k_2)})\big),
    \end{align*}
where $(j_1,k_1)$ and $(j_2,k_2)$ correspond respectively to $i_1$ and $i_2$. Direct calculation also leads to
\begin{align*}
    \frac{1}{2S}\mathbb{E}[D_{r,i_1}D_{r,i_2}] = \sum_{m = 1-2S}^{2S-1}\Big( 1 - \frac{3|m|}{2S}\Big)\cov\big(\psi_{\tau}(H_{t,\ell,(j_1k_1)}), 
        \psi_{\tau}(H_{t+m,\ell,(j_2k_2)})\big).
\end{align*}
Then, it follows that
\begin{align*}
        I_4 
        \leq& \max_{1 \leq j_1,k_1,j_2,k_2 \leq d}\sum_{m = 1-2S}^{2S-1}\frac{3|m|}{2S}\Big|\cov\big(\psi_{\tau}(H_{t,\ell,(j_1k_1)}), 
        \psi_{\tau}(H_{t+m,\ell,(j_2k_2)})\big) \Big|\\
        &+ \max_{1 \leq j_1,k_1,j_2,k_2 \leq d}\sum_{|m| \geq 2S}\Big|\cov\big(\psi_{\tau}(H_{t,\ell,(j_1k_1)}), 
        \psi_{\tau}(H_{t+m,\ell,(j_2k_2)})\big)\Big|.
    \end{align*}
The lag-$m$ autocovariance can be upper bounded as
\begin{align*}
    &\Big|\cov\big(\psi_{\tau}(H_{t,\ell,(j_1k_1)}), 
        \psi_{\tau}(H_{t+m,\ell,(j_2k_2)})\big)\Big|\\
    \leq& \bigg|\sum_{h = 0}^{\infty}\mathbb{E}\Big[\mathcal{P}_{-h}\big(\psi_{\tau}(H_{0,\ell,(j_1k_1)})\big)\mathcal{P}_{-h}\big(\psi_{\tau}(H_{m,\ell,(j_2k_2)})\big)\bigg|\\
    \leq& \sum_{h = 0}^{\infty}\Big|\mathbb{E}\Big[\mathcal{P}_{-h}\big(\psi_{\tau}(H_{0,\ell,(j_1k_1)})\big)\mathcal{P}_{-h}\big(\psi_{\tau}(H_{m,\ell,(j_2k_2)})\big)\Big|\\
    \leq& 4\omega_4^2\sum_{h = 0}^{\infty}\delta_{h, 4}\delta_{h+m, 4}.
\end{align*}
It then follows that
\begin{align*}
        &\sum_{|m| \geq 2S}\Big|\cov\big(\psi_{\tau}(H_{t,\ell,(j_1k_1)}), 
        \psi_{\tau}(H_{t+m,\ell,(j_2k_2)})\big)\Big|\\
        \leq& 8\omega_4^2\sum_{m = 2S}^{\infty}\sum_{h = 0}^{\infty}\delta_{h, 4}\delta_{h+m, 4}\\
        \leq& 8\omega_4^2\sum_{h = 0}^{\infty}\delta_{h, 4}\sum_{f = 2S}^{\infty}\delta_{f, 4} \leq 8\omega_4^2 \Vert X_{\cdot} \Vert_4^2\exp(-2cS),
    \end{align*}
and
\begin{align*}
        &\sum_{m = -\infty}^{\infty}|m|\Big|\cov\big(\psi_{\tau}(H_{t,\ell,(j_1k_1)}), 
        \psi_{\tau}(H_{t+m,\ell,(j_2k_2)})\big) \Big|\\
        \leq& 8\omega_4^2\sum_{m = 1}^{\infty}\sum_{n = m}^{\infty}\sum_{h = 0}^{\infty}\delta_{h, 4}\delta_{h+n, 4} \leq 8\omega_4^2\sum_{h = 0}^{\infty}\delta_{h, 4}\sum_{m = 1}^{\infty}\sum_{n = m}^{\infty}\delta_{n, 4}\\
        \leq& 8\omega_4^2 \Vert X_{\cdot} \Vert_4^2 \sum_{m = 1}^{\infty}\exp(-cm) < \infty.
    \end{align*}
Thus, we have that
\begin{align}\label{eq:ub_Eo_H0_I4}
        \mathbb{E}[I_4] \leq C_7\{\exp(-2cS) + S^{-1}\}.
    \end{align}
Combining \eqref{eq:ub_Eo_H0}, \eqref{eq:ub_Eo_H0_I1}, \eqref{eq:ub_Eo_H0_I2}, \eqref{eq:ub_Eo_H0_I3} and \eqref{eq:ub_Eo_H0_I4} together, we have that
\begin{align*}
    \mathbb{E}[e_{D}] \leq& 2C_3S^{1/2} \exp(-cS) + C_6\bigg(\sqrt{\frac{\tau S^{1/2}\log d}{R}} + \frac{\tau\log d}{R}\bigg)+C_7\{\exp(-2cS) + S^{-1}\}\\
    \leq& C_8n^{- 82/399 + 4\beta/21} < C_8n^{(\beta - 1)/5},
    \end{align*}
where the second inequality follows by setting $S = n^{82/399 - 4\beta/21} > n^{(1-\beta)/5}$ which balances the terms $\sqrt{\tau S^{1/2}\log d/R}$ and $S^{-1}$. Hence, by Markov's inequality, we have
\begin{align}\label{eq:size_tail}
    \mathbb{P}(e_{D} > \vartheta) < n^{(\beta - 1)/5}\vartheta^{-1}.
\end{align}
Following \eqref{eq:size_UB}, \eqref{eq:size_gaussian_approx}, \eqref{eq:size_tail} and the definition of $\pi(\vartheta)$, we have that
    \begin{align*}
    \sup_{\alpha \in (0,1)}\Big|\mathbb{P}\Big(T_{\ell} \leq q^*(\alpha)\Big) - \alpha\Big| \leq& C_{9}\big(n^{-2(1/19 - \beta)/3} + \vartheta^{1/3}n^{2\beta/3} + n^{(\beta-1)/5}\vartheta^{-1}\big)\\
    \leq& C_{10}n^{-2(1/19 - \beta)/3},
    \end{align*}
where the second inequality is obtained by letting $\theta = n^{-(7\beta+3)/20}$.
\end{proof}

\section{Proof of \Cref{thm:GA_element-wise-trunc-mean}}
\begin{proof}[Proof of Theorem \ref{thm:GA_element-wise-trunc-mean}]
We have for all $\eta > 0$, 
\begin{equation*}
\begin{aligned}
    \rho_{n,\tau}^{\diamond} =& \sup_{t \in \mathbb{R}}\big|\mathbb{P}(\sqrt{n}|\widehat{\bm{\mu}}^{\tau} - \bm{\mu}|_{\infty} \leq t) - \mathbb{P}(\sqrt{n}|\widehat{\bm{\mu}}^{\tau} - \mathbb{E}\widehat{\bm{\mu}}^{\tau}|_{\infty} \leq t) \big|\\
    \leq& \mathbb{P}(\sqrt{n}|\bm{\mu} - \mathbb{E}\widehat{\bm{\mu}}^{\tau}|_{\infty} > \eta) + \sup_{t \in \mathbb{R}}\mathbb{P}\big(\big|\sqrt{n}|\widehat{\bm{\mu}}^{\tau} - \mathbb{E}\widehat{\bm{\mu}}^{\tau}|_{\infty} - t\big| \leq \eta\big)\\
    \leq& \mathbb{P}(\sqrt{n}|\bm{\mu} - \mathbb{E}\widehat{\bm{\mu}}^{\tau}|_{\infty} > \eta) + \rho_{n,\tau}^* + \sup_{t \in \mathbb{R}}\mathbb{P}\big(\big||\bm{Z}^{\tau}|_{\infty} - t\big| \leq \eta\big),
\end{aligned}
\end{equation*}
where the second line is due to the triangle inequality, i.e. $|\widehat{\bm{\mu}}^{\tau} - \bm{\mu}|_{\infty} - |\widehat{\bm{\mu}}^{\tau} - \mathbb{E}\widehat{\bm{\mu}}^{\tau}|_{\infty} \leq |\bm{\mu} - \mathbb{E}\widehat{\bm{\mu}}^{\tau}|_{\infty}$.
Under Assumption \ref{assumption:L2+theta}, we have the bound of the element-wise bias of the truncated estimator for all $j \in [d]$ as
\begin{equation*}
    \begin{aligned}
    &|\mu_j - \mathbb{E}\widehat{\mu}^{\tau}_j| = |\mathbb{E}X_{i,j} - \mathbb{E}\psi_{\tau}(X_{i,j})| = |\mathbb{E}(X_{i,j} - \tau)\mathbbm{1}_{\{X_{i,j} > \tau\}} + \mathbb{E}(X_{i,j} + \tau)\mathbbm{1}_{\{X_{i,j} < -\tau\}}|\\
    \leq& \mathbb{E}X_{i,j}\mathbbm{1}_{\{X_{i,j} > \tau\}} + \mathbb{E}(-X_{i,j})\mathbbm{1}_{\{X_{i,j} < -\tau\}} = \mathbb{E}|X_{i,j}|\mathbbm{1}_{\{|X_{i,j}| > \tau\}} \leq \tau^{-(1+\theta)}M_{\theta}.
    \end{aligned}
\end{equation*}
Let $\eta = \sqrt{n}\tau^{-(1+\theta)}M_{\theta}$, we have
\begin{equation*}
    \begin{aligned}
    \rho_{n,\tau}^{\diamond} \leq& 0 + \rho_{n,\tau}^* + \sup_{t \in \mathbb{R}}\mathbb{P}\big(\big||\bm{Z}^{\tau}|_{\infty} - t\big| \leq \sqrt{n}\tau^{-(1+\theta)}M_{\theta}\big)\\
    \lesssim& \rho_{n,\tau}^* + \sqrt{n}\tau^{-(1+\theta)}M_{\theta}\sqrt{\log d},
    \end{aligned}
\end{equation*}
where the last line is due to Lemma 2.1 in \cite{chernozhukov2013gaussian}, noting that $\bm{Z}^{\tau} \sim N(\bm{0}, \bm{\Sigma}^{\tau})$.

The term $\rho_{n,\tau}^{\circ}$ is the comparison between two distributions of Gaussian maxima, i.e. $|\bm{Z}^{\tau}|_{\infty}$ and $|\bm{Z}|_{\infty}$, with different covariance matrices. We have
\begin{equation*}
    \rho_{n,\tau}^{\circ} = \sup_{t \in \mathbb{R}}\big|\mathbb{P}(|\bm{Z}^{\tau}|_{\infty} \leq t) - \mathbb{P}(|\bm{Z}|_{\infty} \leq t) \big|.
\end{equation*}
Denote $\gamma_{\ell,(jk)}^{\tau} = \mathbb{E}\big[(\psi_{\tau}(X_{0,j})-\mathbb{E}\psi_{\tau}(X_{0,j}))(\psi_{\tau}(X_{\ell,k})-\mathbb{E}\psi_{\tau}(X_{\ell,k}))\big]$ and $\gamma_{\ell,(jk)} = \mathbb{E}\big[(X_{0,j}-\mu_j)(X_{\ell,k} -\mu_k)\big]$, then 
\begin{equation*}
    |\bm{\Sigma}^{\tau} - \bm{\Sigma}|_{\max} \leq \max_{j,k \in [d]}\sum_{\ell=-\infty}^{\infty}\big| \gamma_{\ell,(jk)}^{\tau} - \gamma_{\ell,(jk)}\big|.
\end{equation*}
To bound $\big| \gamma_{\ell,(jk)}^{\tau} - \gamma_{\ell,(jk)}\big|$, we use
the coupling method. 
Denote the random variable $\widetilde{X}_{i,k} = X_{i, \{0,-\infty\}, k}$ defined in \textup{(2)}. Note that $\widetilde{X}_{\ell,k}$ is independent of $\bm{X}_{0}$, but has the same marginal distribution as $X_{0,k}$. Under Assumptions \ref{assumption:L2+theta} and \ref{assumption:APP_decay.2order}, we have
\begin{equation*}
\begin{aligned}
    \big| \gamma_{j,k}^{\tau}(\ell) - \gamma_{j,k}(\ell)\big|
    \leq& \big| \mathbb{E}\big\{[\psi_{\tau}(X_{0,j}) - \mathbb{E}\psi_{\tau}(X_{0,j})][\psi_{\tau}(X_{\ell,k}) - \mathbb{E}\psi_{\tau}(X_{\ell,k}) - X_{\ell,k} + \mu_k]\big\}\big|\\
    & + \big|\mathbb{E}\big\{[\psi_{\tau}(X_{0,j}) - \mathbb{E}\psi_{\tau}(X_{0,j}) - X_{0,j} + \mu_j](X_{\ell,k} - \mu_k)\big\}\big|\\
    =& \big| \mathbb{E}\big\{[\psi_{\tau}(X_{0,j}) - \mathbb{E}\psi_{\tau}(X_{0,j})][\psi_{\tau}(X_{\ell,k}) - \psi_{\tau}(\widetilde{X}_{\ell,k}) - X_{\ell,k} + \widetilde{X}_{\ell,k}]\mathbbm{1}_{\{|X_{\ell,k}|> \tau\}}\big\}\big|\\
    & + \big|\mathbb{E}\big\{[\psi_{\tau}(X_{0,j}) - \mathbb{E}\psi_{\tau}(X_{0,j}) - X_{0,j} + \mu_j][X_{\ell,k} - \widetilde{X}_{\ell,k}]\mathbbm{1}_{\{|X_{0,k}|> \tau\}}\big\}\big|\\
    \leq& 2\Vert \psi_{\tau}(X_{0,j}) - \mathbb{E}\psi_{\tau}(X_{0,j}) \Vert_{2+\theta} \Vert X_{\ell,k} - \widetilde{X}_{\ell,k} \Vert_{2+\theta} \big[\mathbb{P}(|X_{\ell,k}|> \tau)\big]^{\theta/(2+\theta)}\\
    & + 2\Vert X_{0,j} - \mu_{j} \Vert_{2+\theta} \Vert X_{\ell,k} - \widetilde{X}_{\ell,k} \Vert_{2+\theta} \big[\mathbb{P}(|X_{0,k}|> \tau)\big]^{\theta/(2+\theta)}\\
    \leq& 4\Vert X_{0,j} \Vert_{2+\theta}\sum_{m = \ell}^{\infty}\delta_{m,2+\theta,j}\Vert X_{\ell,k} \Vert_{2+\theta}^{\theta}\tau^{-\theta}\\
    \leq& 4M_{\theta}^{(1+\theta)/(2+\theta)}\tau^{-\theta}\sum_{m = l}^{\infty}\delta_{m,2+\theta}.
\end{aligned}
\end{equation*}
Therefore, we have
\begin{equation*}
\begin{aligned}
    |\bm{\Sigma}^{\tau} - \bm{\Sigma}|_{\max} \leq 2\max_{j,k \in [d]}\sum_{\ell=0}^{\infty}\big| \gamma_{\ell,(jk)}^{\tau} - \gamma_{\ell,(jk)}\big| \leq 8M_{\theta}^{(1+\theta)/(2+\theta)}\Vert X_. \Vert_{2+\theta}\tau^{-\theta}.
\end{aligned}
\end{equation*}
By Lemma 3.1 in \cite{chernozhukov2013gaussian} and let $\pi(x) = x^{1/3}\big(1 \vee \log(d/x) \big)^{2/3}$ for $x > 0$, we have
\begin{equation*}
\begin{aligned}
    \rho_{n,\tau}^{\circ} = \sup_{t \in \mathbb{R}}\big|\mathbb{P}(|\bm{Z}^{\tau}|_{\infty} \leq t) - \mathbb{P}(|\bm{Z}|_{\infty} \leq t) \big|
    \lesssim \pi(M_{\theta}^{(1+\theta)/(2+\theta)}\Vert X_. \Vert_{2+\theta}\tau^{-\theta}) \lesssim \tau^{-\theta/3}(\log d)^{2/3}.
\end{aligned}
\end{equation*}

In the following, we analyze the term $\rho_{n,\tau}^*$.
We use the $m$-dependence approximation combined with the `big-and-small' blocking technique to study this term.
Consider the $m$-dependence sequence $\{\psi_{\tau,m}(\bm{X}_i)\}_{i \in \mathbb{Z}}$ where $\psi_{\tau,m}(\bm{X}_i) \coloneqq \mathbb{E}[\psi_{\tau}(\bm{X}_i)|\epsilon_i, \dots, \epsilon_{i-m}]$. Define 
\begin{equation*}
    \bm{T}_X^{\tau} = \sum_{i=1}^n[\psi_{\tau}(\bm{X}_i) - \mathbb{E}\psi_{\tau}(\bm{X}_i)] \quad \text{and} \quad \bm{T}_{X,m}^{\tau} = \sum_{i=1}^n[\psi_{\tau,m}(\bm{X}_i) - \mathbb{E}\psi_{\tau,m}(\bm{X}_i)].
\end{equation*}
Note that $\mathbb{E}\psi_{\tau}(\bm{X}_i) = \mathbb{E}\psi_{\tau,m}(\bm{X}_i)$. Let $M, m, \omega \in \mathbb{N}^+$, representing the size of big block, the size of the small block and the number of big blocks (or small blocks), respectively. Let $M \asymp n^{\alpha}$ with $\alpha \in (0,1/2)$ and $m \asymp \log n$. For simplicity, suppose $n = (M + m)\omega$. We divide the interval $[1,n]$ into alternating big blocks $L_b = [(b-1)(M+m)+1, bM+(b-1)m]$ and small blocks $S_b = [bM + (b-1)m + 1, b(M+m)]$, for $b \in [1,\omega]$.  For big blocks, define
\begin{equation*}
    \bm{Y}_b^{\tau} = \sum_{i \in L_b}[\psi_{\tau}(\bm{X}_i) - \mathbb{E}\psi_{\tau}(\bm{X}_i)], \;\; \bm{Y}_{b,m}^{\tau} = \sum_{i \in L_b}[\psi_{\tau,m}(\bm{X}_i) - \mathbb{E}\psi_{\tau,m}(\bm{X}_i)], \;\; \bm{T}_{Y}^{\tau} = \sum_{b=1}^{\omega}\bm{Y}_b^{\tau}, \;\; \bm{T}_{Y,m}^{\tau} = \sum_{b=1}^{\omega}\bm{Y}_{b,m}^{\tau}.
\end{equation*}
By construction, we have $\{\bm{Y}_{b,m}^{\tau}\}_{b \in [1,\omega]}$ are iid. Denote $\{\bm{Z}_b^{\tau}\}_{b \in [1,\omega]}$ be a sequence of iid random vectors following $N(\bm{0}, M\bm{B}^{\tau})$. Also, denote $\{\bm{Z}_{b,m}^{\tau}\}_{b \in [1,\omega]}$ be a sequence of iid random vectors following $N(\bm{0},M\widetilde{\bm{B}}^{\tau})$. The covariance matrices $\bm{B}^{\tau}$ and $\widetilde{\bm{B}}^{\tau}$ are respectively defined by
\begin{equation*}
    \bm{B}^{\tau} = (b_{ij}^{\tau})_{i,j \in [d]} = \cov(\bm{Y}_b^{\tau}/\sqrt{M}) \quad \text{and} \quad \widetilde{\bm{B}}^{\tau} = (\widetilde{b}_{ij}^{\tau})_{i,j \in [d]} = \cov(\bm{Y}_{b,m}^{\tau}/\sqrt{M}).
\end{equation*}
Define also $\bm{T}_{Z,m}^{\tau} = \sum_{b=1}^{\omega}\bm{Z}_{b,m}^{\tau}$. 

\noindent By the triangle inequality and an elementary inequality, We have
\begin{equation*}
    \mathbb{P}(|\bm{T}_{X}^{\tau} - \bm{T}_{Y,m}^{\tau}|_{\infty} \geq y) \leq \mathbb{P}(|\bm{T}_{X}^{\tau} - \bm{T}_{X,m}^{\tau}|_{\infty} \geq y/2) + \mathbb{P}(|\bm{T}_{X,m}^{\tau} - \bm{T}_{Y,m}^{\tau}|_{\infty} \geq y/2) ,
\end{equation*}
In the following analysis, we start with the first term on the right hand side. In order to apply Theorem 2.5, we verify the following conditions regarding 
\begin{equation*}
    \bm{T}_{X}^{\tau} - \bm{T}_{X,m}^{\tau} = \sum_{i=1}^n[\psi_{\tau}(\bm{X}_i)-\psi_{\tau,m}(\bm{X}_i)].
\end{equation*}
First, we have for any $j \in [d]$
\begin{equation*}
    |\psi_{\tau}(X_{i,j})-\psi_{\tau,m}(X_{i,j})| \leq 2\tau.
\end{equation*}
Second, under Assumption \ref{assumption:APP_decay.2order} we have for all $m \geq 0$ and for any $j \in [d]$
\begin{equation*}
    \sup_{k \geq 0}\rho^{-k}\sum_{i=k}^{\infty}\big\Vert [\psi_{\tau}(X_{i,j})-\psi_{\tau,m}(X_{i,j})] - [\psi_{\tau}^{\prime}(X_{i,j})-\psi_{\tau,m}^{\prime}(X_{i,j})] \big\Vert_2 < \infty,
\end{equation*}
then, we write $\psi_{\tau}(X_{i,j})-\psi_{\tau,m}(X_{i,j}) = \sum_{k=m}^{\infty}[\psi_{\tau,k+1}(X_{i,j})-\psi_{\tau,k}(X_{i,j})]$, by the Burkholder's inequality
\begin{equation}\label{eq:distance_m-dep-approx}
    \Big\Vert \sum_{i=1}^n[\psi_{\tau}(X_{i,j})-\psi_{\tau,m}(X_{i,j})] \Big\Vert_{2} \leq \sum_{k=m
    }^{\infty}\Big\Vert \sum_{i=1}^n[\psi_{\tau,k+1}(X_{i,j})-\psi_{\tau,k}(X_{i,j})] \Big\Vert_2 \lesssim \sqrt{n}\Delta_{m,2} \leq \sqrt{n}\rho^m\Vert X_.\Vert_2.
\end{equation}
Then, we have
\begin{equation*}
    \nu^2 = \max_{j \in [d]}\lim_{n \to \infty}n^{-1}\Big\Vert \sum_{i=1}^n[\psi_{\tau}(X_{i,j})-\psi_{\tau,m}(X_{i,j})] \Big\Vert_{2}^2 \lesssim \rho^{2m}\Vert X_.\Vert_2^2.
\end{equation*}
By Theorem 2.5, we have
\begin{equation*}
    \mathbb{P}(|\bm{T}_{X}^{\tau} - \bm{T}_{X,m}^{\tau}|_{\infty} \geq y/2) \leq d\exp\Big(- \frac{y^2}{16C_1(nC_3\rho^{2m}\Vert X_. \Vert_2^2 + 4\tau^2) + 8C_2\tau(\log n)^2y} \Big).
\end{equation*}
Next, we consider $\bm{T}_{X,m}^{\tau} - \bm{T}_{Y,m}^{\tau} = \sum_{b=1}^{\omega}\sum_{i \in S_b}[\psi_{\tau,m}(\bm{X}_i) - \mathbb{E}\psi_{\tau,m}(\bm{X}_i)]$, where $\sum_{i \in S_b}[\psi_{\tau,m}(\bm{X}_i) - \mathbb{E}\psi_{\tau,m}(\bm{X}_i)]$ for $b \in [1,\omega]$ are iid. By the Bernstein's inequality for sum of iid random variables, we have
\begin{equation*}
    \mathbb{P}(|\bm{T}_{X,m}^{\tau} - \bm{T}_{Y,m}^{\tau}|_{\infty} \geq y/2) \leq d\exp\Big(- \frac{y^2}{8\omega m\Vert X_.\Vert_2^2 + 8/3m\tau y} \Big).
\end{equation*}
Therefore, we have
\begin{equation*}
    \begin{aligned}
    \mathbb{P}(|\bm{T}_{X}^{\tau} - \bm{T}_{Y,m}^{\tau}|_{\infty} \geq y) \leq& d\exp\Big(- \frac{y^2}{16C_1(nC_3\rho^{2m}\Vert X_.\Vert_2^2 + 4\tau^2) + 8C_2\tau(\log n)^2y} \Big)\\
    &+ d\exp\Big(- \frac{y^2}{8\omega m\Vert X_.\Vert_2^2 + 8/3m\tau y} \Big).
    \end{aligned}
\end{equation*}

Next, we consider the Kolmogorov-Smirnov distance between $\bm{T}_{Y,m}^{\tau}$ and $\bm{T}_{Z,m}^{\tau}$, i.e.
\begin{equation}
\label{eq:GA.trunc.m-dep}
    \sup_{t \in \mathbb{R}}\big| \mathbb{P}(|\bm{T}_{Y,m}^{\tau}/\sqrt{n}|_{\infty} \leq t) - \mathbb{P}(|\bm{T}_{Z,m}^{\tau}/\sqrt{n}|_{\infty} \leq t) \big|.
\end{equation}
We apply Theorem 2.1 in \cite{chernozhukov2017central} to obtain the decay rate of (\ref{eq:GA.trunc.m-dep}). Denote $Y_{bj,m}^{\tau} = \sum_{i \in L_b}[\psi_{\tau,m}(X_{i,j}) - \mathbb{E}\psi_{\tau,m}(X_{i,j})]$ for $j \in [d]$, the $j$-th coordinate of $\bm{Y}_{b,m}^{\tau}$. First, we need to verify that $\min_{j \in [d]}\mathbb{E}(Y_{bj,m}^{\tau}/\sqrt{M+m})^2 > 0$.
By triangle inequality, we have
\begin{equation}\label{eq:verify_var_nondegen}
    \begin{aligned}
    &\big|\mathbb{E}(Y_{bj,m}^{\tau}/\sqrt{M+m})^2 - \mathbb{E}\big[\sum_{i\in L_b}(X_{i,j} - \mu_j)/\sqrt{M}\big]^2\big|\\
    \leq&\big|\mathbb{E}(Y_{bj,m}^{\tau}/\sqrt{M+m})^2 - \mathbb{E}(Y_{bj}^{\tau}/\sqrt{M+m})^2\big| + \Big|\mathbb{E}(Y_{bj}^{\tau}/\sqrt{M+m})^2 - \mathbb{E}\Big[\sum_{i\in L_b}(X_{i,j} - \mu_j)/\sqrt{M+m}\Big]^2\Big|\\
    &+\Big|\mathbb{E}\big[\sum_{i\in L_b}(X_{i,j} - \mu_j)/\sqrt{M+m}\big]^2 - \mathbb{E}\big[\sum_{i\in L_b}(X_{i,j} - \mu_j)/\sqrt{M}\big]^2\Big|\\
    =& I + II + III.
    \end{aligned}
\end{equation}
For the term $I$, by H\"older's inequality and the similar argument as in (\ref{eq:distance_m-dep-approx}), we have, as $M,m \to \infty$,
\begin{equation}\label{eq:verify_var_nondegen_1}
    \begin{aligned}
    &I \leq \frac{1}{M+m}\big|\mathbb{E}[(Y_{bj,m}^{\tau} + Y_{bj}^{\tau})(Y_{bj,m}^{\tau} - Y_{bj}^{\tau})]\big|
    \leq \frac{2\sqrt{M}\Vert X_. \Vert_2}{M+m}\Vert Y_{bj,m}^{\tau} - Y_{bj}^{\tau}\Vert_2 \leq 2\frac{M}{M+m}\Vert X_. \Vert_2^2\rho^m \to 0\\
    \end{aligned}
\end{equation}
For the term $II$, by triangle, the H\"older's and the Burkholder's inequalities, we have as $M, m \to \infty$
\begin{equation*}
    \begin{aligned}
    &II\\
    \leq& \frac{1}{M+m}\Big\Vert \sum_{i\in L_b}\big[(X_{i,j} - \mu_j) - (\psi_{\tau}(X_{i,j}) - \mathbb{E}\psi_{\tau}(X_{i,j}))\big]\Big\Vert_2\Big\Vert\sum_{i\in L_b}\big[(X_{i,j} - \mu_j) + (\psi_{\tau}(X_{i,j}) - \mathbb{E}\psi_{\tau}(X_{i,j}))\big] \Big\Vert_2\\
    \leq& \frac{4M\Vert X_.\Vert_2^2}{M+m} \to 0.
    \end{aligned}
\end{equation*}
Moreover, as $M,m \to \infty$, we have
\begin{equation}\label{eq:verify_var_nondegen_3}
    \begin{aligned}
    III = \frac{m}{M(M+m)}\mathbb{E}\big[\sum_{i\in L_b}(X_{i,j} - \mu_j)\big]^2 \leq \frac{mM\Vert X_.\Vert_2^2}{M(M+m)} \to 0.
    \end{aligned}
\end{equation}
Therefore, under Assumption \ref{assumption:blockvar_nondegen}, and combining (\ref{eq:verify_var_nondegen})-(\ref{eq:verify_var_nondegen_3}), we obtain
\begin{equation*}
    \min_{j\in [d]}\mathbb{E}(Y_{bj,m}^{\tau}/\sqrt{M+m})^2 \geq b + o(1) > 0.
\end{equation*}

Next, we adopt some quantities defined in Theorem 2.1 in \cite{chernozhukov2017central}. Let $\tau$ be 
\begin{equation}
\label{eq:2Mtau}
    2\tau = \frac{\sqrt{n}}{4\sqrt{M}\phi\log d},
\end{equation}
where $\phi = C\Big( \frac{L_n^2(\log d)^4}{n} \Big)^{-1/6}$ with $L_n = \max_{j \in [d]}\mathbb{E}\big|Y_{bj,m}^{\tau}/\sqrt{M+m}\big|^3$ defined also in Theorem 2.1 in \cite{chernozhukov2017central}. Thus, we have $(M+m)^{-1/2}\max_{j \in [d]}Y_{bj,m}^{\tau} \leq 2M(M+m)^{-1/2}\tau < \sqrt{n}/(4\phi\log d)$.

The term $L_n$ can be bounded as follows. Write
\begin{equation*}
    \psi_{\tau,m}(X_{i,j}) - \mathbb{E}\psi_{\tau,m}(X_{i,j}) = \sum_{k=0}^m\mathcal{P}_{i-k}\psi_{\tau}(X_{i,j}),
\end{equation*}
then, by the Burkholder's inequality
\begin{align*}
    \Big\Vert \sum_{i=1}^M\mathcal{P}_{i-k}\psi_{\tau}(X_{i,j}) \Big\Vert_{3}^2 &\leq C\sum_{i=1}^M\Vert \mathcal{P}_{i-k}\psi_{\tau}(X_{i,j}) \Vert_3^2 \leq C\sum_{i=1}^M\Vert \psi_{\tau}(X_{k,j}) - \psi_{\tau}(X_{k,j}^*) \Vert_3^2\\
    &\leq CM(2\tau)^{2(1-\theta)/3}\delta_{k,2+\theta,j}^{2(2+\theta)/3},
\end{align*}
and
\begin{align*}
    \Vert Y_{bj,m}^{\tau} \Vert_{3} &\leq \sum_{k=0}^m\Big\Vert \sum_{i=1}^M\mathcal{P}_{i-k}\psi_{\tau}(X_{i,j}) \Big\Vert_{3} \leq C^{1/2}M^{1/2}(2\tau)^{(1-\theta)/3} \sum_{k=0}^m\delta_{k,2+\theta,j}^{(2+\theta)/3}\\
    &\leq C^{1/2}M^{1/2}(2\tau)^{(1-\theta)/3} \sum_{k=0}^{\infty}\delta_{k,2+\theta,j}^{(2+\theta)/3}.
\end{align*}
Under GMC($2+\theta$), we have $\sum_{k=0}^{\infty}\delta_{k,2+\theta,j}^{(2+\theta)/3} < \infty$, thus
\begin{equation}
\label{eq:Ln}
    L_n = \max_{j \in [d]}\mathbb{E}\big|Y_{bj,m}^{\tau}/\sqrt{M+m}\big|^3 = (M+m)^{-3/2}\max_{j \in [d]}\mathbb{E}|Y_{bj,m}^{\tau}|^3 \lesssim (M+m)^{-3/2} M^{3/2} \tau^{1-\theta} \leq \tau^{1-\theta}.
\end{equation}
Therefore, we can set
\begin{equation}
\label{eq:tau}
    \tau \asymp \Big(\frac{n}{M^{3/2}\log d}\Big)^{\frac{1}{2+\theta}},
\end{equation}
which implies that (\ref{eq:2Mtau}) is satisfied. Moreover, we have
\begin{equation*}
    M_{n,Y}(\phi) = \mathbb{E}\Big[\max_{j \in [d]} \Big|\frac{1}{\sqrt{M+m}}Y_{bj,m}^{\tau}\Big|^3\mathbbm{1}\Big\{(M+m)^{-1/2}\max_{j \in [d]} |Y_{bj,m}^{\tau}| > \sqrt{n}/(4\phi\log d)\Big\} \Big] = 0,
\end{equation*}
and
\begin{equation*}
    \begin{aligned}
    &M_{n,Z}(\phi) = \mathbb{E}\Big[\max_{j \in [d]} \Big|\frac{1}{\sqrt{M+m}}Z_{bj,m}^{\tau}\Big|^3\mathbbm{1}\Big\{\max_{j \in [d]} |Z_{bj,m}^{\tau}| > \sqrt{n}/(4\phi\log d)\Big\} \Big]\\
    =& (M+m)^{-3/2}\mathbb{E}\Big[|\bm{Z}_{b,m}^{\tau}|_{\infty}^3\mathbbm{1}\big\{(M+m)^{-1/2}|\bm{Z}_{b,m}^{\tau}|_{\infty} > \sqrt{n}/(4\phi\log d)\big\} \Big]\\
    \leq& (M+m)^{-3/2}\mathbb{E}\Big[|\bm{Z}_{b,m}^{\tau}|_{\infty}^3\mathbbm{1}\big\{|\bm{Z}_{b,m}^{\tau}|_{\infty} > 2M\tau\big\} \Big]\\
    \lesssim& M^3\tau^3(M+m)^{-3/2}\mathbb{P}\big(|\bm{Z}_{b,m}^{\tau}|_{\infty} > 2M\tau\big) + (M+m)^{-3/2}\int_{2M\tau}^{\infty}\mathbb{P}\big(|\bm{Z}_{b,m}^{\tau}|_{\infty} > x)x^2dx.
    \end{aligned}
\end{equation*}
Since $\bm{Z}_{b,m} \sim N(\bm{0}, M\widetilde{\bm{B}}^{\tau})$, we have $\mathbb{P}\big(|\bm{Z}_{b,m}^{\tau}|_{\infty} > x\big) \leq d\exp\Big(-\frac{Cx^2}{MM_{\theta}}\Big)$. Then
\begin{equation*}
    \mathbb{P}\big(|\bm{Z}_{b,m}^{\tau}|_{\infty} > 2M\tau\big) \leq d\exp\Big(-\frac{CM\tau^2}{M_{\theta}}\Big),
\end{equation*}
and
\begin{equation*}
    \int_{2M\tau}^{\infty}\mathbb{P}\big(|\bm{Z}_{b,m}^{\tau}|_{\infty} > x)x^2dx \lesssim dM^2M_{\theta}\tau\exp(-CM\tau^2/M_{\theta}) + dM^{3/2}M_{\theta}^{3/2}\exp(-CM\tau^2/M_{\theta}).
\end{equation*}
Then, we have
\begin{align*}
    M_{n,Z}(\phi) \lesssim& M^3\tau^3(M+m)^{-3/2}\mathbb{P}\big(|\bm{Z}_{b,m}^{\tau}|_{\infty} > 2M\tau\big) + (M+m)^{-3/2}\int_{2M\tau}^{\infty}\mathbb{P}\big(|\bm{Z}_{b,m}^{\tau}|_{\infty} > x)x^2dx\\
    \lesssim& \big[M^3\tau^3 + M^2M_{\theta}\tau + M^{3/2}M_{\theta}^{3/2}\big]d(M+m)^{-3/2}\exp(-CM\tau^2/M_{\theta})\\
    \lesssim& M^3\tau^3 d(M+m)^{-3/2}\exp(-CM\tau^2/M_{\theta}) \leq dM^{3/2}\tau^3\exp(-CM\tau^2/M_{\theta}),
\end{align*}
and,
\begin{align*}
    M_{n}(\phi) = M_{n,Y}(\phi) + M_{n,Z}(\phi) \lesssim dM^{3/2}\tau^3\exp(-CM\tau^2/M_{\theta}).
\end{align*}
Under the assumption $M^{(4\theta-1)/3}(\log d)^{4+3\theta} = o(n^{\theta})$, we have
\begin{equation}
\label{eq:GA.trunc.m-dep.rate}
\begin{aligned}
    &\sup_{t \in \mathbb{R}}\big| \mathbb{P}(|\bm{T}_{Y,m}^{\tau}/\sqrt{n}|_{\infty} \leq t) - \mathbb{P}(|\bm{T}_{Z,m}^{\tau}/\sqrt{n}|_{\infty} \leq t) \big|\\
    &\lesssim \Big(\frac{\tau^{2(1-\theta)}(\log d)^7}{\omega} \Big)^{1/6} + dM^{3/2}\tau^{3}\exp(-CM\tau^2/M_{\theta})\\
    &\lesssim \bigg[\frac{M^{(4\theta-1)/3}(\log d)^{(4+3\theta)}}{n^{\theta}}\bigg]^{1/(4+2\theta)} + d\bigg[\frac{M^{(3+3\theta)/2}n^{3}}{(\log d)^{3}}\bigg]^{1/(2+\theta)}\exp\Bigg\{-C/M_{\theta}\bigg[\frac{M^{(1+2\theta)/2}n^{2}}{(\log d)^{2}}\bigg]^{1/(2+\theta)}\Bigg\}\\
    &\lesssim \bigg[\frac{M^{(4\theta-1)/3}(\log d)^{(4+3\theta)}}{n^{\theta}}\bigg]^{1/(4+2\theta)} + \frac{1}{n}.
\end{aligned}
\end{equation}

Then, we consider the Kolmogorov-Smirnov distance between $\bm{T}_{Z,m}^{\tau}$ and $\bm{Z}^{\tau}$, 
\begin{align*}
    \sup_{t \in \mathbb{R}}\big| \mathbb{P}(|\bm{T}_{Z,m}^{\tau}/\sqrt{n}|_{\infty} \leq t) - \mathbb{P}(|\bm{Z}^{\tau}|_{\infty} \leq t) \big|.
\end{align*}
Recall that $\bm{T}_{Z,m}^{\tau} = \sum_{b=1}^{\omega}\bm{Z}_{b,m}^{\tau}$ with $\bm{Z}_{b,m}^{\tau}$ be iid $N(\bm{0},M\widetilde{\bm{B}}^{\tau})$, and $\bm{Z}^{\tau} \sim N(\bm{0}, \bm{\Sigma}^{\tau})$.
We have
\begin{align*}
    \cov(\bm{T}_{Z,m}^{\tau}/\sqrt{n}) = \frac{M\omega}{n}\widetilde{\bm{B}}^{\tau}.
\end{align*}
And the difference of the covariance matrix in max norm can be bounded by
\begin{align*}
    \Big|\frac{M\omega}{n}\widetilde{\bm{B}}^{\tau} - \bm{\Sigma}^{\tau}\Big|_{\max} \leq \frac{M\omega}{n}|\widetilde{\bm{B}}^{\tau} - \bm{B}^{\tau}|_{\max} + \frac{M\omega}{n}|\bm{B}^{\tau} - \bm{\Sigma}^{\tau}|_{\max} + \Big(1 - \frac{M\omega}{n}\Big)|\bm{\Sigma}^{\tau}|_{\max},
\end{align*}
where
\begin{align*}
    |b_{jk}^{\tau} - \widetilde{b}_{jk}^{\tau}| =& \frac{1}{M}\big|\mathbb{E}(Y_{bj}^{\tau}Y_{bk}^{\tau} - Y_{bj,m}^{\tau}Y_{bk,m}^{\tau})\big| = \frac{1}{M}\big|\mathbb{E}[Y_{bj}^{\tau}(Y_{bk}^{\tau} - Y_{bk,m}^{\tau}) + (Y_{bj}^{\tau}- Y_{bj,m}^{\tau})Y_{bk,m}^{\tau})]\big|\\
    \leq& \frac{1}{M}\Vert Y_{bj}^{\tau} \Vert_2 \Vert Y_{bk}^{\tau} - Y_{bk,m}^{\tau}\Vert_2 + \frac{1}{M}\Vert Y_{bj}^{\tau}- Y_{bj,m}^{\tau}\Vert_2 \Vert Y_{bk,m}^{\tau}\Vert_2\\
    \leq& 2\rho^{m+1}\Vert X_.\Vert_2^2,
\end{align*}
and
\begin{align*}
    |\sigma_{jk}^{\tau} - b_{jk}^{\tau}| =& \Big|\sum_{|\ell| > M}\gamma_{\ell,(jk)}^{\tau} + \sum_{\ell = -M}^{M}\frac{|\ell|}{M}\gamma_{\ell,(jk)}^{\tau}\Big| \leq \sum_{|\ell| > M}|\gamma_{\ell,(jk)}^{\tau}| + \sum_{\ell = -M}^{M}\frac{|\ell|}{M}|\gamma_{\ell,(jk)}^{\tau}|\\
    \leq& 2\sum_{\ell = M+1}^{\infty}\sum_{h = 0}^{\infty}\delta_{h,2,j}\delta_{h+\ell,2,k} + \frac{2}{M}\sum_{\ell = 1}^{M}\sum_{k=\ell}^{M}\sum_{h = 0}^{\infty}\delta_{h,2,j}\delta_{h+k,2,k}\\
    \leq& 2\Delta_{0,2,j}\Delta_{M+1,2,k} + \frac{2}{M}\Delta_{0,2,j}\sum_{\ell = 1}^{M}\Delta_{\ell,2,k}\\
    \leq& 2\Vert X_.\Vert_2^2\rho^{M+1} + \frac{2\rho(1-\rho^M)}{M(1-\rho)}\Vert X_.\Vert_2^2,
\end{align*}
and
\begin{align*}
    |\sigma_{jk}^{\tau}| \leq \max_{j\in[d]}\sigma_{jj} \leq \Vert X_.\Vert_2^2.
\end{align*}
Therefore, we have
\begin{align*}
    &\Big|\frac{M\omega}{n}\widetilde{\bm{B}}^{\tau} - \bm{\Sigma}^{\tau}\Big|_{\max} \leq \frac{M\omega}{n}|\widetilde{\bm{B}}^{\tau} - \bm{B}^{\tau}|_{\max} + \frac{M\omega}{n}|\bm{B}^{\tau} - \bm{\Sigma}^{\tau}|_{\max} + \Big(1 - \frac{M\omega}{n}\Big)|\bm{\Sigma}^{\tau}|_{\max}\\
    \leq& \frac{2M\omega\Vert X_. \Vert_2^2(\rho^{m+1}+\rho^{M+1})}{n} + \frac{2\omega\Vert X_. \Vert_2^2\rho(1-\rho^M)}{(1-\rho)n} + \frac{m\omega\Vert X_.\Vert_2^2}{n}.
\end{align*}
By Lemma 3.1 in \cite{chernozhukov2013gaussian} and recall the big block size $M \asymp n^{\alpha}$ with $\alpha \in (0,1)$ and the small block size $m \asymp \log n$, we have
\begin{equation}
\label{eq:gaussian.maxima.rate}
\begin{aligned}
    &\sup_{t \in \mathbb{R}}\big| \mathbb{P}(|\bm{T}_{Z,m}^{\tau}/\sqrt{n}|_{\infty} \leq t) - \mathbb{P}(|\bm{Z}^{\tau}|_{\infty} \leq t) \big|\\
    \lesssim& \pi\Big(\frac{2M\omega\Vert X_. \Vert_2^2(\rho^{m+1}+\rho^{M+1})}{n} + \frac{2\omega\Vert X_. \Vert_2^2\rho(1-\rho^M)}{(1-\rho)n} + \frac{m\omega\Vert X_.\Vert_2^2}{n} \Big)\\
    \lesssim& \frac{(\log d)^{2/3}}{M^{1/3}}.
\end{aligned}
\end{equation}
Combining (\ref{eq:GA.trunc.m-dep.rate}) and (\ref{eq:gaussian.maxima.rate}), we have
\begin{align*}
    &\sup_{t \in \mathbb{R}}\big| \mathbb{P}(|\bm{T}_{Y,m}^{\tau}/\sqrt{n}|_{\infty} \leq t) - \mathbb{P}(|\bm{Z}^{\tau}|_{\infty} \leq t) \big|\\
    \leq& \sup_{t \in \mathbb{R}}\big| \mathbb{P}(|\bm{T}_{Y,m}^{\tau}/\sqrt{n}|_{\infty} \leq t) - \mathbb{P}(|\bm{T}_{Z,m}^{\tau}/\sqrt{n}|_{\infty} \leq t) \big| + \sup_{t \in \mathbb{R}}\big| \mathbb{P}(|\bm{T}_{Z,m}^{\tau}/\sqrt{n}|_{\infty} \leq t) - \mathbb{P}(|\bm{Z}^{\tau}|_{\infty} \leq t) \big|\\
    \lesssim& \bigg[\frac{M^{(4\theta-1)/3}(\log d)^{(4+3\theta)}}{n^{\theta}}\bigg]^{1/(4+2\theta)} + \frac{1}{n} + \frac{(\log d)^{2/3}}{M^{1/3}}.
\end{align*}
For any $\kappa > 0$, we have
\begin{align*}
    &\sup_{t \in \mathbb{R}}\big| \mathbb{P}(|\bm{T}_{X}^{\tau}/\sqrt{n}|_{\infty} \leq t) - \mathbb{P}(|\bm{T}_{Y,m}^{\tau}/\sqrt{n}|_{\infty} \leq t) \big|\\
    \leq& \mathbb{P}(|\bm{T}_{X}^{\tau}/\sqrt{n} - \bm{T}_{Y,m}^{\tau}/\sqrt{n}|_{\infty} > \kappa) + \sup_{t \in \mathbb{R}}\mathbb{P}(||\bm{T}_{Y,m}^{\tau}/\sqrt{n}|_{\infty} - t| \leq \kappa)\\
    \leq& \mathbb{P}(|\bm{T}_{X}^{\tau}/\sqrt{n} - \bm{T}_{Y,m}^{\tau}/\sqrt{n}|_{\infty} > \kappa) + \sup_{t \in \mathbb{R}}\big| \mathbb{P}(|\bm{T}_{Y,m}^{\tau}/\sqrt{n}|_{\infty} \leq t) - \mathbb{P}(|\bm{T}_{Z,m}^{\tau}/\sqrt{n}|_{\infty} \leq t) \big|\\
    &+ \sup_{t \in \mathbb{R}}\mathbb{P}(||\bm{T}_{Z,m}^{\tau}/\sqrt{n}|_{\infty} - t| \leq \kappa)\\
    \leq& \mathbb{P}(|\bm{T}_{X}^{\tau} - \bm{T}_{Y,m}^{\tau}|_{\infty} > \sqrt{n}\kappa) + \sup_{t \in \mathbb{R}}\big| \mathbb{P}(|\bm{T}_{Y,m}^{\tau}/\sqrt{n}|_{\infty} \leq t) - \mathbb{P}(|\bm{T}_{Z,m}^{\tau}/\sqrt{n}|_{\infty} \leq t) \big| + \kappa\sqrt{\log d}.
\end{align*}
Let $\kappa \asymp n^{-\gamma \beta}$ with $\gamma > 1/2$. Set $\alpha$, $\beta$ and $\gamma$ such that (a) $\alpha > (1+2\gamma)\beta$ and (b) $3\alpha + \theta > (10 + 6\theta + \gamma)\beta$. In summary, we have
\begin{align*}
    &\rho^*_{n,\tau} = \sup_{t \in \mathbb{R}}\big| \mathbb{P}(|\bm{T}_{X}^{\tau}/\sqrt{n}|_{\infty} \leq t) - \mathbb{P}(|\bm{Z}^{\tau}|_{\infty} \leq t) \big|\\
    \leq& \sup_{t \in \mathbb{R}}\big| \mathbb{P}(|\bm{T}_{X}^{\tau}/\sqrt{n}|_{\infty} \leq t) - \mathbb{P}(|\bm{T}_{Y,m}^{\tau}/\sqrt{n}|_{\infty} \leq t) \big| + \sup_{t \in \mathbb{R}}\big| \mathbb{P}(|\bm{T}_{Y,m}^{\tau}/\sqrt{n}|_{\infty} \leq t) - \mathbb{P}(|\bm{Z}^{\tau}|_{\infty} \leq t) \big|\\
    \lesssim& d\exp\Big(- \frac{n\kappa^2}{16C_1(nC_3\Vert X_.\Vert_2^2\rho^{2m} + 4\tau^2) + 8C_2\tau(\log n)^2\sqrt{n}\kappa} \Big) + d\exp\Big(- \frac{n\kappa^2}{8\omega m\Vert X_.\Vert_2^2 + 8/3m\tau \sqrt{n}\kappa} \Big)\\
    &+ \kappa\sqrt{\log d} + \bigg[\frac{M^{(4\theta-1)/3}(\log d)^{(4+3\theta)}}{n^{\theta}}\bigg]^{1/(4+2\theta)} + \frac{1}{n} + \frac{(\log d)^{2/3}}{M^{1/3}}\\
    \lesssim& \frac{1}{n} + \kappa\sqrt{\log d} + \bigg[\frac{M^{(4\theta-1)/3}(\log d)^{(4+3\theta)}}{n^{\theta}}\bigg]^{1/(4+2\theta)} + \frac{(\log d)^{2/3}}{M^{1/3}}.
\end{align*}

Combining everything together, and optimize the error rate by letting $\kappa = (\log d)^{1/6}M^{-1/3} \asymp n^{\alpha/(3\beta) - 1/6}$ and requiring (a) $\alpha > 2\beta$ and (b) $16\alpha + 6\theta > (59+36\theta)\beta$, we have
\begin{equation*}
\begin{aligned}
    &\rho_{n,\tau} \leq \rho_{n,\tau}^* + \rho_{n,\tau}^{\diamond} + \rho_{n,\tau}^{\circ}\\
    \lesssim& \frac{1}{n} + \kappa\sqrt{\log d} + \bigg[\frac{M^{(4\theta-1)/3}(\log d)^{(4+3\theta)}}{n^{\theta}}\bigg]^{1/(4+2\theta)} + \frac{(\log d)^{2/3}}{M^{1/3}} + \bigg[\frac{M^{(3\theta+3)}(\log d)^{(4+3\theta)}}{n^{\theta}}\bigg]^{1/(4+2\theta)}\\
    &+ \bigg[\frac{M^{3\theta/2}(\log d)^{(4+3\theta)}}{n^{\theta}}\bigg]^{1/(6+3\theta)}\\
    \lesssim& \frac{(\log d)^{2/3}}{M^{1/3}} + \bigg[\frac{M^{(3\theta+3)}(\log d)^{(4+3\theta)}}{n^{\theta}}\bigg]^{1/(4+2\theta)} \lesssim n^{(2\beta - \alpha)/3} \vee n^{ [(3+3\theta)\alpha + (4+3\theta)\beta - \theta]/(4+2\theta)}.
\end{aligned}
\end{equation*}
Note that the condition (a) $\alpha > 2\beta$ is equivalent to $(\log d)^{2} = o(M)$, and the condition (b) $16\alpha + 6\theta > (59+36\theta)\beta$ implies $M^{3\theta+3}(\log d)^{4+3\theta} = o(n^{\theta})$. Therefore, under conditions (a) $\alpha > 2\beta$ and (b) $16\alpha + 6\theta > (59+36\theta)\beta$, and as $n, d \to \infty$, we have
\begin{equation}\label{eq:gaussian_approx}
\begin{aligned}
    \rho_{n,\tau} \leq \rho_{n,\tau}^* + \rho_{n,\tau}^{\diamond} + \rho_{n,\tau}^{\circ} \lesssim n^{(2\beta - \alpha)/3} \vee n^{ [(3+3\theta)\alpha + (4+3\theta)\beta - \theta]/(4+2\theta)} \to 0.
\end{aligned}
\end{equation}
\end{proof}

\section{Proofs of Bernstein's inequalities}\label{sec:proof_Bernstein}
\subsection{Proof of Theorem~2.5}\label{sec:proof_bern_1}
Before providing the proof of Theorem 2.5, we introduce some necessary notations and useful tools.
Let $\{Z_i\}_{i \in \mathbb{N}}$ be a sequence of random variables of the form \textup{(1)}. Assume for any $i \in \mathbb{N}$ that $\mathbb{E}[Z_t] = 0$ and there exists a positive $M$ such that $|Z_i| \leq M$. In the following proofs, we will frequently need to divide an interval into several subintervals. To avoid nondivisibility and  notational complexity, we can embed $\{Z_i\}_{i \in \mathbb{N}}$ into a continuous time process $\{Z_t\}_{t > 0}$ by defining $Z_t = Z_{\lceil t \rceil}$. We can also embed the index of the functional dependence measure \textup{(3)} into continuous time by defining $\delta_{t,q} = \delta_{\lceil t \rceil, q}$ for $t \geq 0$. For a Borel set $\mathcal{A}$, define
\begin{equation*}
    S_{\mathcal{A}} = \int_{\mathcal{A}}Z_t dt.
\end{equation*}
Further, we denote the Lebesgue measure of $\mathcal{A}$ as $\lambda(\mathcal{A})$.

Let $A \geq 2$ be a real number. Our first goal is to upper bound the log-Laplace transform of partial sums, i.e.~$\log \mathbb{E}[\exp(tS_{(0,A]})]$, for any small $t > 0$. To this end, we introduce the construction of the Cantor-like set $K_A = \bigcup_{i=1}^{2^l} I_{l,i} \subset (0, A]$, where $\{I_{l,i}\}_{i = 1}^{2^l}$ are left half-open intervals with the same Lebesgue measure $n_l$, and all neighboring intervals are separated by some left half-open intervals. Heuristically, this construction reduces the dependence by creating gaps.

\textbf{Construction of $K_A$}.
The construction of $K_A$ of $(0, A]$ involves $l$ recursive steps. Let $\delta$ be some constant in $(0, 1)$, whose choices will be given later.
We define 
\begin{equation}\label{eq:choose_l}
    l = \max\Big\{k \in \mathbb{N}, A\Big(\frac{1-\delta}{2}\Big)^k \geq 1\Big\}.
\end{equation}
Note that by \eqref{eq:choose_l} we have that 
\begin{equation}\label{eq:l_UB}
    l \leq \frac{\log A}{\log \frac{2}{1-\delta}} < \frac{\log A}{\log 2}.
\end{equation}

\begin{enumerate}
    \item[Step C$1$.] Divide the interval $(0,A]$ into three left half-open intervals and delete the middle one $D_{1,0}$ with Lebesgue measure $A\delta$. The remaining ordered left half-open intervals are denoted as $I_{1,1}$ and $I_{1,2}$ with the same Lebesgue measure $A(1-\delta)/2$.
    \item[Step C$2$.] For $I_{1,1}$ (resp. $I_{1,2}$), divide it into three left half-open intervals and delete the middle one $D_{2,1}$ (resp. $D_{2,2}$) with Lebesgue measure $A\delta(1-\delta)/2$. The remaining ordered four left half-open intervals are denoted as $I_{2,1}$, $I_{2,2}$, $I_{2,3}$ and $I_{2,4}$ with the same Lebesgue measure $A((1-\delta)/2)^2$.
    \item[Step C$k$.] We repeat the procedure. At Step C$k$ with $1 \leq k \leq l$, we obtain left half-open intervals $I_{k, j}$ for $j = 1, \dots, 2^{k}$, each with Lebesgue measure $A((1-\delta)/2)^{k}$, and delete left halp-open intervals $D_{k,i}$ for $i = 1, \dots, 2^{k-1}$, each with Lebesgue measure $A\delta((1-\delta)/2)^{k-1}$.
\end{enumerate}
Finally, after Step C$l$, we obtain $\{I_{l,i}\}_{i = 1}^{2^l}$ and $K_A^{(l)} = \bigcup_{i = 1}^{2^l}I_{l,i}$. Moreover, for any $k = 0, 1, \dots, l$ and $j = 1, \dots, 2^k$, we also define
\begin{equation}\label{eq:cantor_decomp_1}
    K_{A,k,j} = \bigcup_{i = (j-1)2^{l-k}+1}^{j2^{l-k}}I_{l,i},
\end{equation}
and we have
\begin{equation}\label{eq:cantor_decomp_2}
    K_A = \bigcup_{j=1}^{2^k}K_{A,k,j}^{(l)}.
\end{equation}
Since after Step C$l$, the total length of all deleted intervals satisfies that
\begin{equation*}
    \sum_{i = 0}^{l-1}A\delta(1-\delta)^i \leq lA\delta.
\end{equation*}
Therefore, we have that the Lebesgue measure of $K_A^{(l)}$
\begin{equation}\label{eq:lebesgue_KA}
    A > \lambda(K_A) = A-\sum_{i = 0}^{l-1}A\delta(1-\delta)^i > A - \delta\frac{A\log A }{\log 2}.
\end{equation}

We summarize our auxiliary results as the following proposition.
\begin{prop}\label{prop:log-laplace}
    Let $\{Z_i\}_{i \in \mathbb{Z}}$ be a sequence of random variables of the form \textup{(1)} with mean zero. Assume there exists a positive $M$ such that $\sup_{i \in \mathbb{Z}}|Z_i| \leq M$, and there exist absolute constants $c, C_{\mathrm{FDM}} > 0$ such that
    \begin{equation*}
        \sup_{m \geq 0}\exp(cm)\Delta_{m,2} \leq C_{\mathrm{FDM}}.
    \end{equation*}
    Define the following absolute constant depending only on $c$ and $C_{\mathrm{FDM}}$ as 
    \begin{equation*}
        c_0 = \frac{c}{8} \wedge \sqrt{\frac{c\log 2}{8}}, \;\; c_1 = \frac{e^cC_{\mathrm{FDM}}}{(1 - e^{-c})c} \;\; \text{and} \;\; c_2 = \frac{2}{c\log 2}.
    \end{equation*}
    We have the following results.
    \begin{enumerate}
        \item[(i)] Let $A \geq 2(c \vee 8)$, the Cantor-like set $K_A$ has Lebesgue measure strictly larger than $A/2$ and satisfies for any $t > 0$ such that $tM \leq c_0/(log A) \wedge 1/2$ that
    \begin{equation*}
    \begin{aligned}
        \log\mathbb{E}[\exp(tS_{K_A})]
        \leq 6.2At^2C_{\mathrm{LRV}} + 6c_1A^{-1}t^2M.
    \end{aligned}
    \end{equation*}
        \item[(ii)] Let $A \geq 2(c \vee 2)$, it satisfies for any $t > 0$ such that $tM \leq (c \wedge 1)/2$ that
    \begin{equation*}
    \begin{aligned}
        \log\mathbb{E}[\exp(tS_{(0,A]})]
        \leq 6.2At^2C_{\mathrm{LRV}} + 3c_1At^2M + c_2t^2M^2A\log A.
    \end{aligned}
    \end{equation*}
    \end{enumerate}
\end{prop}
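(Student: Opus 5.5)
The plan is to follow the recursive Cantor-block argument of Merlev\`ede et al.\ \cite{merlevede2009bernstein}, as adapted by Zhang \cite{zhang2018}. The mixing-based decoupling step is replaced by a coupling estimate driven by the functional dependence measure, and stationarity is replaced by bounds that are uniform in the time index. Throughout, the $\sup_{i}$ in the definition of $\delta_{s,2}$ and of $C_{\mathrm{LRV}}$ does the work of stationarity.

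\emph{Step 1 (the blocks at the finest level).} Each $I_{l,i}$ has length $n_l=A((1-\delta)/2)^l\in[1,2/(1-\delta))$, so $|S_{I_{l,i}}|\le 2M$. Using $e^x\le 1+x+x^2$ for $|x|\le 1$ (guaranteed by $tM\le 1/2$), we get
\[
\log\mathbb{E}\,e^{tS_{I_{l,i}}}\le t^2\,\mathbb{E}S_{I_{l,i}}^2\le t^2 n_l C_{\mathrm{LRV}},
\]
where the second inequality uses Lemma~\ref{lemma:lrv_univar} together with the nonstationary covariance bound from its proof.

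\emph{Step 2 (decoupling across a gap).} Fix a level $k$. We have $K_{A,k-1,j}=K_{A,k,2j-1}\cup K_{A,k,2j}$, and the two pieces are separated by the gap $D_{k,j}$ of length $g_k=A\delta((1-\delta)/2)^{k-1}$. Replace the innovations feeding $K_{A,k,2j}$ that lie before the gap by independent copies, i.e.\ couple each summand $Z_s$ with $s\in K_{A,k,2j}$ by $Z_s$ conditioned only on $\epsilon_{s-g_k},\dots,\epsilon_s$ (the $m$-dependent projection). Call the coupled sum $\tilde S$; it is independent of $S_{K_{A,k,2j-1}}$. By the martingale decomposition~\eqref{eq:projection} and Burkholder's inequality (as in~\eqref{eq:moment_ineq1}),
\[
\|S_{K_{A,k,2j}}-\tilde S\|_2\lesssim \sqrt{\lambda(K_{A,k,2j})}\,\Delta_{g_k,2}\le \sqrt{A}\,C_{\mathrm{FDM}}e^{-cg_k}.
\]
Because everything is bounded by $M\lambda(\cdot)$, we then write $e^{tS}=e^{t\tilde S}e^{t(S-\tilde S)}$ and use $|e^x-1|\le |x|e^{|x|}$ together with Cauchy--Schwarz. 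This gives
\[
\mathbb{E}\,e^{tS_{K_{A,k-1,j}}}\le \mathbb{E}\,e^{tS_{K_{A,k,2j-1}}}\,\mathbb{E}\,e^{t\tilde S}+(\text{error}),
\]
where the error is of order $t\sqrt{A}e^{-cg_k}$ times an exponential-moment factor bounded by $e^{2tMA/2^k}$ type quantities. A second application of the same estimate replaces $\mathbb{E}\,e^{t\tilde S}$ by $\mathbb{E}\,e^{tS_{K_{A,k,2j}}}$.

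\emph{Step 3 (closing the recursion).} Iterate Step~2 from level $l$ up to level $0$. At level $k$ there are $2^{k-1}$ merges, so the accumulated log-error is
\[
\sum_k 2^{k}\,t\sqrt{A}\,e^{-cg_k}\,e^{C tMA2^{-k}}.
\]
The choice $tM\le c_0/\log A$ and $\delta$ of order $1/\log A$ (say $\delta=c^{-1}\cdot$const$\cdot(\log A)/A$ chosen so that $cg_k$ dominates $tMA2^{-k}$ and $\log 2^k$) makes this sum geometric and bounded by $6c_1A^{-1}t^2M$; the constant $c_1=e^cC_{\mathrm{FDM}}/((1-e^{-c})c)$ comes from summing $e^{-c\lceil g\rceil}$. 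The main terms add up to $t^2C_{\mathrm{LRV}}\sum_i n_l\le t^2AC_{\mathrm{LRV}}$, inflated to $6.2$ by the $(1+o(1))$ factors coming from $\log(1+x)\le x$. Finally, \eqref{eq:lebesgue_KA} with this $\delta$ gives $\lambda(K_A)>A/2$. This proves (i). I expect the main obstacle to be this step: tuning $\delta$ so that the gap decay $e^{-cg_k}$ beats both the $2^k$ multiplicity and the exponential-moment blow-up $e^{tMA2^{-k}}$ at every level simultaneously, with only a $\log A$ restriction on $t$.

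\emph{Step 4 (part (ii)).} Set $A_0=A$ and $A_1=A_0-\lambda(K_{A_0})<A_0/2$. Apply (i) to $K_{A_0}$, then recursively to a Cantor set inside the complement (rearranged as an interval of length $A_1$), and so on, until the remaining length is below $2(c\vee 8)$; the remainder is bounded trivially by $M$ times its length. Combine the $L\lesssim\log A$ pieces using H\"older's inequality with weights $p_k\propto A_k^{1/2}$ or geometric weights, as in Merlev\`ede et al. The step size $t/p_k$ must then satisfy the condition of (i); this is where the restriction $tM\le (c\wedge1)/2$ is converted, and it produces the extra $c_2t^2M^2A\log A$ term from pieces where $tM$ exceeds $c_0/\log A_k$. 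There, I would fall back on the crude bound $\log\mathbb{E}\,e^{tS}\le t^2M^2A_k^2/\ldots$, summing over the $\log A/\log 2$ levels to obtain $c_2=2/(c\log2)$.
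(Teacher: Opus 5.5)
There is a genuine gap in your Step~3. The gap-decoupling cannot be iterated all the way down to level $l$, and no choice of $\delta$ makes $e^{-cg_k}$ win at every level simultaneously. The requirement $\lambda(K_A)>A/2$ forces $\delta\lesssim 1/\log A$ (the paper takes $\delta=\log 2/(2\log A)$; your alternative $(\log A)/A$ is even smaller). So the gap separating the two halves of a block of length $L$ is only about $\delta L$. Near level $l$ one has $L\approx 1$, so these gaps are shorter than one time unit, and they need not even separate distinct integer indices. Your Step~2 estimate then gives an error of order $t$ per merge with no decay, and the order-$A$ merges at such levels leave a log-error of order $tA$. For small $t$ this swamps both terms of the target bound.

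The gaps beat the multiplicity $2^k$ and the factor $e^{tML}$ only while $c\delta\ge 2tM$ and $\delta L\gtrsim\log A$, i.e.\ for blocks longer than about $c/(2(tM)^2)$. Your Step~1, on the other hand, only covers blocks with $tML\lesssim 1$. What is missing is a tool for the intermediate range. The paper supplies it as follows.
\begin{itemize}
\item It stops the recursion at $k^*=\inf\{k: A((1-\delta)/2)^k\le c/(2(tM)^2)\}$, where the gaps still exceed $c\delta/(2(tM)^2)\ge 2\log A$.
\item Every leaf, and all of $K_A$ at once if $tM\le\sqrt{c/(2A)}$, is handled by Lemma~\ref{lemma:log-Lapalce_low_level}. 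A block of length $B\le c/(2(tM)^2)$ is cut into $2k$ pieces of length about $1/(tM)$, with $k=\lfloor tMB/2\rfloor$.
\item H\"older separates odd from even pieces at the price $t\mapsto 2t$.
\item The odd pieces are decoupled in one shot across the even ones via Lemma~\ref{lemma:diff_laplace_joint_marginal}, at cost $\lesssim MBt^2e^{tMB-c/(tM)}\le MBt^2e^{-c/(2tM)}$.
\item Each piece has $|2tS|\le 4$, so $\log\mathbb{E}e^{2tS}\le 3.1(2t)^2\mathbb{E}S^2$.
\end{itemize}
This is where $6.2=2\times 3.1$ comes from, not from $(1+o(1))$ factors. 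The residual factor $e^{-c/(2tM)}\le A^{-2}e^{-c/(4tM)}$, valid since $tM\le c/(8\log A)$, produces the $A^{-1}t^2M$ term. Your $\sqrt{A}$ in the coupling bound is also avoidable: summing the $\delta_{s,2}$ as in \eqref{eq:embed_fdm_bound} gives $\frac{C_{\mathrm{FDM}}e^c}{1-e^{-c}}e^{-c\cdot\mathrm{gap}}$ with no length factor, which is how $c_1$ arises.

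Part~(ii) does not go through either. The iterated ``Cantor set in the complement'' scheme is the one the paper uses afterwards, in the proof of \Cref{thm:bernstein_bounded}. Aggregating it via Lemma~\ref{lemma:aggregate_log-laplace} only works for $tM\lesssim(\log A)^{-2}$, so it cannot reach $tM\le(c\wedge1)/2$. There are three concrete failures.
\begin{itemize}
\item For $tM>c_0/\log A$, part~(i) is already unavailable for the largest piece $K_{A_0}$. A crude bound there is of order $t^2M^2A^2$ (or $tMA$), not $t^2M^2A\log A$.
\item H\"older replaces $t$ by $p_kt\ge t$, not by $t/p_k$, so it tightens rather than relaxes the constraint of~(i).
\item The trivial bound $tMA_L$ on the remainder is linear in $t$, so it cannot fit a bound quadratic in $t$.
\end{itemize}
The paper's argument for (ii) is instead a single Cantor construction with the $t$-dependent ratio $\delta=2tM/c$. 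The gap intervals are kept and bounded deterministically, $e^{tS_D}\le e^{tM\lambda(D)}$. Their total length is at most $k^*A\delta\le A\delta\log A/\log 2$, which gives exactly $c_2t^2M^2A\log A$. The relation $tM=c\delta/2$ makes the decoupling error across a gap of length $g$ of order $t\,e^{-cg/2}$, even after paying $e^{tML_{\mathrm{par}}}$ with $L_{\mathrm{par}}=g/\delta$ the parent block's length. The leaves at level $k^*$, as well as the case $tM\le\sqrt{c/(2A)}$, are again handled by Lemma~\ref{lemma:log-Lapalce_low_level}.
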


Proposition \ref{prop:log-laplace} directly leads to the following corollary.
\begin{corollary}\label{coro:log-laplace}
    Let $\{Z_i\}_{i \in \mathbb{Z}}$ be a sequence of random variables of the form \textup{(1)} with mean zero. Assume there exists a positive $M$ such that $\sup_{i \in \mathbb{Z}}|Z_i| \leq M$, and there exist absolute constants $c, C_{\mathrm{FDM}} > 0$ such that
    \begin{equation*}
        \sup_{m \geq 0}\exp(cm)\Delta_{m,2} \leq C_{\mathrm{FDM}}.
    \end{equation*}
    Let $C = \max\{c_0, 6c_1, c_2\}$, where $c_0$, $c_1$ and $c_2$ are given in Proposition \ref{prop:log-laplace}.
    We have the following results.
    \begin{enumerate}
        \item[(i)] Let $A \geq 2(c \vee 8)$, the Cantor-like set $K_A$ has Lebesgue measure strictly larger than $A/2$ and satisfies for any $t > 0$ such that $tM \leq c_0/(log A) \wedge 1/2$ that
    \begin{equation*}
    \begin{aligned}
        \log\mathbb{E}[\exp(tS_{K_A})]
        \leq \frac{CAt^2\big(\sqrt{C_{\mathrm{LRV}}} + \sqrt{M}/A\big)^2}{1 - 2tM/(2c_0/(\log A) \wedge 1)}.
    \end{aligned}
    \end{equation*}
        \item[(ii)] Let $A \geq 2(c \vee 2)$, it satisfies for any $t > 0$ such that $tM \leq (c \wedge 1)/2$ that
    \begin{equation*}
    \begin{aligned}
        \log\mathbb{E}[\exp(tS_{(0,A]})]
        \leq \frac{CAt^2\log A \big(\sqrt{C_{\mathrm{LRV}}} + M\big)^2}{1 - 2tM/(c \wedge 1)}.
    \end{aligned}
    \end{equation*}
    \end{enumerate}
\end{corollary}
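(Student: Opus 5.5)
The corollary is a direct algebraic consequence of Proposition~\ref{prop:log-laplace}. I take each part in turn and compare the right-hand sides term by term; no new probabilistic argument is needed. Two elementary facts are used throughout.

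First, under the respective step-size restrictions, the denominators $1 - 2tM/(2c_0/(\log A)\wedge 1)$ and $1 - 2tM/(c\wedge 1)$ lie in $[0,1]$. Indeed, $tM \le c_0/(\log A)\wedge 1/2$ gives $2tM/(2c_0/(\log A)\wedge 1)\le 1$, and the same reasoning applies to (ii). So dividing a nonnegative bound by them can only enlarge it, and at the boundary case the right-hand side is $+\infty$ and the claim is trivial. Second, $(\sqrt a+\sqrt b)^2 \ge a+b$ for $a,b\ge 0$, with cross term $2\sqrt{ab}$.

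For part (i), I write the bound of Proposition~\ref{prop:log-laplace}(i) as
\[
6.2\,At^2 C_{\mathrm{LRV}} + 6c_1 A^{-1}t^2 M = At^2\big(6.2\,C_{\mathrm{LRV}} + 6c_1 M/A^2\big).
\]
This is at most $\max\{6.2,6c_1\}\,At^2\big(\sqrt{C_{\mathrm{LRV}}}+\sqrt{M}/A\big)^2$. I then divide by the denominator, which is at most one. The constant $6.2$ must be absorbed into $C$. I would state explicitly that $C$ is enlarged to $\max\{6.2,c_0,6c_1,c_2\}$, which is still an absolute constant depending only on $c$ and $C_{\mathrm{FDM}}$. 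The fact that $\lambda(K_A)>A/2$ is inherited verbatim from the proposition, or from \eqref{eq:lebesgue_KA} with the choice of $\delta$ made there.

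For part (ii), the proposition gives
\[
6.2\,At^2 C_{\mathrm{LRV}} + 3c_1 At^2 M + c_2 t^2 M^2 A\log A.
\]
Since $A\ge 4$, we have $\log A \ge \log 4 > 1$. So the first and third terms are bounded by $C At^2\log A\,\big(C_{\mathrm{LRV}} + M^2\big)$, and hence by $C At^2\log A\,(\sqrt{C_{\mathrm{LRV}}}+M)^2$.

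The main obstacle is the middle term $3c_1At^2M$, which is linear rather than quadratic in $M$. Because $c_1$ carries a factor $C_{\mathrm{FDM}}$, it cannot simply be compared with $M^2$ or $\sqrt{C_{\mathrm{LRV}}}\,M$ without further input.

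My first attempt is to revisit the step in the proof of Proposition~\ref{prop:log-laplace}(ii) that produces this term. I expect it comes from a Hölder/coupling bound of the form $tM\cdot\|S\|_2$-type quantities times $\Delta_{m,2}$. If so, the term can be rewritten as $c_1' At^2 M\,\Delta_{0,2}$. Since $\Delta_{0,2}$ is dominated by a constant multiple of $\sqrt{C_{\mathrm{LRV}}}+M$, using $|Z_i|\le M$, the inequality $2xy\le x^2+y^2$ would then put it inside $(\sqrt{C_{\mathrm{LRV}}}+M)^2$.

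Failing that, I would adopt the paper's convention that constants may depend on $C_{\mathrm{FDM}}$. I would then use the elementary bound $M \le (1+M^2)/2$ after normalizing the process so that $M\ge 1$, which is harmless since the bound is invariant under rescaling $Z\mapsto\lambda Z$, $t\mapsto t/\lambda$. This absorbs the middle term into $CAt^2\log A\,(\sqrt{C_{\mathrm{LRV}}}+M)^2$, and dividing by the denominator finishes part (ii).
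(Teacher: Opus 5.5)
Your part (i) is correct, including your observation that $C=\max\{c_0,6c_1,c_2\}$ need not dominate $6.2$ and must be enlarged. The overall approach, a term-by-term comparison with Proposition~\ref{prop:log-laplace}, is the one the paper intends. The paper only asserts that the corollary follows directly, so it does not supply the step you are missing either. That step is the middle term $3c_1At^2M$ in part (ii), and neither of your two routes handles it.

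Your fallback rests on the claim that the bound is invariant under $(Z,t)\mapsto(\lambda Z,t/\lambda)$, and that claim is false. Under the rescaling, the admissible envelope constant becomes $\lambda C_{\mathrm{FDM}}$, so $c_1\mapsto\lambda c_1$. With that change, the left side, each term of Proposition~\ref{prop:log-laplace}(ii), and $t^2(\sqrt{C_{\mathrm{LRV}}}+M)^2$ are all invariant. But $C=\max\{c_0,6c_1,c_2\}$ is not. Normalizing a process with $M<1$ to $M=1$ therefore proves (ii) only with a constant of order $c_1/M$, which depends on $M$.

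No bookkeeping repairs this. Take $C_{\mathrm{FDM}}=1$ and $Z_i=M\xi_i$ with $\xi_i$ iid Rademacher. This process is admissible for every $M\le 1/\sqrt2$, and $C_{\mathrm{LRV}}=M^2$. For $tM\le (c\wedge1)/4$, the right side of (ii) is at most $8CAt^2\log A\,M^2$, while the proposition's bound is at least $3c_1At^2M$. Fix $c$ and $A$ and let $M\to0$: the proposition's bound is eventually larger than the corollary's right side, for any constant depending only on $c$ and $C_{\mathrm{FDM}}$. Here the left side is in fact at most $At^2M^2/2$; the point is only that (ii) does not follow from the proposition.

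Your first route does not close the gap either. The linear term comes from Lemma~\ref{lemma:log-Lapalce_low_level}, through the coupling estimate \eqref{eq:embed_fdm_bound}. Its constant $e^cC_{\mathrm{FDM}}/(1-e^{-c})$ is the exponential envelope of $\Delta_{m,2}$. That envelope is exactly what produces the factor $\exp(-cB/(2k))$ offsetting $\exp(tMB)$, so it cannot be traded for $\Delta_{0,2}$. Moreover, $|Z_i|\le M$ gives only $\delta_{s,2}\le 2M$ for each $s$ separately, and the sum over $s$ is controlled only through $C_{\mathrm{FDM}}$. So the inequality $\Delta_{0,2}\lesssim\sqrt{C_{\mathrm{LRV}}}+M$ is not available.

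A correct version of (ii) needs either an extra hypothesis or a modified right side:
\begin{itemize}
\item With the hypothesis $M\ge1$, you have $3c_1M\le 3c_1M^2$, and this is absorbed after enlarging $C$, for example to $6.2+6c_1+c_2$.
\item Alternatively, replace $(\sqrt{C_{\mathrm{LRV}}}+M)^2$ by $(\sqrt{C_{\mathrm{LRV}}}+\sqrt M+M)^2$, mirroring the $\sqrt M/A$ term in (i). Then, with $C\ge\max\{6.2,6c_1,c_2\}$ and $\log A>1$, your term-by-term comparison proves (ii) exactly as written.
\end{itemize}
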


Now, we are ready to prove Theorem~2.5

\begin{proof}[Proof of Theorem 2.5]
    \textbf{Case 1.} Suppose $n \leq 16(c \vee 8)^2$. For any $t > 0$ such that $tM \leq 4^{-1}(c \vee 8)^{-2}$, we have that $|tS_n| \leq tMn \leq 4$. Using the same argument as in the proof of Lemma \ref{lemma:log-Lapalce_low_level}, we have that
    \begin{equation*}
        \log\mathbb{E}[\exp(tS_{K_B})] \leq 3.1nt^2C_{\mathrm{LRV}} \leq \frac{3.1nt^2C_{\mathrm{LRV}}}{1 - 4tM(c \vee 8)^2}.
    \end{equation*}
    
    \textbf{Case 2.} Suppose $n > 16(c \vee 8)^2$. The proof follows from constructing recursively the Cantor-like set until the Lebesgue measure of the remaining interval is small enough. More specifically, choose $\delta$ as \eqref{eq:choose_delta}. Define a nondecreasing and continuous function from $(0,A]$ onto $(0, A- \lambda(K_A)]$ as
    \begin{equation*}
        F_A(t) = \lambda((0,t] \cap K_A^c) \;\; \text{for any} \;\; t \in (0,A],
    \end{equation*}
    where $K_A^c = (0,A] \setminus K_A$. Let $F_A^{-1}$ be the inverse function of $F_A$.
    We start with $\{X_t^{(0)}\}_t = \{X_t\}_t$ for $t$ in the interval $(0,A_0]$ where $A_0 = n$. After constructing the Cantor-like set $K_{A_0}$ of $(0, A_0]$, we connect all the gap intervals following their original order and define the connected interval as $K_{A_0}^c = (0,A_0] \setminus K_{A_0}$. Define also $A_1 = \lambda(K_{A_0}^c) = A_0 - \lambda(K_{A_0})$, and replace the original time index of interval $K_{A_0}^c$ by $(0,A_1]$. Define
    \begin{equation*}
        X_t^{(1)} = X_{F_{A_0}^{-1}(t)} \;\; \text{for any} \;\; t \in (0, A_1].
    \end{equation*}
    Then, we construct the Cantor-like set $K_{A_1}$ of $K_{A_0}^c$, connect all the gap intervals as $K_{A_1}^c = K_{A_0}^c \setminus K_{A_1}$, define $A_2 = \lambda(K_{A_1}^c) = A_1 - \lambda(K_{A_1})$, replace the original time index of interval $K_{A_1}^c$ by $(0,A_2]$, and define
    \begin{equation*}
        X_t^{(2)} = X_{F_{A_1}^{-1}(t)} \;\; \text{for any} \;\; t \in (0, A_2].
    \end{equation*}
    In general, for $j \geq 2$, we construct the Cantor-like set $K_{A_{j-1}}$ of $K_{A_{j-2}}^c$, connect all the gap intervals as $K_{A_{j-1}}^c = K_{A_{j-2}}^c \setminus K_{A_{j-1}}$, define $A_j = \lambda(K_{A_{j-1}}^c) = A_{j-1} - \lambda(K_{A_{j-1}})$, replace the original time index of interval $K_{A_{j-1}}^c$ by $(0,A_j]$, and define
    \begin{equation*}
        X_t^{(j)} = X_{F_{A_{j-1}}^{-1}(t)} \;\; \text{for any} \;\; t \in (0, A_j].
    \end{equation*}
    This procedure continues until $A_j$ is small enough. Let
    \begin{equation*}
        L = L_n = \inf\big\{j \in \mathbb{N}, A_j \leq 2(c \vee 8)\big\}.
    \end{equation*}
    Due to the choice of $\delta$ and \eqref{eq:lebesgue_KA}, we have for any $j \geq 1$ that $A_j < n/2^j$ and
    \begin{equation*}
        L \leq \Big\lfloor \frac{\log n - \log(2(c \vee 8))}{\log 2} \Big\rfloor + 1.
    \end{equation*}
    Moreover, we have for any $j = 0, \dots, L-1$ that
    \begin{equation*}
        A_j \geq A_{L-1} \geq 2(c \vee 8).
    \end{equation*}
    Therefore, we have the following decomposition
    \begin{equation*}
        \int_{0}^n X_u du = \sum_{j = 0}^{L-1}\int_{K_{A_j}}X_u^{(j)} du + \int_0^{A_L}X_u^{(L)} du.
    \end{equation*}
    Denote
    \begin{equation*}
        Y_j = \int_{K_{A_j}}X_u^{(j)} du \;\; \text{for} \;\; j = 0, \dots, L-1 \;\; \text{and} \;\; Y_L = \int_0^{A_L}X_u^{(L)} du.
    \end{equation*}
    For any $j = 0, \dots, L-1$, applying Corollary \ref{coro:log-laplace} (i), we have for any $t > 0$ such that $tM \leq c_0/(\log (n/2^j)) \wedge 1/2$ that
    \begin{equation*}
    \begin{aligned}
        \log\mathbb{E}[\exp(tY_j)]
        \leq \frac{C(n/2^j)t^2\big(\sqrt{C_{\mathrm{LRV}}} + \sqrt{M}(n/2^j)^{-1}\big)^2}{1 - 2tM/(2c_0/(\log (n/2^j)) \wedge 1)}.
    \end{aligned}
    \end{equation*}
    For $Y_L$, we first assume $A_L \geq 2(c \vee 2)$, we can apply Corollary \ref{coro:log-laplace} (ii). Then, we have for any $t > 0$ such that $tM \leq (c \wedge 1)/2$ that
    \begin{equation*}
    \begin{aligned}
        \log\mathbb{E}[\exp(tY_L)]
        \leq \frac{C^{\prime}t^2 \big(\sqrt{C_{\mathrm{LRV}}} + M\big)^2}{1 - 2tM/(c \wedge 1)},
    \end{aligned}
    \end{equation*}
    where $C^{\prime} > 0$ is some absolute constant depending only on $c$ and $C_{\mathrm{FDM}}$.
    
    We apply Lemma \ref{lemma:aggregate_log-laplace} by letting for $j = 0, \dots, L-1$
    \begin{equation*}
        c_j = 2M/(2c_0/(\log (n/2^j)) \wedge 1) \;\; \text{and} \;\; \sigma_j = \sqrt{C(n/2^j)}\big(\sqrt{C_{\mathrm{LRV}}} + \sqrt{M}(n/2^j)^{-1}\big)
    \end{equation*}
    and
    \begin{equation*}
        c_L = 2M/(c \wedge 1) \;\; \text{and} \;\; \sigma_L = \sqrt{C^{\prime}}\big(\sqrt{C_{\mathrm{LRV}}} + M\big).
    \end{equation*}
    There exist absolute constants $C_1, C^{\prime\prime} > 0$ depending only on $c, C_{\mathrm{FDM}}$ such that
    \begin{equation*}
        \sum_{j = 1}^L c_j \leq ML\log n/c_0 \vee 2M(L+1/c) \leq C_1M(\log n)^2,
    \end{equation*}
    and
    \begin{equation*}
        \sum_{j = 1}^L \sigma_j \leq C^{\prime\prime}\big(\sqrt{nC_{\mathrm{LRV}}} + M\big).
    \end{equation*}
    Then, for any $t \in [0,1/(C_1M(\log n)^2))$, we have that
    \begin{equation*}
        \log\mathbb{E}[\exp(tS_n)] \leq \frac{C_2(C_{\mathrm{LRV}}n + M^2) t^2}{1-tMC_1(\log n)^2},
    \end{equation*}
    where $C_2 = 2(C^{\prime\prime})^2$.
    Finally, by Chebyshev's inequality, we have for any $x > 0$ that
    \begin{equation*}
    \begin{aligned}
        \mathbb{P}(S_n \geq x) \leq& \exp\Big(\frac{C_2(C_{\mathrm{LRV}}n + M^2)t^2}{1-tMC_1(\log n)^2} - tx\Big)\\
        \leq& \exp\Big(-\frac{x^2}{4C_2(C_{\mathrm{LRV}}n + M^2) + 2C_1M(\log n)^2x}\Big),
    \end{aligned}
    \end{equation*}
    where the second inquality follows by letting 
    \begin{equation*}
        t = \frac{x}{2C_2(C_{\mathrm{LRV}}n + M^2) + MC_1(\log n)^2x}.
    \end{equation*}
    Since $Z_i$ are centered, applying the same argument for $-S_n$ give the same result. Therefore, we have for any $x > 0$ that
    \begin{equation*}
    \begin{aligned}
        \mathbb{P}(|S_n| \geq x) \leq 2\exp\Big(-\frac{x^2}{4C_2(C_{\mathrm{LRV}}n + M^2) + 2C_1M(\log n)^2x}\Big).
    \end{aligned}
    \end{equation*}
    
    In addition, suppose $A_L \leq 2(c \vee 2)$. Let $tM \leq 2/(c \vee 2)$, we have $|tY_L| \leq 4$. Using the similar argument as in \textbf{Case 1.}, we have that
    \begin{equation*}
        \log\mathbb{E}[\exp(tY_L)] \leq \frac{6.2(c \vee 2)t^2C_{\mathrm{LRV}}}{1 - 2tM(c \vee 2)}.
    \end{equation*}
    Let
    \begin{equation*}
        c_L = 2M(c \wedge 2) \;\; \text{and} \;\; \sigma_L = \sqrt{6.2(c \vee 2)C_{\mathrm{LRV}}}.
    \end{equation*}
    Note that using the updated $c_L$ and $\sigma_L$ would only affect the absolute constant $C_1$ and $C_2$. The same result follows.
\end{proof}

\subsection{Proofs for \Cref{sec:proof_bern_1}}
The first lemma relates the Laplace transform of the partial sum to the product of the Laplace transforms of each individual random variable.
\begin{lemma}\label{lemma:diff_laplace_joint_marginal}
Let $\{Z_i\}_{i \in \mathbb{Z}}$ be an $\mathbb{R}$-valued potentially nonstationary process of form \textup{(1)}. Assume there exists a positive $M$ such that $|Z_i| \leq M$ for any $i \in \mathbb{Z}$. Then for any $a > 0$, we have
\begin{equation*}
\begin{aligned}
    \bigg| \mathbb{E}\Big[\exp\Big(a\sum_{i = 1}^n Z_i \Big) \Big] - \prod_{i = 1}^n \mathbb{E}[\exp(aZ_i)] \bigg| \leq& a\exp(anM)\sum_{i = 2}^n\Vert Z_i - Z_{i, \{i-1, -\infty\}} \Vert_2\\
    \leq& a\exp(anM)(n-1) \Delta_{1,2}.
\end{aligned}
\end{equation*}
\end{lemma}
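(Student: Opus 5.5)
The idea is a telescoping (hybrid) argument that decouples one variable at a time. The coupled variable $Z_{i,\{i-1,-\infty\}}$ has all innovations up to time $i-1$ replaced by independent copies. It is therefore independent of $\mathcal{F}_{i-1}$, and hence of $(Z_1,\dots,Z_{i-1})$, while having the same law as $Z_i$. For $k=1,\dots,n$ set
\[
\Pi_k = \mathbb{E}\Big[\exp\Big(a\sum_{i=1}^k Z_i\Big)\Big]\prod_{i=k+1}^n \mathbb{E}[\exp(aZ_i)].
\]
Then $\Pi_n$ is the joint Laplace transform and $\Pi_1$ is the product of the marginals. This gives
\[
\Big|\Pi_n-\Pi_1\Big|\le \sum_{k=2}^n |\Pi_k-\Pi_{k-1}|.
\]

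Fix $k$ and write $S_{k-1}=\sum_{i<k}Z_i$. By the independence just noted and the equality in law, $\mathbb{E}[\exp(aS_{k-1})]\,\mathbb{E}[\exp(aZ_k)] = \mathbb{E}[\exp(aS_{k-1}+aZ_{k,\{k-1,-\infty\}})]$. Hence
\[
|\Pi_k-\Pi_{k-1}| = \Big|\mathbb{E}\big[e^{aS_{k-1}}\big(e^{aZ_k}-e^{aZ_{k,\{k-1,-\infty\}}}\big)\big]\Big|\prod_{i>k}\mathbb{E}[e^{aZ_i}].
\]
All variables are bounded by $M$, so the last product is at most $e^{a(n-k)M}$ and $e^{aS_{k-1}}\le e^{a(k-1)M}$. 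By the mean value theorem, $|e^{x}-e^{y}|\le e^{aM}|x-y|$ for $|x|,|y|\le aM$. Combining these, the $k$-th term is at most
\[
a\,e^{anM}\,\mathbb{E}|Z_k-Z_{k,\{k-1,-\infty\}}| \le a\,e^{anM}\,\|Z_k-Z_{k,\{k-1,-\infty\}}\|_2
\]
by Jensen. Summing over $k=2,\dots,n$ gives the first inequality.

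For the second inequality, I would bound $\|Z_k-Z_{k,\{k-1,-\infty\}}\|_2\le \Delta_{1,2}$. The step I expect to be delicate is this one, because it replaces the infinitely many innovations $\epsilon_{k-1},\epsilon_{k-2},\dots$ by copies at once. I would handle it by writing the difference as a telescoping sum of single-innovation replacements, replacing $\epsilon_{k-1}$, then $\epsilon_{k-2}$, and so on. Each step differs from the previous one by changing one innovation at lag $s\ge1$, which by stationarity of the innovation law is controlled in $L_2$ by $\delta_{s,2}$ (taking the sup over $i$ handles nonstationarity of $G_i$). Convergence of the infinite chain in $L_2$ follows from $\Delta_{0,2}<\infty$, since the fully replaced limit is the independent copy. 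The triangle inequality then gives $\|Z_k-Z_{k,\{k-1,-\infty\}}\|_2\le\sum_{s\ge1}\delta_{s,2}=\Delta_{1,2}$, which yields the factor $(n-1)\Delta_{1,2}$.
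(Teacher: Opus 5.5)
Your proposal is correct and is essentially the paper's proof: it uses the same telescoping through the hybrids $\Pi_k$, the same fact that $Z_{k,\{k-1,-\infty\}}$ is independent of $Z_1,\dots,Z_{k-1}$ and has the law of $Z_k$, and the same mean-value and boundedness estimate giving $a e^{anM}\|Z_k-Z_{k,\{k-1,-\infty\}}\|_2$. For the second inequality the paper only cites the definition of $\delta_{s,2}$, and your chain of single-innovation replacements is the argument it writes out in the proof of the low-level log-Laplace lemma; the one point you assert rather than prove, that the $L_2$ limit of the chain is $Z_{k,\{k-1,-\infty\}}$ itself, follows by approximating $G_k$ in $L_2$ by functions of finitely many innovations, since every partially replaced input sequence has the same law as the original.
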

\begin{proof}[Proof of \Cref{lemma:diff_laplace_joint_marginal}]
For the product of the Laplace transforms of each individual random variable, we have the telescoping decomposition as
\begin{equation*}
\begin{aligned}
    &\prod_{i = 1}^n \mathbb{E}[\exp(aZ_i)]\\
    =& \mathbb{E}\Big[\exp\Big(a\sum_{i = 1}^{n-1} Z_i \Big) \Big]\mathbb{E}[\exp(aZ_n)]\\
    &+\mathbb{E}\Big[\exp\Big(a\sum_{i = 1}^{n-2} Z_i \Big) \Big]\prod_{j = n-1}^n\mathbb{E}[\exp(aZ_j)] - \mathbb{E}\Big[\exp\Big(a\sum_{i = 1}^{n-1} Z_i \Big) \Big]\mathbb{E}[\exp(aZ_n)]\\
    &+\mathbb{E}\Big[\exp\Big(a\sum_{i = 1}^{n-3} Z_i \Big) \Big]\prod_{j = n-2}^n\mathbb{E}[\exp(aZ_j)] - \mathbb{E}\Big[\exp\Big(a\sum_{i = 1}^{n-2} Z_i \Big) \Big]\prod_{j = n-1}^n\mathbb{E}[\exp(aZ_n)]\\
    &+ \dots\\
    &+\prod_{j = 1}^n\mathbb{E}[\exp(aZ_j)] - \mathbb{E}\Big[\exp\Big(a\sum_{i = 1}^{2} Z_i \Big) \Big]\prod_{j = 3}^n\mathbb{E}[\exp(aZ_n)].
\end{aligned}
\end{equation*}
For notational simplicity, given a real value sequence $\{b_i\}_{i = 1}^n$, we write $\prod_{i = n+1}^n b_i = 1$. 
Then, it satisfies that
\begin{equation}\label{eq:laplace_extract_all}
\begin{aligned}
    &\mathbb{E}\Big[\exp\Big(a\sum_{i = 1}^n Z_i \Big) \Big] - \prod_{i = 1}^n \mathbb{E}[\exp(aZ_i)]\\
    =& \sum_{s = 2}^{n}\bigg\{\bigg(\mathbb{E}\Big[\exp\Big(a\sum_{i = 1}^{s} Z_i \Big) \Big] - \mathbb{E}\Big[\exp\Big(a\sum_{i = 1}^{s-1} Z_i \Big) \Big]\mathbb{E}[\exp(aZ_s)] \bigg)\\
    &\times \prod_{j = s+1}^n\mathbb{E}[\exp(aZ_j)] \bigg\}.
\end{aligned}
\end{equation}
Using coupling, we have that $Z_{s, \{s-1, -\infty\}}$ and $Z_{s^{\prime}}$ are independent for any $s^{\prime} \leq s-1$, and $Z_{s, \{s-1, -\infty\}}$ and $Z_s$ have the same distribution. We have that 
\begin{equation}\label{eq:laplace_extract_one}
\begin{aligned}
    &\bigg|\mathbb{E}\Big[\exp\Big(a\sum_{i = 1}^{s} Z_i \Big) \Big] - \mathbb{E}\Big[\exp\Big(a\sum_{i = 1}^{s-1} Z_i \Big) \Big]\mathbb{E}[\exp(aZ_s)]\bigg|\prod_{j = s+1}^n\mathbb{E}[\exp(aZ_j)]\\
    =& \bigg|\mathbb{E}\Big[\exp\Big(a\sum_{i = 1}^{s-1} Z_i \Big) \Big(\exp(aZ_s) - \exp(aZ_{s,\{s-1, -\infty\}})\Big)\Big]\bigg| \prod_{j = s+1}^n\mathbb{E}[\exp(aZ_j)]\\
    \leq& a\exp(anM)\mathbb{E}\big|Z_s - Z_{s,\{s-1, -\infty\}}\big|\\
    \leq& a\exp(anM)\big\Vert Z_s - Z_{s,\{s-1, -\infty\}}\big\Vert_2,
\end{aligned}
\end{equation}
where the first inequality is due to the mean value theorem and the fact that $Z_i$ are bounded, and the second inequality follows from H\"older's inequality.
Combining \eqref{eq:laplace_extract_all} and \eqref{eq:laplace_extract_one}, we have
\begin{equation*}
\begin{aligned}
    \bigg|\mathbb{E}\Big[\exp\Big(a\sum_{i = 1}^n Z_i \Big) \Big] - \prod_{i = 1}^n \mathbb{E}[\exp(aZ_i)]\bigg| \leq& a\exp(anM)\sum_{i = 2}^n\big\Vert Z_i - Z_{i,\{i-1, -\infty\}}\big\Vert_2\\
    \leq& a(n-1)\exp(anM)\sum_{j = 1}^{\infty}\delta_{j,2},
\end{aligned}
\end{equation*}
where the second inequality follows from the definition of $\delta_{s,q}$.
\end{proof}

The following lemma relates the bound of the log-Laplace transform of sum of random variables to that of each individual random variable. This lemma is the Lemma 13 in \cite{merlevede2009bernstein}, we reproduce it for completeness.
\begin{lemma}\label{lemma:aggregate_log-laplace}
    Let $Z_1, Z_2, \dots$ be a sequence of $\mathbb{R}$-valued random variables. Assume that there exist positive constants $\sigma_1, \sigma_2, \dots$ and $c_1, c_2, \dots$ such that, for any positive integer $i$ and any $t \in [0, 1/c_i)$,
    \begin{equation*}
        \log\mathbb{E}[\exp(tZ_i)] \leq (\sigma_it)^2/(1-c_it).
    \end{equation*}
    Then, for any positive $n$ and any $t$ in $[0,1/(c_1+c_2+ \cdots+c_n))$,
    \begin{equation}\label{eq:tensorization}
        \log\mathbb{E}[\exp(t(Z_1+Z_2+\cdots+Z_n))] \leq (\sigma t)^2/(1-Ct),
    \end{equation}
    where $\sigma = \sigma_1+\sigma_2+\cdots+\sigma_n$ and $C = c_1+c_2+\cdots+c_n$.
\end{lemma}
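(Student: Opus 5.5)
This is Lemma 13 of Merlevède et al. The plan is induction on $n$, using Hölder's inequality with weights chosen proportional to the $\sigma_i$. The case $n=1$ is the hypothesis itself. By induction it suffices to treat two variables. Suppose $\log\mathbb{E}[e^{tY}]\le (\sigma_Y t)^2/(1-c_Y t)$ on $[0,1/c_Y)$, and similarly for $Z$ with $(\sigma_Z,c_Z)$. We want the same bound for $Y+Z$ with parameters $\sigma_Y+\sigma_Z$ and $c_Y+c_Z$, for $t\in[0,1/(c_Y+c_Z))$. Applying this with $Y=Z_1+\cdots+Z_{n-1}$ (bounded via the inductive hypothesis) and $Z=Z_n$ closes the induction, since all sums of constants are additive.

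For the two-variable step, fix $t\in(0,1/(c_Y+c_Z))$; the case $t=0$ is trivial. Pick $p,q>1$ with $1/p+1/q=1$. Hölder's inequality gives
\[
\log\mathbb{E}[e^{t(Y+Z)}]\le \tfrac1p\log\mathbb{E}[e^{ptY}]+\tfrac1q\log\mathbb{E}[e^{qtZ}].
\]
We need $pt<1/c_Y$ and $qt<1/c_Z$ so the hypotheses apply. The natural choice is to weight by the normalised quantities $u_Y=\sigma_Y/(1-c_Yt)^{1/2}$-type terms. A cleaner choice, following Merlevède et al., is
\[
\frac1p=\frac{\sigma_Y}{\sigma_Y+\sigma_Z}\,\frac{1-Ct}{1-c_Yt}+\frac{c_Yt\,(\cdots)}{\cdots}.
\]
Concretely, I would set $\frac{1}{p}=\frac{a}{a+b}$ with $a=\sigma_Y(1-c_Zt)^{-1/2}$-adjusted weights chosen so that both resulting terms share the common value $t^2(\sigma_Y+\sigma_Z)^2/(1-Ct)$.

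I expect the cleanest explicit version to be as follows. Write $\varepsilon=1-Ct>0$ with $C=c_Y+c_Z$, and take
\[
\frac1p=\frac{c_Yt+\varepsilon\,\sigma_Y/(\sigma_Y+\sigma_Z)}{1},\qquad \frac1q=\frac{c_Zt+\varepsilon\,\sigma_Z/(\sigma_Y+\sigma_Z)}{1}.
\]
These sum to $Ct+\varepsilon=1$. Then $1-c_Ypt=(1/p-c_Yt)\,p=p\,\varepsilon\sigma_Y/(\sigma_Y+\sigma_Z)>0$, so the hypothesis applies. It yields
\[
\tfrac1p\cdot\frac{\sigma_Y^2p^2t^2}{1-c_Ypt}=\frac{\sigma_Y t^2(\sigma_Y+\sigma_Z)}{\varepsilon},
\]
and symmetrically for $Z$. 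Summing gives $t^2(\sigma_Y+\sigma_Z)^2/(1-Ct)$, exactly the claimed bound. If $\sigma_Y$ or $\sigma_Z$ vanishes, take limits or treat the degenerate variable directly.

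The main obstacle is guessing this choice of $p$. Once $1/p-c_Yt$ is set proportional to $\sigma_Y$, everything telescopes, and the remaining steps are routine algebra together with checking that $p,q\in(1,\infty)$. Those bounds follow because both $1/p$ and $1/q$ are positive and sum to one.
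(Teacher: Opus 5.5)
Your proof is correct and is the same as the paper's: your final weight $1/p = c_Y t + (1-Ct)\,\sigma_Y/(\sigma_Y+\sigma_Z)$ is exactly the paper's H\"older exponent $u$. The identity $1/p - c_Y t = (1-Ct)\,\sigma_Y/(\sigma_Y+\sigma_Z)$ then makes the two terms sum to $t^2(\sigma_Y+\sigma_Z)^2/(1-Ct)$ in the same way, and the induction with $Y = Z_1+\cdots+Z_{n-1}$ is the paper's. In a write-up, drop the two earlier incomplete candidate formulas for $1/p$; the degenerate-$\sigma$ remark is also unnecessary, since the $\sigma_i$ are assumed positive.
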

\begin{proof}[Proof of \Cref{lemma:aggregate_log-laplace}]
    For $i \geq 1$, denote the partial sums $S_i = \sum_{j = 1}^iZ_j$. The proof is by induction.
    For $n = 1$, we have $\sigma = \sigma_1$ and $C = c_1$, and \eqref{eq:tensorization} holds obviously.
    
    Assuming \eqref{eq:tensorization} holds for $n = k$, i.e.~for any $t \in [0,1/(c_1+c_2+\cdots+c_k))$, it satisfies that
    \begin{equation*}
        \log\mathbb{E}[\exp(tS_k)] \leq \Big(t\sum_{j=1}^k\sigma_j\Big)^2\Big/\Big(1-t\sum_{j=1}^kc_j\Big).
    \end{equation*}
    For $n = k+1$, by H\"older's inequality we have for any $u \in (0,1)$ that
    \begin{equation}\label{eq:tensorization_stepk}
    \begin{aligned}
        &\log\mathbb{E}[\exp(t(S_k+Z_{k+1}))]\\
        \leq& u\log\mathbb{E}[\exp(u^{-1}tS_k)] + (1-u)\log\mathbb{E}[\exp((1-u)^{-1}tZ_{k+1})].
    \end{aligned}
    \end{equation}
    Choose
    \begin{equation*}
        u = \Big(\sum_{j = 1}^{k}\sigma_j\Big/\sum_{j = 1}^{k+1}\sigma_j\Big)\Big(1 - t\sum_{j = 1}^{k+1}c_j\Big) + t\sum_{j = 1}^{k}c_j,
    \end{equation*}
    and thus 
    \begin{equation*}
        1-u = \Big(\sigma_{k+1}\Big/\sum_{j = 1}^{k+1}\sigma_j\Big)\Big(1 - t\sum_{j = 1}^{k+1}c_j\Big) + tc_{k+1}.        
    \end{equation*}
    Since $1/(c_1+c_2+\cdots+c_{k+1})$ is less than $1/(c_1+c_2+\cdots+c_{k})$ and $1/c_{k+1}$,
    we have for any $t \in [0, 1/(c_1+c_2+\cdots+c_{k+1}))$ that
    \begin{equation*}
        \eqref{eq:tensorization_stepk} \leq \frac{t^2\big(\sum_{j=1}^{k}\sigma_j\big)^2}{u - t\sum_{j=1}^{k}c_j} + \frac{t^2\sigma_{k+1}^2}{(1-u)-tc_{k+1}} = \frac{\big(t\sum_{j=1}^{k+1}\sigma_j\big)^2}{1 - t\sum_{j = 1}^{k+1}c_j},
    \end{equation*}
    which completes the proof.
\end{proof}

The following lemma provides an upper bound on log-Laplace transform of partial sums within a finite union of intervals with small Lebesgue measure. We use this lemma as building blocks in the proofs of our main results.
\begin{lemma}\label{lemma:log-Lapalce_low_level}
    Let $\{Z_i\}_{i \in \mathbb{Z}}$ be a sequence of random variables of the form \textup{(1)} with mean zero. Assume there exists a positive $M$ such that $\sup_{i \in \mathbb{Z}}|Z_i| \leq M$, and there exist absolute constants $c, C_{\mathrm{FDM}} > 0$ such that
    \begin{equation*}
        \sup_{m \geq 0}\exp(cm)\Delta_{m,2} \leq C_{\mathrm{FDM}}.
    \end{equation*}
    Let $B \geq 2$ and $a \geq 0$. Then for any $K_{B} \subset (a, a+B]$ which is a finite union of intervals, and for any $t > 0$ such that $tM \leq 1/2 \wedge \sqrt{c/(2B)}$, we have that
    \begin{equation*}
        \log\mathbb{E}[\exp(tS_{K_B})] \leq Bt^2\Big[6.2C_{\mathrm{LRV}} + \frac{e^cC_{\mathrm{FDM}}M}{1 - e^{-c}}\exp\Big(-\frac{c}{2tM}\Big)\Big],
    \end{equation*}
    where $C_{\mathrm{LRV}}$ is defined in \textup{(10)}.
\end{lemma}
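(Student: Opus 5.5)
The bound will come from a small-block Taylor expansion of the exponential, with the cross terms between nearby times controlled by the exponential decay of the functional dependence measure.

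The key device is to cut $K_B$ into consecutive pieces of length $m=\lceil 1/(2tM)\rceil$ (or a comparable scale chosen against $c$). The constraint $tM\le 1/2$ gives $t|S_J|\le tM\lambda(J)\lesssim 1$ on each piece $J$. The constraint $tM\le\sqrt{c/(2B)}$ forces $m\gtrsim \sqrt{B/c}$, so the number of pieces is at most $B/m\lesssim \sqrt{cB}$.

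The argument then runs in four steps.

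\begin{enumerate}
\item \textbf{Remove dependence across pieces.} The exponential is replaced by the product of the per-piece Laplace transforms. Rather than \Cref{lemma:diff_laplace_joint_marginal}, whose $\exp(anM)$ prefactor is too crude over the whole set, the plan is to use a coupling in which the innovations before the start of each piece are swapped for independent copies. The plan is to do this with alternating pieces, so that consecutive surviving pieces are separated by at least $m$ time units.
\item \textbf{Bound the per-piece Laplace transforms.} Using $|tS_J|\le 1$, $\mathbb{E}S_J=0$ and $e^x\le 1+x+x^2/2\cdot e^{|x|}$, each piece satisfies $\log\mathbb{E}e^{tS_J}\le 1.36\,t^2\mathbb{E}S_J^2$. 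By \Cref{lemma:lrv_univar}, $\mathbb{E}S_J^2\le \lambda(J)C_{\mathrm{LRV}}$. Combining a factor of $2$ from Hölder over odd and even pieces, the constants $e^{1/2}$, and a bit of slack gives roughly $6.2Bt^2C_{\mathrm{LRV}}$.
\item \textbf{Pay for the decoupling.} The coupling error at piece boundaries is bounded, after a mean value step with $|tS|$ bounded by a constant on the relevant window, by $t\sum_{s\ge m}\delta_{s,2}\lesssim tC_{\mathrm{FDM}}e^{c}e^{-cm}/(1-e^{-c})$. There are about $B/m\lesssim BtM$ boundaries, and each error enters multiplicatively as $(1+\text{err})$. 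Hence the total added to the log-Laplace transform is at most $Bt^2\, e^cC_{\mathrm{FDM}}M(1-e^{-c})^{-1}e^{-cm}$. With $m\ge 1/(2tM)$, the factor $e^{-cm}$ becomes $\exp(-c/(2tM))$, which is the stated second term.
\item \textbf{Handle the boundary cases.} If $K_B$ has fewer than two pieces, the Taylor bound of step 2 applies directly to the whole set.
\end{enumerate}

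The main obstacle is step 3. Making the per-boundary decoupling cost multiplicative, and not growing like $\exp(tMB)$, requires bounding each ratio $\mathbb{E}[e^{tS_{\le k}}]/(\mathbb{E}[e^{tS_{<k}}]\mathbb{E}[e^{tS_k}])$ separately. For this I would condition on the past and use that the coupled piece is independent of the past, with $e^{tS_k}\le e$, so each step costs a factor $1+e\,t\Vert S_k-S_k^{*}\Vert_1$. The remaining task is to check that the constants close to give exactly the displayed form.
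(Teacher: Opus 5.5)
Your Step 3 is where the argument breaks. The decomposition $\mathbb{E}[e^{tS_{<k}}e^{tS_k}]=\mathbb{E}[e^{tS_{<k}}]\,\mathbb{E}[e^{tS_k}]+\mathbb{E}[e^{tS_{<k}}(e^{tS_k}-e^{tS_k^*})]$ is fine. However, the remainder is only bounded by $e\,t\,\mathbb{E}[e^{tS_{<k}}|S_k-S_k^*|]$, and $S_k-S_k^*$ is \emph{not} independent of the past: $S_k$ is a function of precisely the innovations that generate $S_{<k}$. Conditioning on the past gives $\mathbb{E}\bigl[e^{tS_{<k}}\,\mathbb{E}(|S_k-S_k^*|\mid\mathcal{F}_{\mathrm{past}})\bigr]$. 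Factoring out $\mathbb{E}[e^{tS_{<k}}]$ would then need an almost-sure bound on the conditional coupling error, but the hypothesis $\sup_m e^{cm}\Delta_{m,2}\le C_{\mathrm{FDM}}$ controls only the unconditional $\Vert S_k-S_k^*\Vert_2$. Equivalently, you need the coupling error to be small under the tilted law with density $e^{tS_{<k}}/\mathbb{E}e^{tS_{<k}}$, which is bounded only by $e^{tMB}$. Nothing prevents the dependence on the past from concentrating where this density is large. So the factor $1+e\,t\Vert S_k-S_k^*\Vert_1$ does not follow from the assumptions. What is available in general is H\"older, which doubles $t$ and does not recurse, or the sup bound $e^{tS_{<k}}\le e^{tMB}$ — exactly the $\exp(anM)$ prefactor of \Cref{lemma:diff_laplace_joint_marginal} you set out to avoid. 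A telling symptom is that your argument never genuinely uses $tM\le\sqrt{c/(2B)}$: the piece count $\sqrt{cB}$ plays no role in Step 3.

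The paper keeps the crude prefactor and beats it, which is exactly what $tM\le\sqrt{c/(2B)}$ is for, since it gives $tMB\le c/(2tM)$. This only works if the gaps have length at least $1/(tM)$, not your $1/(2tM)$. For $tM>4/B$ the paper proceeds as follows:
\begin{itemize}
\item It takes $k=\lfloor tMB/2\rfloor$ and cuts $(a,a+B]$ into $2k$ equal blocks of length $B/(2k)\in[1/(tM),2/(tM)]$, then separates odd and even blocks by H\"older.
\item The decoupling error is at most $2t(k-1)\frac{C_{\mathrm{FDM}}e^c}{1-e^{-c}}\exp\bigl(tMB-cB/(2k)\bigr)\le\frac{C_{\mathrm{FDM}}e^cMBt^2}{1-e^{-c}}\exp\bigl(-c/(2tM)\bigr)$.
\item This error moves to the log scale via $|\log x-\log y|\le|x-y|$ for $x,y\ge 1$; both Laplace transforms are $\ge 1$ by Jensen.
\item Each block satisfies $|2tS_I|\le 4$, so $g(4)\le 3.1$ with $g(x)=x^{-2}(e^x-x-1)$ yields $\log\prod_j\mathbb{E}e^{2tS_{I_{2j-1}}}\le 6.2Bt^2C_{\mathrm{LRV}}$.
\end{itemize}
The case $tM\le 4/B$ is your Step 4. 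With your piece length, the gap decay $e^{-cm}\approx e^{-c/(2tM)}$ is entirely consumed by $e^{tMB}$, which can be as large as $e^{c/(2tM)}$. You must lengthen the pieces as above to keep the factor $\exp(-c/(2tM))$.
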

\begin{proof}[Proof of Lemma \ref{lemma:log-Lapalce_low_level}]
    Note that the function $x \mapsto g(x) = x^{-2}(e^x-x-1)$ is increasing on $\mathbb{R}$. For any centered random variable $U \in \mathbb{R}$ such that $|U| \leq M$, and any $t > 0$, we have that
    \begin{equation}\label{eq:laplace_small_quantity}
        \mathbb{E}[\exp(tU)] \leq 1+t^2g(tM)\mathbb{E}[U^2].
    \end{equation}
    
    \textbf{Case 1.} Suppose $tM \leq 4/B$, we have that $|tS_{K_B}| \leq tMB \leq 4$. By \eqref{eq:laplace_small_quantity} and $g(4) \leq 3.1$, it holds that
    \begin{equation*}
        \mathbb{E}[\exp(tS_{K_B})] \leq 1+3.1t^2\mathbb{E}[S_{K_B}^2] \leq 1+3.1Bt^2C_{\mathrm{LRV}} \leq \exp\big( 3.1Bt^2C_{\mathrm{LRV}} \big),
    \end{equation*}
    where the second inequality follows from $\mathbb{E}[S_{K_B}^2] \leq B\sum_{l = -\infty}^{\infty}|\cov(X_0, X_l)|$ and the definition of $C_{\mathrm{LRV}}$, and the third inequality follows from the inequality $1 + x \leq \exp(x)$.
    
    \textbf{Case 2.} Suppose $tM > 4/B$. Let $r = (tM)^{-1}$ and $k = \lfloor B/(2r) \rfloor$. We list the following facts.
    \begin{itemize}
        \item By the condition $tM > 4/B$ and the definition of $k$, we have 
        \begin{equation}\label{eq:k>b/(4r)}
            \frac{B}{2r} > 2, \;\; \text{and} \;\; k \geq 2 \vee \frac{tMB}{4}.
        \end{equation}
        \item By the assumption $tM \leq 2^{-1}$ and the definition of $k$, we have 
        \begin{equation}\label{eq:k<tmb/2}
            \frac{B}{2k} \geq r \geq 2 \;\; \text{and} \;\; k \leq \frac{B}{2r} = \frac{tMB}{2}.
        \end{equation}
        \item By the assumption $tM \leq \sqrt{c/(2B)}$, we have
        \begin{equation}\label{eq:tmb-cb/(2k)}
            tMB - \frac{cB}{2k} \leq tMB - cr = tMB - \frac{c}{tM} \leq -\frac{c}{2tM}.
        \end{equation}
    \end{itemize}

    We divide interval $(a, a+B]$ into $2k$ consecutive left half-open intervals $\{I_j\}_{j = 1}^{2k}$ of equal size $B/(2k)$. Note that the odd intervals, i.e.~$\{I_{2j-1}\}_{j=1}^k$ are separated by the even intervals, i.e.~$\{I_{2j}\}_{j=1}^k$, and vice versa.
    Denote the partial sum among all odd intervals and that of even intervals respectively as
    \begin{equation*}
        \overline{S}_{\mathrm{odd}} = \sum_{j = 1}^kS_{I_{2j-1}} \;\; \text{and} \;\; \overline{S}_{\mathrm{even}} = \sum_{j = 1}^kS_{I_{2j}}.
    \end{equation*}
    By H\"older's inequality, we have that
    \begin{equation*}
        \log \mathbb{E}[\exp(tS_{K_B})] \leq \frac{1}{2}\log \mathbb{E}[\exp(2t\overline{S}_{\mathrm{odd}})] + \frac{1}{2}\log \mathbb{E}[\exp(2t\overline{S}_{\mathrm{even}})].
    \end{equation*}
    
    Note that $|S_{I_{2j-1}}| \leq MB/(2k)$ for any $j = 1, \dots, k$. Denote $I_{2j-1} = (l_{2j-1}, u_{2j-1}]$. By definition, we have $\{S_{I_{2j-1}}\}_{j = 1}^k$ is adapted to filtration $\{\mathcal{F}_{\lceil u_{2j-1}\rceil}\}_{j = 1}^k$. For any $j = 2, \dots, k$, let $S_{I_{2j-1}, \{\lceil u_{2j-3} \rceil, -\infty\}} = \int_{I_{2j-1}}Z_{t, \{\lceil u_{2j-3} \rceil, -\infty\}}dt$ and it satisfies that
    \begin{equation}\label{eq:embed_fdm_bound}
    \begin{aligned}
        &\Vert S_{I_{2j-1}} - S_{I_{2j-1}, \{\lceil u_{2j-3} \rceil, -\infty\}} \Vert_2 \leq \int_{I_{2j-1}}\Vert Z_t - Z_{t, \{\lceil u_{2j-3} \rceil, -\infty\}} \Vert_2 dt\\
        \leq& \int_{I_{2j-1}}\Vert Z_t - Z_{t, \{\lceil u_{2j-3} \rceil\}} \Vert_2 dt\\
        &+ \int_{I_{2j-1}} \sum_{m = 0}^{\infty}\Vert Z_{t, \{\lceil u_{2j-3} \rceil, \lceil u_{2j-3} \rceil - m\}} - Z_{t, \{\lceil u_{2j-3} \rceil, \lceil u_{2j-3} \rceil - m-1\}} \Vert_2 dt\\
        \leq& \int_{I_{2j-1}}\sum_{m = 0}^{\infty}\delta_{t - \lceil u_{2j-3}\rceil + m, 2} dt \leq \sum_{i = \lceil l_{2j-1}\rceil}^{\lceil u_{2j-1}\rceil}\sum_{m = 0}^{\infty}\delta_{i - \lceil u_{2j-3}\rceil + m, 2}\\
        \leq& C_{\mathrm{FDM}}\sum_{i = \lceil l_{2j-1}\rceil}^{\lceil u_{2j-1}\rceil}\exp(-c(i - \lceil u_{2j-3}\rceil)) \leq \frac{C_{\mathrm{FDM}}\exp(-c(\lceil l_{2j-1}\rceil - \lceil u_{2j-3}\rceil))}{1 - e^{-c}}\\
        \leq& \frac{C_{\mathrm{FDM}}e^c}{1 - e^{-c}}\exp\Big(-\frac{cB}{2k}\Big),
    \end{aligned}
    \end{equation}
    where the first inequality follows from the triangle inequality, the second inequality follows from the telescoping sums and the triangle inequality, the third inequality follows from the definition of the functional dependence measure, the fifth inequality follows from the assumption on dependence measure, and the last inequality follows from the fact that $\lceil l_{2j-1}\rceil - \lceil u_{2j-3}\rceil \geq B/(2k) - 1$.
    Applying Lemma \ref{lemma:diff_laplace_joint_marginal} and using \eqref{eq:embed_fdm_bound}, we have for any $t > 0$ that
    \begin{equation*}
    \begin{aligned}
        &\Big| \mathbb{E}[\exp(2t\overline{S}_{\mathrm{odd}})] - \prod_{j = 1}^k \mathbb{E}[\exp(2tS_{I_{2j-1}})] \Big|\\
        \leq& 2t\exp(tMB)\sum_{j = 2}^k\Vert S_{I_{2j-1}} - S_{I_{2j-1}, \{\lceil u_{2j-3} \rceil, -\infty\}} \Vert_2\\
        \leq& \frac{2t(k-1)C_{\mathrm{FDM}}e^c}{1 - e^{-c}}\exp\Big(-\frac{c}{2tM}\Big) \leq \frac{C_{\mathrm{FDM}}e^cMBt^2}{1 - e^{-c}}\exp\Big(-\frac{c}{2tM}\Big),
    \end{aligned}
    \end{equation*}
    where the second and third inequalities follow respectively from \eqref{eq:tmb-cb/(2k)} and \eqref{eq:k<tmb/2}.
    Since $Z_i$ are centered, by Jensen's inequality the Laplace transforms of $\overline{S}_{\mathrm{odd}}$ and $S_{I_{2j-1}}$ are greater than $1$. Applying the inequality
    \begin{equation}\label{eq:logx-logy}
        |\log x - \log y| \leq |x - y| \;\; \text{for} \; x\geq 1 \; \text{and} \; y \geq 1,
    \end{equation}
    we have that
    \begin{equation*}
    \begin{aligned}
        \Big| \log\mathbb{E}[\exp(2t\overline{S}_{\mathrm{odd}})] - \log\prod_{j = 1}^k \mathbb{E}[\exp(2tS_{I_{2j-1}})] \Big|
        \leq \frac{C_{\mathrm{FDM}}e^cMBt^2}{1 - e^{-c}}\exp\Big(-\frac{c}{2tM}\Big).
    \end{aligned}
    \end{equation*}
    Since $|S_{I_{2j-1}}| \leq MB/(2k)$ for any $j = 1, \dots, k$. By \eqref{eq:k>b/(4r)}, we have that
    \begin{equation*}
        |2tS_{I_{2j-1}}| \leq \frac{tMB}{k} \leq 4.
    \end{equation*}
    By the same arguments as in \textbf{Case 1.}, we have that
    \begin{equation*}
        \log\prod_{j = 1}^k \mathbb{E}[\exp(2tS_{I_{2j-1}})] \leq 6.2Bt^2C_{\mathrm{LRV}}.
    \end{equation*}
    The triangle inequality leads to that
    \begin{equation*}
    \begin{aligned}
        \log\mathbb{E}[\exp(2t\overline{S}_{\mathrm{odd}})]
        \leq Bt^2\Big[6.2C_{\mathrm{LRV}} + \frac{e^cC_{\mathrm{FDM}}M}{1 - e^{-c}}\exp\Big(-\frac{c}{2tM}\Big)\Big].
    \end{aligned}
    \end{equation*}
    The same upper bound on the log-Laplace transform of $\overline{S}_{\mathrm{even}}$ can be obtained similarly. Therefore, we have that
    \begin{equation*}
        \log \mathbb{E}[\exp(tS_{K_B})] \leq Bt^2\Big[6.2C_{\mathrm{LRV}} + \frac{e^cC_{\mathrm{FDM}}M}{1 - e^{-c}}\exp\Big(-\frac{c}{2tM}\Big)\Big].
    \end{equation*}
    Combining the above two cases completes the proof.
\end{proof}

\begin{proof}[Proof of Proposition \ref{prop:log-laplace}]
    (i).
    The proof use the construction of $K_A$.
    
    \textbf{Case 1.} Suppose $tM \leq c_0/(\log A) \wedge \sqrt{c/(2A)}$. In this case, $tM$ is small enough and we can apply Lemma \ref{lemma:log-Lapalce_low_level} directly. Since $A \geq 2c$, we have $tM \leq \sqrt{c/(2A)} \leq 1/2$. Applying Lemma \ref{lemma:log-Lapalce_low_level}, we have that
    \begin{equation}\label{eq:log-Laplace_SKA_case1}
    \begin{aligned}
        \log\mathbb{E}[\exp(tS_{K_A})] \leq& \lambda(K_A)t^2\Big[6.2C_{\mathrm{LRV}} + \frac{e^cC_{\mathrm{FDM}}M}{1 - e^{-c}}\exp\Big(-\frac{c}{2tM}\Big)\Big]\\
        \leq& At^2\Big[6.2C_{\mathrm{LRV}} + \frac{e^cC_{\mathrm{FDM}}M}{1 - e^{-c}}\exp\Big(-\frac{c}{2tM}\Big)\Big].
    \end{aligned}
    \end{equation}
    Since $tM \leq c_0/(\log A) \leq c/(8\log A)$, then $-2\log A \geq -c/(4tM)$ and we have that
    \begin{equation}\label{eq:bound_exp(c/(2tm))}
        \exp\Big(-\frac{c}{2tM}\Big) \leq \exp\Big(-\frac{c}{4tM} - 2\log A\Big) = A^{-2}\exp\Big(-\frac{c}{4tM}\Big).
    \end{equation}
    Combining \eqref{eq:log-Laplace_SKA_case1} and \eqref{eq:bound_exp(c/(2tm))}, we have that
    \begin{equation}\label{eq:log-Laplace_SKA_case1_bound}
    \begin{aligned}
        \log\mathbb{E}[\exp(tS_{K_A})] 
        \leq& t^2\Big[6.2AC_{\mathrm{LRV}} + \frac{e^cC_{\mathrm{FDM}}M}{A(1 - e^{-c})}\exp\Big(-\frac{c}{4tM}\Big)\Big]\\
        \leq& 6.2At^2C_{\mathrm{LRV}} + \frac{2e^cC_{\mathrm{FDM}}}{(1 - e^{-c})c}A^{-1}t^2M.
    \end{aligned}
    \end{equation}
    
    \textbf{Case 2.} Suppose $\sqrt{c/(2A)} < tM \leq c_0/(\log A) \wedge 1/2$.  We choose
    \begin{equation}\label{eq:choose_delta}
        \delta = \frac{\log 2}{2\log A}.
    \end{equation}
    With the choice of $\delta$, by \eqref{eq:lebesgue_KA}, we have that $\lambda(K_A) > A/2$.
    
    Let
    \begin{equation}\label{choose_k*}
        k^* = \inf\Big\{k \in \mathbb{N}, A\Big( \frac{1-\delta}{2} \Big)^k \leq \frac{c}{2(tM)^2} \Big\}.
    \end{equation}
    Note that $k^* \geq 1$ since $2(tM)^2 > c/A$. Since $tM \leq \sqrt{c/8}$, by the definitions of $l$ and $k^*$, we have $k^* \leq l$. Recall \eqref{eq:choose_delta} and \eqref{eq:choose_l}. Moreover, the definition of $k^*$ implies that
    \begin{equation}\label{eq:k^*_implication}
        A\delta\Big( \frac{1-\delta}{2} \Big)^{k^*-1} > \frac{c\delta}{2(tM)^2} \geq \frac{4\delta(\log A)^2}{\log 2} = 2\log A > 5,
    \end{equation}
    where the second inequality follows from $tM \leq c_0/(\log A) \leq \sqrt{c\log 2/(8(\log A)^2)}$, and the third inequality follows from that $A \geq 16$.
    In addition, by the definitions of $k^*$ and $l$ and the fact $k^* \leq l$, we have for any $j = 1, \dots, 2^{k^*}$ that
    \begin{equation*}
        \lambda(K_{A,k^*,j}) = A\Big(\frac{1-\delta}{2}\Big)^l2^{l-k^*} \leq \frac{c(1-\delta)^{l-k^*}}{2(tM)^2} \leq \frac{c}{2(tM)^2}.
    \end{equation*}
    Therefore, we have that
    \begin{equation}\label{eq:verify_condition_tm}
        tM \leq \sqrt{\frac{c}{2\lambda(K_{A,k^*,j})}} \wedge 1/2,
    \end{equation}
    which satisfies the condition for Lemma \ref{lemma:log-Lapalce_low_level}.
    
    We first use Lemma \ref{lemma:diff_laplace_joint_marginal} recursively at each level $k = 1, \dots k^*$, and then apply Lemma \ref{lemma:log-Lapalce_low_level} on $S_{K_{A,k^*,j}}$, whose size is small enough. The validity of the recursive procedure is justified by \eqref{eq:k^*_implication}.
    Denote by $l_{A,k,j}$ $u_{A,k,j}$ respectively the left and right boundaries of $K_{A,k,j}$. At the level $k = 1$, by \eqref{eq:cantor_decomp_2}, we have that $K_A = \bigcup_{j = 1}^{2}K_{A, 1, j}$, and $K_{A,1,1}$ and $K_{A,1,2}$ are separated by an interval of size $A\delta \geq 2$ by \eqref{eq:choose_delta} and $A \geq 16$. Thus, we have that $\lceil l_{A,1,2} \rceil - \lceil u_{A,1,1} \rceil \geq A\delta - 1$. Applying Lemma \ref{lemma:diff_laplace_joint_marginal}, we have that
    \begin{equation*}
    \begin{aligned}
        &\Big| \mathbb{E}[\exp(tS_{K_A})] - \prod_{j = 1}^2 \mathbb{E}[\exp(tS_{K_{A,1,j}})] \Big|\\
        \leq& t\exp(A(1-\delta)tM)\Vert S_{K_{A,1,2}} - S_{K_{A,1,2}, \{\lceil u_{A,1,1} \rceil, -\infty\}} \Vert_2\\
        \leq& \frac{tC_{\mathrm{FDM}}e^c}{1 - e^{-c}}\exp\big(-cA\delta+ A(1-\delta)tM\big),
    \end{aligned}
    \end{equation*}
    where the second inequality follows from the similar arguments used in \eqref{eq:embed_fdm_bound}.
    Since $Z_i$ are centered, by the inequality \eqref{eq:logx-logy} and the triangle inequality, we have for any $t > 0$ that
    \begin{equation*}
    \begin{aligned}
        &\log\mathbb{E}[\exp(tS_{K_A})] \\
        \leq& \sum_{j = 1}^2 \log\mathbb{E}[\exp(tS_{K_{A,1,j}})] + \frac{tC_{\mathrm{FDM}}e^c}{1 - e^{-c}}\exp\big(-cA\delta+ A(1-\delta)tM\big).
    \end{aligned}
    \end{equation*}
    Using the above arguments recursively at level $k = 2, \dots, k^*$, we obtain for any $t > 0$ that
    \begin{equation*}
    \begin{aligned}
        &\log\mathbb{E}[\exp(tS_{K_A})]\\
        \leq& \sum_{j = 1}^{2^{k^*}} \log\mathbb{E}[\exp(tS_{K_{A,k^*,j}})]\\
        &+ \frac{tC_{\mathrm{FDM}}e^c}{1 - e^{-c}}\sum_{k = 0}^{k^*-1}2^k\exp\Big(-cA\delta\Big(\frac{1-\delta}{2}\Big)^k+ 2AtM\Big(\frac{1-\delta}{2}\Big)^{k+1}\Big).
    \end{aligned}
    \end{equation*}
    For any $tM \leq c\delta/2$, the above inequality can be simplied as
    \begin{equation}\label{eq:(i)case2_loglap_KA}
    \begin{aligned}
        &\log\mathbb{E}[\exp(tS_{K_A})]\\
        \leq& \sum_{j = 1}^{2^{k^*}} \log\mathbb{E}[\exp(tS_{K_{A,k^*,j}})] + \frac{tC_{\mathrm{FDM}}e^c}{1 - e^{-c}}\sum_{k = 0}^{k^*-1}2^k\exp\Big(-\frac{cA\delta}{2}\Big(\frac{1-\delta}{2}\Big)^k\Big)\\
        \leq& \sum_{j = 1}^{2^{k^*}} \log\mathbb{E}[\exp(tS_{K_{A,k^*,j}})] + \frac{tC_{\mathrm{FDM}}e^c}{1 - e^{-c}}\exp\Big(-\frac{cA\delta}{2}\Big(\frac{1-\delta}{2}\Big)^{k^*-1}\Big)2^{k^*}\\
        \leq& \sum_{j = 1}^{2^{k^*}} \log\mathbb{E}[\exp(tS_{K_{A,k^*,j}})] + \frac{tC_{\mathrm{FDM}}e^c}{1 - e^{-c}}A\exp\Big(-\frac{c^2\delta}{(2tM)^2}\Big)\\
        \leq& \sum_{j = 1}^{2^{k^*}} \log\mathbb{E}[\exp(tS_{K_{A,k^*,j}})] + \frac{tC_{\mathrm{FDM}}e^c}{1 - e^{-c}}A\exp\Big(-\frac{c}{2tM}\Big),
    \end{aligned}
    \end{equation}
    where the first inequality follows from $tM \leq c\delta/2$, the third inequality follows from \eqref{eq:k^*_implication} and $2^{k^*} < A$, and the fourth inequality follows from $tM \leq c\delta/2$ again.
    
    Recall \eqref{eq:verify_condition_tm}. Applying Lemma \ref{lemma:log-Lapalce_low_level}, we have for any $j = 1, \dots, k^*$ and for any $tM \leq c_0/(\log A) \wedge 1/2$ that
    \begin{equation*}
        \log\mathbb{E}[\exp(tS_{K_{A,k^*,j}})] \leq A\frac{(1-\delta)^l}{2^{k^*}}t^2\Big[6.2C_{\mathrm{LRV}} + \frac{e^cC_{\mathrm{FDM}}M}{1 - e^{-c}}\exp\Big(-\frac{c}{2tM}\Big)\Big].
    \end{equation*}
    Note that $c_0/(\log A) \leq c/(8\log A) < c\log 2/(4 \log A) = c\delta/2$. Therefore, we have for any $\sqrt{c/(2A)} < tM \leq c_0/(\log A) \wedge 1/2$ that
    \begin{equation}\label{eq:log-Laplace_SKA_case2}
    \begin{aligned}
        &\log\mathbb{E}[\exp(tS_{K_A})]\\
        \leq& At^2\Big[6.2C_{\mathrm{LRV}} + \frac{e^cC_{\mathrm{FDM}}M}{1 - e^{-c}}\exp\Big(-\frac{c}{2tM}\Big)\Big] + \frac{tC_{\mathrm{FDM}}e^c}{1 - e^{-c}}A\exp\Big(-\frac{c}{2tM}\Big)\\
        \leq& 6.2At^2C_{\mathrm{LRV}} + (tM+1)\frac{e^cC_{\mathrm{FDM}}}{1 - e^{-c}}A^{-1}t\exp\Big(-\frac{c}{4tM}\Big)\\
        \leq& 6.2At^2C_{\mathrm{LRV}} + \frac{6e^cC_{\mathrm{FDM}}}{(1 - e^{-c})c}A^{-1}t^2M,
    \end{aligned}
    \end{equation}
    where the second inequality follows from \eqref{eq:bound_exp(c/(2tm))}. Combining \eqref{eq:log-Laplace_SKA_case1_bound} and \eqref{eq:log-Laplace_SKA_case2} finishes the proof.
    
    (ii).
    The proof use the same construction of $K_A$. However, instead of removing the random variables in the gap intervals as in (i), we treat them as bounded random variables.
    
    \textbf{Case 1.} Suppose $tM \leq \sqrt{c/(2A)}$. Since $tM \leq 1/2$, we apply Lemma \ref{lemma:log-Lapalce_low_level} and have that
    \begin{equation}\label{eq:log-Laplace_S(0,A]_case1}
    \begin{aligned}
        &\log\mathbb{E}[\exp(tS_{(0,A]})]
        \leq At^2\Big[6.2C_{\mathrm{LRV}} + \frac{e^cC_{\mathrm{FDM}}M}{1 - e^{-c}}\exp\Big(-\frac{c}{2tM}\Big)\Big]\\
        \leq& At^2\Big[6.2C_{\mathrm{LRV}} + \frac{2e^cC_{\mathrm{FDM}}}{(1 - e^{-c})c}tM^2\Big] \leq 6.2At^2C_{\mathrm{LRV}} + \frac{e^cC_{\mathrm{FDM}}}{(1 - e^{-c})c}At^2M,
    \end{aligned}
    \end{equation}
    where the second inequality follows from $x \leq e^x$, and the third inequality follows from $tM < 1/2$.
    
    \textbf{Case 2.} Suppose $\sqrt{c/(2A)} < tM \leq (c \wedge 1)/2$. We choose \begin{equation}\label{eq:choose_delta_2}
    \delta = \frac{2tM}{c}.
    \end{equation}
    Note that $\delta < 1$, since $tM < c/2$.
    We choose the same $l$ and $k^*$ as in \eqref{eq:choose_l} and \eqref{choose_k*}. Since $tM \leq (c \wedge 1)/2$, by the definitions of $l$ and $k^*$, we have $k^* \leq l$. The proof is similar as that of $S_{K_A}$, except that we consider the gap intervals. At the level $k = 1$, we have $\lambda(D_{1,0}) = A\delta \geq 2$, since $A \geq 2c$ and $tM > \sqrt{c/(2A)} \geq c/A$. Since $\exp(tS_{D_{1,0}}) \leq \exp(tMA\delta)$, it satisfies for any $t > 0$ that
    \begin{equation*}
    \begin{aligned}
        &\mathbb{E}[\exp(tS_{(0,A]})]
        \leq \mathbb{E}[\exp(tS_{I_{1,1}})\exp(tS_{I_{1,2}})]\exp(tMA\delta)\\
        \leq& \Big\{\prod_{j=1}^2\mathbb{E}[\exp(tS_{I_{1,j}})] + \frac{tC_{\mathrm{FDM}e^c}}{1-e^{-c}}\exp(-cA\delta + A(1-\delta)tM)\Big\}\exp(tMA\delta)\\
        =& \exp(tMA\delta)\prod_{j=1}^2\mathbb{E}[\exp(tS_{I_{1,j}})] + \frac{tC_{\mathrm{FDM}e^c}}{1-e^{-c}}\exp\Big(-\frac{cA\delta}{2}\Big),
    \end{aligned}
    \end{equation*}
    where the second inequality follows by applying Lemma \ref{lemma:diff_laplace_joint_marginal}, and the equality follows from \eqref{eq:choose_delta_2}. Since $Z_i$ are centered, applying the inequality \eqref{eq:logx-logy} leads to that for any $t > 0$
    \begin{equation*}
    \begin{aligned}
        &\log\mathbb{E}[\exp(tS_{(0,A]})]\\
        \leq& \sum_{j=1}^2\log\mathbb{E}[\exp(tS_{I_{1,j}})] + tMA\delta +  \frac{tC_{\mathrm{FDM}e^c}}{1-e^{-c}}\exp\Big(-\frac{cA\delta}{2}\Big).
    \end{aligned}
    \end{equation*}
    Using the above arguments recursively at level $k = 2, \dots, k^*$, we obtain for any $t > 0$ that
    \begin{equation*}
    \begin{aligned}
        &\log\mathbb{E}[\exp(tS_{(0,A]})]\\
        \leq& \sum_{j = 1}^{2^{k^*}} \log\mathbb{E}[\exp(tS_{I_{k^*,j}})]\\
        &+ \sum_{k = 0}^{k^*-1}\Big\{\frac{tC_{\mathrm{FDM}}e^c}{1 - e^{-c}}2^k\exp\Big(-\frac{cA\delta(1-\delta)^k}{2^{k+1}}\Big) + tMA\delta(1-\delta)^{k}\Big)\Big\}\\
        \leq& \sum_{j = 1}^{2^{k^*}} \log\mathbb{E}[\exp(tS_{I_{k^*,j}})] + \frac{tC_{\mathrm{FDM}}e^c}{1 - e^{-c}}A\exp\Big(-\frac{c}{2tM}\Big) + tMA\delta k^*\\
        \leq& \sum_{j = 1}^{2^{k^*}} \log\mathbb{E}[\exp(tS_{I_{k^*,j}})] + \frac{2C_{\mathrm{FDM}}e^c}{(1 - e^{-c})c}At^2M + \frac{2}{c\log 2}t^2M^2A\log A,
    \end{aligned}
    \end{equation*}
    where the second inequality follows the same arguments used in \eqref{eq:(i)case2_loglap_KA} and note that $\delta = 2tM/c$, and the third inequality follows from $x \leq e^x$, $k^* \leq l$ and \eqref{eq:l_UB}.
    The definition of $k^*$ \eqref{choose_k*} and the condition $tM \leq 1/2$ imply that
    \begin{equation*}
        A\delta\Big( \frac{1-\delta}{2} \Big)^{k^*-1} > \frac{c\delta}{2(tM)^2} = \frac{2tM}{2(tM)^2} \geq 2,
    \end{equation*}
    which justify the validity of the recursive procedure. Moreover, by the definitions of $k^*$, we have for any $j = 1, \dots, 2^{k^*}$ that
    \begin{equation*}
        \lambda(I_{k^*,j}) = A\Big( \frac{1-\delta}{2} \Big)^{k^*} \leq \frac{c}{2(tM)^2},
    \end{equation*}
    which shows that the conditions of Lemma \ref{lemma:log-Lapalce_low_level} is satisfied. Applying Lemma \ref{lemma:log-Lapalce_low_level}, we have for any $j = 1, \dots, k^*$ and for any $tM \leq (c \wedge 1)/2$ that
    \begin{equation*}
        \log\mathbb{E}[\exp(tS_{I_{k^*,j}})] \leq A\frac{(1-\delta)^{k^*}}{2^{k^*}}t^2\Big[6.2C_{\mathrm{LRV}} + \frac{e^cC_{\mathrm{FDM}}M}{1 - e^{-c}}\exp\Big(-\frac{c}{2tM}\Big)\Big].
    \end{equation*}
    Therefore, we have for any $\sqrt{c/(2A)} < tM \leq (c \wedge 1)/2$ that
    \begin{equation}\label{eq:log-Laplace_S(0,A]_case2}
    \begin{aligned}
        \log\mathbb{E}[\exp(tS_{(0,A]})]
        \leq 6.2At^2C_{\mathrm{LRV}} + \frac{3e^cC_{\mathrm{FDM}}}{(1 - e^{-c})c}At^2M + \frac{2}{c\log 2}t^2M^2A\log A.
    \end{aligned}
    \end{equation}
    Combining \eqref{eq:log-Laplace_S(0,A]_case1} and \eqref{eq:log-Laplace_S(0,A]_case2} finishes the proof.
\end{proof}

\subsection{Proof of Theorem~2.6}
\begin{proof}[Proof of Theorem~2.6]
The proof follows that of Theorem 2.4 in \cite{liu2021robust}.
Define the projection operator $\mathcal{P}_j\cdot = \mathbb{E}[\cdot|\mathcal{F}_j] - \mathbb{E}[\cdot|\mathcal{F}_{j-1}]$. Then, we have the following decomposition
\begin{equation*}
    \psi_M(X_i) = \sum_{j = -\infty}^i\mathcal{P}_j\psi_M(X_i), \;\; \text{and} \;\; \sum_{i = 1}^n\psi_M(X_i) = \sum_{i = 1}^n\sum_{j = -\infty}^i\mathcal{P}_j\psi_M(X_i) = \sum_{j = -\infty}^n\sum_{i = 1}^n\mathcal{P}_j\psi_M(X_i),
\end{equation*}
where the last equality follows that $\mathcal{P}_j\psi_M(X_i) = 0$ for all $j \geq i+1$. Denote $L_j = \sum_{i = 1}^n\mathcal{P}_j\psi_M(X_i)$.
Our goal is to bound the tail probability of the partial sum for $x > 0$ as
\begin{equation}\label{eq:tail_prob}
\begin{aligned}
    &\mathbb{P}\bigg( \sum_{i = 1}^n\big\{\psi_M(X_i) - \mathbb{E}[\psi_M(X_i)]\big\} \geq x \bigg) = \mathbb{P}\bigg( \sum_{j = -\infty}^nL_j \geq x \bigg) \leq e^{-\lambda x}\mathbb{E}\Big[\exp\Big\{\lambda \sum_{j = -\infty}^nL_j\Big\} \Big]\\
    =& e^{-\lambda x}\mathbb{E}\bigg[\mathbb{E}\Big[\exp\Big\{\lambda \sum_{j = -\infty}^nL_j\Big\}\Big|\mathcal{F}_{n-1}\Big] \bigg]\\
    =& e^{-\lambda x}\mathbb{E}\bigg[\mathbb{E}\big[\exp\{\lambda L_n\}|\mathcal{F}_{n-1}\big] \mathbb{E}\Big[\exp\Big\{\lambda \sum_{j = -\infty}^{n-1}L_j\Big\}\Big|\mathcal{F}_{n-2} \Big]\bigg]\\
    =& \cdots\\
    =& e^{-\lambda x}\mathbb{E}\Big[\prod_{j = -\infty}^{n}\mathbb{E}\big[\exp\{\lambda L_j\}|\mathcal{F}_{j-1}\big] \Big],
\end{aligned}
\end{equation}
where the first inequality follows from Markov's inequality with some $\lambda > 0$, the second equality follows from the tower property, the third equality follows from the conditional independence, and the last line follows from iteratively taking the conditional expectation on $\mathcal{F}_{n-3}, \dots, \mathcal{F}_{-\infty}$.
\\
For $j = -\infty, \dots, n$, we have by Taylor's expansion that
\begin{equation}\label{eq:exp_conditional}
\begin{aligned}
    &\mathbb{E}\big[\exp\{\lambda L_j \}|\mathcal{F}_{j-1}\big] = 1 + \mathbb{E}\big[ \lambda L_j| \mathcal{F}_{j-1} \big] + \sum_{k = 2}^{\infty}\frac{1}{k!}\mathbb{E}\big[ \lambda^k L_j^k| \mathcal{F}_{j-1} \big]\\
    =& 1 + \sum_{k = 2}^{\infty}\frac{1}{k!}\mathbb{E}\big[ \lambda^k L_j^k| \mathcal{F}_{j-1} \big],
\end{aligned}
\end{equation}
where we have $\mathbb{E}[ \lambda L_j| \mathcal{F}_{j-1}] = 0$ by definition. Since $|X_i| < M$ for any $i = 1, \dots, n$, we have
\begin{equation*}
\begin{aligned}
    |L_j| \leq& \sum_{i = 1 \vee j}^n \min\Big\{\big| \mathbb{E}[\psi_{M}(X_i)|\mathcal{F}_j] - \mathbb{E}[\psi_M(X_i)|\mathcal{F}_{j-1}] \big|, 2M\Big\}\\
    \leq& \sum_{i = 1 \vee j}^n \min\Big\{\mathbb{E}\big[|\psi_{M}(X_i) - \psi_M(X_{i,\{j\}})|\big|\mathcal{F}_j\big], 2M\Big\}\\
    \leq& \sum_{i = 1 \vee j}^n \min\Big\{\mathbb{E}\big[|X_i - X_{i,\{j\}}|\big|\mathcal{F}_j\big], 2M\Big\}\\
    =& 2M\sum_{i = 1 \vee j}^n\mathbbm{1}\Big\{\mathbb{E}\big[|a_{i}(i-j)||\epsilon_i - \epsilon_{i}^{\prime}|\big|\mathcal{F}_j\big] \geq 2M\Big\}\\
    &+ \sum_{i = 1 \vee j}^n\mathbb{E}\big[|a_{i}(i-j)||\epsilon_i - \epsilon_{i}^{\prime}|\big|\mathcal{F}_j\big]\mathbbm{1}\Big\{\mathbb{E}\big[|a_{i}(i-j)||\epsilon_i - \epsilon_{i}^{\prime}|\big|\mathcal{F}_j\big] < 2M\Big\}\\
    =& I_j + II_j,
\end{aligned}
\end{equation*}
where the first inequality follows from the triangle inequality, the second inequality follows from Jensen's inequality and the fact that $$\mathbb{E}[\psi_M(X_{i})|\mathcal{F}_{j-1}] = \mathbb{E}[\psi_M(X_{i,\{j\}})|\mathcal{F}_{j-1}] = \mathbb{E}[\psi_M(X_{i,\{j\}})|\mathcal{F}_{j}],$$
and the third inequality follows from the Lipschitz continuity of $\psi_M(\cdot)$.
Further, we have that
\begin{align*}
    \mathbb{E}\big[|L_j|^k\big|\mathcal{F}_{j-1}\big] \leq& \mathbb{E}\big[(I_j + II_j)^k\big|\mathcal{F}_{j-1}\big]\\
    \leq& 2^{k-1}\mathbb{E}\big[I_j^k\big|\mathcal{F}_{j-1}\big] +  2^{k-1}\mathbb{E}\big[II_j^k\big|\mathcal{F}_{j-1}\big],
\end{align*}
where the second inequality follows from that $(a+b)^k \leq 2^{k-1}(a^k + b^k)$ for any $a,b \in \mathbb{R}$ and $k \geq 2$.
\\
For $I_j$, we have that
\begin{align}\label{eq:I_ub}
    \big(\mathbb{E}[I_j^k|\mathcal{F}_{j-1}]\big)^{1/k} 
    \leq& 2M\sum_{i = 1 \vee j}^n\Big\{\mathbb{P}_{|\mathcal{F}_{j-1}}\big(\mathbb{E}\big[|a_{i}(i-j)||\epsilon_i - \epsilon_{i}^{\prime}|\big|\mathcal{F}_j\big] \geq 2M\big)\Big\}^{1/k}\nonumber\\
    \leq& (2M)^{1-2/k}\sum_{i = 1 \vee j}^n |a_{i}(i-j)|^{2/k}\Big\{\mathbb{E}\big[\big(\mathbb{E}\big[|\epsilon_i - \epsilon_{i}^{\prime}|\big|\mathcal{F}_j\big]\big)^2\big|\mathcal{F}_{j-1}\big]\Big\}^{1/k}\nonumber\\
    \leq& (2M)^{1-2/k}\sum_{i = 1 \vee j}^n |a_{i}(i-j)|^{2/k}\Big\{\mathbb{E}\big[|\epsilon_i - \epsilon_{i}^{\prime}|^2\big|\mathcal{F}_{j-1}\big]\Big\}^{1/k}\nonumber\\
    =& (2M)^{1-2/k}\sum_{i = 1 \vee j}^n |a_{i}(i-j)|^{2/k}\Big\{\mathbb{E}[|\epsilon_i - \epsilon_{i}^{\prime}|^2]\Big\}^{1/k}\nonumber\\
    =& (2M)^{1-2/k}(2\sigma_{\epsilon}^2)^{1/k}\sum_{i = 1 \vee j}^n|a_{i}(i-j)|^{2/k}\nonumber\\
    \leq& (2M)^{1-2/k}(2C_{Lin}^2\sigma_{\epsilon}^2)^{1/k} (1-\rho)^{2/k}\sum_{\ell = (1-j) \vee 0}^{\infty}\rho^{2\ell/k}\nonumber\\
    =& (2M)^{1-2/k}(2C_{Lin}^2\sigma_{\epsilon}^2)^{1/k} (1-\rho)^{2/k}\frac{\rho^{(2(1-j) \vee 0)/k}}{1-\rho^{2/k}},
\end{align}
where the first inequality follows from the triangle inequality, the second inequality follows from Markov's inequality, the third inequality follows from Jensen's inequality, and the fourth inequality follows from the assumption that $\sup_{i \in \mathbb{Z}}|a_{i}(\ell)|$ decays exponentially in $\ell$.
\\
For $II_j$, by the similar arguments as for $I_j$, we have that
\begin{align}\label{eq:II_ub}
    &\big(\mathbb{E}[II_j^k|\mathcal{F}_{j-1}]\big)^{1/k}\nonumber\\
    \leq& \sum_{i = 1 \vee j}^n\Big\{\mathbb{E}\Big[\Big(\mathbb{E}\big[|a_{i}(i-j)||\epsilon_i - \epsilon_{i}^{\prime}|\big|\mathcal{F}_j\big]\mathbbm{1}\Big\{\mathbb{E}\big[|a_{i}(i-j)||\epsilon_i - \epsilon_{i}^{\prime}|\big|\mathcal{F}_j\big] < 2M\Big\}\Big)^{k}\Big|\mathcal{F}_{j-1}\Big]\Big\}^{1/k}\nonumber\\
    \leq& (2M)^{1-2/k}\sum_{i = 1 \vee j}^n\Big\{\mathbb{E}\Big[\Big(\mathbb{E}\big[|a_{i}(i-j)||\epsilon_i - \epsilon_{i}^{\prime}|\big|\mathcal{F}_j\big]\Big)^{2}\Big|\mathcal{F}_{j-1}\Big]\Big\}^{1/k}\nonumber\\
    \leq& (2M)^{1-2/k}(2C_{Lin}^2\sigma_{\epsilon}^2)^{1/k} (1-\rho)^{2/k}\frac{\rho^{(2(1-j) \vee 0)/k}}{1-\rho^{2/k}}.
\end{align}
Then \eqref{eq:I_ub} and \eqref{eq:II_ub} lead to
\begin{equation}\label{eq:L_ub}
\begin{aligned}
    \mathbb{E}\big[|L_j|^k\big|\mathcal{F}_{j-1} \big] \leq& 2^k(2M)^{k-2}(2C_{Lin}^2\sigma_{\epsilon}^2)(1-\rho)^{2}\frac{\rho^{(2(1-j) \vee 0)}}{(1-\rho^{2/k})^k}\\
    \leq& k^k\rho^{-2}\big(\log(1/\rho)\big)^{-k}(2M)^{k-2}(1-\rho)^{2}\rho^{(2(1-j) \vee 0)}(2C_{Lin}^2\sigma_{\epsilon}^2)\\
    \leq& k!(2\pi)^{-1/2}(2M\rho)^{-2}\big(\log(1/\rho)\big)^{-k}(2Me)^{k}(1-\rho)^{2}\rho^{(2(1-j) \vee 0)}(2C_{Lin}^2\sigma_{\epsilon}^2),
\end{aligned}
\end{equation}
where the second inequality follows from that
$1-x \geq -x\log x$ for any $x \in (0,1)$, and the last inequality follow from Stirling's formula.
Plugging \eqref{eq:L_ub} into \eqref{eq:exp_conditional}, we have for $0 < \lambda <  (\log(1/\rho))(2Me)^{-1}$,
\begin{align}\label{eq:exp_conditional_ub}
    &\mathbb{E}\big[\exp\{\lambda L_j \}|\mathcal{F}_{j-1}\big]
    = 1 + \sum_{k = 2}^{\infty}\frac{1}{k!}\mathbb{E}\big[ \lambda^k L_j^k| \mathcal{F}_{j-1} \big]\nonumber\\
    \leq& 1 + (2\pi)^{-1/2}(2M\rho)^{-2}(1-\rho)^{2}\rho^{(2(1-j) \vee 0)}(2C_{Lin}^2\sigma_{\epsilon}^2)\sum_{k = 2}^{\infty}\big(2Me\lambda(\log(1/\rho))^{-1}\big)^{k}\nonumber\\
    =& 1 + (2\pi)^{-1/2}\rho^{-2}(1-\rho)^{2}\rho^{(2(1-j) \vee 0)}(2C_{Lin}^2\sigma_{\epsilon}^2) \frac{\big(e\lambda(\log(1/\rho))^{-1}\big)^{2}}{1 - 2Me\lambda(\log(1/\rho))^{-1}}\nonumber\\
    \leq& \exp\bigg\{ (2\pi)^{-1/2}\rho^{-2}(1-\rho)^{2}(2C_{Lin}^2\sigma_{\epsilon}^2)\rho^{(2(1-j) \vee 0)} \frac{\big(e\lambda(\log(1/\rho))^{-1}\big)^{2}}{1 - 2Me\lambda(\log(1/\rho))^{-1}} \bigg\},
\end{align}
where the last inequality follows from that $1 + x \leq \exp(x)$ for any $x \in \mathbb{R}$.
Plugging \eqref{eq:exp_conditional_ub} into \eqref{eq:tail_prob}, we have that
\begin{equation}\label{eq:tail_prob_ub}
\begin{aligned}
    &\mathbb{P}\Big( \sum_{i = 1}^nX_i \geq x \Big)= e^{-\lambda x}\mathbb{E}\Big[\prod_{j = -\infty}^{n}\mathbb{E}\big[\exp\{\lambda L_j\}|\mathcal{F}_{j-1}\big] \Big]\\
    \leq& e^{-\lambda x}\exp\bigg\{\frac{(1-\rho)^{2}(2C_{Lin}^2\sigma_{\epsilon}^2)}{(2\pi)^{1/2}\rho^{2}} \frac{\big(e\lambda(\log(1/\rho))^{-1}\big)^{2}}{1 - 2Me\lambda(\log(1/\rho))^{-1}}\Big(n+ \frac{\rho^2}{1-\rho^2}\Big)  \bigg\}\\
    =& e^{-\lambda x}\exp\bigg\{C_1(2C_{Lin}^2\sigma_{\epsilon}^2) \frac{C_2^2\lambda^{2}}{1 - 2C_2M\lambda }\Big(n+ \frac{\rho^2}{1-\rho^2}\Big)  \bigg\}\\
    =& \exp\bigg\{-\frac{x^2}{8C_1C_2^2C_{Lin}^2\sigma_{\epsilon}^2(n + \rho^2/(1-\rho^2)) + 4C_2Mx}\bigg\},
\end{aligned}
\end{equation}
where we let $C_1 = \frac{(1-\rho)^{2}}{(2\pi)^{1/2}\rho^{2}}$, $C_2 = e(\log(1/\rho))^{-1}$ and the last equality follows by letting
\begin{equation*}
    \lambda = \frac{x}{4C_1C_2^2C_{Lin}^2\sigma_{\epsilon}^2(n + \rho^2/(1-\rho^2)) + 2C_2Mx}.
\end{equation*}
\end{proof}

\clearpage
\bibliographystyle{imsart-number}
\bibliography{Ref}

\end{document}